\def\withcolors{0}
\def\withnotes{0}
\newcommand{\marginnote}[1]{}
  	\newaliascnt{coro}{theorem}
  	  \newtheorem{corollary}[coro]{Corollary}
  	\newaliascnt{lem}{theorem}
  		\newtheorem{lemma}[lem]{Lemma}
  	\newaliascnt{clm}{theorem}
  		\newtheorem{claim}[clm]{Claim}
	\newaliascnt{fact}{theorem}
 	 	\newtheorem{fact}[theorem]{Fact}
  \newaliascnt{prop}{theorem}
  		\newtheorem{proposition}[prop]{Proposition}
	\newaliascnt{conj}{theorem}
  \theoremstyle{remark}
  \theoremstyle{definition}   	\newaliascnt{defn}{theorem}
 		 \newtheorem{definition}[defn]{Definition}
\newenvironment{proofof}[1]{\begin{proof}[Proof of {#1}]}{\end{proof}}
\providecommand{\email}[1]{\href{mailto:#1}{\nolinkurl{#1}\xspace}}
  \newcommand{\new}[1]{{\color{orange} {#1}}}      \newcommand{\newest}[1]{{\color{red} {#1}}}      \newcommand{\acolor}[1]{{\color{orange}#1}}   \newcommand{\ccolor}[1]{{\color{black!40!cyan}#1}}   \newcommand{\ecolor}[1]{{\color{black!50!green}#1}}   \newcommand{\kcolor}[1]{{\color{Plum}#1}}   \newcommand{\scolor}[1]{{\color{RubineRed}#1}} \else
  \newcommand{\new}[1]{{{#1}}}
  \newcommand{\newest}[1]{{{#1}}}
  \newcommand{\acolor}[1]{{#1}}
  \newcommand{\ccolor}[1]{{#1}}
  \newcommand{\ecolor}[1]{{#1}}
  \newcommand{\kcolor}[1]{{#1}}
  \newcommand{\scolor}[1]{{#1}}
\newcommand{\eps}{\ensuremath{\varepsilon}\xspace}
\newcommand{\Algo}{\ensuremath{\mathcal{A}}\xspace} \newcommand{\Tester}{\ensuremath{\mathcal{T}}\xspace}  \newcommand{\property}{\ensuremath{\mathcal{P}}\xspace} \newcommand{\class}{\ensuremath{\mathcal{C}}\xspace} \newcommand{\eqdef}{\stackrel{\rm def}{=}}
\newcommand{\accept}{\textsf{ACCEPT}\xspace}
\newcommand{\reject}{\textsf{REJECT}\xspace}
\newcommand{\opt}{{\textsc{opt}}\xspace}
\newcommand{\domain}{\ensuremath{\Omega}\xspace}
\newcommand{\littleO}[1]{{o\mleft( #1 \mright)}}
\newcommand{\bigO}[1]{{O\mleft( #1 \mright)}}
\newcommand{\bigTheta}[1]{{\Theta\mleft( #1 \mright)}}
\newcommand{\bigOmega}[1]{{\Omega\mleft( #1 \mright)}}
\newcommand{\tildeO}[1]{\tilde{O}\mleft( #1 \mright)}
\newcommand{\tildeOmega}[1]{\operatorname{\tilde{\Omega}}\mleft( #1 \mright)}
\providecommand{\poly}{\operatorname*{poly}}
\newcommand{\totinf}[1][f]{{\mathbf{Inf}[#1]}}
\newcommand{\infl}[2][f]{{\mathbf{Inf}_{#2}[#1]}}
\newcommand{\setOfSuchThat}[2]{ \left\{\; #1 \;\colon\; #2\; \right\} } 			\newcommand{\indicSet}[1]{\mathds{1}_{#1}}                                              \newcommand{\indic}[1]{\indicSet{\left\{#1\right\}}}                                             
\newcommand{\dist}[2]{\operatorname{dist}\!\left(#1, #2\right)}
\newcommand{\lpdist}[3][p]{\lp[#1]\!\left(#2, #3\right)}
\newcommand\restr[2]{{  \left.\kern-\nulldelimiterspace   #1   \vphantom{\big|}   \right|_{#2}   }}
\newcommand{\proba}{\Pr}
\newcommand{\probaOf}[1]{\proba\!\left[\, #1\, \right]}
\newcommand{\probaCond}[2]{\proba\!\left[\, #1 \;\middle\vert\; #2\, \right]}
\newcommand{\probaDistrOf}[2]{\proba_{#1}\left[\, #2\, \right]}
\newcommand{\supp}[1]{\operatorname{supp}\!\left(#1\right)}
\newcommand{\shortexpect}{\mathbb{E}}
\newcommand{\poisson}[1]{\ensuremath{\operatorname{Poisson}\!\left( #1 \right) }}
\newcommand{\norm}[1]{\lVert#1{\rVert}}
\newcommand{\normtwo}[1]{{\norm{#1}}_2}
\newcommand{\abs}[1]{\left\lvert #1 \right\rvert}
\newcommand{\dotprod}[2]{ \left\langle #1,\xspace #2 \right\rangle } 			 			
\newcommand{\clg}[1]{\left\lceil #1 \right\rceil}
\newcommand{\flr}[1]{\left\lfloor #1 \right\rfloor}
\newcommand{\R}{\ensuremath{\mathbb{R}}\xspace}
\newcommand{\N}{\ensuremath{\mathbb{N}}\xspace}
\def\B{\{0,1\}}
\newcommand{\pdfsamp}{dual\xspace}
\newcommand{\cdfsamp}{cumulative dual\xspace}
\newcommand{\Pdfsamp}{\expandafter\capitalisewords\expandafter{\pdfsamp}}
\newcommand{\Cdfsamp}{\expandafter\capitalisewords\expandafter{\cdfsamp}}
\newcommand{\lp}[1][1]{L_{#1}}
\newcommand{\D}{\ensuremath{D}}
\newcommand{\ubar}[1]{\underaccent{\bar}{#1}}
\newcommand{\hseq}[1]{\bar{\partial}{#1}}
\newcommand{\lseq}[1]{\ubar{\partial}{#1}}
\newcommand{\unknown}{\ast}
\newcommand{\kmon}[2]{\mathcal{M}^{(#1)}_{#2}}
\newcommand{\Rom}[1]{\expandafter\@slowromancap\romannumeral #1@}
\newcommand\ackname{Acknowledgements}
  \newenvironment{acknowledgements}{      \titlepage
      \null\vfil
      \@beginparpenalty\@lowpenalty
      \begin{center}        \bfseries \ackname
        \@endparpenalty\@M
      \end{center}}     {\par\vfil\null\endtitlepage}
\def\authornamecc{Cl\'ement L. Canonne}
\def\authorafficc{Columbia University. Email: \email{ccanonne@cs.columbia.edu}. Research supported by NSF CCF-1115703 and NSF CCF-1319788.}
\def\authornameeg{Elena Grigorescu}
\def\authoraffieg{Purdue University. Email: \email{elena-g@purdue.edu}. Research supported in part by NSF CCF-1649515.}
\def\authornamesg{Siyao Guo}
\def\authoraffisg{Courant Institute of Mathematical Sciences, New York University. Email: \email{sg191@nyu.edu}.}
\def\authornameak{Akash Kumar}
\def\authoraffiak{Purdue University. Email: \email{akumar@purdue.edu}.}
\def\authornamekw{Karl Wimmer}
\def\authoraffikw{Duquesne University. Email: \email{wimmerk@duq.edu}.}
\title{Testing $k$-Monotonicity\\ {\normalsize (The Rise and Fall of Boolean Functions)}}
\date{\today}
\author{
  \ccolor{\authornamecc}\thanks{\authorafficc}
  \and \ecolor{\authornameeg}\thanks{\authoraffieg}
  \and \scolor{\authornamesg}\thanks{\authoraffisg}
  \and \acolor{\authornameak}\thanks{\authoraffiak}
  \and \kcolor{\authornamekw}\thanks{\authoraffikw}
}
\begin{document}

\maketitle

\begin{abstract}

 A Boolean {\em  $k$-monotone} function defined over a finite poset domain ${\cal D}$ 
alternates between the values $0$ and $1$ at most $k$ times on any ascending chain in ${\cal D}$. Therefore, $k$-monotone functions are natural generalizations of the classical {\em monotone} functions,  which are the {\em $1$-monotone} functions.

Motivated by the recent interest in $k$-monotone functions in the context of circuit complexity and learning theory, and by the central role that  monotonicity testing plays in the context of property testing, we initiate a systematic study of   $k$-monotone functions, in the property testing model. In this model, the goal is to distinguish functions that are $k$-monotone (or are close to being $k$-monotone) from functions that are far from being $k$-monotone.

Our results include the following:

\begin{enumerate}
\item We demonstrate a separation between testing $k$-monotonicity and testing monotonicity, on the hypercube domain $\{0,1\}^d$, for $k\geq 3$;
\item We demonstrate a separation between testing and learning  on $\{0,1\}^d$, for $k=\omega(\log d)$: testing $k$-monotonicity  can be performed with $2^{O(\sqrt d \cdot \log d\cdot \log{1/\eps})}$ queries,  while learning $k$-monotone functions requires $2^{\Omega(k\cdot \sqrt d\cdot{1/\eps})}$ queries (Blais et al. (RANDOM 2015)).
\item We present a tolerant test for functions $f\colon[n]^d\to \{0,1\}$ with complexity independent of $n$, which makes progress on a problem left open by Berman et al. (STOC 2014).
\end{enumerate}

 Our techniques exploit the testing-by-learning paradigm, use novel applications of  Fourier analysis on the grid $[n]^d$, and draw connections to distribution testing techniques.

\end{abstract}

\clearpage
\tableofcontents
\clearpage

\section{Introduction}\label{sec:intro}
\makeatletter{}\newcommand{\ra}{\rightarrow}
\newcommand{\etal}{{\em{et al}}}

A function $f\colon {\cal D} \ra \{0,1\}$, defined over a finite domain $\cal{D}$ equipped with a partial order, is said to be $k$-{\em monotone}, for some integer $k\geq 0$, if there does not exist $x_1\preceq x_2 \preceq \ldots \preceq x_{k+1}$ in ${\cal D}$ such that $f(x_1)=1$ and $f(x_i)\ne f(x_{i+1})$ for all $i\in [k]$. Note that $1$-monotone functions are the classical {\em monotone} functions, satisfying $f(x_1)\leq f(x_2)$, whenever $x_1\preceq x_2$.

Monotone functions  have been well-studied on multiple fronts in computational complexity due to their natural structure. They have been celebrated for decades  in the  property testing  literature~\cite{GGLRS00, DGLRRS99, FLNRRS:02, BrietCGM10, CS:13a, CS:13b, CS:13}, where we have recently witnessed ultimate results~\cite{KhotMS:15,ChenDST:15,BelovsB:15}, in the circuit complexity literature, where we now have strong lower bounds~\cite{RazW:92, Razborov:85-russianjournal}, and in computational learning, where we now have learning algorithms in numerous learning models~\cite{BshoutyT96, Angluin87,KearnsV94, Servedio04, ODonnellS07, ODonnellW09}.

The generalized notion of $k$-monotonicity has also  been studied in the context of circuit lower bounds for more than 50 years. In particular,  Markov~\cite{Markov:57} showed that any $k$-monotone function (even with multiple outputs) can be computed using circuits containing only $\log{k}$ negation gates.  The presence of negation gates appears to be a challenge in proving circuit lower bounds: ``the effect of such gates on circuit size remains to a large extent a mystery'' \cite{Jukna:12}.
 The recent results of Blais \etal.~\cite{BlaisCOST:15} on circuit lower bounds 
  have prompted renewed interest in understanding $k$-monotone functions from multiple angles,  including cryptography,  circuit complexity, learning theory, and Fourier analysis (\cite{Rossman:15, GuoMOR:15, GuoK:15, LinZ:16}).

Motivated by the exponential lower bounds on PAC learning $k$-monotone functions due to~\cite{BlaisCOST:15}, we initiate the study of $k$-monotonicity in the closely related  {\em Property Testing} model.  In this model, given query access to a function, one must decide if the function is $k$-monotone, or is far from being $k$-monotone, by querying the input only in a small number of places.

\makeatletter{}
\subsection{Our results}

 We focus on testing $k$-monotonicity of Boolean functions defined over the $d$-dimensional hypegrid $[n]^d$, and the  hypercube 
 $\{0,1\}^d$.    We begin our presentation with the results for the hypercube, in order to build  intuition into the difficulty of the problem, while comparing our results with the current literature on testing monotonicity. Our stronger results concern the hypegrid $[n]^d$.
    
\subsubsection{Testing $k$-monotonicity on the hypercube $\{0,1\}^d$}

In light of the recent results of \cite{KhotMS:15} that provide a $O(\sqrt{d})$-query tester for monotonicity, we first show that testing $k$-monotonicity is strictly harder than testing monotonicity on $\{0,1\}^d$, for $k\geq 3$.

\begin{restatable}{theorem}{onesidedhypercube}\label{thm:1-sided-non-adaptive-lb}
	For $1\leq k\leq d^{1/4}/2$, any one-sided non-adaptive tester for $k$-monotonicity of functions $f\colon\{0, 1\}^d\to \{0,1\}$ must make $(\bigOmega{d/k^2})^{k/4}$ queries.
\end{restatable}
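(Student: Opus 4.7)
The plan is to invoke Yao's minimax principle and exhibit a distribution $\mathcal{F}$ of functions $f\colon \{0,1\}^d\to\{0,1\}$ such that (a) with high probability $f\sim\mathcal{F}$ is $\Omega(1)$-far from $k$-monotone, yet (b) for every deterministic non-adaptive query set $Q\subseteq\{0,1\}^d$ of size $q=(c\,d/k^2)^{k/4}$ (for a small absolute constant $c>0$), the probability over $f\sim\mathcal{F}$ that $Q$ contains a certificate of non-$k$-monotonicity is close to $0$.  The observation driving this reduction is that a one-sided, non-adaptive tester may only reject $f$ when $f|_Q$ already exhibits an explicit \emph{certificate}: a chain $x_1\prec x_2\prec\cdots\prec x_{k+1}$ of queried points with $f(x_1)=1$ and $f(x_i)\neq f(x_{i+1})$ for every $i\in[k]$.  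Once (a) and (b) are in place, the tester accepts on both yes- and no-instances with probability close to $1$, contradicting its soundness guarantee unless $q$ exceeds the claimed bound.

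To build $\mathcal{F}$, I would split the coordinate set $[d]$ into $r:=\lfloor k/4\rfloor$ disjoint equal-sized coordinate blocks $B_1,\ldots,B_r$, and on each block plant, independently, a random ``$4$-alternation gadget'' $g_j\colon\{0,1\}^{B_j}\to\{0,1\}$.  A canonical choice is to take $g_j$ to be the indicator of a random union of two thin Hamming-weight bands of width $\Theta(\sqrt{|B_j|})$ centered at two random thresholds near $|B_j|/2$; such a $g_j$ is $3$-alternating along most maximal chains of $\{0,1\}^{B_j}$ while being $\Omega(1)$-far from $2$-monotone on $\{0,1\}^{B_j}$ by binomial anti-concentration slice-by-slice.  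The global function $f$ is then assembled by a hybrid combining rule across the blocks (for instance, an ordered ``layering'' that embeds one $g_j$-pattern into each of $r$ disjoint ranges of total Hamming weight), chosen so that every non-$k$-monotone certificate in $f$ must induce, in each block $B_j$ simultaneously, a local non-$2$-monotone certificate of $g_j$ along the corresponding induced subchain.

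The analysis then has two parts.  For \emph{distance}, a hybrid argument over the $r$ blocks shows that bringing $f$ to within $o(1)$ of a $k$-monotone function requires killing the $4$-alternation inside at least one block, an event that costs an $\Omega(1/r)$-fraction of the domain per block; summing recovers the required $\Omega(1)$ overall distance.  For the \emph{certificate probability}, block independence makes the probability that a given $(k{+}1)$-chain $C\subseteq Q$ is a global certificate factor across the $r$ blocks; each factor is estimated via the anti-concentration of the random thresholds inside the corresponding Hamming-weight band of $\{0,1\}^{B_j}$.  Summing this probability over candidate chains in $Q$ and balancing the parameters $r=k/4$, $|B_j|=\Theta(d/k)$, and band-width $\Theta(\sqrt{d/k})$ produces the threshold $q=(\bigOmega{d/k^2})^{k/4}$ claimed.

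The main obstacle is the tight tension between the two requirements on $g_j$: wider bands make $g_j$ too close to $2$-monotone and the distance hybrid collapses, while narrower bands let a handful of queries per block already catch a local certificate and the per-block anti-concentration bound becomes too weak.  Calibrating the band-width to $\Theta(\sqrt{|B_j|})$ is the sweet spot, by the same anti-concentration phenomena underlying the $\Omega(\sqrt{d})$ lower bound for classical monotonicity testing; the most delicate remaining step is verifying that the combining rule across blocks does not permit ``short-cut'' certificates using fewer than $r$ block-local alternations, which is where the precise form of the hybrid must be chosen carefully.
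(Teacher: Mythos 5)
Your proposal takes a genuinely different route from the paper, but as written it has gaps that are more than cosmetic, and the paper's own argument sidesteps exactly the difficulties you run into.

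The paper's construction is not a product of local gadgets but a \emph{single} hidden object: a truncated anti-parity $f_S = \overline{\oplus_{i\in S}x_i}$ restricted to the middle $\bigO{\sqrt{d}}$ levels, where $S\subseteq[d]$ is a uniformly random set of size $t=4k^2$. The entire hardness is in not knowing $S$. The lower-bound counting is then over \emph{endpoint pairs}, not over chains: for any $Q$, one writes $\mathrm{Rej}(Q)=\bigcup_{x\preceq z\in Q}\mathrm{Rej}(Q_{x,z})$, and for each of the $\leq q^2$ pairs, any violating chain between $x$ and $z$ forces $x_{S'}=0^k$ and $z_{S'}=1^k$ for some $S'\subseteq S$, $\abs{S'}=k$. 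Since $x,z$ differ on at most $2\sqrt d$ coordinates, there are at most $\binom{2\sqrt d}{k}\binom{d-k}{t-k}$ sets $S$ that can possibly be ``caught'' at $(x,z)$. This yields a \emph{count} of bad $S$ of order $q^2\binom{2\sqrt d}{k}\binom{t}{k}/\binom{d}{k}$, and the distance bound (a per-chain binomial calculation, \autoref{cl:k_dis}) gives $\Omega(1)$-farness for $t=4k^2\leq\sqrt d$. In particular, the entire argument never needs a union bound over all length-$(k+1)$ chains in $Q$.

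Your sketch, by contrast, proposes to sum a ``certificate probability'' over candidate chains in $Q$. There are $\binom{q}{k+1}$ such chains, so you would need the per-chain probability to beat $q^{-(k+1)}$. With $r=\lfloor k/4\rfloor$ independent per-block factors, even in the most optimistic accounting the per-chain probability is around $p^r$ with $p\approx (k/d)^{1/2}$, and the resulting condition on $q$ is far weaker than $(d/k^2)^{k/4}$. Unless you can replace the over-chains union bound by a pair-based argument analogous to the one above, the parameters do not balance; I do not see how to run the pair-based argument for the product-of-gadgets family, precisely because there is no single hidden $S$ whose occurrences you can count, and the intermediate chain points matter for your gadgets in a way they do not for the anti-parity.

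There are two further gaps. First, the combining rule across blocks is left unspecified. Your fallback (``layering by total Hamming weight, with $g_j$ applied on block $B_j$'') does not actually give the property you need: a chain can increase $\abs{x}$ through interval $I_j$ by flipping only coordinates outside $B_j$, in which case $g_j(x|_{B_j})$ is constant on that portion of the chain and contributes no alternations. The requirement that \emph{every} global certificate induces a local $2$-monotonicity violation in \emph{every} block is exactly what you flag at the end as ``the most delicate remaining step,'' and that flag is well placed — it is not a detail but the heart of the construction. Second, the alternation budget is off by a constant: $r=\lfloor k/4\rfloor$ gadgets each contributing two runs of $1$'s along a chain give $m=2r\leq k/2$ runs, so the longest alternating chain starting from a $1$ has length $2m\leq k$; such a function is still $k$-monotone. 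You need $r>(k+1)/4$, and more importantly you need the ``excess'' alternations to be simultaneously present on a constant fraction of chains/points for the distance argument, which the hybrid you sketch (``kill the alternation in one block at cost $\Omega(1/r)$'') does not establish, since killing one block's alternations already suffices to restore $k$-monotonicity under your parameter choice.

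In summary: the paper's anti-parity construction with the count-over-$S$ analysis is both simpler and structurally essential — it is what lets the lower bound scale like $q^2$ rather than $q^{k+1}$. Your block-product approach is a legitimate candidate construction, but the combining rule, the independence structure, and the counting lemma are all missing, and it is unclear they can be supplied without essentially reinventing the paper's pair-based argument.
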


Both \autoref{thm:1-sided-non-adaptive-lb} and its proof generalize the $\Omega(d^{1/2})$ lower bound for testing monotonicity, due to  Fischer\ \etal.~\cite{FLNRRS:02}.

On the upper bounds side, while the monotonicity testing problem is providing numerous potential techniques  for approaching this new problem \cite{GGLRS00, DGLRRS99, CS:13b, BrietCGM10, ChenST:14, KhotMS:15}, most common techniques  appear to resist generalizations to $k$-monotonicity.  However, our upper bounds  demonstrate a  separation between testing and PAC  learning $k$-monotonicity, for large enough values of $k=\omega(\log d)$.

\begin{restatable}{theorem}{testcubeonesidednaub}\label{theo:test:d:os:na:ub}
There exists a one-sided non-adaptive tester for $k$-monotonicity of functions $f\colon \B^d\to\{0,1\}$ with query complexity $q(d,\eps,k)=2^{\bigO{\sqrt{d}\cdot \log d \cdot \log\frac{1}{\eps}}}$.
\end{restatable}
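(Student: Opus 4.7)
The plan is to apply the testing-by-learning paradigm. The main ingredient is a structural Fourier-analytic lemma showing that every $k$-monotone Boolean $f\colon\B^d\to\B$ is $\eps$-close in Hamming distance to a Boolean function of Fourier degree at most $D=\bigO{\sqrt{d}\log(1/\eps)}$. Given such a lemma, a standard one-sided testing-by-learning template over the (implicit) class of degree-$D$ Boolean functions yields the claimed query complexity.

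I would first establish the structural lemma in two steps. The classical monotone influence bound $\totinf\le 2\sqrt{d\cdot\var(f)}$ follows from Cauchy--Schwarz applied to the non-negative level-one Fourier coefficients; for $k$-monotone functions the corresponding coefficients need not be non-negative, so I would instead decompose $f$ as an alternating combination $f=g_1\wedge\neg(g_2\wedge\neg(g_3\wedge\cdots))$ of nested monotone $g_1\ge g_2\ge\cdots\ge g_k$, bound each piece by the monotone inequality, and sum the contributions to get $\totinf\le\poly(k)\cdot\sqrt{d}$. Parseval combined with Markov's inequality then yields $\sum_{\abs{S}>D}\widehat{f}(S)^2\le\eps^2/4$ at $D=\bigO{\sqrt{d}\log(1/\eps)}$, and rounding the truncation $f^{\le D}$ pointwise to $\B$ produces a Boolean function of Fourier degree at most $D$ within Hamming distance $\eps$ of $f$.

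With the structural lemma in hand, the tester is the expected one. Let $\mathcal{C}_D$ denote the class of Boolean functions $\B^d\to\B$ of Fourier degree at most $D$; since each is determined by its values on any $\binom{d}{\le D}$-element interpolating set, $\log\abs{\mathcal{C}_D}\le d^{\bigO{D}}=2^{\bigO{\sqrt{d}\log d\log(1/\eps)}}$. The tester samples $Q\subseteq\B^d$ uniformly at random of this size, queries $f$ on $Q$, and accepts iff $\restr{f}{Q}$ extends to a $k$-monotone function on $\B^d$. Completeness (and hence one-sidedness) is immediate. For soundness, if $f$ is $\eps$-far from $k$-monotonicity then every $h\in\mathcal{C}_D$ that is $\eps/4$-close to some $k$-monotone function must be $3\eps/4$-far from $f$; a Chernoff concentration for each such $h$ together with a union bound over $\mathcal{C}_D$ rules out any $k$-monotone extension of $\restr{f}{Q}$, after the standard triangle-inequality bookkeeping relating a would-be witness $g$ on $Q$ to its cover element in $\mathcal{C}_D$.

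The crux is the total-influence bound underlying the structural lemma: the classical monotone proof exploits non-negativity of the level-one Fourier coefficients, which no longer holds when $f$ alternates. Making the ``layer'' decomposition quantitatively tight---so that each monotone layer contributes only $\bigO{\sqrt d}$ to the total influence and the overhead in $k$ remains manageable---is the heart of the argument and the step where the existing monotonicity toolbox does not directly apply.
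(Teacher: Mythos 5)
The paper proves this theorem with a completely different---and much more elementary---argument than your Fourier-analytic plan. Rather than approximating $k$-monotone functions by low-degree polynomials, the paper restricts attention to the ``middle levels'' of the cube (Hamming weights within $\sqrt{d}\log(C/\eps)$ of $d/2$, which contain all but an $\eps/2$-fraction of points, \autoref{fact:all-points-in-middle}), samples $O(1/\eps)$ random points there, and for each sampled point queries \emph{every} middle-level point in the subcube above and below it (a ``superquery''). Soundness comes from the extendability lemma (\autoref{lemma:k-mono-extendable}) together with the matching lower bound (\autoref{theo:k-mono:big-violated-matching}): a function $\eps$-far from $k$-monotone has a large vertex-disjoint matching of violating $(k+1)$-chains, so an $\Omega(\eps)$-fraction of points lie on some violating chain, and hitting any such point makes the superquery expose the whole chain. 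The cost is $d^{O(\sqrt{d}\log(1/\eps))}$ per superquery, with no dependence on $k$. Note the paper's Fourier-based ideas (agnostic learning via low-degree regression, the bound $\totinf[f]\le k\sqrt d$) \emph{are} used elsewhere, but for the hypergrid tester of \autoref{prop:tester:exp:in:rootd}, where the resulting complexity is $2^{\tilde O(k\sqrt d/\eps^2)}$---not the $2^{O(\sqrt d\log d\log(1/\eps))}$ claimed here.

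Your proposal has two concrete problems. First, the quantitative Fourier concentration does not come out right: from $\totinf[f]\le \poly(k)\sqrt d$, Markov's inequality only yields $\sum_{\abs{S}>D}\widehat f(S)^2\le\eps^2$ at $D=\poly(k)\sqrt d/\eps^2$, so the degree, and hence the exponent of the query complexity, grows polynomially in $1/\eps$ and in $k$, not logarithmically in $1/\eps$ and independently of $k$ as the theorem requires. Second, and more fundamentally, the soundness argument has a gap. You union-bound over $h\in\mathcal C_D$ to ensure $\dist{\restr{f}{Q}}{\restr{h}{Q}}>\eps/2$ for every $h$ close to $k$-monotone; but the would-be witness $g$ (a $k$-monotone extension of $\restr{f}{Q}$) is chosen \emph{after} $Q$ is fixed. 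The structural lemma only says $g$ is $\eps/4$-close to some $h^*\in\mathcal C_D$ in Hamming distance on all of $\B^d$; since $\abs Q\ll\eps\cdot 2^d$, the disagreement set between $g$ and $h^*$ can lie entirely inside $Q$. Thus $\dist{\restr{g}{Q}}{\restr{h^*}{Q}}=\dist{\restr{f}{Q}}{\restr{h^*}{Q}}>\eps/2$ is perfectly consistent with $\dist{g}{h^*}\le\eps/4$, and the triangle inequality produces no contradiction. To make a cover-based argument work one would need a sample set that is a net for the relevant function class (not just a random sample of size $\log\abs{\mathcal C_D}$), or one would have to give up one-sidedness and compare $\restr{f}{Q}$ directly to the restrictions of cover elements---but then a $k$-monotone $f$ can fail the check with positive probability. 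The paper sidesteps all of this by arguing combinatorially about violating chains, which the superqueries are guaranteed to find exactly.
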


Indeed, in the related PAC learning model, \cite{BlaisCOST:15} shows that learning $k$-monotone functions on the hypercube requires $2^{\Omega(k\cdot \sqrt d\cdot 1/\eps)}$ many queries.

We further observe that the recent non-adaptive and adaptive $2$-sided lower bounds of \cite{ChenDST:15, BelovsB:15}, imply the same bounds for $k$-monotonicity, using black box reductions. We summarize the state of the art for testing $k$-monotonicity on the hypercube in \autoref{table:cube-boolean-k-results}.

\begin{table}[H]\centering
  \begin{tabular}{|l|c|c|c|c|}
  \hline
    & upper bound &  1.s.-n.a. lower bound  & 2.s.-n.a. lower bound & 2.s.-a. lower bound\\
         \hline
   $k=1$ & $\bigO{\sqrt{d}}$~\cite{KhotMS:15} &  $\bigOmega{d^{1/2}}$~\cite{FLNRRS:02}
   &  $\bigOmega{d^{1/2-o(1)}}$~\cite{ChenDST:15}   &  $\tildeOmega{d^{1/4}}$~\cite{BelovsB:15}\\
   $k\geq 2$ & \parbox[t]{3cm}{ $d^{\bigO{k\sqrt{d}}}$~\cite{BlaisCOST:15},\\ $d^{\bigO{\sqrt{d}}}$ Thm~\ref{theo:test:d:os:na:ub} \strut} &   \parbox[t]{3cm}{  $\bigOmega{d/k^2}^{k/4}$~Thm~\ref{thm:1-sided-non-adaptive-lb} \\ ($k=O(d^{1/4})$)}
   &  $\bigOmega{d^{1/2-o(1)}}$ Cor~\ref{theo:k-monotone-carry} &  $\tildeOmega{d^{1/4}}$ Cor~\ref{theo:k-monotone-carry}\\
  \hline
  \end{tabular}
  \caption{\label{table:cube-boolean-k-results} Testing $k$-monotonicity of a function $f\colon\{0,1\}^d\to\{0,1\}$}
\end{table}

\subsubsection{Testing $k$-monotonicity on the hypergrid $[n]^d$}

The remainder of the paper  focuses on functions defined over the $d$-dimensional hypergrid domain $[n]^d$, where we denote by $(i_1, i_2, \ldots, i_n)\preceq (j_1, j_2, \ldots, j_n)$ the partial order in which $i_1\leq j_1, i_2\leq j_2, \ldots, i_n\leq j_n$.  Testing monotonicity has received a lot of attention over the $d$-dimensional hypergrids \cite{GGLRS00, ErgunKKRV00, Fischer04,  BatuRW05, AilonC06, HalevyK08,  BlaisBM12, CS:13a, CS:13b,  CS:13,  BermanRY:14}, where the problem is well-understood, and we refer the reader to~\autoref{table:grids-boolean} in the appendix for a detailed review on the state of the art in the area.
We summarize our results on testing $k$-monotonicity over $[n]^d$ in \autoref{table:results}.

\begin{table}[H]\centering
  \begin{tabular}{|l|c|c|c|}
  \hline
         & General $k$ &  $k=2$ & $k=1$ (monotonicity) \\
         \hline
   $d=1$ & $\bigTheta{\frac{k}{\eps}}$ 1.s.-n.a., $\tildeO{\frac{1}{\eps^7}}$ 2.s.-n.a. & $\bigO{\frac{1}{\eps}}$ 1.s.-n.a. & { $\bigTheta{\frac{1}{\eps}}$ 1.s.-n.a. } \\\hline
   $d=2$ & $\tildeO{\frac{k^2}{\eps^3}}$  2.s.-n.a. (from below) & $\bigTheta{\frac{1}{\eps} }$ 2.s.-a. 
   & { $\bigTheta{\frac{1}{\eps}\log\frac{1}{\eps}}$ 1.s.-n.a., $\bigTheta{\frac{1}{\eps}}$ 1.s.-a. } \\\hline
\multirow{2}{*}{$d\geq 3$} & $\tildeO{ \frac{1}{\eps^2} \left(\frac{5kd}{\eps}\right)^d }$ 2.s.-n.a.,& $\tildeO{ \frac{1}{\eps^2}\left(\frac{10d}{\eps}\right)^d }$  2.s.-n.a. & \multirow{2}{*}{ $\bigO{\frac{d}{\eps}\log\frac{d}{\eps}}$ 1.s.-n.a. }  \\
& $2^{\tildeO{k\sqrt{d}/\eps^2}}$ 2.s.-n.a. & $2^{\tildeO{\sqrt{d}/\eps^2}}$  2.s.-n.a. &  \\
  \hline
  \end{tabular}
  \caption{\label{table:results}Summary of our results: testing $k$-monotonicity of a function $f\colon[n]^d \to \{0,1\}$ (first two columns). The last column contains known bounds on monotonicity testing (see~\autoref{app:overview}) and is provided for comparison.}
\end{table}

\paragraph{Testing $k$-monotonicity on the line and the 2-dimensional grid} We begin with a study of function $f\colon[n]\ra \{0,1\}$.
As before, note that $1$-sided tests should always accept $k$-monotone functions, and so, they must accept unless they discover a violation to $k$-monotonicity in the form of a  sequence  $x_1\preceq x_2 \preceq \ldots \preceq x_{k+1}$ in $[n]^d$, such that $f(x_1)=1$ and $f(x_i)\ne f(x_{i+1})$. Therefore, lower bounds for $1$-sided $k$-monotonicity testing must grow at least linearly with $k$. We show that this is indeed the case for both adaptive and non-adaptive tests, and more over, we give a tight non-adaptive algorithm. Consequently, our results demonstrate that adaptivity does not help in testing $k$-monotonicity with one-sided error on the line domain.
 
\begin{restatable}{theorem}{testlineonesidednalb}\label{theo:test:d1:os:a:lb}
Any one-sided (possibly adaptive) tester for $k$-monotonicity of functions $f\colon [n]\to\{0,1\}$ must have query complexity $\bigOmega{\frac{k}{\eps}}$.
\end{restatable}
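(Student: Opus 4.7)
The plan is to invoke Yao's minimax principle, exhibiting a distribution $\mathcal{D}$ over functions $f\colon[n]\to\{0,1\}$ such that (i) $f\sim\mathcal{D}$ is $\eps$-far from $k$-monotone with probability $1-o(1)$, and yet (ii) no deterministic, possibly adaptive, algorithm making fewer than $k/(48\eps)$ queries uncovers a violation of $k$-monotonicity on $f$ with probability exceeding $1/4$. We assume $n\geq k/\eps$ (in fact $n$ sufficiently large to carry out the concentration below), as otherwise the claim is vacuous. The hard distribution is simply $f=\indicSet{S}$, where $S\subseteq[n]$ is a uniformly random subset of size $T:=\clg{3\eps n}$.

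For the distance bound, let $A(f):=\abs{\{i\in[n-1]\colon f(i)\neq f(i+1)\}}$ be the number of alternations of $f$. A direct computation gives $\shortexpect[A(f)]=2T(n-T)/n\geq 5\eps n$ (for $\eps$ small enough), and since swapping one element of $S$ changes $A(f)$ by only $O(1)$, standard concentration (e.g., a McDiarmid-type inequality, or grouping the $Y_i=\indicSet{f(i)\neq f(i+1)}$ into two independent families by parity of $i$ and applying Chernoff) yields $A(f)\geq 4\eps n$ with probability $\geq 11/12$ once $n$ is large. On the other hand, every $k$-monotone $g\colon[n]\to\{0,1\}$ satisfies $A(g)\leq k$ (otherwise its runs would contain a length-$(k+1)$ alternating subsequence starting with $1$), and a single bit flip alters $A$ by at most $2$. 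Hence $\dist{f}{g}\geq (A(f)-A(g))/2\geq (4\eps n-k)/2\geq (3/2)\eps n>\eps n$ whenever $\eps n\geq k$, establishing (i).

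For the query lower bound, observe that a one-sided tester rejects only when it actually sees a witness: queried points $x_1<\cdots<x_{k+1}$ with $f(x_1)=1$ and $f(x_i)\neq f(x_{i+1})$ for each $i$. Any such witness requires at least $r:=\clg{(k+1)/2}\geq k/2$ of the queried positions to be $1$'s. Fix any deterministic, possibly adaptive, algorithm making $q\leq n/2$ queries, and let $s_{t-1}$ denote the number of $1$'s uncovered after the first $t-1$ queries. By the exchangeability of the uniform random subset $S$, conditional on the entire history the remaining $T-s_{t-1}$ elements of $S$ are distributed uniformly among the $n-(t-1)$ unqueried positions, so
\[
\probaCond{f(q_t)=1}{\text{history}}=\frac{T-s_{t-1}}{n-t+1}\leq \frac{T}{n-q}\leq 6\eps.
\]
Linearity of expectation and Markov's inequality then give $\probaOf{s_q\geq r}\leq 6q\eps/r\leq 12q\eps/k$, which is strictly less than $1/4$ whenever $q<k/(48\eps)$. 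Combined with (i), Yao's principle delivers the claimed $\bigOmega{k/\eps}$ lower bound for any one-sided randomized (possibly adaptive) tester. The main subtle point in the argument is ruling out that adaptivity lets the tester quickly pinpoint unseen $1$'s, and the exchangeability-based conditional probability bound above is precisely what handles this.
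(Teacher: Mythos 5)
Your proof is correct, and while its skeleton (Yao's principle against a sparse random perturbation, a separate far-from-$k$-monotone claim, and a ``one-sided testers must see a witness'' count) mirrors the paper's, the realization differs in a few genuine ways. The paper partitions $[n]$ into $k/\eps$ blocks and flips each \emph{odd block} independently to $0$ with probability $6\eps$; this gives a product distribution, so the adaptive argument is handled by lazy sampling and the tail bound is a relative-entropy Chernoff bound (requiring $k\geq 8$, with small $k$ covered by the trivial $\Omega(1/\eps)$). You instead take $f = \indicSet{S}$ for a uniformly random $T$-subset with $T\approx 3\eps n$, handle adaptivity via exchangeability of $S$ (the conditional probability $\frac{T-s_{t-1}}{n-t+1}$ remains $O(\eps)$ whenever $q\leq n/2$), and close with plain Markov on the number of $1$'s seen. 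Your tail step is more elementary and avoids any constraint on $k$, but the distance step is correspondingly heavier: counting alternations $A(f)$ of a fixed-size random subset and invoking McDiarmid (or a negative-association Chernoff after the parity split, since the $Y_i$ are not independent for fixed $|S|$) needs $\eps n$ at least a moderate constant, which you flag but do not quantify; the paper's block construction lets it read the distance off the number of $0$-blocks directly. Two small cosmetic points: the line ``$\dist{f}{g}\geq (A(f)-A(g))/2\geq (3/2)\eps n$'' conflates normalized and unnormalized Hamming distance (it should be $n\cdot\dist{f}{g}$ on the left), and ``a one-sided tester rejects only when it sees a witness'' implicitly invokes the extendability of $k$-monotonicity on $[n]$, which the paper establishes separately (\autoref{lemma:k-mono-extendable}); both are easily repaired.
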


The upper bound generalizes the $O(1/\eps)$ tester for monotonicity on the line.

\begin{restatable}{theorem}{testlineonesidednaub}\label{theo:test:d1:os:na:ub}
There exists a one-sided non-adaptive tester for $k$-monotonicity of functions $f\colon [n]\to\{0,1\}$ with query complexity $q(n,\eps,k)=\bigO{\frac{k}{\eps}}$.
\end{restatable}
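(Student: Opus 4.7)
My plan is to analyze a simple sampling-based tester. The tester draws a set $T \subseteq [n]$ of $m = \bigO{k/\eps}$ uniformly random indices, queries $f$ on $T$, and accepts iff the induced subsequence $(f(i))_{i \in T}$, listed in increasing order of $i$, is $k$-monotone. The procedure is one-sided and non-adaptive with query complexity $m$. Completeness is immediate: $k$-monotonicity is preserved under restriction, since any forbidden alternating chain of length $k+1$ in $f|_T$ lifts to one in $f$ itself. Hence every $k$-monotone $f$ is accepted with probability $1$.

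For soundness, the crux is a structural decomposition lemma: \emph{if $f \colon [n] \to \{0,1\}$ is $\eps$-far from $k$-monotone, then there exist $k+1$ pairwise disjoint intervals $I_1 < I_2 < \cdots < I_{k+1}$ in $[n]$ and alternating values $v_1 = 1, v_2 = 0, v_3 = 1, \dots$ such that each witness set $W_j := \{i \in I_j : f(i) = v_j\}$ has size $|W_j| = \bigOmega{\eps n / k}$.} Granting the lemma, a uniform random sample of size $m = \bigO{k/\eps}$ hits each $W_j$ with constant probability, and by a Chernoff/union-bound argument across the $k+1$ disjoint witnesses, it hits all of them simultaneously with probability at least $2/3$. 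Any transversal $x_j \in W_j \cap T$ then produces an alternating chain $x_1 < x_2 < \cdots < x_{k+1}$ with $f(x_j) = v_j$ starting from $1$, a forbidden pattern in $f|_T$, so the tester rejects.

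To prove the decomposition lemma, I would use a greedy ``level-extraction'' procedure: start with $t_0 = 0$ and, for $j = 1, 2, \ldots, k+1$ in turn, set $t_j$ to be the smallest index $> t_{j-1}$ such that $(t_{j-1}, t_j]$ contains at least $c \eps n / k$ positions with $f = v_j$ (for a small absolute constant $c$), and take $I_j := (t_{j-1}, t_j]$. If this process completes all $k+1$ levels, the $I_j$'s are the desired witness intervals. If not, it halts at some $j^\star \leq k+1$, meaning the tail $(t_{j^\star - 1}, n]$ contains fewer than $c \eps n / k$ positions with $f = v_{j^\star}$, and my goal becomes to derive a contradiction by exhibiting a $k$-monotone $g$ with $\dist{f}{g} < \eps$. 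This is the main obstacle: because greedy intervals may contain many ``wrong-value'' points, a naive assignment $g \equiv v_j$ on $I_j$ can incur too many disagreements. I expect this step will need either a more refined greedy that also controls the ``wrong-value'' mass within each extracted interval, or an LP-duality / exchange argument applied to the optimal $k$-monotone approximation of $f$, showing that if the greedy halts early, then the optimal approximation is itself within $\eps n$ of $f$, contradicting $\eps$-farness.
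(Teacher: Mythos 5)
Your tester --- sample $m$ uniformly random points and accept iff the restriction is $k$-monotone --- is a genuinely different route from the paper's, which learns a block-coarsening $\tilde g$ of $f$ exactly via $\bigO{k/\eps}$ structured queries at block endpoints, draws $\bigO{k/\eps}$ extra uniform samples, and rejects iff $\tilde g$ is not $k$-monotone or at least $k+1$ distinct blocks contain a sample on which $f$ and $\tilde g$ disagree. The paper's rejection criterion is a \emph{count} over blocks, so soundness is a single Chernoff bound on the (Poissonized) number of ``giveaway'' blocks; no union bound over $k$ events is ever needed.

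Your decomposition lemma is in fact true, and the greedy you sketch proves it once you flip the value you assign. Let $T := c\eps n/k$ (say $c = 1/4$), and let $t_j$ be the smallest index such that $I_j := (t_{j-1}, t_j]$ contains at least $T$ positions with $f = v_j$. If the greedy halts at some $j^\star \leq k+1$, define $g$ to be $\bar v_j$ (the \emph{opposite} of the witness value, not $v_j$) on each completed $I_j$ and $\bar v_{j^\star}$ on the tail $(t_{j^\star-1}, n]$. Then $g$ is a concatenation of $j^\star$ alternating constant blocks beginning with $\bar v_1 = 0$, hence $k$-monotone; and $g$ disagrees with $f$ exactly on the witness positions, of which there are at most $T$ per completed $I_j$ (by minimality of $t_j$) and fewer than $T$ on the tail. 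Thus $\dist{f}{g} < j^\star T/n \leq (k+1) c\eps/k < \eps$, contradicting $\eps$-farness. So the greedy completes all $k+1$ levels, each with $|W_j| \geq c\eps n/k$.

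The genuine gap is in the soundness count, not the lemma. Even granting the lemma, your tester must simultaneously hit all $k+1$ disjoint witness sets, each of density $\Theta(\eps/k)$. ``Hits each with constant probability'' plus a union bound does \emph{not} give ``hits all with probability $2/3$'': you need each $W_j$ to be missed with probability $\bigO{1/k}$, which forces $m = \Theta\!\left(\frac{k\log k}{\eps}\right)$. So the simple sample-and-check tester, analyzed this way, overshoots the stated $\bigO{k/\eps}$ (and the matching $\bigOmega{k/\eps}$ lower bound of~\autoref{theo:test:d1:os:a:lb}) by a $\log k$ factor. This is exactly what the paper's two-phase design avoids: by learning $\tilde g$ exactly with $\bigO{k/\eps}$ deterministic queries, the random phase only has to certify that $k+1$ blocks each contain \emph{some} giveaway sample --- a sum with expectation $\bigOmega{k}$ that concentrates in one shot.
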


Testing with $2$-sided error, however, does not require a dependence on $k$. In fact the problem has been well-studied in the machine learning literature in the context of testing/learning ``union of intervals''~\cite{KearnsR:00,BalcanBBY:12}, and in testing geometric properties, in the context of testing surface area~\cite{KothariNOW:14,Neeman:14},\footnote{We thank Eric Blais for mentioning the connection, and pointing us to these works} resulting in an $O({1/\eps}^{{7}/{2}})$-query algorithm. Namely, the starting point of~\cite{BalcanBBY:12} (later improved by~\cite{KothariNOW:14}) is a ``Buffon Needle's''-type argument, where the crucial quantity to analyze is the noise sensitivity of the function that is the probability that a randomly chosen pair of nearby points cross a ``boundary'' -- i.e., have different values. (Moreover, the algorithm of~\cite{BalcanBBY:12} works in the \emph{active testing} setting: it only requires a weaker access model that the standard query model.)

We provide an alternate proof of a $\poly(1/\eps)$ bound (albeit with a worse exponent)  that reveals a surprising connection with {\em  distribution testing}, namely with the problem of estimating the support size of  a distribution.

\begin{restatable}{theorem}{testlinetwosidednaub}\label{theo:test:d1:2s:na:ub}
There exists a two-sided non-adaptive tester for $k$-monotonicity of functions $f\colon [n]\to\{0,1\}$ with query complexity $q(n,\eps,k)=\tildeO{{1}/{\eps^7}}$, \emph{independent of $k$}.
\end{restatable}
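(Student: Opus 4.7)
The plan is to reduce testing $k$-monotonicity of $f\colon[n]\to\{0,1\}$ to testing the support size of an associated distribution on blocks, then exploit a feature of that test which removes the $k$-dependence in the sample complexity. Observe that any such $f$ decomposes canonically into maximal monochromatic blocks $B_1,\dots,B_m$, and is $k$-monotone precisely when $m \le k+1$. Define $D_f$ on $[m]$ by $D_f(i)=|B_i|/n$. The first step is to establish a two-sided equivalence
\[
\mathrm{dist}(f,k\text{-mon}) \;=\; \Theta\Bigl(\min_{D'\,:\,|\mathrm{supp}(D')|\le k+1}\totalvardist{D_f}{D'}\Bigr),
\]
where the easy direction uses the block decomposition of a near $k$-monotone approximant, and the reverse direction proceeds by a greedy argument that repeatedly deletes the lightest blocks, taking care of the parity constraint that a single block flip reduces $m$ by two (and only when the two neighbors are same-valued).

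Next, I would apply a support-size tester to $D_f$. Naively, distinguishing $|\mathrm{supp}(D_f)| \le k+1$ from $\varepsilon$-far takes a number of samples scaling with $k+1$; the key observation that removes this dependence is that when $k+1 \ge C/\varepsilon$, every distribution $\varepsilon$-far from support $\le k+1$ is already $\Omega(\varepsilon)$-far from support $\le C/\varepsilon$, so the effective threshold driving the test is always $s_{\mathrm{eff}}=O(1/\varepsilon)$. Plugging this into a Valiant--Valiant-style support-size estimator yields $\poly(1/\varepsilon)$ samples, independent of $k$.

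Query access to $f$ only provides unlabeled samples from $D_f$: a uniform $x\in[n]$ lands in block $B_i$ with probability $D_f(i)$, but we do not directly see $i$. However, the support-size tester only needs the collision profile of the sample, and whether two samples $x,y$ lie in a common block can be checked by first comparing $f(x)$ and $f(y)$ and then certifying constancy of $f$ on $[x,y]$ via $\poly(1/\varepsilon)$ random midpoint queries; a failure simply resolves the pair as ``different blocks.'' Multiplying the $\tildeO{1/\varepsilon^{O(1)}}$ samples by the $\poly(1/\varepsilon)$-query-per-collision-test cost gives the claimed $\tildeO{1/\varepsilon^7}$ bound.

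The main obstacle I anticipate is the two-sided distance equivalence in the first step: because each block flip changes the number of blocks by two and only same-valued neighbors can merge, the naive bijection ``delete smallest atom of $D_f$ $\leftrightarrow$ flip smallest block of $f$'' loses constants, and I expect to need a careful exchange argument (or a small dynamic programming characterization of the optimal block flips) to control the constants tightly enough for the reduction to propagate cleanly through to the claimed $\varepsilon^7$ exponent.
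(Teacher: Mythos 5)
The proposal diverges from the paper's argument in a way that introduces a genuine gap. The paper's tester also passes through support-size estimation of a block-distribution $D_f$, but its crucial ingredient is the \emph{dual access} model of Canonne and Rubinfeld: the tester is given not only random samples from $D_f$ but also query access to its probability mass function, and in that model one can estimate $\abs{\supp{D_f}}$ to additive $\pm\eps n$ with only $\bigO{1/\eps^2}$ queries, \emph{independent of the universe size}. You replace this with collision statistics on unlabeled samples and a ``Valiant--Valiant-style'' estimator; but samples-only support-size estimation inherently requires $\Theta(n/\log n)$ samples for a universe of size $n$, so with $n\approx k$ you would pick up the very $k$-dependence you are trying to eliminate. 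The observation that ``when $k+1 \ge C/\eps$, being $\eps$-far from support $\le k+1$ implies being $\Omega(\eps)$-far from support $\le C/\eps$'' is true but vacuous here: it is just monotonicity of the family of small-support distributions, and it does nothing to rescue completeness. A $k$-monotone $f$ (say $01\cdots01$ on $k+1$ equal blocks) gives a $D_f$ whose support is $k+1 \gg C/\eps$ and which is $\Omega(1)$-far in TV from \emph{any} distribution of support $\le C/\eps$, so a tester calibrated to the threshold $C/\eps$ would falsely reject it. There is no ``effective threshold $s_{\mathrm{eff}}=O(1/\eps)$'' one can substitute without losing completeness, so the step that is supposed to kill the $k$-dependence does not work.

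A second issue is the implementation of the collision oracle. Checking ``are $x,y$ in the same block'' by random midpoint queries has one-sided error: a short intervening block between $x$ and $y$ is missed with high probability, so pairs that should be declared distinct get merged, systematically deflating the support estimate precisely in the hard case (many small excess blocks contributing the $\eps$-farness). The paper sidesteps both this issue and the one above simultaneously by \emph{gridding}: it first coarsens $f$ to $m = \Theta(k/\eps)$ blocks, which forces every atom of the coarsened block-distribution to have mass at least $1/m$, and then implements the \emph{probability-mass} oracle (not a pair-collision oracle) by scanning outward from a sampled point with a hard cap of $\Theta(1/\eps^2)$ positions, returning ``too heavy'' beyond the cap; they then argue that this capped oracle changes the Canonne--Rubinfeld estimator's output by at most $\eps k/20$. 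Without gridding, neither the sampling nor the mass oracle can be implemented at bounded cost, and without the mass oracle the $k$-dependence of support-size estimation is unavoidable. (The dist-versus-TV equivalence you flag as a worry is indeed nontrivial, but it does go through by a greedy flip-the-smallest-block argument exactly as you anticipate; that part of the plan is fine. It is the samples-only estimator, the backwards completeness direction, and the missing gridding that break the proof.)
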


An immediate implication of~\autoref{theo:test:d1:2s:na:ub} is that one can test even $n^{1-\alpha}$-monotonicity of $f\colon[n]\ra \{0,1\}$, for every $\alpha>0$, with a constant number of queries. Hence,  there is a separation between $1$-sided and $2$-sided testing, for $k=\omega(1)$.

Turning to the $2$-dimensional grid, we show that $2$-monotone functions can be tested with the minimum number of queries one could hope for:
\begin{restatable}{theorem}{testgridtwosidedaub}\label{theo:test:d2:2s:a:ub}
There exists a two-sided adaptive tester for $2$-monotonicity of functions $f\colon [n]^2\to\{0,1\}$ with query complexity $q(n,\eps)=\bigO{\frac{1}{\eps}}$.
\end{restatable}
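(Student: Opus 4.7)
The proposed tester is adaptive: for $T=\bigO{1/\eps}$ iterations, sample $x \preceq z$ in $[n]^2$ (by picking $x\in[n]^2$ uniformly and then $z$ uniformly in $\uparrow x$), query $f(x)$ and $f(z)$; if both equal $1$, sample $y$ uniformly in the sub-rectangle $\setOfSuchThat{w}{x\preceq w\preceq z}$, query $f(y)$, and reject if $f(y)=0$. Accept if no iteration rejects. The total query budget is at most $3T = \bigO{1/\eps}$. Completeness is immediate, since a $2$-monotone function has no chain $x\preceq y\preceq z$ with pattern $(1,0,1)$ and the tester therefore never rejects.

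Soundness reduces to the key lemma: if $f$ is $\eps$-far from $2$-monotone, then a single iteration rejects with probability $\bigOmega{\eps}$. The plan to prove it rests on the structural characterization that a Boolean $f$ on $[n]^2$ is $2$-monotone iff $f^{-1}(1)=U\setminus V$ for two up-sets $U\supseteq V$ (geometrically, a ``diagonal strip'' between two monotone staircases), or equivalently, every row $i$ has its $1$-region confined to a single interval $[a_i,b_i]$ with both $(a_i)$ and $(b_i)$ non-increasing in $i$. Fix an optimal pair $(U^\star,V^\star)$ realizing the minimum Hamming distance $\geq \eps n^2$ to $f^{-1}(1)$. The plan is then to charge each of the at least $\eps n^2$ disagreement cells $w$ to a collection of chain triples $(x,y,z)\ni w$ that exhibit the $(1,0,1)$ pattern in $f$, and to argue that the total probability mass of these witnesses under the sampling distribution is $\bigOmega{\eps}$.

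The main obstacle is executing this charging argument cleanly. Because $(U^\star,V^\star)$ is globally optimal, local flips of $f$ can reshape the optimal staircases, and the witness construction must be robust to this. The delicate cases involve disagreements near the boundary of the strip $U^\star \setminus V^\star$, which may only contribute witnesses by being paired with matching disagreements of the opposite type on a common chain: e.g., a ``false $1$'' inside $V^\star$ does not by itself certify non-$2$-monotonicity and must be combined with a true $1$ inside the strip below it on the chain. A case analysis by the position of each disagreement (false $1$ outside $U^\star$, false $0$ inside the strip, false $1$ inside $V^\star$, and symmetrically for false $0$'s), combined with explicit estimates on how much mass the sampling distribution places on chain-triples incident to any given cell, should close the argument.
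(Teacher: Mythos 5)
The key lemma your plan hinges on --- that a single iteration of the chain tester rejects with probability $\bigOmega{\eps}$ whenever $f$ is $\eps$-far from $2$-monotone --- is simply false, so no amount of care in the charging argument can close the gap. Concretely: a Boolean function $g\colon[n]\to\{0,1\}$ is $2$-monotone iff $g^{-1}(1)$ is an interval, so take $g^{-1}(1)= B\cup D$ with $B=(n/2,\,n/2+\eps n]$ and $D=(n-\eps n,\,n]$ two short $1$-intervals separated by a long $0$-interval $C$. The cheapest interval-ification erases $B$ or $D$, giving $\dist{g}{\kmon{1}{2}}=\eps$. Lift to $[n]^2$ via $f(i,j)=g(i)$; every column restriction of a $2$-monotone $h$ on $[n]^2$ is $2$-monotone on the line, so averaging over columns gives $\dist{f}{\kmon{2}{2}}\geq\eps$. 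For your sampler to see a $(1,0,1)$ chain, the first coordinate of $x$ must land in $B$ (probability $\eps$) and then the first coordinate of $z$ must land in $D$ (probability $\approx 2\eps$ given $x$), so the per-iteration rejection probability is $\bigTheta{\eps^2}$; over $T=\bigO{1/\eps}$ iterations the tester rejects with probability only $\bigO{\eps}$. The obstruction is geometric and unavoidable for a stateless triple-sampler: when both $1$-regions each have measure $\bigTheta{\eps}$ and both endpoints of a witness chain must hit them, the witness mass is quadratically small even though the Hamming distance is $\eps$.

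This is exactly what the paper's proof circumvents. It does not sample chain triples at all; instead it coarsens the columns of $f$ into a $2$-column-wise-monotone function on an $\bigO{1/\eps}\times n$ grid and reduces (via~\autoref{lemma:test:d2:2s:a:ub:indices:sorted:l1}) testing $2$-monotonicity to tolerant $\lp[1]$-testing of \emph{monotonicity} of the two one-dimensional changepoint sequences $\lseq{f},\hseq{f}\colon[n]\to[n]$, calling the $\bigO{1/\eps}$-query $\lp[1]$-tester of~\cite{BermanRY:14}. The final $\bigO{1/\eps}$ query bound then comes from a careful amortization of the cost of simulating queries to the coarsening (the ``Initialization'' and ``Query time'' bookkeeping), not from a rejection-probability analysis. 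Your up-set characterization $f^{-1}(1)=U\setminus V$ is the same geometric picture the paper's two staircases encode, but the analytic engine must be a learning/$\lp[1]$-reduction rather than a random-chain sampler.
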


\newest{We also discuss possible generalizations of~\autoref{theo:test:d2:2s:a:ub} to general $k$ or $d$  section,~\autoref{ssec:grid:kk}.}

\paragraph{Testing $k$-monotonicity on $[n]^d$, tolerant testing, and distance approximation}

Moving to the general grid domain  $[n]^d$, we show that $k$-monotonicity is testable with $\poly(1/\eps, k)$ queries  in constant-dimension grids.

\begin{restatable}{theorem}{testhighdim}\label{coro:tol:test:d:na}
There exists a non-adaptive tester for $k$-monotonicity of functions $f\colon [n]^d\to\{0,1\}$ with query complexity 
$q(n,d,\eps,k)=\min(\tildeO{ \frac{1}{\eps^2} \left(\frac{5kd}{\eps}\right)^d }, 2^{\tildeO{k\sqrt{d}/\eps^2}})$.
\end{restatable}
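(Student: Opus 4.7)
\emph{First bound via coarse-grid reduction.} Set $m:=\lceil 5kd/\eps\rceil$ and partition each axis of $[n]$ into $m$ consecutive intervals of near-equal length, inducing a block-index map $\pi\colon[n]^d\to[m]^d$. The key structural lemma I would establish is: \emph{if $f$ is $k$-monotone then there is a $k$-monotone $g\colon[m]^d\to\{0,1\}$ with $\dist{f}{g\circ\pi}\le\eps/5$}. I would prove this axis-by-axis: starting from $f_0:=f$, define $f_i$ from $f_{i-1}$ by assigning to every point its axis-$i$ block's \emph{last-point value} under $f_{i-1}$. Along each axis-$i$ line, $f_{i-1}$ is a $k$-monotone $\{0,1\}$-sequence with at most $k$ jumps, and the last-point rule corrupts at most one block ($n/m$ points) per jump, giving $\dist{f_{i-1}}{f_i}\le k/m$ and hence $\dist{f}{f_d}\le kd/m\le\eps/5$. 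To preserve $k$-monotonicity, I would show that any forbidden chain $x_1\preceq\cdots\preceq x_{k+1}$ for $f_i$ lifts to one for $f_{i-1}$ via the ``last-point representatives'' $y_j$ of $x_j$ along axis $i$: these stay ordered because the last-point map is monotone in the axis-$i$ block index, and $f_{i-1}(y_j)=f_i(x_j)$ by construction. Since $f_d$ is constant on each block, it equals $g\circ\pi$ for some $k$-monotone $g\colon[m]^d\to\{0,1\}$. The tester then draws $\tildeO{m^d/\eps^2}$ uniform samples, forms block-average estimates $\hat\mu_B\approx \expect{f\mid B}$ for every box $B$, and computes (with no further queries)
\[
\Delta\;:=\;\min_{g\;k\text{-mon. on }[m]^d}\;\frac{1}{m^d}\sum_B\bigl((1-\hat\mu_B)\indic{g(B)=1}+\hat\mu_B\indic{g(B)=0}\bigr),
\]
accepting iff $\Delta\le 3\eps/4$. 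Completeness is immediate from the lemma; soundness follows because if $f$ is $\eps$-far then the analogous minimum over the true $\mu_B:=\expect{f\mid B}$ is $\ge\eps$, and a standard Chernoff/Cauchy-Schwarz argument keeps the estimation slack below $\eps/5$ with $\tildeO{1/\eps^2}$ samples per box on average.

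\emph{Second bound via testing-by-learning.} For the $2^{\tildeO{k\sqrt{d}/\eps^2}}$ branch I would invoke the Goldreich--Goldwasser--Ron implicit-learning reduction: run a non-adaptive learner $\Learner$ for $k$-monotone functions on $[n]^d$ to accuracy $\eps/4$, obtaining a hypothesis $h$, and then verify with $\tildeO{1/\eps^2}$ fresh uniform queries that $\dist{f}{h}\le\eps/2$. The learner itself is built by first invoking the coarsening lemma above to reduce to $[m']^d$ with $m'=\poly(k,1/\eps)$, embedding $[m']^d$ unarily (hence order-preservingly) into $\{0,1\}^{m'd}$, and then running the Fourier-truncation learner that underlies the cube-tester of \autoref{theo:test:d:os:na:ub}. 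Tracking constants yields a learner of sample complexity $2^{\tildeO{k\sqrt{d}/\eps^2}}$, which dominates the verification cost.

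\emph{Main obstacle.} The delicate step is the coarsening lemma: the collapsing rule has to simultaneously (a) preserve $k$-monotonicity on the coarser grid and (b) move $f$ by only $O(k/m)$ per axis. Natural choices such as majority-vote coarsening already destroy $k$-monotonicity on the line once $k\ge 2$, and $\max$/$\min$ variants give weaker quantitative bounds. The ``last-point'' rule is picked precisely because the last-point map is order-preserving along each axis, which is exactly what allows violating chains on the coarse grid to be lifted back to the fine grid as honest witnesses; once this lemma is in place, both the coarse-grid and the testing-by-learning subroutines follow from standard sampling/learning machinery.
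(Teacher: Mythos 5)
Your first branch is essentially the paper's route: set $m=\lceil 5kd/\eps\rceil$, approximate $f$ by a function constant on each $m$-block, estimate per-block means, and minimize empirical error over $k$-monotone $m$-block functions (cf.~\autoref{algo:tester:exp:in:d} and \autoref{prop:tester:exp:in:d}). Your ``last-point'' coarsening rule is a nice refinement of \autoref{lem:closekmono}: the paper's $h$ sets nonconstant blocks to $0$ and the lemma only asserts $\dist{f}{h}<kd/m$, leaving the $k$-monotonicity of the $m$-block approximant implicit even though the completeness step leans on it. Your rule makes this transparent, since the last-point map is order-preserving along each axis, so a forbidden chain for $f_i$ lifts to one for $f_{i-1}$, and the distance charge of $kd/m$ is recovered by counting at most $k$ jump blocks per axis line. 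The remaining differences (drawing $\tildeO{m^d/\eps^2}$ uniform samples versus $t$ queries per block plus a union bound) are cosmetic, though the per-block quota is cleaner than your Cauchy--Schwarz aggregation.

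The second branch has a genuine gap. You propose to coarsen to $[m']^d$, embed $[m']^d$ order-preservingly into $\{0,1\}^{m'd}$ via unary coding, and then run ``the Fourier-truncation learner underlying \autoref{theo:test:d:os:na:ub}.'' First, \autoref{theo:test:d:os:na:ub} is not a Fourier-truncation learner: it is the structural tester that exhaustively queries the middle levels of random subcubes and invokes \autoref{lemma:k-mono-extendable} to find a violated hyperedge; the agnostic-learning route is a separate argument (\autoref{ssec:high:dim:agnostic}). Second, and more fundamentally, the unary embedding cannot yield the target exponent. The uniform distribution on $[m']^d$ does not push forward to the uniform distribution on $\{0,1\}^{m'd}$, so the Boolean-cube low-degree machinery does not apply as-is; and if you instead extend the embedded function $k$-monotonically to the whole cube and learn it under uniform there, the only average-sensitivity bound you control is $k\sqrt{m'd}$, not $k\sqrt{d}$. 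With $m'=\Theta(kd/\eps)$ this inflates the learner's exponent to $\tildeO{k\sqrt{m'd}/\eps^2}$, which is nowhere near $\tildeO{k\sqrt{d}/\eps^2}$. The paper avoids this entirely by proving the influence bound $\totinf\leq k\sqrt{d}$ directly for $k$-monotone $f\colon[r]^d\to\{-1,1\}$ (\autoref{lemma:general:influence:kmon}, via the random Boolean restriction $f_{y^0,y^1}$), deducing Fourier concentration over the $[r]^d$ Fourier basis (\autoref{lemma:concentration:influence:fourier}, \autoref{lemma:fourier:three:parts}), and only then invoking the polynomial-regression agnostic learner of~\cite{KalaiKMS:08}. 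That grid-level influence bound---with no dependence on $r$---is exactly the technical step your sketch needs and sidesteps.
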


In fact, we obtain more general testing algorithms than in~\autoref{coro:tol:test:d:na}, namely our results hold for {\em tolerant} testers.

 The notion of tolerant testing was first introduced in~\cite{PRR:06} to account for the possibility of noisy data. In this notion, a test should  accept inputs that are $\eps_1$-close to the property, and reject inputs that are $\eps_2$-far from the property, where $\eps_1$ and $\eps_2$ are given parameters. Tolerant testing is intimately connected to the notion of distance approximation:  given tolerant testers for every $(\eps_1, \eps_2)$, there exists an algorithm that estimates the distance to the property within any (additive) $\eps$, while incurring only a $\tildeO{\log \frac{1}{\eps}}$ factor blow up in the number of queries.
Furthermore, \cite{PRR:06} shows that both tolerant testing and distance approximation are no harder than agnostic learning.
We prove the following general result.

\begin{restatable}{theorem}{toltesthighdim}\label{theo:tol:test:d:na:full}
There exist
\begin{itemize}
  \item a non-adaptive (fully) tolerant tester for $k$-monotonicity of functions $f\colon [n]^d\to\{0,1\}$ 
  with query complexity $q(n,d,\eps_1,\eps_2,k)=\tildeO{ \frac{1}{(\eps_2-\eps_1)^2} \left(\frac{5kd}{\eps_2-\eps_1}\right)^d }$;
  \item a non-adaptive tolerant tester for $k$-monotonicity of functions $f\colon [n]^d\to\{0,1\}$ with query complexity $q(n,d,\eps_1,\eps_2,k)=2^{\tildeO{k\sqrt{d}/(\eps_2 - 3\eps_1)^2}}$, under the restriction that $\eps_2 > 3\eps_1$.
\end{itemize}
\end{restatable}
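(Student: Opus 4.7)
The plan is to reduce both statements to the construction of suitable agnostic learners, via the standard testing-by-learning reduction of~\cite{PRR:06}: if one can output a hypothesis $h$ with $\dist{f}{h} \le c\cdot\opt_k(f) + \eta$ using $q(\eta)$ queries (where $\opt_k(f)$ denotes the distance from $f$ to the nearest $k$-monotone function), then one gets a non-adaptive tolerant tester distinguishing $\eps_1$-close from $\eps_2$-far inputs whenever $\eps_2 > c\eps_1 + \eta$, at total cost $q(\eta) + \bigO{1/(\eps_2 - c\eps_1 - \eta)^2}$ (the extra samples being used to empirically estimate $\dist{f}{h}$ and threshold). The first bullet will use a fully agnostic learner ($c = 1$) with slack $\eta = (\eps_2-\eps_1)/2$; the second will use a low-degree learner with a small constant $c$ and slack chosen to absorb into the $\eps_2 > 3\eps_1$ regime.

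For the first bound, the learner partitions $[n]^d$ into $T^d$ axis-aligned sub-boxes with $T = \Theta(kd/\eta)$ (each coordinate of $[n]$ split into $T$ consecutive intervals of almost equal length). The structural claim is that every $k$-monotone $f$ is $\eta$-close to some $g$ that is constant on each sub-box. Indeed, along any axis-parallel chain of $T$ sub-boxes in direction $i$, $f$ is $k$-monotone, so setting $g$ to be the majority value of $f$ on each box disagrees with $f$ on at most $k$ of those $T$ boxes, contributing at most $k/T$ error per chain; averaging over all $d$ directions and the $T^{d-1}$ parallel chains per direction bounds the total disagreement by $kd/T = O(\eta)$. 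The learner then takes $\tildeO{T^d/\eta^2}$ uniform samples, estimates each box-majority by Chernoff, and outputs the resulting step function; plugging this into the reduction above yields the stated complexity.

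For the second bound, I would use low-degree Fourier learning on $[n]^d$ in a tensor-product orthonormal basis (a product of univariate orthonormal bases for the uniform measure on $[n]$). The key structural lemma, which is where the ``novel Fourier analysis on the grid'' promised in the introduction enters, is that every $k$-monotone $f\colon[n]^d\to\{0,1\}$ has at most an $\eta$-fraction of its Fourier mass at degrees exceeding $\tau = O(k\sqrt{d}/\eta^2)$. To prove this I would first bound the total grid-influence of $f$ by $O(k\sqrt{d})$: along each axis-parallel line $f$ alternates at most $k$ times, so the axis-$i$ influence is $O(k)$ per line, and a Cauchy--Schwarz / edge-isoperimetric step (paralleling the classical $O(\sqrt{d})$ bound in the monotone Boolean case) aggregates this into the $O(k\sqrt{d})$ total-influence bound. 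Markov's inequality on the Parseval identity then pushes all but an $\eta$-fraction of the spectrum below degree $\tau$. Standard $L_2$ regression on the low-degree part of the basis, combined with a coarsening of $[n]$ that suppresses the explicit $n$-dependence, produces a Boolean hypothesis with $L_1$-error $c\cdot\opt_k(f) + \eta$ for a small constant $c$; the losses in passing from $L_2$ to $L_1$ and in rounding force the $\eps_2 > 3\eps_1$ regime.

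The main obstacle is the Fourier-analytic step for the second bound. On $\{0,1\}^d$ the total-influence bound for monotone functions has a clean derivation via the Walsh basis and the fact that each sensitive edge contributes one to $\totinf$; on $[n]^d$ one must work with a genuine tensor-product orthonormal basis, define the correct axis-$i$ ``grid influence'' in that basis, and verify both the $O(k\sqrt{d})$ total-influence bound and its Parseval-style conversion to Fourier tail concentration, all while keeping the final query complexity independent of $n$. Once this grid-Fourier lemma is in place, the remainder of the proof is assembly: plug the box-partition and low-degree learners, respectively, into the Parnas--Ron--Rubinfeld tolerant-testing reduction and read off the two claimed query complexities.
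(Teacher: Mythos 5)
Your high-level plan — grid-coarsen $[n]^d$ to a domain whose size is independent of $n$, then apply a testing-by-learning reduction with a brute-force block learner for the first bullet and KKMS polynomial regression for the second — matches the paper's strategy at the architectural level. However, there is a genuine gap in the structural lemma for the first bullet, and the sketch for the second bullet leaves its key step unresolved.

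The gap in the first bullet is the claim that ``along any axis-parallel chain of $T$ sub-boxes in direction $i$, $f$ is $k$-monotone, so [the block-level hypothesis] disagrees with $f$ on at most $k$ of those $T$ boxes.'' Consecutive boxes in an axis-parallel chain are \emph{not} totally ordered as sets: a top-corner of box $B_j$ is incomparable with a bottom-corner of box $B_{j+1}$ whenever they disagree in any coordinate other than $i$. As a result you cannot stitch violations from consecutive boxes into a single ascending chain, and in fact a $1$-monotone (monotone) function can be non-constant on \emph{every} box of an axis-parallel chain. For example, on $[8]\times[2]$ with blocks of side $2$, the monotone $f(x,y)=\indic{y=2 \text{ and } x\ge 2}$ is non-constant on all four blocks of the row chain. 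The paper avoids this by partitioning $[m]^d$ into \emph{diagonal} chains $C_x=\{x+\ell\cdot 1^d\}$, along which consecutive boxes \emph{are} totally ordered, so that $k+1$ non-constant boxes on one chain directly yield a $(k+1)$-alternating ascending chain of points and contradict $k$-monotonicity. This gives at most $kdm^{d-1}$ non-constant blocks in total, hence a block function (set to $f$'s value on constant blocks, arbitrarily elsewhere) at distance at most $kd/m$. Your argument as written does not give a bound of this type.

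For the second bullet, the structure is close to the paper's, but the central lemma you flag as ``the main obstacle'' — that a $k$-monotone function on $[r]^d$ has total influence $O(k\sqrt d)$ — is proved in the paper not by a direct Cauchy--Schwarz/isoperimetric argument but by a clean reduction to the hypercube: drawing $y^0,y^1\in[r]^d$ uniformly and defining $f_{y^0,y^1}\colon\{0,1\}^d\to\{-1,1\}$ by selecting coordinate-wise minima/maxima, one checks that $f_{y^0,y^1}$ inherits $k$-monotonicity and that $\totinf[f]=\shortexpect_{y^0,y^1}[\totinf[f_{y^0,y^1}]]$, so the Blais et al.\ hypercube bound transfers immediately. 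Your sketch of the Boolean-case proof (``each directional influence is $O(k)$ per line, then Cauchy--Schwarz'') does not directly yield $O(k\sqrt d)$ for $k$-monotone functions — the classical $O(\sqrt d)$ monotone bound rests on $\totinf[f]=\sum_i\hat f(\{i\})$, which fails for $k\ge 2$. Finally, the constraint $\eps_2>3\eps_1$ does not stem from $L_2$-to-$L_1$ or rounding losses: KKMS is a fully agnostic learner, but its hypothesis is \emph{improper} (not $k$-monotone), so the tester must additionally compare the learned $h$ against the class of $m$-block $k$-monotone functions, and the two triangle-inequality steps in the completeness/soundness analysis are what produce the factor $3$.
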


To the best of our knowledge, the only previous results for tolerant testing  for monotonicity on $[n]^d$ are due to 
 Fattal and Ron~\cite{FattalR:10}. They give both additive and multiplicative distance approximations algorithms, and obtain $O(d)$-multiplicative  and  $\eps$-additive approximations with query complexity $\poly(\frac{1}{\eps})$. While very efficient, there results only give fully tolerant testers for dimensions $d=1$ and $d=2$. Our results generalize the work of ~\cite{FattalR:10} showing  existence of tolerant testers for $k$-monotonicity (and hence for monotonicity) for any dimension $d\geq 1$, and any $k\geq 1$, but  paying the price in the query complexity.
 
  As a consequence to~\autoref{theo:tol:test:d:na:full}, we make progress on an open problem of Berman \etal.~\cite{BermanRY:14}, as explained next.

\paragraph{Testing $k$-monotonicity under $\lp[p]$ distance} The property of being a monotone Boolean function has a natural extension to real-valued functions. Indeed, a real-valued function defined over a finite domain $D$ is monotone  if $f(x)\leq f(y)$ whenever $x\preceq y$. For real-valued functions the more natural notion of distance is $\lp[p]$ distance, rather than Hamming distance. The study of monotonicity has been extended to real-valued functions in a recent work by Berman \etal.~\cite{BermanRY:14}. They give tolerant testers for grids of dimension $d=1$ and $d=2$, and leave open the problem of extending the results to general $d$, as asked explicitly at the recent Sublinear Algorithms Workshop 2016~\cite{Sublinear:open:70}.

 We make progress towards solving this open problem, by combining our~\autoref{theo:tol:test:d:na:full}  with a reduction from $\lp[p]$ testing to Hamming testing from \cite{BermanRY:14}.

\begin{restatable}{theorem}{toltestlone}\label{coro:tol:test:l1:na}
There exists a non-adaptive tolerant $\lp[1]$-tester for monotonicity of functions $f\colon [n]^d\to\{0,1\}$ with query complexity
\begin{itemize}
  \item $\tildeO{ \frac{1}{(\eps_2-\eps_1)^2} \left(\frac{5d}{\eps_2-\eps_1}\right)^d }$, for any $0\leq \eps_1 < \eps_2 \leq 1$;
  \item $2^{\tildeO{\sqrt{d}/(\eps_2-3\eps_1)^2}}$, for any $0\leq 3\eps_1 < \eps_2 \leq 1$.
\end{itemize} 
\end{restatable}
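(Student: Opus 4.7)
The plan is to apply~\autoref{theo:tol:test:d:na:full} with $k=1$ (since classical monotonicity coincides with $1$-monotonicity), after invoking the threshold-decomposition reduction of~\cite{BermanRY:14} that converts $\lp[1]$ monotonicity testing into Hamming monotonicity testing.

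First I would recall the level-set identity underlying the reduction: for any $f\colon[n]^d\to[0,1]$ and each threshold $t\in[0,1]$, define the Boolean level-set function $f_t(x)\eqdef\indic{f(x)\geq t}$; then
\begin{equation*}
\lp[1]\bigl(f,\text{monotone}\bigr)\;=\;\int_0^1 \dist{f_t}{\text{monotone Boolean}}\,dt.
\end{equation*}
For the special case of Boolean inputs $f\colon[n]^d\to\{0,1\}$, the integrand is piecewise constant in $t$, and the identity collapses to the statement that the $\lp[1]$ distance of $f$ to the class of monotone real-valued functions equals the Hamming distance of $f$ to the class of monotone Boolean functions---the $\lp[1]$-optimal monotone real-valued approximant of a Boolean function can always be taken to be Boolean, by a standard thresholding argument: if $g^*$ is optimal real-valued and $g_\theta\eqdef\indic{g^*\geq\theta}$, then each $g_\theta$ is monotone and averaging $\theta\in[0,1]$ yields $\int_0^1 \sum_x\abs{f(x)-g_\theta(x)}\,d\theta=\sum_x\abs{f(x)-g^*(x)}$, so some $g_\theta$ matches the real-valued optimum. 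Hence the $\lp[1]$ tolerant testing problem reduces directly to the Hamming one with no loss in $\eps_1$, $\eps_2$, and it preserves non-adaptivity since the reduction performs no queries of its own.

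Having reduced to Hamming testing, the two claimed query bounds are obtained by simply instantiating~\autoref{theo:tol:test:d:na:full} at $k=1$: the first bullet of that theorem yields $\tildeO{(\eps_2-\eps_1)^{-2}\,(5d/(\eps_2-\eps_1))^d}$ and the second yields $2^{\tildeO{\sqrt{d}/(\eps_2-3\eps_1)^2}}$ under the restriction $\eps_2>3\eps_1$, matching the statement exactly. The only piece requiring even modest care is the equivalence between $\lp[1]$ and Hamming distances to monotonicity for Boolean inputs sketched above; once that lemma is in hand, the theorem is a direct specialization of~\autoref{theo:tol:test:d:na:full}, so no genuine additional obstacle arises.
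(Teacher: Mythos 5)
The theorem you are proving has a mismatch with the paper's surrounding text that matters for judging your argument. The statement as printed gives the range as $\{0,1\}$, but the paragraph immediately preceding it in~\autoref{sec:application:l1} says the result is ``a tolerant $\lp[1]$ tester for monotonicity of functions $f\colon[n]^d \to [0,1]$,'' and the entire motivation (\autoref{sec:intro} and the open problem of~\cite{BermanRY:14,Sublinear:open:70}) is about \emph{real-valued} functions, where $\lp[1]$ distance is genuinely different from Hamming distance. Your proof latches onto the literal $\{0,1\}$ range and then correctly observes (via a thresholding/averaging argument, which checks out) that for Boolean $f$ the $\lp[1]$ distance to the real-valued monotone class coincides with the Hamming distance to the Boolean monotone class; under that reading, invoking~\autoref{theo:tol:test:d:na:full} with $k=1$ on $[n]^d$ indeed reproduces the stated bounds verbatim. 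But this trivializes the claim and does not engage with the content the section is actually about.

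The paper's own proof follows a genuinely different route precisely because it targets the range $[0,1]$. It first rounds $f$ to the $m$-level grid via $\Phi_m$ with $m = \clg{4/(\eps_2-\eps_1)}$ (\autoref{prop:dist:monotone:l1:dimension:range:coarse}(i), paying $1/m$ in each threshold), then applies the threshold operator $T$ to lift $\Phi_m\circ f$ to a \emph{Boolean} function $g = T\circ\Phi_m\circ f$ on the \emph{larger} domain $[n]^d\times[m]$ (\autoref{prop:dist:monotone:l1:dimension:range:coarse}(ii)), exchanging the $\lp[1]$ distance for Hamming distance at the cost of one extra grid dimension. Only then does it invoke~\autoref{coro:tol:test:d:na} with $k=1$, now in dimension $d+1$. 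This ``trade the range for a dimension'' step is the substantive content, and it is absent from your argument. For a genuinely real-valued input $f$, your identity $\lp[1](f,\text{mon}) = \dist{f}{\text{mon}}$ does not hold, and there is no way to apply~\autoref{theo:tol:test:d:na:full} on $[n]^d$ directly without first discretizing the range; so your proposal would fail if the range is $[0,1]$ as the context intends. You should flag that the printed range $\{0,1\}$ is almost certainly a typo for $[0,1]$, and supply the rounding-and-threshold reduction (\autoref{prop:dist:monotone:l1:dimension:range:coarse}) before specializing the Hamming theorem to $k=1$ in dimension $d+1$. (Incidentally, once the extra dimension is introduced, the exponent in the first query bound should really read $d+1$ rather than $d$; this looseness is already present in the paper's statement and is not something your proof is expected to repair.)
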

 
\makeatletter{}\subsection{Proofs overview and technical contribution}\label{sec:intro:overview}

\paragraph{Structural properties and the separation between testing and learning on $\{0,1\}^d$. }

We first observe that basic structural properties, such as {\em extendability} (i.e. the feature that a function that is monotone on a sub-poset of $[n]^d$ can be extended into a monotone function on the entire poset domain), and properties of the {\em violation graph} (i.e., the graph whose edges encode the violations to monotonicity), extend easily to $k$-monotonicity (see \autoref{app:poset:structural}). These properties help us to argue the separation between testing and learning (\autoref{theo:test:d:os:na:ub}). However, unlike the case of monotonicity testing, these properties do not seem to be enough for showing  upper bounds that grow  polynomially in $d$.

\paragraph{Grid coarsening and  testing by implicit/explicit learning.}
One pervading technique, which underlies all the hypergrid  upper bounds in this work, is that of \emph{gridding}: i.e., partitioning the domain into ``blocks'' whose size no longer depends on the main parameter of the problem, $n$. This technique generalizes the approach of \cite{FattalR:10} who performed a similar gridding for dimension $d=2$.
By simulating query access to the ``coarsened'' version of the unknown function (with regard to these blocks), we are able to leverage methods such as testing-by-learning (either fully or partially learning the function), or reduce our testing problem to a (related) question on these nicer ``coarsenings.'' (The main challenge here lies in providing efficient and consistent oracle access to the said coarsenings.)\marginnote{Terminology: \emph{gridding}, \emph{coarsening}}

At a high-level, the key aspect of $k$-monotonicity which makes this general approach possible is reminiscent of the concept of \emph{heredity} in property testing. Specifically, we rely upon the fact that ``gridding preserves $k$-monotonicity:'' if $f$ \emph{is} $k$-monotone, then so will be its coarsening $g$ -- but now $g$ is much simpler to handle. This  allows us to trade the domain $[n]^d$ for what is effectively $[m]^d$, with $m \ll n$. We point out that this differs from the usual paradigm of \emph{dimension reduction}: indeed, the latter would reduce the study of a property of functions on $[n]^d$ to that of functions on $[n]^{d^\prime}$ for $d^\prime \ll d$ (usually even $d^\prime = 1$) by projecting $f$ on a lower-dimensional domain. In contrast, we do not take the dimension down, but instead reduce the size of the \emph{alphabet}. \new{Moreover, it is worth noting that this gridding technique is also orthogonal to that of \emph{range reduction}, as used e.g. in~\cite{DGLRRS99}. Indeed, the latter is a reduction of the range of the function from $[R]$ to $\{0,1\}$, while gridding is only concerned about the domain size.}

\paragraph{Estimating the support of distributions.} Our proof of the $\poly(1/\eps)$ upper bound for testing $k$-monotonicity on the line (\autoref{theo:test:d1:2s:na:ub}) rests upon an unexpected connection to \emph{distribution testing}, namely to the question of support size estimation of a probability distribution. In more detail, we describe how to reduce $k$-monotonicity testing to the support size estimation problem in (a slight modification of) the \emph{Dual access model} introduced by Canonne and Rubinfeld~\cite{CanonneR:14}, where the tester is granted samples from an unknown distribution as well as query access to its probability mass function.

For our reduction to go through, we first describe how any function $f\colon[n]\to\{0,1\}$ determines a probability distribution $D_f$ (on $[n]$), whose effective support size is directly related to the $k$-monotonicity of $f$. We then show how to implement dual access to this $D_f$ from queries to $f$: in order to avoid any dependence on $k$ and $n$ in this step, we resort both to the gridding approach outlined above (allowing us to remove $n$ from the picture) and to a careful argument to ``cap'' the values of $D_f$ returned by our simulated oracle. Indeed, obtaining the exact value of $D_f(x)$ for arbitrary $x$ may require $\Omega(k)$ queries to $f$, which we cannot afford; instead, we argue that only returning $D_f(x)$ whenever this value is ``small enough'' is sufficient. Finally, we show that implementing this ``capped'' dual access oracle is possible with no dependence on $k$ whatsoever, and we can now invoke the support size estimation algorithm of~\cite{CanonneR:14} to conclude.

\paragraph{Fourier analysis on the hypergrid.}  We give an algorithm for fully tolerantly testing $k$-monotonicity whose query complexity in exponential in $d$.  We also describe an alternate tester (with a slightly worse tolerance guarantee) whose query complexity is instead exponential in $\widetilde{O}(k\sqrt{d})$ for constant distance parameters.  As mentioned above, we use our gridding approach combined with tools from learning theory. Specifically, we employ an agnostic learning algorithm of~\cite{KalaiKMS:08} using polynomial regression.  Our coarsening methods allow us to treat the domain as if it were $[m]^d$ for some $m$ that is independent of $n$.  To prove that this agnostic learning algorithm will succeed, we turn to Fourier analysis over $[m]^d$.  We extend the bound on average sensitivity of $k$-monotone functions over the Boolean hypercube from~\cite{BlaisCOST:15} to the hypergrid, and we show that this result implies that the Fourier coefficients are concentrated on ``simple'' functions. 
 
\makeatletter{}\subsection{Discussion and open problems}\label{sec:intro:discussion}

This is the first work to study $k$-monotonicity,  a natural and well-motivated generalization of monotonicity. Hence this work opens up many intriguing questions in the area of property testing, with potential applications to learning theory, circuit complexity and cryptography.

\noindent As previously mentioned, the main open problem prompted by our work is the following:

\begin{quote}
{\em Can $k$-monotonicity on the hypercube $\{0,1\}^d$  be tested with $\poly(d^k)$ queries?}
\end{quote}

A natural $1$-sided tester for $k$-monotonicity is a {\em  chain tester}: it queries points along a random chain, and rejects only if it finds a violation to $k$-monotonicity, in the form of a sequence  $x_1\preceq x_2 \preceq \ldots \preceq x_{k+1}$ in $\{0,1\}^d$, such that $f(x_1)=1$ and $f(x_i)\ne f(x_{i+1})$. In particular, the testers in~\cite{GGLRS00,CS:13b,ChenST:14,KhotMS:15} all directly imply a chain tester. We conjecture that 
 there exists a chain tester for $k$-monotonicity that succeeds with probability $d^{-O(k)}$.
 
Another important open question concerns the hypergrid domain, and in particular it pushes for a significant strengthening of~\autoref{coro:tol:test:d:na} and~\autoref{coro:tol:test:l1:na}:

\begin{quote}
{\em Can $k$-monotonicity on the hypergrid $[n]^d$  be (tolerantly) tested with $2^{o_k(\sqrt{d})}$ queries?}
\end{quote}
Answering this question would imply further progress on the $L_1$-testing question for monotonicity, left open in~\cite{BermanRY:14, Sublinear:open:70}.\medskip

There also remains the question of establishing two-sided lower bounds that would go beyond those of monotonicity. Specifically:
\begin{quote}
{\em Is there an $d^{\Omega(k)}$-query \emph{two-sided} lower bound for $k$-monotonicity on the hypercube $\{0,1\}^d$?}
\end{quote}

In this work we also show surprising connections to distribution testing (e.g. in the proof of~\autoref{theo:test:d1:2s:na:ub}), and to testing union of intervals and testing surface area (as discussed in~\autoref{ssec:line:twosided}). An intriguing direction is to generalize this connection to union of intervals and surface area in higher dimensions, to leverage or gain insight on $k$-monotonicity on the $d$-dimensional hypergrid.\medskip

Finally, while we only stated here a few directions, we emphasize that every question that is relevant to monotonicity is also relevant and interesting in the case of $k$-monotonicity.

\subsection{Related work}
\makeatletter{}
 As mentioned, $k$-monotonicity has deep connections with the notion of \emph{negation complexity} of functions, which is the minimum number of negation gates needed in
a circuit to compute a given function. The power of negation gates is intriguing and far from being understood in the context of circuit lower bounds. 
Quoting from Jukna's book~\cite{Jukna:12}, 
{\em the main difficulty in proving nontrivial lower bounds on the size of circuits using \textsf{AND}, \textsf{OR}, and \textsf{NOT} is the presence of \textsf{NOT} gates: we already know how to prove even exponential lower bounds for monotone functions if no \textsf{NOT} gates are allowed. The effect of such gates on circuit size remains to a large extent a mystery.}

This gap  has motivated the study of circuits with \emph{few} negations. Two notable works successfully extend lower bounds in the monotone setting to negation-limited setting:  in~\cite{AmanoMaruoka:clique}, Amano and Maruoka show superpolynomial circuit lower bounds for $(1/6)\log{\log{n}}$ negations using the $\textsc{Clique}$ function; and recently the breakthrough work of  Rossman~\cite{Rossman:15} establishes circuit lower bounds for $\textsf{NC}^1$ with roughly $\frac{1}{2}\log{n}$ negations by drawing upon his lower bound for monotone $\textsf{NC}^1$.   
 
The divide between the understanding of monotone and non-monotone computation exists in general: while we usually have a fairly good understanding of the monotone case, many things get murky or fail to hold even when a single negation gate is allowed.  In order to get a better grasp on negation-limited circuits, a body of recent work has been considering this model in various contexts: Blais~\etal.~\cite{BlaisCOST:15} study negation-limited circuits from a computational learning viewpoint,  Guo~\etal.~\cite{GuoMOR:15} study the possibility of implementing cryptographic primitives using few negations, and Lin and Zhang~\cite{LinZ:16} are interested in verifying whether some classic Boolean function conjectures hold for the subset of functions computed by negation-limited circuits.

Many of these results implicitly or explicitly rely on a simple but powerful tool: the decomposition of negation-limited circuits into a composition of some ``nice'' function with monotone components. Doing so enables one to apply results on separate monotone components, and finally to carefully combine the outcomes (e.g.,~\cite{GuoK:15}).
 Though these techniques can yield results for as many as $O(\log{n})$ negations, they also leave open surprisingly basic questions:
\begin{itemize}
\item~\cite{BlaisCOST:15}  Can we have an efficient weak learning algorithm for functions computed by circuits with a \emph{single} negation?  
\item ~\cite{GuoMOR:15} Can we obtain pseudorandom generators when allowing only a \emph{single} negation?\end{itemize}

In contexts  where the circuit size is not the quantity of interest, the equivalent notion of $2$-monotone functions is more natural than that of circuits allowing only one negation. Albeit seemingly simple, even the class of $2$-monotone functions remains largely a mystery: as exemplified above, many basic yet non-trivial questions, ranging from the structure of their Fourier spectrum to their expressive power of $k$-monotone functions, remain open. 
 
\makeatletter{}\subsection{Organization of the paper}

After recalling some notations and definitions in~\autoref{sec:prelim}, we consider the case of the Boolean hypercube in~\autoref{sec:cube}, where we establish lower bounds on testing $k$-monotonicity of functions $f\colon\{0,1\}^d \to \{0,1\}$ for both one- and two-sided algorithms, \new{and provide an algorithm which ``beats'' the testing-by-learning approach, showing that testing is provably easier than learning.}

Next, we establish our results for functions on the line in~\autoref{sec:line}, starting with the lower and upper bounds for one-sided testers before turning in~\autoref{ssec:line:twosided} to the two-sided upper bound of~\autoref{theo:test:d1:os:a:lb}. We then describe in~\autoref{sec:grid} our results for functions on the grid $[n]^2$, focusing on the case $k=2$; and discussing possible extensions in~\autoref{ssec:grid:kk}. 

\autoref{sec:high:dim} contains our general algorithms for $k$-monotonicity on the hypergrid $[n]^d$, for arbitrary $k$ and $d$. We prove~\autoref{theo:tol:test:d:na:full} in two parts. We establishing its first item (general tolerant testing algorithm with exponential dependence in $d$) in~\autoref{ssec:high:dim:full:tolerant} (\autoref{prop:tester:exp:in:d}). The second item (with query complexity exponential in $k\sqrt{d}$) is proven in~\autoref{ssec:high:dim:agnostic}, where we analyze the Fourier-based tolerant tester of~\autoref{prop:tester:exp:in:rootd}. We then apply these results to the question of tolerant $\lp[1]$-testing of monotonicity in~\autoref{sec:application:l1}, after describing a reduction between monotonicity of functions $[n]^d\to [0,1]$ and of $[n]^{d+1}\to \{0,1\}$.\medskip

Except maybe~\autoref{sec:application:l1} which depends on~\autoref{sec:high:dim}, all sections are independent and self-contained, and the reader may choose to read them in any order.

\section{Preliminaries}\label{sec:prelim}
\makeatletter{}\renewcommand{\domain}{\mathcal{X}}
\newcommand{\range}{\mathcal{Y}}

We denote by $\log$ the binary logarithm, and use $\tildeO{\cdot}$ to hide polylogarithmic factors in the argument (so that $\tildeO{f} = \bigO{f \log^c f}$ for some $c \geq 0$).

Given two functions $f,g\colon\domain\to\range$ on a finite domain $\domain$, we write $\dist{f}{g}$ for the (normalized) Hamming distance between them, i.e.
\[
    \dist{f}{g} = \frac{1}{\abs{\domain}}\sum_{x\in\domain} \indic{f(x)\neq g(x)} = \probaDistrOf{x\sim{\domain}}{ f(x) \neq g(x) }
\]
where $x\sim{\domain}$ refers to $x$ being drawn from the uniform distribution on $\domain$. A \emph{property} of functions from $\domain$ to $\range$ is a subset $\property\subseteq \domain^{\range}$ of these functions; we define the distance of a function $f$ to $\property$ as the minimum distance of $f$ to any $g\in\property$:
\[
  \dist{f}{\property} = \inf_{g\in\property} \dist{f}{g}.
\]

For some of our applications, we will also use another notion of distance specific to real-valued functions, the $L_1$ distance (as introduced in the context of property testing in~\cite{BermanRY:14}). For $f,g\colon\domain\to[0,1]$, we write
\[
    L_1(f,g) = \frac{1}{\abs{\domain}} \sum_{x\in\domain} \abs{ f(x) - g(x) } = \shortexpect_{x\sim{\domain}}[ \abs{ f(x) - g(x) } ] \in [0,1]
\]
and extend the definition to $L_1(f, \property)$, for $\property\subseteq\domain^{[0,1]}$, as before.

\paragraph{Property testing.} We recall the standard definition of testing algorithms, as well as some terminology:
\begin{definition}\label{def:testing}
Let $\property$ be a property of functions from $\domain$ to $\range$.  A \emph{$q$-query testing algorithm for $\property$} is a randomized algorithm $\Tester$ which takes as input $\eps\in(0,1]$ as well as query access to a function $f\colon\domain\to\range$.  After making at most $q(\eps)$ queries to the function, \Tester either outputs \accept or \reject, such that the following holds:
\begin{itemize}
\item if $f \in \property$, then \Tester outputs \accept with probability at least $2/3$; \hfill (Completeness)
\item if $\dist{f}{\property} \geq \eps$, then \Tester outputs \reject  with probability at least $2/3$; \hfill (Soundness)
\end{itemize}
where the probability is taken over the algorithm's randomness. If the algorithm only errs in the second case but accepts any function $f\in\property$ with probability $1$, it is said to be a \emph{one-sided} tester; otherwise, it is said to be \emph{two-sided}. Moreover, if the queries made to the function can only depend on the internal randomness of the algorithm, but not on the values obtained during previous queries, it is said to be \emph{non-adaptive}; otherwise, it is \emph{adaptive}.
\end{definition}

Additionally, we will also be interested in \emph{tolerant} testers -- roughly, algorithms robust to a relaxation of the first item above:
\begin{definition}\label{def:tol:testing}
  Let $\property$, $\domain$, and $\range$ be as above. A \emph{$q$-query tolerant testing algorithm for $\property$} is a randomized algorithm $\Tester$ which takes as input $0 \leq \eps_1 < \eps_2 \leq 1$, as well as query access to a function $f\colon\domain\to\range$. After making at most $q(\eps_1,\eps_2)$ calls to the oracle, \Tester outputs either \accept or \reject, such that the following holds:
  \begin{itemize}
    \item if $\dist{f}{\property} \leq \eps_1$, then \Tester outputs \accept with probability at least $2/3$; \hfill (Completeness)
    \item if $\dist{f}{\property} \geq \eps_2$, then \Tester outputs \reject  with probability at least $2/3$; \hfill (Soundness)
  \end{itemize}
where the probability is taken over the algorithm's randomness. The notions of one-sidedness and adaptivity of~\autoref{def:testing} extend to tolerant testing algorithms as well.
\end{definition}

Note that as stated, in both cases the algorithm ``knows'' $\domain,\range$, and $\property$; so that the query complexity $q$ can be parameterized by these quantities. More specifically, when considering $\domain=[n]^d$ and the property $\property$ of $k$-monotonicity, we will allow $q$ to depend on $n,d$, and $k$. Finally, we shall sometimes require a probability of success $1-\delta$ instead of the (arbitrary) constant $2/3$; by standard techniques, this can be obtained at the cost of a multiplicative $\bigO{\log({1}/{\delta})}$ in the query complexity.

\paragraph{PAC and agnostic learning \cite{Valiant84}} A learning  algorithm ${\cal A} $ for a {\em concept class}  $\class$ of functions $f\colon\domain\to\range$ (under the uniform distribution) is given parameters $\eps, \delta>0$ and sample access to some target function $f\in\class$ \emph{via}  labeled samples $\langle x, f(x)\rangle$, where $x$ is drawn uniformly at random from $\domain$. The algorithm should output a hypothesis $h\colon\domain\to\range$ such that $\dist{h}{f}\leq \eps$ with probability at least $1-\delta$. The algorithm is \emph{efficient} if it runs in time $\poly(n, 1/\eps, 1/\delta)$. If ${\cal A}$ must output  $h\in \class$ we say it is a \emph{proper learning algorithm}, otherwise,  we say it is an \emph{improper learning} one. 

Moreover, if $\Algo$ still succeeds when $f$ does not actually belong to $\class$, we say it is an \emph{agnostic learning algorithm}. Specifically, the hypothesis function $h$ that it outputs must satisfy $\dist{f}{g} \leq \opt_f + \eps$ with probability at least $1-\delta$, where $\opt_f = \min_{g\in\class} \dist{f}{g}$.

\section{On the Boolean hypercube}\label{sec:cube}
\makeatletter{}
\newcommand{\poset}{\ensuremath{\mathcal{P}}\xspace}
\newcommand{\rank}[1]{\operatorname{rank}_{#1}}
\newcommand{\ds}{\displaystyle}
In this section, we focus on $k$-monotonicity of Boolean functions over the hypercube $\{0,1\}^d$. We begin in~\autoref{ssec:cube:upper} with a tester with query complexity $2^{\tildeO{\sqrt{d}}}$, establishing a strict separation between learning and testing. \autoref{ssec:cube:lower} is then dedicated to our lower bounds on $k$-monotonicity testing.

\subsection{Upper bound: beating the learning approach}\label{ssec:cube:upper}

\new{In this section, we prove the following theorem:}\footnote{We note that this result is only interesting in the regime $k\leq \sqrt{d}$: indeed, for $k=\bigOmega{\sqrt{d}\log\frac{1}{\eps}}$ \emph{every} function is $\eps$-close to $k$-monotone.}

\testcubeonesidednaub*

\noindent Let us recall the following standard fact.

\begin{fact} \label{fact:all-points-in-middle}
There exists an absolute constant $C>0$ such that the number of points of  $\{0,1\}^d$ that do not have integer weights in  the  \emph{middle  levels}  $[\frac{d}{2} - \sqrt d \log \frac{C}{\eps}, \frac{d}{2} + \sqrt d \log \frac{C}{\eps}]$  is at most $\eps 2^{d-1}$.
\end{fact}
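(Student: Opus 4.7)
The plan is to apply a standard concentration inequality (Chernoff/Hoeffding) to the Hamming weight of a uniformly random point in $\{0,1\}^d$. Specifically, let $\vect{x}$ be drawn uniformly at random from $\{0,1\}^d$ and write $|\vect{x}|=\sum_{i=1}^d x_i$ for its Hamming weight. Since the coordinates $x_i$ are i.i.d.\ $\bernoulli{1/2}$, the random variable $|\vect{x}|$ is a sum of $d$ independent bounded variables, and Hoeffding's inequality yields
\[
  \probaOf{\, \abs{\,\abs{\vect{x}} - d/2\,} \geq t \,} \;\leq\; 2\exp\!\left(-\frac{2t^2}{d}\right)
\]
for every $t > 0$.

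Next, I would set $t = \sqrt{d}\,\log(C/\eps)$ (with $\log=\log_2$ as elsewhere in the paper), which turns the right-hand side into $2\exp(-2\log^2(C/\eps))$. Since this expression decays super-polynomially in $1/\eps$, one can choose an absolute constant $C>0$ large enough so that
\[
  2\exp\!\left(-2\log^2(C/\eps)\right) \;\leq\; \frac{\eps}{2}
\]
for all $\eps \in (0,1]$ (and I would also choose $C \geq 2$ so the statement is vacuous when $\sqrt{d}\log(C/\eps) \geq d/2$, ensuring the interval actually sits inside $[0,d]$).

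Translating this probability bound back to a count, the fraction of points of $\{0,1\}^d$ whose Hamming weight lies outside the interval $[d/2 - \sqrt{d}\log(C/\eps),\, d/2 + \sqrt{d}\log(C/\eps)]$ is at most $\eps/2$, so the number of such points is at most $(\eps/2)\cdot 2^d = \eps\cdot 2^{d-1}$, as claimed. There is no real obstacle here: the only thing to double-check is the choice of constant $C$ so that the Hoeffding tail gives exactly the factor $\eps/2$ uniformly in $\eps$, which is routine.
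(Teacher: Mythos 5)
Your proof is correct, and it is the standard Hoeffding/Chernoff argument one would expect for this statement; the paper in fact cites it as a ``standard fact'' and does not supply a proof, so there is nothing of substance to compare against. The only thing worth noting is that the remark about choosing $C \geq 2$ to make the statement vacuous when $\sqrt{d}\log(C/\eps) \geq d/2$ is unnecessary: Hoeffding's bound is valid in that regime as well (the probability of being outside the interval is simply $0$), so no case distinction is needed -- but this does no harm.
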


\noindent With that fact in our hands, we now describe the following tester. 

\begin{enumerate}[(1)]
  \item Sample $O(1/\eps)$ random points from the middle levels 
  \item For each of the queries in the first step, query all points with Hamming weight in the middle levels which fall in the subcube below  and in the subcube above each such random point. We call each of these $O(1/\eps)$ collections of queries a \emph{superquery }.
\end{enumerate}

\noindent The key idea behind the tester is the extendability lemma, ~\autoref{lemma:k-mono-extendable}. The tester tries to find one of the ``violated hyperedges'' from the matching of violations that we know exists from ~\autoref{lemma:k-mono-extendable}.

\noindent Now, we analyze the tester.

\begin{proofof} {\autoref{theo:test:d:os:na:ub}}

Suppose $k$ is odd.\footnote{The case $k$ even is similar. In this case, one may assume that $f$ evaluates to $0$ on points with hamming       weight everywhere outside the middle levels.}
In this case, given a function $f$ $\eps$-far from $k$-monotonicity we can assume without loss of generality that $f$ and equals $0$ on points with Hamming weight  $<\frac{d}{2} - \sqrt d \log \frac{C}{\eps}$, and equals $1$ on points with Hamming weight $>\frac{d}{2}  + \sqrt d \log  \frac{C}{\eps}$. The resulting function is still $\frac{\eps}{2}$-far from being $k$-monotone.\\

		Now, by ~\autoref{lemma:k-mono-extendable}, we know that there exists a matching $M_f$ of violations to $k$-monotonicity in $f$ of size $\frac{1}{2} \cdot \frac{\eps 2^d}{k}$. The set of vertices that participate in these violations has cardinality $$|M_f| \cdot \frac{\eps 2^d}{k} = \frac{\eps (k+1) 2^d}{k} \geq \eps 2^{d-1}$$ With constant probability, in $O(1/\eps)$ queries in the first step, the tester queries a vertex that belongs to some violation from $M_f$. Then in the next step, the tester finds this violation. Now we bound the number of queries made in a single superquery. Since it involves only $2\sqrt d \log \frac{C}{\eps}$ levels of the cube, the number of points  queried in a single superquery is no more than $b = d^{O(\sqrt d \log \frac{1}{\eps})}$. The total query complexity of the tester can therefore be upperbounded by $b/\eps = d^{O(\sqrt d \log \frac{1}{\eps})}$. We emphasize that the number of queries made by this tester has no dependence on $k$.

\end{proofof}

\subsection{Lower bounds}\label{ssec:cube:lower}

\new{We now turn to lower bounds for testing $k$-monotonicity of Boolean functions over the hypercube $\{0,1\}^d$.}  In~\autoref{sec:1s-na-lb}, we show that for constant $k$ any one-sided non-adaptive tester for $k$-monotonicity requires $\Omega(d^{k/4})$ queries , generalizing the $\bigOmega{\sqrt{d}}$ lower bound for monotonicity due to Fischer\ \etal.~\cite{FLNRRS:02}. This bound suggests the problem become strictly harder when $k$ increases: specifically, for $k>2$ testing $k$-monotonicity requires $\omega(\sqrt{d})$ queries, while an $O(\sqrt{d})$ one-sided non-adaptive upper bound holds for monotonicity testing~\cite{KhotMS:15}. 

We then describe in~\autoref{sec:2s-lb} a general reduction from monotonicity testing to $k$-monotonicity testing, for arbitrary constant $k$.
This blackbox reduction allows us to carry any lower bound (possibly 2-sided or adaptive) for monotonicity testing to $k$-monotonicity.
In particular, combining it with the recent lower bounds~\cite{ChenDST:15,BelovsB:15} for 2-sided monotonicity testing, 
we obtain an $\Omega(d^{1/2-o(1)})$ lower bound for non-adaptive $k$-monotonicity testers, and an $\bigOmega{d^{1/4}}$ lower bound for adaptive ones.

\subsubsection{One-sided lower bounds}\label{sec:1s-na-lb}
\onesidedhypercube*
Consider the family of functions $\{f_S\colon S\subseteq[d] \text{ of size $t$}\}$ where $t\geq k$ is a parameter to be determined later and $f_S$ is a truncated anti-parity over $t$ input coordinates indexed by $S$, namely,

 \begin{align*}
  f_{S}=
  \begin{cases}
    \overline{\oplus_{i\in S} x_{i}}   & \text{if } \abs{\abs{x}-d/2}\leq \sqrt{d},   \\
    0    & \text{otherwise}.
  \end{cases}
\end{align*}

\autoref{thm:1-sided-non-adaptive-lb} immediately follows from the two claims below. In particular,by \autoref{cl:k_dis}, for $t\leq\sqrt{d}$ and $t=4k^2$, $f_S$ is $\Omega(1)$-far from any $k$-monotone function, and  by \autoref{cl:k_bound}, to reject every $f_S$ with $\bigOmega{1}$ probability, $q$ needs to be at least $d^{k/4}\cdot (\frac{k}{2e^2t})^{k/2} = \Omega(d/k^2)^{k/4}$ when $t=4k^2$. 

\begin{claim}
	\label{cl:k_bound}
	For any  non-adaptive $q$-query algorithm $\Algo$, there exists  $f_S$ where $\abs{S}=t$ such that $\Algo$ reveals a violation on $f_{S}$
	with probability at most $q^2 \binom{2\sqrt{d}}{k}\binom{t}{k}/\binom{d}{k}$.
\end{claim}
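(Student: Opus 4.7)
The plan is to fix an arbitrary non-adaptive $q$-query algorithm \Algo and, by averaging first over its internal randomness, reduce to the case of a deterministic algorithm whose fixed query set $Q\subseteq\{0,1\}^d$ has size $q$. It then suffices to bound, over a uniformly random $t$-subset $S\subseteq[d]$, the probability that $Q$ contains a $(k+1)$-chain $x^{(1)}\preceq\cdots\preceq x^{(k+1)}$ witnessing a violation of $k$-monotonicity in $f_S$---that is, $f_S(x^{(1)})=1$ and consecutive values alternate. The existence of a good $f_S$ then follows by averaging.

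The central structural observation is that any such chain must lie entirely in the \emph{middle band} $B=\setOfSuchThat{x\in\{0,1\}^d}{\abs{\abs{x}-d/2}\leq\sqrt{d}}$, and that its existence forces $\abs{S\cap(x^{(k+1)}\setminus x^{(1)})}\geq k$. For the containment in $B$: since $f_S\equiv 0$ outside $B$ and $f_S(x^{(1)})=1$, we have $x^{(1)}\in B$; moreover, Hamming weights are nondecreasing along the chain, so if some $x^{(j)}$ had weight exceeding $d/2+\sqrt{d}$ then all subsequent points would too and $f_S$ would vanish on them, ruling out alternation. In particular, $\abs{x^{(k+1)}\setminus x^{(1)}}=\abs{x^{(k+1)}}-\abs{x^{(1)}}\leq 2\sqrt{d}$. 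For the second claim, inside $B$ and for $x\preceq y$ we have $f_S(x)\oplus f_S(y)=\bigoplus_{i\in S\cap(y\setminus x)} 1$, which equals $1$ precisely when $\abs{S\cap(y\setminus x)}$ is odd; applied to each of the $k$ pairs $(x^{(j)},x^{(j+1)})$ this forces $\abs{S\cap(x^{(j+1)}\setminus x^{(j)})}\geq 1$, and summing over the $k$ pairwise disjoint segments gives the lower bound.

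The final step is a two-level union bound. For each of the at most $q^2$ ordered pairs $(x,y)\in Q\times Q$ with $x\preceq y$ and $\abs{y\setminus x}\leq 2\sqrt{d}$, union-bounding over the $\binom{\abs{y\setminus x}}{k}\leq\binom{2\sqrt{d}}{k}$ candidate $k$-subsets $T\subseteq y\setminus x$ and using the standard identity $\Pr_S[T\subseteq S]=\binom{d-k}{t-k}/\binom{d}{t}=\binom{t}{k}/\binom{d}{k}$ for any fixed $T$ of size $k$ bounds $\Pr_S\bigl[\abs{S\cap(y\setminus x)}\geq k\bigr]$ by $\binom{2\sqrt{d}}{k}\binom{t}{k}/\binom{d}{k}$. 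Summing over pairs yields the advertised $q^2\binom{2\sqrt{d}}{k}\binom{t}{k}/\binom{d}{k}$. The only mildly delicate step is the structural observation confining the chain to the middle band; beyond that, the proof is a clean union bound with no real obstacle.
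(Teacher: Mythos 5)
Your argument is a faithful probabilistic recasting of the paper's counting proof: the paper bounds the cardinality $\abs{\mathrm{Rej}(Q)}$ of sets $S$ for which $Q$ witnesses a violation via $\sum_{(x,z)\in Q^2,\,x\preceq z}\abs{\mathrm{Rej}(Q_{x,z})}\leq q^2\binom{2\sqrt{d}}{k}\binom{d-k}{t-k}$ and then divides by $\binom{d}{t}$, which — using $\binom{d-k}{t-k}/\binom{d}{t}=\binom{t}{k}/\binom{d}{k}$ — is exactly your two-level union bound over ordered pairs in $Q$ and $k$-subsets of $y\setminus x$. The structural content is also identical: your ``chain confined to the middle band'' observation plays the role of the paper's opening WLOG that all queries have Hamming weight in $[d/2-\sqrt{d},d/2+\sqrt{d}]$ (both pass over the same corner case at the final chain point with the same implicit justification), and your parity argument that each of the $k$ disjoint segments $x^{(j+1)}\setminus x^{(j)}$ meets $S$ in odd, hence positive, cardinality is precisely the reason one can extract the $k$-subset $S'\subseteq S$ with $x_{S'}=0^k$, $z_{S'}=1^k$ that the paper invokes.
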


\begin{claim}
	\label{cl:k_dis}
	For $k\leq t\leq \sqrt{d}$, $f_{S}$ is $\Omega(\sum_{i=0}^{\left\lfloor{\frac{t-k-1}{2}}\right \rfloor}\binom{t}{i}/2^t)$-far from any $k$-monotone functions.
\end{claim}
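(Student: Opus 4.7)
The plan is to reduce the claim to a clean combinatorial distance statement on $\{0,1\}^t$ via slicing, and then prove that statement using the symmetric chain decomposition of the hypercube. First, partition $\{0,1\}^d$ into $2^{d-t}$ slices indexed by $y\in\{0,1\}^{[d]\setminus S}$, and call a slice \emph{good} if every point $(x,y)$ in it lies in the slab $\{|z|\in[d/2-\sqrt d,\ d/2+\sqrt d]\}$---equivalently, if $|y|\in[d/2-\sqrt d,\ d/2+\sqrt d-t]$. On any good slice, $f_S$ coincides with the anti-parity function $\overline{\oplus_{i\in S}x_i}$ on $\{0,1\}^S\cong\{0,1\}^t$. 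The assumption $t\leq\sqrt d$ makes this weight window have width at least $\sqrt d$, which is at least twice the standard deviation of $\operatorname{Bin}(d-t,1/2)$; standard binomial anti-concentration then yields $|Y|=\bigOmega{2^{d-t}}$, where $Y$ denotes the set of good indices $y$.

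Second, for any $k$-monotone $g\colon\{0,1\}^d\to\{0,1\}$ and any fixed $y$, the restricted function $g(\cdot,y)$ is itself $k$-monotone on $\{0,1\}^S$, since every chain inside a slice is also a chain in $\{0,1\}^d$. Averaging the per-slice Hamming distance over $y\in Y$ therefore reduces the claim to the following key sub-lemma: the anti-parity function on $\{0,1\}^t$ has Hamming distance at least $\sum_{i=0}^{\lfloor(t-k-1)/2\rfloor}\binom{t}{i}$ from every $k$-monotone function on $\{0,1\}^t$. Combined with $|Y|/2^{d-t}=\bigOmega{1}$, this yields the desired lower bound on $\dist{f_S}{g}$.

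For the sub-lemma, apply the symmetric chain decomposition of $\{0,1\}^t$, which partitions the hypercube into exactly $\binom{t}{a}-\binom{t}{a-1}$ chains of length $\ell_a=t-2a+1$ for each $a\in\{0,1,\dots,\lfloor t/2\rfloor\}$. On any such chain $C$, anti-parity restricts to a perfectly alternating $0/1$ sequence with $\ell_a$ maximal blocks, whereas any $k$-monotone $g$ restricts to a binary sequence with at most $k+1$ maximal blocks: for a block sequence $b_1^{c_1}\cdots b_r^{c_r}$, the longest alternating subsequence $1,0,1,\dots$ starting with $1$ has length $r$ if $b_1=1$ and $r-1$ if $b_1=0$, so $k$-monotonicity forces $r\leq k+1$. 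A simple flipping argument---each bit flip can change the block count by at most~$2$---then shows that the Hamming distance between an alternating length-$\ell$ sequence and any sequence with at most $k+1$ blocks is at least $\lfloor(\ell-k)/2\rfloor$.

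Summing this per-chain lower bound across the decomposition (chains with $\ell_a\leq k$ contribute zero) gives
\[
2^t\cdot \dist{\mathrm{antiparity}}{g}\;\geq\; \sum_{a=0}^{\lfloor(t-k-1)/2\rfloor}\Bigl(\binom{t}{a}-\binom{t}{a-1}\Bigr)\Bigl\lfloor\tfrac{t-2a+1-k}{2}\Bigr\rfloor.
\]
Applying Abel summation (discrete summation by parts), together with the elementary identity $\lfloor x/2\rfloor-\lfloor(x-2)/2\rfloor=1$ and the boundary value $1$ of the floor term at $a=\lfloor(t-k-1)/2\rfloor$, telescopes this expression to exactly $\sum_{a=0}^{\lfloor(t-k-1)/2\rfloor}\binom{t}{a}$, establishing the sub-lemma. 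I expect the main technical obstacle to lie in the careful parity bookkeeping of the ``at most $k+1$ blocks'' argument (distinguishing whether the alternating pattern on $C$ starts with $0$ or $1$, and similarly for $g|_C$), and in correctly handling the boundary term of the Abel summation when $t-k$ switches parity.
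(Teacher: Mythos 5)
Your proof is correct and arrives at the stated bound, but it reaches the core one-dimensional sub-lemma by a genuinely different route than the paper. The reduction part (restricting to ``good'' slices $y$ with $|y|\in[d/2-\sqrt d,\ d/2+\sqrt d-t]$, observing that such slices number $\Omega(2^{d-t})$ when $t\leq\sqrt d$, and that $g(\cdot,y)$ inherits $k$-monotonicity) is essentially identical to what the paper does with its set $Z$. Where you diverge is in proving that anti-parity on $\{0,1\}^t$ is $\left(\sum_{i=0}^{\lfloor(t-k-1)/2\rfloor}\binom{t}{i}\right)/2^t$-far from every $k$-monotone function. The paper samples a uniformly random \emph{maximal} chain from $0^t$ to $1^t$, writes the Hamming distance as $\shortexpect_{\mathcal C}\bigl[\tfrac{1}{2^t}\sum_{i=1}^{t}\binom{t-1}{i-1}(\indic{h(x^i)\neq g(x^i)}+\indic{h(x^{i-1})\neq g(x^{i-1})})\bigr]$ via Pascal's rule, notes that $g$ can ``alternate'' on at most $k$ of the $t$ edges of the chain so at least $t-k$ of the bracketed terms are $\geq 1$, and lower-bounds by the $t-k$ smallest of $\binom{t-1}{0},\dots,\binom{t-1}{t-1}$. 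You instead use the symmetric chain decomposition, and on each SCD chain of length $\ell_a=t-2a+1$ you get $\lfloor(\ell_a-k)/2\rfloor$ disagreements from the block-count/flipping argument, then sum and telescope by Abel summation. Both yield the claimed bound; the SCD route is arguably more transparent (the per-chain count is a clean monovariant rather than a weighted edge-count), while the paper's random-maximal-chain argument is slightly tighter at the intermediate step (its sum of the $t-k$ smallest binomials in row $t-1$ exceeds $\sum_{i\leq m}\binom{t}{i}$ by $\binom{t-1}{m}$ when $t-k$ is even) though both only assert the weaker displayed bound. Two minor slips worth fixing: chains with $\ell_a\leq k+1$ (not $\leq k$) already contribute zero, and ``averaging the per-slice distance over $y\in Y$'' should be phrased as summing over $y\in Y$ and dividing by $2^d$, since $Y$ is only an $\Omega(1)$ fraction of all slices.
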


\begin{proof} [Proof of~\autoref{cl:k_bound}]
Let $Q$ be an arbitrary set of $q$ queries. Without loss of generality, we assume  every query $z$ in $Q$ has Hamming weight $\abs{z} \in [\frac{d}{2}-\sqrt{d}, \frac{d}{2}+\sqrt{d}]$. We define $Q_{x,z} = \setOfSuchThat{ y\in Q }{x\preceq y\preceq z }$ and $\mathrm{Rej}(Q)=\setOfSuchThat{f_{S} }{ Q \text{ contains a violation for $f_S$}}$. Note that $\mathrm{Rej}(Q)=\bigcup_{(x, z)\colon x\preceq z\in Q} \mathrm{Rej}(Q_{x,z})$.
Hence we can bound the size of $\mathrm{Rej}(Q)$ by
\begin{equation}\label{eq:lb:os:na:1}
  \abs{\mathrm{Rej}(Q)} \leq \sum_{(x,z)\in Q^2\colon x\preceq z} \abs{\mathrm{Rej}(Q_{x,z})}.
\end{equation}
Fix any $x\preceq z\in Q$. Because any violation for $f_{S}$ in $Q_{x,z}$ contains at least two points $x'$ and $z'$ such that  $x\preceq x'\preceq z'\preceq z$, and
$x'_{S'} = 0^k$ and $z'_{S'} = 1^k$, we have $x_{S'} = 0^k$ and $z_{S'} = 1^k$ for some $S'\subseteq S$ of size $k$. Note that
$x$ and $z$ differ in at most $2\sqrt{d}$ coordinates so that there are at most $\binom{2\sqrt{d}}{k}$ distinct $S'$ on which $x_{S'}$ is $0^k$
and $z_{S'}$ is $1^k$. Moreover, for each $S'$ there are at most $\binom{d-k}{t-k}$ distinct $S$ such that $S'\subseteq S$. Therefore we can bound the size of $\mathrm{Rej}(Q_{x,z})$ by
\begin{equation}\label{eq:lb:os:na:2}
  \abs{\mathrm{Rej}(Q_{x,z})} \leq \binom{2\sqrt{d}}{k} \binom{d-k}{t-k}.
\end{equation}
Combining \eqref{eq:lb:os:na:1} and \eqref{eq:lb:os:na:2} , $\abs{\mathrm{Rej}(Q)} \leq q^2 \binom{2\sqrt{d}}{k}\binom{d-k}{t-k}$.  It follows that for any non-adaptive algorithm making at most $q$ queries, 
\[\sum_{S\subseteq[d]:\abs{S}=t} \Pr[ \Algo\text{ reveals a violation for } f_{S}]\leq \shortexpect[ \abs{\mathrm{Rej}(Q)}]\leq q^2 \binom{2\sqrt{d}}{k}\binom{d-k}{t-k}.\]
Hence there exists $f_{S}$ such that $\Pr[ \Algo \text{ reveals a violation for } f_{S}]\leq q^2  \frac{\binom{2\sqrt{d}}{k}\binom{d-k}{t-k}}{\binom{d}{t}}=  q^2 \frac{\binom{2\sqrt{d}}{k}\binom{t}{k}}{\binom{d}{k}}$.
\end{proof}

\begin{proof} [Proof of~\autoref{cl:k_dis}]
Let $f'_{S}$ be the closest $k$-monotone function to $f_{S}$. Let $Z$ denote the set $\{z\in \B^{d-t}: d/2-\sqrt{d}\leq \abs{z}\leq d/2 + \sqrt{d}-t\}$. For any $t\leq \sqrt{d}$, $[\frac{d-t}{2} - \frac{\sqrt{d-t}}{2}, \frac{d-t}{2} +\frac{\sqrt{d-t}}{2}]$ is contained in $[d/2-\sqrt{d}, d/2 + \sqrt{d}-t]$ so that $\abs{Z}=\Omega(2^{d-t})$. For any assignment $z\in Z$ on coordinates indexed by $[d]\setminus S$,  $f_{S}(\cdot , z)$ agrees with $\overline{\oplus_{i\in S} x_{i}}$ and $f'_{S}(\cdot , z)$ is $k$-monotone.  To finish the proof, it suffices to show that an anti-parity over $t$ inputs is $\sum_{i=0}^{\flr{ \frac{t-k-1}{2} }}\binom{t}{k}/2^t$-far from any $k$-monotone function over $t$ inputs. Indeed, this will imply that, for every $z\in Z$, $f_{S}(\cdot , z)$ differs from $f'_{S}(\cdot , z)$ on $\sum_{i=0}^{\flr{ \frac{t-k-1}{2} }}\binom{t}{k}$ points, and thus finally that $f_{S}$ differs from $f_{S'}$ on 	$\abs{Z}\cdot\sum_{i=0}^{\flr{ \frac{t-k-1}{2} }}\binom{t}{k}=\Omega(2^{d-t}\cdot\sum_{i=0}^{\left\lfloor{\frac{t-k-1}{2}}\right \rfloor}\binom{t}{k})$ points.
 
Now we show that an anti-parity function over $t$ inputs  $h(x_1,\dots,x_t)=\overline{\oplus_{i=1}^t x_{i}}$ is $\sum_{i=0}^{\flr{ \frac{t-k-1}{2} }}\binom{t}{i}/2^t$-far from any $k$-monotone function $g$ (over $x_1,\dots,x_t$).  We begin by noting that we can sample a random point from $\{0,1\}^t$ by first sampling a random chain $\mathcal{C}=(x^0=0^t,x^1,\dots,x^{t}=1^t)$ (from all possible chains from $0^t$ to $1^t$) then outputting $x^i$ with probability $\binom{t}{i}/2^t$. Thus the distance  between $h$ and $g$, namely $\Pr_{x}[h(x)\neq g(x)]$, is equal to
\begin{align}\label{eq:lb:os:na:3}
\Pr_{\mathcal{C},i}[h(x^i)\neq g(x^i)] 
=\shortexpect_{\mathcal{C}}\left[\sum_{i=0}^t \frac{\binom{t}{i}}{2^t} \cdot \indic{h(x^i)\neq g(x^i)}\right]
= \shortexpect_{\mathcal{C}}\left[\frac{1}{2^t}\sum_{i=1}^t \binom{t-1}{i-1} \cdot (\indic{h(x^i)\neq g(x^i)}+ \indic{h(x^{i-1})\neq g(x^{i-1})})\right]
\end{align}
where the last inequality relies on identity $\binom{t}{i}=\binom{t-1}{i-1}+\binom{t-1}{i}$. For any fixed chain $\mathcal{C}$, because $g$ alternates at most $k$ times, there are at least $t-k$ choices of $i\in[t]$ such that $g(x^{i-1})=g(x^i)$, which implies $\indic{h(x^i)\neq g(x^i)}+ \indic{h(x^{i-1})\neq g(x^{i-1})}\geq 1$ (due to $h(x^{i-1})\neq h(x^{i})$). Thus 
$\sum_{i=1}^t \binom{t-1}{i-1}\cdot (\indic{h(x^i)\neq g(x^i)}+ \indic{h(x^{i-1})\neq g(x^{i-1})})$ is at least the sum of smallest $t-k$ binomials among $\binom{t-1}{0},\dots,\binom{t-1}{t-1}$ which is \[\sum_{i=0}^{\flr{ \frac{t-k-1}{2} }}\binom{t-1}{i} + \sum_{i=0}^{\left\lfloor{\frac{t-k-2}{2}}\right \rfloor }\binom{t-1}{t-1-i} =\sum_{i=0}^{\flr{ \frac{t-k-1}{2} }}\binom{t-1}{i} + \sum_{i=0}^{\left\lfloor{\frac{t-k-2}{2}}\right \rfloor }\binom{t-1}{i} \geq \sum_{i=0}^{\flr{ \frac{t-k-1}{2} }}\binom{t}{i},\]
which implies $\Pr_{x}[h(x)\neq g(x)]\geq \frac{1}{2^t} \sum_{i=0}^{\flr{ \frac{t-k-1}{2} }}\binom{t}{i}$ by combining the above with \eqref{eq:lb:os:na:3}.

\end{proof}

 \subsubsection{Two-sided lower bounds}\label{sec:2s-lb}
The following theorem gives a construction that enables us to convert monotone functions into $k$-monotone functions, and functions that are far from monotone into functions that are far from $k$-monotone. 
\begin{theorem}~\label{theorem:balanced-blocks-preserves-distance} 
There exists an efficiently computable function $h\colon\B^{d/2}\to\B$ such that for any $g\colon\B^{d/2}\to\B$, then $g||h\colon\B^{d}\to\B$ is a Boolean function (defined below) satisfying the following.
\begin{itemize}
\item if $g$ is monotone, then $g||h$ is a $k$-monotone function;
\item if $g$ is $\eps$-far from monotone, then $g||h$ is $\Omega(\eps/k)$-far from being a $k$-monotone function;
\end{itemize}
where $(g||h)(x,y) \eqdef g(x) \oplus h(y)$ for any $x,y \in \{0,1\}^{d/2}$. 
\end{theorem}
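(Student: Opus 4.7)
My plan is to construct $h$ depending only on the Hamming weight $|y|$: I would partition $\{0,1,\dots,d/2\}$ into $k$ consecutive ``strips'' $S_0,S_1,\dots,S_{k-1}$ (chosen to be roughly balanced in binomial mass) and set $h(y) = j \bmod 2$ whenever $|y|\in S_j$, so that $h(0^{d/2})=0$ and $h$ alternates $0,1,0,1,\dots$ across strips as the weight increases. The two design features I would rely on are (a) along any chain in $\B^{d/2}$, $h$ transitions at most $k-1$ times, and (b) whenever $h$ already equals $1$ at the bottom of a chain, the chain starts in a strip of index $\geq 1$, so $h$ can undergo at most $k-2$ transitions on the remainder of that chain.

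For the forward direction (monotone $g$ yields $k$-monotone $g||h$), I would fix an arbitrary chain in $\B^d$ and project to its $x$- and $y$-coordinates, which are themselves chains in $\B^{d/2}$. Along the chain $g$ contributes at most one transition (by monotonicity) and $h$ contributes at most $k-1$, and a short case analysis keyed on the starting strip index shows that the sequence $g(x_i)\oplus h(y_i)$ has at most $k$ transitions overall, with this bound dropping to $k-1$ whenever that sequence starts at value $1$. This is exactly the condition characterizing $k$-monotonicity via the forbidden alternating pattern $1,0,1,0,\dots$ of length $k+1$, so $g||h$ is $k$-monotone.

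For the distance lower bound, I would argue the contrapositive. Suppose $F$ is $k$-monotone with $\dist{F}{g||h} \leq \eps'$, and let $(x,x')$ be any monotonicity violation of $g$ (so $x\preceq x'$, $g(x)=1$, $g(x')=0$). For any $y_0\preceq y_1\preceq\dots\preceq y_{k-1}$ with $y_i\in S_i$ and any ``insertion index'' $j\in\{0,1,\dots,k-1\}$, the chain
\[
  (x,y_0)\preceq\dots\preceq(x,y_j)\preceq(x',y_j)\preceq(x',y_{j+1})\preceq\dots\preceq(x',y_{k-1})
\]
of length $k+1$ in $\B^d$ evaluates under $g||h$ to exactly the forbidden pattern $1,0,1,0,\dots$, so $F$ must differ from $g||h$ on at least one of these $k+1$ points. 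Combining this with the standard fact that $g$ being $\eps$-far from monotone yields a matching $M$ in its violation graph of size $\bigOmega{\eps\cdot 2^{d/2}}$, a double-counting argument over $M$ and over the chains just described --- in which the matching caps the ``$x$-side'' overlap of any point in $\B^d$ by $2$, the $k$ choices of $j$ enlarge the chain family by a factor of $k$, and the balanced strips control the ``$y$-side'' multiplicity on average --- converts ``one forced disagreement per chain'' into the desired $\dist{F}{g||h}=\bigOmega{\eps/k}$.

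The main obstacle will be the bookkeeping in that double-counting step: the chain family per violation must be rich enough for the forced disagreement points to cover an $\bigOmega{1/k}$-fraction of $\B^d$, yet each point of $\B^d$ must be charged by only a bounded number of (violation, chain) pairs. Balancing the strip widths by binomial mass (so that no strip is an outlier) together with the $k$-fold freedom in the insertion index $j$ of the $x$-move is what makes these two requirements simultaneously achievable, losing only the promised $\bigTheta{k}$ factor.
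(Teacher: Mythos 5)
Your construction of $h$ (constant on consecutive Hamming-weight strips $S_0,\dots,S_{k-1}$ with alternating parity) is the same as the paper's, and your forward direction ($g$ monotone $\Rightarrow$ $g\|h$ $k$-monotone, via the $(k-1)$-monotonicity of $h$ plus a case split on the starting value) matches the paper's Claim~\ref{cl:keep-k} up to presentation.

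Where your approach genuinely diverges --- and where there is a real gap --- is the distance lower bound. The paper does \emph{not} double-count over \emph{all} chains $y_0\prec\dots\prec y_{k-1}$ with one point per strip. Instead, it explicitly constructs a set $M_h$ of roughly $2^{d/2}/k$ \emph{vertex-disjoint} paths through the strips (Claim~\ref{cl:existence-h}, via a greedy strip choice and Hall's theorem for a perfect matching between consecutive strips), and pairs each path in $M_h$ with each violated edge in a matching $M_g$ of $g$'s violation graph (Claim~\ref{cl:keep-farness}). Because both $M_h$ and $M_g$ are matchings, the resulting $|M_h|\cdot|M_g|$ forbidden chains in $\B^d$ are pairwise vertex-disjoint, so any $k$-monotone $F$ must disagree with $g\|h$ on at least $|M_h|\cdot|M_g|=\Omega(\eps 2^d/k)$ points --- no multiplicity bookkeeping is ever needed.

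Your plan instead charges disagreements over \emph{all} $y$-chains plus an insertion index $j$, and the crux is your asserted bound ``each point of $\B^d$ must be charged by only a bounded number of (violation, chain) pairs.'' This is precisely what you have not shown, and I do not think balanced strip widths alone give it. Fix $z$ so that $z\in\{x,x'\}$ for a unique matched pair (that part is fine); the number of $y$-chains $y_0\prec\dots\prec y_{k-1}$ with $y_i=w$ is the product of the number of descending completions through $S_0,\dots,S_{i-1}$ below $w$ and the number of ascending completions through $S_{i+1},\dots,S_{k-1}$ above $w$, and this product varies sharply with $w$'s Hamming weight and position within its strip --- it is not controlled by the strips having roughly equal binomial mass. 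Only an \emph{average} multiplicity bound is cheap, but a double-count of this kind needs a bound on the \emph{maximum} multiplicity, or else a much more delicate fractional argument, since all the forced disagreements could in principle land on the same few high-multiplicity points. Your last paragraph correctly identifies this bookkeeping as ``the main obstacle'' but stops short of resolving it; the paper's matching-based construction is the device that makes the obstacle vanish. To repair your proof, I would replace ``all chains with insertion index'' by a single vertex-disjoint family of strip-traversing paths (constructed as in the paper's Claim~\ref{cl:existence-h}); once the paths are disjoint, the $k$ insertion indices become unnecessary and the $\Omega(\eps/k)$ bound falls out immediately.
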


The above theorem reduces test monotonicity to testing $k$-monotonicity (for arbitrary constant $k$) with the same number of queries to the input function.
This theorem allows us to carry any lower bound on monotonicity testing to $k$-monotonicity, while preserving the characteristics (two-sidedness, adaptivity) of the original lower bound.  
In particular, combining it with the recent recent of~\cite{ChenDST:15,BelovsB:15}, we obtain the following corollary. 
\begin{corollary} \label{theo:k-monotone-carry}
For any $c > 0$ and $k \geq 1$, there exists $\eps = \eps(k, c) > 0$ such that any 2-sided non-adaptive algorithm for testing whether $f$ is $k$-monotone or $\eps$-far from it requires $\Omega(d^{1/2 - c})$ queries.
Any 2-sided adaptive algorithm requires $\tildeOmega{d^{1/4}}$ queries.
\end{corollary}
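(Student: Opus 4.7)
The plan is a direct black-box reduction from monotonicity testing on $\{0,1\}^{d/2}$ to $k$-monotonicity testing on $\{0,1\}^d$, using \autoref{theorem:balanced-blocks-preserves-distance} as the bridge.

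Fix $k\geq 1$ and suppose, toward a contradiction, that there is a two-sided non-adaptive tester $\Tester$ for $k$-monotonicity of functions $\{0,1\}^d\to\{0,1\}$ with distance parameter $\eps'$ and query complexity $q(d,\eps')$. Given an unknown $g\colon\{0,1\}^{d/2}\to\{0,1\}$, we construct a monotonicity tester $\Tester'$ for $g$ as follows: run $\Tester$ on the distance parameter $\eps' = c_k\eps$ (where $c_k=\Theta(1/k)$ is the constant hidden in the $\Omega(\eps/k)$ of \autoref{theorem:balanced-blocks-preserves-distance}); whenever $\Tester$ wants to query the input at $(x,y)\in\{0,1\}^{d/2}\times\{0,1\}^{d/2}$, the reduction queries $g(x)$ and returns $g(x)\oplus h(y)$, where $h$ is the explicit function from \autoref{theorem:balanced-blocks-preserves-distance} (which can be evaluated without any queries to $g$). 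By construction, $\Tester'$ makes at most $q(d,\eps')$ queries to $g$, is non-adaptive if $\Tester$ is, and faithfully simulates query access to $g\|h$.

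The correctness of $\Tester'$ follows immediately from the two items of \autoref{theorem:balanced-blocks-preserves-distance}: if $g$ is monotone then $g\|h$ is $k$-monotone and $\Tester$ accepts with probability at least $2/3$; if $g$ is $\eps$-far from monotone then $g\|h$ is $\eps'$-far from $k$-monotone and $\Tester$ rejects with probability at least $2/3$. Thus $\Tester'$ is a two-sided (and non-adaptive, if $\Tester$ was) tester for monotonicity of functions $\{0,1\}^{d/2}\to\{0,1\}$ with distance parameter $\eps$ and query complexity $q(d, c_k\eps)$.

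To finish, we invoke the known lower bounds for two-sided monotonicity testing on the hypercube. The result of Chen, De, Servedio, and Tan~\cite{ChenDST:15} states that for every $c>0$ there is an absolute $\eps_0>0$ such that any two-sided non-adaptive monotonicity tester on $\{0,1\}^D$ with distance parameter $\eps_0$ must make $\Omega(D^{1/2-c/2})$ queries; the analogous statement of Belovs and Blais~\cite{BelovsB:15} gives $\tildeOmega{D^{1/4}}$ for two-sided adaptive testers. Setting $D=d/2$ and choosing $\eps = \eps(k,c)$ so that $c_k\eps = \eps_0$ (i.e.\ $\eps=\Theta(k\eps_0)$), we conclude $q(d, c_k\eps) = \Omega((d/2)^{1/2-c/2}) = \Omega(d^{1/2-c})$ in the non-adaptive case and $q(d,c_k\eps)=\tildeOmega{d^{1/4}}$ in the adaptive case, as claimed. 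The only ``delicate'' point is bookkeeping: the distance parameter for which the lower bound triggers shrinks by a factor $\Theta(1/k)$ under the reduction, which is why $\eps$ in the corollary must be allowed to depend on $k$.
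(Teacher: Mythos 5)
Your proof is correct and matches the paper's intended argument: the paper simply remarks that Theorem~\ref{theorem:balanced-blocks-preserves-distance} provides a black-box reduction that preserves query complexity, two-sidedness, and adaptivity, and then cites~\cite{ChenDST:15,BelovsB:15}, which is exactly the reduction you spell out. Your careful bookkeeping of the distance parameter shrinking by a factor $\Theta(1/k)$ — hence the dependence of $\eps$ on $k$ — is a detail the paper glosses over but that you handle correctly.
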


To prove~\autoref{theorem:balanced-blocks-preserves-distance}, we prove following three claims. \autoref{cl:keep-k} and~\autoref{cl:keep-farness} show that the existence of a $(k-1)$-monotone function $h$ for which one can find a big enough set of vertex disjoint paths in the hypercube whose labelling under $h$ satisfies some specific condition will imply~\autoref{theorem:balanced-blocks-preserves-distance}. Finally in~\autoref{cl:existence-h}, we establish the existence of such 
 $h$, and~\autoref{theorem:balanced-blocks-preserves-distance} follows.

\begin{claim}\label{cl:keep-k}
Let $h\colon\B^{d}\rightarrow\B$ be a $(k-1)$-monotone function. Then for any monotone $g\colon\B^d\rightarrow\B$, $f = g || h$ is a $k$-monotone function. 
\end{claim}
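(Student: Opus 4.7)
I would argue by contradiction. Suppose $f = g\|h$ is \emph{not} $k$-monotone, so there exists a chain $(x_1, y_1) \preceq (x_2, y_2) \preceq \cdots \preceq (x_{k+1}, y_{k+1})$ with $f(x_1, y_1) = 1$ and $f(x_i, y_i) \neq f(x_{i+1}, y_{i+1})$ for all $i \in [k]$. Projecting onto each coordinate gives monotone chains $x_1 \preceq \cdots \preceq x_{k+1}$ and $y_1 \preceq \cdots \preceq y_{k+1}$, and the values $f(x_i, y_i)$ form the alternating pattern $1, 0, 1, 0, \ldots$. The goal is to manufacture a $(k-1)$-monotonicity violation for $h$ out of this data, contradicting the hypothesis on $h$.

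Since $g$ is monotone, the Boolean sequence $g(x_1), \ldots, g(x_{k+1})$ is non-decreasing, so there is a unique \emph{flip index} $j \in \{0, 1, \ldots, k+1\}$ with $g(x_i) = 0$ for $i \leq j$ and $g(x_i) = 1$ for $i > j$. Since $h(y_i) = g(x_i) \oplus f(x_i, y_i)$, the $h$-values along the $y$-chain equal $f(x_i, y_i)$ for $i \leq j$ and $\overline{f(x_i, y_i)}$ for $i > j$. A short parity check (splitting on whether $j$ is even or odd) shows that consecutive $h$-values disagree for every transition along the $y$-chain \emph{except} possibly the transition $(y_j, y_{j+1})$, where $h(y_j) = h(y_{j+1})$: the ``extra'' XOR with $1$ introduced at the flip exactly cancels the one alternation of $f$ at that step.

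To convert this into a $(k-1)$-monotonicity violation for $h$, I would delete a single vertex from the $y$-chain so as to remove the one non-alternating transition. For $1 \leq j \leq k$, drop $y_{j+1}$; the remaining $k$ points form a chain on which $h$ alternates at every consecutive pair, starting at $h(y_1) = f(x_1, y_1) = 1$, giving $k-1$ alternations beginning with value $1$. The boundary cases are immediate: when $j = 0$, drop $y_1$ and note that $h(y_i) = \overline{f(x_i, y_i)}$ gives the pattern $1, 0, 1, \ldots$ along $y_2 \preceq \cdots \preceq y_{k+1}$; when $j = k+1$, take the first $k$ points of the $y$-chain, along which $h(y_i) = f(x_i, y_i)$ already alternates as $1, 0, 1, \ldots$. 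In every case we obtain $k$ points at which $h$ starts at $1$ and alternates $k - 1$ times, contradicting the $(k-1)$-monotonicity of $h$. I do not foresee a genuine obstacle; the only item requiring care is the parity bookkeeping at the flip index, which the case split above dispatches in a line.
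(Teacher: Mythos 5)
Your proof is correct and follows the same route as the paper: use the monotonicity of $g$ to locate the single flip index along the chain, note that $h(y_i) = g(x_i)\oplus f(x_i,y_i)$ alternates at every step except the one straddling the flip, and delete one point from the $y$-chain to produce $k$ points on which $h$ starts at $1$ and alternates $k-1$ times, contradicting $(k-1)$-monotonicity. The only cosmetic difference is that you fold the ``$g$ constant'' situation into the boundary values $j=0$ and $j=k+1$ of the flip index, whereas the paper treats $g$-constant as a separate case.
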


\begin{claim} \label{cl:keep-farness}
Suppose there exists a $h\colon\B^{d}\rightarrow\B$ such that the following holds. There exists at least $M$ paths of length $k-1$ such that (i) all paths are vertex disjoint and (ii) for every path $y_1\preceq \dots \preceq y_k$, $h(y_1)=0$ and $h(y_i)\neq h(y_{i+1})$ for $1\leq i\leq k-1$. Then, for any $g\colon\B^d\rightarrow\B$ which is $\eps$-far from being a monotone function, the function $f = g || h$ is $\Omega(\frac{M}{2^d}\cdot \eps)$-far from being a $k$-monotone function.
\end{claim}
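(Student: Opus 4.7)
The plan is to exhibit $\Omega(M \eps \cdot 2^d)$ pairwise vertex-disjoint certificates of $k$-monotonicity violation inside $f = g \| h$, so that any $k$-monotone function $f'$ must differ from $f$ on at least one vertex per certificate. Dividing by the ambient size $2^{2d}$ then yields the claimed $\Omega(M \eps / 2^d)$ lower bound on $\dist{f}{f'}$.

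First I would appeal to the extendability lemma (\autoref{lemma:k-mono-extendable} specialised to $k=1$) to obtain a matching $\mathcal{M}$ of size at least $\eps \cdot 2^{d-1}$ in the monotonicity-violation graph of $g$: a collection of pairwise vertex-disjoint pairs $(x, x')$ with $x \prec x'$ in $\{0,1\}^d$, $g(x) = 1$ and $g(x') = 0$. Combining each such pair with each of the $M$ alternating paths $P = (y_1, \dots, y_k)$ furnished by the hypothesis, I would then form the chain
\[
  (x, y_1) \preceq (x, y_2) \preceq \cdots \preceq (x, y_k) \preceq (x', y_k)
\]
of length $k+1$ in $\{0,1\}^{2d}$. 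A direct calculation using $f(u, v) = g(u) \oplus h(v)$ will show that along this chain the first $k$ values are $1 \oplus h(y_1), \dots, 1 \oplus h(y_k)$ (since $g(x) = 1$), which alternate starting from $1$ because $h(y_1) = 0$ and $h$ flips along $P$; and the last value $f(x', y_k) = h(y_k)$ differs from $f(x, y_k) = 1 \oplus h(y_k)$ because $g(x') = 0$. Hence $f$ alternates exactly $k$ times along the chain starting from $1$, producing a genuine $k$-monotonicity violation.

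The crux of the argument is to verify that the resulting $|\mathcal{M}| \cdot M$ chains are pairwise vertex-disjoint in $\{0,1\}^{2d}$. Two chains coming from the same matched pair but distinct paths $P \neq P'$ are disjoint because the paths themselves are vertex-disjoint in the $y$-coordinate (in particular, their terminal vertices $y_k$ and $y'_k$ differ, so the ``crossover'' vertices $(x', y_k)$ and $(x', y'_k)$ are distinct); chains sharing a path but coming from distinct matched pairs $(x_1, x_1') \neq (x_2, x_2')$ are disjoint because $\{x_1, x_1'\} \cap \{x_2, x_2'\} = \emptyset$; and chains from distinct pairs and distinct paths are disjoint for both reasons at once. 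I expect this disjointness bookkeeping---especially handling the single ``out-of-column'' vertex $(x', y_k)$ that is not of the form $(x, \cdot)$---to be the main thing to get right. Once it is checked, any $k$-monotone $f'$ must disagree with $f$ on at least $|\mathcal{M}| \cdot M = \Omega(\eps \cdot 2^d \cdot M)$ points of $\{0,1\}^{2d}$, giving the desired bound on $\dist{f}{f'}$.
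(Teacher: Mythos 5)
Your proposal is correct and follows essentially the same approach as the paper: you use the matching of monotonicity violations of $g$ (of size $\geq \eps 2^{d-1}$), pair each matched edge $(x,x')$ with each alternating path $(y_1,\dots,y_k)$ to build the chain $(x,y_1),\dots,(x,y_k),(x',y_k)$, and conclude from vertex-disjointness of these $|\mathcal M|\cdot M$ chains. The paper's proof is identical in substance (it leaves the disjointness check implicit, which you spell out correctly) and cites the matching bound from~\cite{GGLRS00}, whereas you derive it from the extendability lemma specialised to $k=1$; both are fine given the appendix.
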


\begin{claim} \label{cl:existence-h}
For any constant $k$, there exists an efficient computable $h\colon\B^{d}\rightarrow\B$ such that $h$ is a $(k-1)$-monotone function and $h$ contains at least $\frac{(1 - o_d(1)) 2^d}{k}$ paths of length $k-1$ such that all paths are vertex disjoint and for every path $y_1\preceq \dots \preceq y_k$, $h(y_1)=0$ and $h(y_i)\neq h(y_{i+1})$ for $1\leq i\leq k-1$.
\end{claim}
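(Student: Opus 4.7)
My plan is to take $h$ to be a threshold function of the Hamming weight $|y|$ that alternates exactly $k-1$ times, with thresholds placed so that each of the $k$ resulting Hamming-weight strips carries essentially equal mass $2^d/k$. The existence of many vertex-disjoint alternating $k$-chains then reduces to a max-flow computation on a layered digraph, whose min-cut lower bound I prove by a short union-bound argument along a random maximal chain; this is the heart of the argument and the one non-routine step.

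\textbf{Construction of $h$.} I would pick integer thresholds $0 < T_1 < \cdots < T_{k-1} \leq d$ so that the intervals $J_{j-1} := [T_{j-1}, T_j - 1]$ (with $T_0 := 0$, $T_k := d+1$) satisfy $|V_j| := |\{y\in\B^d : |y|\in J_{j-1}\}| \geq (1 - o_d(1))\,2^d / k$ for $j=1,\ldots,k$. Such thresholds exist because the cumulative sums $\sum_{\ell \leq T} \binom{d}{\ell}$ are monotone in $T$ with jumps of size at most $\binom{d}{\lfloor d/2\rfloor} = O(2^d/\sqrt{d}) = o(2^d/k)$ for fixed $k$. Define $h(y) := \bigoplus_{i=1}^{k-1} \indic{|y|\geq T_i}$; then $h \equiv (j-1)\bmod 2$ on each $V_j$, and since $h$ depends only on $|y|$ and alternates exactly $k-1$ times as $|y|$ grows, it is $(k-1)$-monotone and computable in $O(d)$ time. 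A sequence $y_1\preceq\cdots\preceq y_k$ satisfies $h(y_1)=0$ and $h(y_i)\neq h(y_{i+1})$ for all $i$ exactly when it is a strict chain with $y_j\in V_j$ for every $j$.

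\textbf{Chain packing via max flow.} I would then form the layered DAG with source $s$, layers $V_1,\ldots,V_k$, sink $t$, arcs $s\to V_1$ and $V_k\to t$, and $v\to w$ whenever $v\in V_j$, $w\in V_{j+1}$, $v\prec w$, assigning capacity $1$ to every internal vertex; by LP integrality, the number $M$ of vertex-disjoint chains we can find equals the integer max $s$-$t$ flow. Trivially $M \leq \min_j |V_j|$, since each layer is a vertex cut. For the matching lower bound, let $C$ be any vertex cut and set $C_j := C\cap V_j$. Sample a uniform random maximal chain $z_0\prec\cdots\prec z_d$ in $\B^d$, and independently for each $j$ a level $\ell_j \in J_{j-1}$ with $\Pr[\ell_j=\ell] = \binom{d}{\ell}/|V_j|$; set $Y_j := z_{\ell_j}$. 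Since the $J_{j-1}$'s are disjoint and ordered, $\ell_1 < \cdots < \ell_k$ holds deterministically, so $Y_1\prec\cdots\prec Y_k$ is always a valid chain with $Y_j\in V_j$; and for $y\in V_j$ at level $\ell$, $\Pr[Y_j=y] = (\binom{d}{\ell}/|V_j|)\cdot(1/\binom{d}{\ell}) = 1/|V_j|$, so $Y_j$ is marginally uniform on $V_j$. Because $C$ is a cut, $\Pr[\exists j: Y_j\in C_j] = 1$, and the union bound yields $1\leq\sum_j\Pr[Y_j\in C_j]=\sum_j|C_j|/|V_j|$, hence $|C|=\sum_j|C_j|\geq \min_j|V_j|$. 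Combining the two directions, $M = \min_j|V_j|\geq (1-o_d(1))\,2^d/k$. The delicate point is precisely this probabilistic min-cut bound: the joint construction of a single random maximal chain together with independent level samples $\ell_j$ is what forces $Y_1\prec\cdots\prec Y_k$ to hold deterministically while leaving each $Y_j$ marginally uniform on its layer, which is what makes the union bound tight.
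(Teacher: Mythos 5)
Your proof is correct and, while it uses the same construction of $h$ as the paper (alternating value on $k$ consecutive Hamming-weight strips of roughly equal mass), it takes a genuinely different route for the chain packing. The paper builds the vertex-disjoint paths inductively: it shows by Hall's theorem (using the normalized matching property of the hypercube, i.e. that the indicator of an upper closure is monotone in level) that a maximal matching between consecutive strips $B_{j-1}$ and $B_j$ saturates the smaller side, and appends matched vertices to the partial paths, tracking the small loss at each step. You instead reduce the whole packing to a single global max-flow/min-cut computation on the layered DAG and give a lower bound on every vertex cut directly. Your coupling --- a uniformly random maximal chain together with independent level samples $\ell_j$ drawn within each strip proportionally to $\binom{d}{\ell}$ --- is a neat device: it makes $Y_1\prec\cdots\prec Y_k$ hold with probability~$1$ while keeping each $Y_j$ marginally uniform on $V_j$, so the union bound over the cut pieces is tight and yields $\lvert C\rvert\geq\min_j\lvert V_j\rvert$ in one step. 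This is cleaner than the inductive argument, which has to deal with the possibility that partial paths are dropped when a strip is smaller than its predecessor, and it replaces Hall's theorem with a self-contained probabilistic min-cut bound. The one step you state somewhat tersely is why the alternating symmetric function with $k-1$ sign changes starting at value $0$ is $(k-1)$-monotone (one should note that any alternating chain beginning with value $1$ must visit strips of strictly increasing index of alternating parity, forcing length at most $k-1$); the paper is equally brief there, so this is not a gap.
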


\begin{proof}[Proof of {\autoref{cl:keep-k}}]
Suppose $f=g || h$ is not $k$-monotone, then there exist $(x_1,y_1),\dots, (x_{k+1},y_{k+1})$ such that $(x_1,y_1)\preceq\dots\preceq (x_{k+1},y_{k+1})$, $f(x_1,y_1)=1$ and $f(x_i,y_i)\neq f(x_{i+1},y_{i+1})$ for any $1\leq i\leq k$.
Because $g$ is monotone, either $g$ is constant on $x_1,\dots, x_{k+1}$, or 
there exists  an index $1<j\leq k+1$ such that $g(x_i)=0$ for $i<j$ and $g(x_i)=1$ for $i\geq j$.
For the first case, $h(y_i)=f(x_i,y_i)$ for any $i$ and $h$ alters exactly $k$ times on points $y_1\preceq \dots \preceq y_{k+1}$. 
For the second case, $h(y_1)=f(x_1,y_1)=1$, and $h$ alters exactly $k-1$ times on $y_1\preceq \dots\preceq y_{j-1}\preceq y_{j+1}\preceq\dots\preceq y_{k+1}$. Both cases contradict $h$ being $(k-1)$-monotone.
\end{proof}

\begin{proof}[Proof of {\autoref{cl:keep-farness}}]
Let $M_h$ be the maximal set of paths of length $k$ such that all paths are vertex disjoint and for every path $y_1\preceq \dots \preceq y_k$, $h(y_1)=0$ and $h(y_i)\neq h(y_{i+1})$ for $1\leq i\leq k-1$. Let 
$M_g$ be the maximal set of pairs such that all pairs are vertex disjoint and every pair  $x_1\preceq x_2$ is a violation for $g$, i.e., $g(x_1)=1$ and $g(x_2)=0$. 
For each path $(y_1\preceq\dots\preceq y_{k})\in M_h$ and any pair $(x_1\preceq x_2)\in M_g$, it is easy to see following path is a violation for $f||g$ being $k$-monotone :                 
    \[ (x_1,y_1),\dots,(x_1,y_k), (x_2, y_k).\]
Let $f'$ be the closest $k$-monotone function to $f$. For each violating path, $f$ and $f'$ differ on at least $1$ point. Because both $M_h$ and $M_g$ are vertex disjoint, violating paths constructed by taking every path in $M_h$ and every pair in $M_g$ are vertex disjoint.  Thus $f$ and $f'$ differ on at least $\abs{M_h}\times\abs{M_g}$ points and $f$ is $(\abs{M_h}\cdot\abs{M_g}/2^{2d})$-far from $f'$.  
It is known (\cite{GGLRS00}) that for any $g\colon\B^d\to\B$ which is $\eps$-far from monotone, $\abs{M_g}\geq 2^{d-1}\eps$. The desired conclusion follows.
\end{proof}

\begin{proof} [Proof of {\autoref{cl:existence-h}}]
Let $B_1, B_2, \ldots, B_k$ be consecutive blocks each consisting of consecutive layers of the hypercube such that for each $i \in [k]$, $$ (1 - \frac{k}{\sqrt d}) \frac{2^d}{k} \leq \abs{B_i} \leq (1 + \frac{k}{\sqrt d}) \frac{2^d}{k}.$$  
Because $k$ is a constant and every layer contains at most $2^d/\sqrt{d}$ points, we can always greedily find $B_1,\dots, B_k$ one by one. Let $h$ be a function such that $h$ has constant value $(i+1 \mod 2)$ on block $B_i$ for $1\leq i\leq k$. 
It is easy to see  that $h$ is $(k-1)$-monotone. Next we  prove for any $1\leq j\leq k$, $B_1,\dots, B_j$ contain at least $\frac{(1 - o_d(1)) 2^d}{k}$ vertex disjoint paths of length $j-1$ such the $i$th point on every path is in $B_i$. \autoref{cl:existence-h} follows from the case $j=k$. 

 For $j=1$, the statement holds by taking all points in $B_1$. For $j>1$, assume that $B_1,\dots, B_{j-1}$ contain a set $P_{j-1}$ of such $\frac{(1 - o_d(1)) 2^d}{k}$ vertex disjoint paths of length $j-2$. Let $M$ be the maximal matching between $B_{j-1}$ and $B_{j}$ and let $P_{j}$ be the set of paths of length $j-1$ constructed in following way:  for each path in $P_{j-1}$ with an endpoint  $y_{j-1}$,  if there exists $y_j$ such that $(y_{j-1},y_j)\in M$, we add the path appended with $y_j$ into $P_j$.  Because no points in $B_j$ will be added into two different paths in $P_j$ and $P_{j-1}$ are vertex disjoint, paths in $P_j$ are vertex disjoint.  
 
Now we show $\abs{M} = \min(\abs{B_{j-1}},\abs{B_j})$ which implies $\abs{P_j} \geq  \frac{(1 - o_d(1)) 2^d}{k}$.  Suppose $\abs{B_{j-1}} \leq \abs{B_{j}}$ (the argument is analogous in the other case).
For any subset $S$ of $B_{j-1}$, let $f_S$  be the indicator function of the upper closure of $S$ denoted as $N(S)$.  It is not hard to check that $f_S$ is monotone and thus 
				$\Pr[f_{S}(x)=1 | x\in B_{j}] \geq \Pr[f_{S}(x)=1 | x\in B_{j-1}].$ 
It follows \[\abs{N(S)\cap B_j} = \abs{B_j} \Pr[f_{S}(x)=1 | x\in B_{j}]  \geq   \abs{B_{j-1}} \Pr[f_{S}(x)=1 | x\in B_{j-1}] \geq \abs{S}.\]
By Hall's theorem, $\abs{M}=\abs{B_{j-1}}$.   By similar argument, we can show $\abs{M}=\abs{B_{j}}$ when $\abs{B_{j-1}}>\abs{B_j}$. Thus  $\abs{M} = \min(\abs{B_{j-1}},\abs{B_j})$.
\end{proof}

\section{On the line}\label{sec:line}
\makeatletter{}In this section we prove our results on testing $k$-monotonicity on the line, that is of functions $f\colon[n]\to\{0,1\}$. We start with~\autoref{theo:test:d1:os:na:ub}, which establishes that this can be done non-adaptively with one-sided error, with only $\bigO{k/\eps}$ queries; we then turn to~\autoref{theo:test:d1:os:a:lb}, which shows that this is the best one can hope for if insisting on one-sidedness. The last result of this section is~\autoref{theo:test:d1:2s:na:ub}, where we show that -- perhaps unexpectedly -- \emph{two-sided} algorithms, even non-adaptive, can break this barrier and test $k$-mononicity with \emph{no} dependence on~$k$.

\subsection{Upper and lower bounds for one-sided testers}

We first prove the upper bound, restated below:
\testlineonesidednaub*

\begin{proof}
We assume that $\frac{\eps n}{50k}$ is an integer,\footnote{If not, we consider instead $\eps^\prime\eqdef\frac{50k}{n}\flr{\frac{\eps n}{50k}} > \eps-\frac{50k}{n} > \frac{\eps}{2}$ if $\eps > \frac{100k}{n}$; while if $\eps \leq \frac{100k}{n}$ we query the entire function, for a total of $n = \bigO{\frac{k}{\eps}}$ queries.} and partition the domain into $K\eqdef \frac{50k}{\eps}$ intervals of size $\frac{\eps n}{50k}$, the consecutive ``blocks'' $B_1,\dots, B_{K}$. We then define $g\colon [n]\to \{0,1,\unknown\}$ as the function constant on each block $B_i=\{b_i,\dots, b_{i+1}-1\}$, such that 
\begin{itemize}
  \item If $f(b_i)= f(b_{i+1}-1)$, then $g(j) = f(b_i)$ for all $j\in B_i$;
  \item otherwise, $g(j) = \unknown$ for all $j\in B_i$.
\end{itemize}
We say that a block $B_i$ such that $g|_{B_i} = \unknown$ is a \emph{changepoint block}\marginnote{Terminology: \emph{changepoint block}}\ for $g$. Clearly, given query access to $f$ one can obtain the value of $g$ on any point $j\in[n]$ with only two queries to $f$. Moreover, defining $\tilde{g}\colon [n]\to \{0,1\}$ to be the function obtained from $g$ by replacing $\ast$ by $0$, we observe the following:
\begin{itemize}
  \item If $f$ is $k$-monotone, then (i) so is $\tilde{g}$, and (ii) $f$ and $\tilde{g}$ differ in at most $k$ blocks (namely the changepoint blocks of $g$), so that $\dist{f}{g} \leq k\cdot \frac{1}{K} = \frac{\eps}{50}$;
  \item If $f$ is $\eps$-far from $k$-monotone, then either (i) $\tilde{g}$ is not $k$-monotone, or (ii)  $\dist{f}{\tilde{g}} > \eps$.
\end{itemize}

We start by learning $g$ (and thus $\tilde{g}$) exactly, using $2K = \bigO{{k}/{\eps}}$ non-adaptive queries. Setting $m\eqdef C\frac{k}{\eps}$, we also sample $m^\prime \sim \poisson{m}$ points\footnote{The fact that we sample $\poisson{m}$ instead of $m$ is for ease of the analysis; note that due to the tight concentration of Poisson random variables, with probability $1-\littleO{1}$ we will have $m^\prime \leq 2m$. If this does not happen, the tester can output \accept, incurring only a small additional error probability (and not affecting the one-sidedness).} $j_1,\dots, j_{m^\prime}$  independently and uniformly from $[n]$, where $C>0$ is a constant to be determined in the course of the analysis, and query the value of $f$ (and $g$) on all of them.
Then, we reject if either (i) $\tilde{g}$ is not $k$-monotone; or (ii) there exist at least $k+1$ distinct blocks which contain a sample $s_j$ such that $\tilde{g}(s_j) \neq f(s_j)$.

By definition, this tester is non-adaptive; and it is not difficult to see it accepts any $k$-monotone function with probability $1$, since in that case $f$ and $g$ (and a fortiori $\tilde{g}$) differ in at most $k$ blocks: indeed, these blocks can only be changepoint blocks for $g$, i.e. blocks where $f$ changes value.

It remains to argue soundness: we will show that if $f$ is $\eps$-far from $k$-monotone, the tester will reject with probability at least $2/3$. By the first check made, (i), we can assume in the following that $\tilde{g}$ is $k$-monotone -- as otherwise $f$ is rejected with probability $1$ -- and we need to show that (ii) will reject with probability at least $2/3$. For each block $B_i$ (where $i\in [K]$), let $p_i\in [0,\frac{1}{K}]$ be defined as the (normalized) number of points in $B_i$ on which $f$ and $\tilde{g}$ differ (we henceforth refer to such a point as a \emph{giveaway point}):\marginnote{Terminology: \emph{giveaway point}}
\[
    p_i\eqdef \frac{1}{n} \sum_{j\in B_i} \indic{f(j)\neq \tilde{g}(j)}.
\]
Since $f$ is $\eps$-far from the $k$-monotone function $\tilde{g}$, we have $\sum_{i=1}^K p_i \geq \eps$. Now, letting $Z_i$ be the indicator of the event that among the $m^\prime$ samples, at least one is giveaway point from $B_i$, and $Z=\sum_{i=1}^K Z_i$, we can write $Z_i = \indic{Y_i \neq 0}$, where the $(Y_i)_{i\in [K]}$ are \emph{independent} Poisson random variables with $Y_i \sim \poisson{mp_i}$. The expected number of blocks in which a giveaway point is sampled is then
\[
  \shortexpect Z = \sum_{i=1}^K \shortexpect Z_i = \sum_{i=1}^K \probaOf{Y_i \neq 0 }= \sum_{i=1}^K (1-e^{-mp_i})
\]
Since for every $i\in [K]$ it holds that $e^{-mp_i} \leq 1-\frac{m}{2}p_i$ (the inequality holding since $0 \leq mp_i \leq 1$, which is verified for $m \leq K$), we get
\[
  \shortexpect Z = \sum_{i=1}^K \shortexpect Z_i \geq \sum_{i=1}^K \frac{m}{2}p_i \geq \frac{m\eps}{2} \geq \frac{C}{2} k.
\]
Moreover, by a Chernoff bound, we get that 
\[
    \probaOf{ Z < \frac{C}{4}k } \leq e^{-\frac{Ck}{16}}\leq e^{-\frac{C}{16}} 
\]
which is less that $1/4$ for $C\geq 23$. Setting $C\eqdef 30$ satisfies both conditions that $m\leq K$ and $C\geq 23$, and results in a one-sided non-adaptive tester which rejects functions far from $k$-monotone with probability at least $1-\frac{1}{4} + \littleO{1} \geq \frac{2}{3}$.

\end{proof} 

\noindent Turning to the lower bounds against one-sided testers, we show the following:

\testlineonesidednalb* 

\begin{proof}
Since a lower bound of $\bigOmega{\frac{1}{\eps}}$ straightforwardly holds, we can restrict ourselves to $k\geq 8$, and $\eps < \frac{1}{12}$; moreover, we assume without loss of generality  that $\frac{\eps n}{k}$ is an integer, and partition the domain into $K\eqdef \frac{k}{\eps}$ intervals of size $\frac{\eps n}{k}$, the consecutive ``blocks'' $B_1,\dots, B_{K}$. For  $v\in\{0,1\}^{K/2}$, we define $g_v\colon [n]\to \{0,1\}$ as the function which has constant value $v_i$ on block $B_{2i-1}$ and has constant value $1$ on the remaining blocks.  

Consider the distribution over $\{g_v\}_v$ where  each coordinate of $v$ is independently set to $0$ with probability $p \eqdef 6\eps$, and $1$ otherwise.  We next show that $g_v$ is at least $\eps$-far from any $k$-monotone function with very high probability over the choice of $v$. By a Chernoff bound, with probability at least $1- e^{-pK/16}= 1 - e^{-3k/8}$, $g_v$ has at least $pK/4 = 3k/2$ blocks that are $0$ blocks.  Conditioned on this, it is easy to see that $g_v$ is $\eps$-far from $k$-monotone: indeed, to make it $k$-monotone one has to flip its value on at least $k$ blocks, and each block contains an $\eps/k$ fraction of the domain. 

Fix any deterministic adaptive algorithm with query complexity $q\leq k/(24\eps)$ queries, and denote by $x_1,\dots, x_q$ the sequence of queries made (when given query access to some function $g_v$).  Note that $x_1$ is fixed by the algorithm and that for $1<i\leq q$, $x_i$ is uniquely determined by previous answers  $f(x_1),\dots, f(x_{i-1})$. We can sample the distribution $\{g_v\}_v$ and answer queries from the given algorithm in the following ``lazy way'': first, by marking every even blocks with value $1$ and initializing a list of queried odd blocks with their values.  When a new query $x$ comes, if $x$ was previously queried or belongs to an even block, we return the corresponding stored value.  Otherwise,  we sample the value, which is $0$ with probability $6\eps$ and $1$ otherwise, for the odd block which $x$ belongs to; and mark this block as queried.  

Let $y_1,\dots, y_{r}$ be the following subsequence of $x_1,\dots, x_q$: $y_i$ is the $i$th query made into an odd block which is not queried in $y_1,\dots, y_{i-1}$.  Clearly $r\leq q$ and 
$y_1,\dots, y_{r}$ reveals a violation if and only if  number of $0$'s in corresponding answers is at least $k/2+1$.   
Note, for arbitrary $a\in\{0,1\}^{r}$, 
\begin{align*}
\probaOf{ f(y_1)=a_1,\dots,f(y_{r})= a_{r} } = \probaOf{ f(y_1)=a_1 }\cdot \prod_{i=2}^{r}  \probaCond{ f(y_i)=a_i }{ f(y_1)=a_1,\dots, f(y_{i-1})=a_{i-1} }.
\end{align*}
$y_i$ is determined by $f(y_1)=a_1,\dots, f(y_{i-1})=a_{i-1}$ and by our way of sampling $f(y_i)$, we know that for every $i$ it holds that $\probaCond{f(y_i)=a_i }{ f(y_1)=a_1,\dots, f(y_{i-1})=a_{i-1} } = 6\eps$ if  $a_i=0$ and 
$\probaCond{ f(y_i)=a_i }{ f(y_1)=a_1,\dots, f(y_{i-1})=a_{i-1} }=1-6\eps$ if $a_i=1$.   Thus.
\begin{equation}
\label{eq:4chernoffbound}
    \probaOf{ f(y_1)=a_1,\dots,f(y_{r})= a_{r} }  = (1-6\eps)^{ \abs{a} } (6\eps)^{r- \abs{a} }.
\end{equation}
Let $Y_i$ be the indicator that $f(y_i)=0$. We get that, writing $F(i,N,p)$ for the cumulative distribution function of a Binomial with parameters $N$ and $p$,
\begin{align*}
  \probaOf{ \sum_{i=1}^{r}  Y_i \geq \frac{k}{2} + 1 } &\leq 
  \sum_{a\in\{0,1\}^{r} \colon \abs{\bar{a}}\geq k/2} (1-6\eps)^{\abs{a}} (6\eps)^{r-\abs{a}} \\
  &=  \sum_{\ell=0}^{r - k/2} \binom{r}{\ell}(1-6\eps)^{\ell} (6\eps)^{r-\ell}
  = F\!\left(r-\frac{k}{2}, r, 1-6\eps\right) \\
  &= F\!\left(r(1-x), r, 1-6\eps\right)  \tag{$x\eqdef \frac{k}{2r} \in (12\eps,1)$} \\
  &\leq e^{- r D( 1- x \mid\mid 1-6\eps)} \tag{Relative entropy Chernoff bound\footnotemark}
\end{align*}
\footnotetext{Recall that the relative entropy version of the Chernoff bound states that
$
F(m,N,p)\leq e^{-mD\left( \frac{m}{N} \mid\mid p\right) }
$
as long as $0\leq \frac{m}{N} \leq p$.
}
where $D(p\mid\mid q) \eqdef p\ln\frac{p}{q} + (1-p)\ln\frac{1-p}{1-q}$, and we used the fact that $\frac{k}{2}+1 \leq r \leq \frac{k}{24\eps}$. Rewriting slightly the right-hand-side, we obtain
\begin{align*}
  \probaOf{ \sum_{i=1}^{r}  Y_i \geq \frac{k}{2} + 1 }
  &\leq e^{- \frac{k}{2}\Phi(x)} 
\end{align*}
for $\Phi(x) \eqdef \frac{1}{x}\left( (1-x)\ln\frac{1-x}{1-6\eps}+x\ln\frac{x}{6\eps} \right)$. It is not hard to see that $\Phi$ is increasing on $[6\eps,1)$, and since $x\geq 12\eps$ the right-hand-side is at most $e^{- \frac{k}{2}\Phi(12\eps)}$. It then suffices to observe that, for $\eps\leq \frac{1}{12}$, it holds that $\Phi(12\eps) \geq \Phi(0) = \ln 2-\frac{1}{2} > \frac{1}{8}$ to conclude that
\begin{align*}
  \probaOf{ \sum_{i=1}^{r}  Y_i \geq \frac{k}{2} + 1 }
  &\leq e^{- \frac{k}{16}}
\end{align*}
and therefore obtain
 \[
    \probaOf{  f(y_1),\dots,f(y_{r}) \text{ contains at least $(k/2+1)$ zeros} } \leq e^{- \frac{k}{16}}.
    \]
Combining the two, this shows that the probability that $y_1,\dots,y_{r}$ does not reveal a violation for $g_v$  while $g_v$ is $\eps$-far from $k$-monotone is at least $1- e^{-k/16} - e^{-3k/8} > 1/3$ (since $k\geq 8$). By Yao's principle, for any (possibly randomized) non-adaptive algorithm $\Algo$ making at most $k/(24\eps)$ there exists a fixed $v$ such that $g_v$ is $\eps$-far from $k$-monotone yet $\Algo$ rejects $g_v$ with probability less than $2/3$. The desired conclusion follows.
\end{proof}

\subsection{Upper bound for two-sided testers: proof of~\autoref{theo:test:d1:2s:na:ub}}\label{ssec:line:twosided}
In this section, we prove the two-sided non-adaptive upper bound of~\autoref{theo:test:d1:2s:na:ub}, restated below:

\testlinetwosidednaub*

In what follows, we assume that $k > {20}/{\eps}$, as otherwise we can use for instance the $\bigO{k/\eps}$-query (non-adaptive, one-sided) tester of~\autoref{theo:test:d1:os:na:ub} to obtain an $\bigO{1/\eps^2}$ query complexity.

\subsubsection{Testing $k$-monotonicity over $[Ck]$}

We begin by giving a $\poly(C/\eps)$-query tester for $k$-monotonicity over the domain $[Ck]$.  The tester proceeds by reducing to support size estimation and using (a slight variant of) an algorithm of Canonne and Rubinfeld~\cite{CanonneR:14}. Let $f\colon [Ck] \to \{0,1\}$, and suppose $f$ is $s$-monotone but not $(s-1)$-monotone.\marginnote{Terminology: \emph{to have mononicity $s$} K: I removed this definition since it is not used again.}{} Then there is a unique partition of $[Ck]$ into $s+1$ disjoint intervals $I_1,I_2,\ldots,I_{s+1}$ such that $f$ is constant on each interval; note that this constant value alternates in consecutive intervals.
We define a distribution $D_f$ over $[s+1]$ such that $D_f(i) = \abs{I_i}/(Ck)$.

The algorithm of~\cite{CanonneR:14} uses ``dual access'' to $D$; an oracle that provides a random sample from $D$, and an oracle that given an element of $D$, returns the probability mass assigned to this element by $D$. 

\begin{theorem}[{\cite[Theorem 14 (rephrased)]{CanonneR:14}}]
In the access model described above, there exists an algorithm that, on input a threshold $n\in\N^\ast$ and a parameter $\eps > 0$, and given access to a distribution $\D$ (over an arbitrary set) satisfying 
\[ \min_{x\in\supp D} \D(x) \geq \frac{1}{n} \]
estimates the support size $\abs{\supp D}$ up to an additive $\eps n$, with query complexity $\bigO{\frac{1}{\eps^2}}$.
\end{theorem}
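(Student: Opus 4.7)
The plan is to build a simple unbiased estimator for $\abs{\supp \D}$ in the dual access model, and to control its variance using the assumed lower bound $\D(x) \geq 1/n$. Specifically, I would draw $m$ independent samples $x_1, \dots, x_m$ from the sampling oracle, query the probability-mass oracle at each of them, and form
\[
  \hat{S} \eqdef \frac{1}{m}\sum_{i=1}^m \frac{1}{\D(x_i)}.
\]
The algorithm outputs $\hat{S}$ (or its nearest integer), using $m = \bigO{1/\eps^2}$ queries to each oracle.

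The first step is an unbiasedness check: for $x\sim \D$,
\[
  \shortexpect\!\left[\frac{1}{\D(x)}\right] = \sum_{x\in \supp \D} \D(x) \cdot \frac{1}{\D(x)} = \abs{\supp \D},
\]
so $\shortexpect[\hat{S}] = \abs{\supp \D}$. Next, I would bound the variance. Because $\D(x) \geq 1/n$ on the support, each summand $1/\D(x_i)$ is at most $n$, so
\[
  \var\!\left[\frac{1}{\D(x)}\right] \leq \shortexpect\!\left[\frac{1}{\D(x)^2}\right] = \sum_{x\in\supp \D} \frac{1}{\D(x)} \leq n \cdot \abs{\supp \D} \leq n^2,
\]
where the last inequality uses $\abs{\supp \D} \leq n$, itself an immediate consequence of $\min_x \D(x) \geq 1/n$ and $\sum_x \D(x) = 1$. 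By independence, $\var[\hat{S}] \leq n^2/m$, and Chebyshev's inequality gives
\[
  \probaOf{ \abs{\hat{S} - \abs{\supp \D}} > \eps n } \leq \frac{1}{m \eps^2}.
\]
Setting $m = \bigO{1/\eps^2}$ with a sufficiently large constant makes this at most $1/3$, yielding the desired additive-$\eps n$ guarantee with constant probability.

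The main subtlety worth highlighting is that both oracles are queried on the \emph{same} sampled points, so the total query budget is genuinely $\bigO{1/\eps^2}$ and independent of the ground set (which can be arbitrary). The lower bound $\D(x) \geq 1/n$ plays a dual role: it keeps each estimator term $1/\D(x_i)$ bounded by $n$ so that the variance is tame, and it forces $\abs{\supp \D} \leq n$ so that additive error $\eps n$ is a meaningful accuracy guarantee. I do not anticipate any serious obstacle beyond verifying the two routine bounds above; standard median-of-means amplification boosts the success probability to $1-\delta$ at the usual cost of a multiplicative $\bigO{\log(1/\delta)}$ factor, should that be required downstream.
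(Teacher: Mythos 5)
Since this statement is a citation to Canonne--Rubinfeld and the paper supplies no proof of its own, there is nothing in the paper to compare against; what I can do is check your argument on its merits. It is correct. The unbiasedness $\shortexpect_{x\sim D}[1/D(x)]=\abs{\supp D}$ is immediate, the second-moment bound $\shortexpect[1/D(x)^2]=\sum_x 1/D(x)\leq n\abs{\supp D}\leq n^2$ uses the promise $D(x)\geq 1/n$ exactly where it is needed (both to cap each term at $n$ and to bound $\abs{\supp D}\leq n$), and Chebyshev with $m=\bigO{1/\eps^2}$ finishes. The query accounting is right: one SAMP and one EVAL call per draw, for $\bigO{1/\eps^2}$ total, independent of the ground set.

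This is also essentially how~\cite{CanonneR:14} proceeds. The one cosmetic difference is that their algorithm, which is stated without the $\min_x D(x)\geq 1/n$ promise baked in, estimates the truncated quantity $\shortexpect_{x\sim D}\bigl[\tfrac{1}{D(x)}\indic{D(x)>\tau}\bigr]$ for a threshold $\tau$, precisely to keep the variance of $1/D(x)$ under control when some masses may be tiny; this is the feature the present paper later exploits when it ``caps'' the simulated oracle. Under the stated promise one may take $\tau<1/n$, the indicator is always $1$, and the truncated estimator collapses to your $\hat S$. So your proof is a clean specialization of theirs to the promised regime, not a new argument. One small remark worth making explicit in a write-up: the theorem as stated does not mention a success probability, but the intended reading is ``with constant probability,'' which is what Chebyshev gives you; your note about median-of-means amplification for $1-\delta$ is the right thing to add if a higher confidence is needed downstream.
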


 We only have access to $D_f$ through query access to $f$.  One difficulty is that, to access $D_f(i)$, we need to determine where $I_i$ lies in $f$.  For example, finding $D_f(k/2)$ requires finding $I_{k/2}$, which might require a large number of queries to $f$.

We circumvent this by noting that the algorithm does not require knowing the ``label'' of any element in the support of the distribution.
The only access required is being able to randomly sample elements according to $D_f$, and evaluate the probability mass on the sampled points.

\begin{lemma}[Sampling from $D_f$]
Let $i \in [n]$ be chosen uniformly at random, and let $j$ be such that $i \in I_j$.  Then, the distribution of $j$ is exactly $D_f$. 
\end{lemma}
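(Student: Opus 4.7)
The plan is to observe that this lemma follows directly from unpacking the definitions, and essentially no additional machinery is required. The intervals $I_1, I_2, \ldots, I_{s+1}$ were defined to be a partition of $[Ck]$ (here $n = Ck$ in the lemma's notation) into maximal subintervals on which $f$ is constant, so in particular they are pairwise disjoint and their union is the whole domain. Given any $i \in [n]$, the index $j$ such that $i \in I_j$ is therefore well-defined and unique, which means $j$ is a genuine (deterministic) function of $i$.

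First I would state this explicitly: define the random variable $J = J(i)$ where $J(i)$ is the unique index $j$ such that $i \in I_j$. Then compute, for each $j \in [s+1]$,
\[
    \probaOf{ J = j } = \probaOf{ i \in I_j } = \frac{\abs{I_j}}{Ck},
\]
using that $i$ is uniform on $[Ck]$. Comparing with the definition $D_f(j) = \abs{I_j}/(Ck)$ immediately gives $\probaOf{J = j} = D_f(j)$ for every $j$, hence $J$ is distributed exactly as $D_f$.

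There is no hard step here: the lemma's role is conceptual, namely to show that a uniform sample from the domain, together with the ability to later identify which interval it landed in, yields an honest sample from $D_f$ without needing to know any ``label'' of the interval. The only subtle point worth highlighting in the writeup is that this reduction works precisely because the $I_j$ partition the domain and are defined intrinsically from $f$, so the mapping $i \mapsto J(i)$ is a deterministic function of $i$ even though the partition itself is a priori unknown to the algorithm. After this one-line computation, we will in the next step describe how, given a sampled $i$, one can simulate ``probability mass'' queries by estimating $\abs{I_{J(i)}}/(Ck)$ via binary search around $i$, which is where the actual algorithmic work (and the query complexity) will appear.
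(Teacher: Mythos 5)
Your proof is correct and is precisely the (implicit) argument the paper intends; the paper states this lemma without a written proof because it follows immediately from the fact that the $I_j$ partition the domain and $D_f(j)$ is defined as $\abs{I_j}/(Ck)$. Your one-line computation $\probaOf{J=j}=\abs{I_j}/(Ck)=D_f(j)$ is the complete proof, and your remark on why the lemma matters (the tester never needs to know the label $j$) matches the paper's subsequent discussion.
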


\begin{lemma}[Evaluating $D_f(j)$]
Suppose $I_j = \{a,a+1,\ldots,b\}$.  Given $i$ such that $i \in I_j$, we can find $I_j$ by querying $f(i+1) = f(i+2) = \dots = f(b)$ and $f(b+1) \neq f(b)$, as well as $f(i-1) = f(i-2) = \ldots = f(a)$ and $f(a-1) \neq f(a)$.  The number of queries to $f$ is $b-a+3 = \abs{I_j}+3$.
\end{lemma}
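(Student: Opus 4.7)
The plan is short, since the lemma is primarily the description of a procedure and its proof reduces to verifying correctness and counting queries. The key structural fact, inherited directly from the definition of the partition $\{I_1, \dots, I_{s+1}\}$, is that each $I_j = \{a, \dots, b\}$ is a \emph{maximal} interval of constancy of $f$: hence $f(a) = f(a+1) = \dots = f(b)$, while (away from the endpoints of $[Ck]$) $f(a-1) \neq f(a)$ and $f(b+1) \neq f(b)$, since consecutive intervals in the partition alternate values.

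Given $i \in I_j$ and the known value $f(i)$ (produced alongside the sampling step of the previous lemma), I would run two outward walks. Rightward, query $f(i+1), f(i+2), \dots$ in order and stop at the first index $r$ with $f(r) \neq f(i)$; by maximality of $I_j$ this $r$ equals $b+1$, so $b$ is identified. A symmetric leftward walk, stopping at the first index $\ell < i$ with $f(\ell) \neq f(i)$, isolates $\ell = a-1$ and hence $a$. Boundary cases ($a = 1$ or $b = Ck$) are handled by halting the corresponding walk once the endpoint of $[Ck]$ is reached, which only decreases the query count.

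For the bookkeeping: the rightward walk queries the $b - i + 1$ indices in $\{i+1, \dots, b+1\}$; the leftward walk queries the $i - a + 1$ indices in $\{a-1, \dots, i-1\}$; adding the query $f(i)$ itself (made when $i$ was first sampled) yields $(b-i+1) + (i-a+1) + 1 = b - a + 3 = |I_j| + 3$, matching the stated bound. From these endpoints, $D_f(j) = \abs{I_j}/(Ck) = (b-a+1)/(Ck)$ is computed directly.

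The one subtlety worth flagging is not about this lemma per se, but about how it will be chained with the support-size estimator of~\cite{CanonneR:14}: since $\abs{I_j}$ can be as large as $\Theta(k)$, a direct invocation of this oracle is too costly to achieve a $k$-independent query complexity. The ``capping'' strategy alluded to in the overview of~\autoref{sec:intro:overview} will resolve this by truncating both outward walks after a fixed budget of steps and declaring $D_f(j)$ to exceed some threshold whenever that budget is exhausted; that modification is orthogonal to the present lemma, which only certifies correctness and the unrestricted cost of the natural walking procedure.
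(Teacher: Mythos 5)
Your verification is correct and matches the paper's (implicit) reasoning: the paper states this lemma without a separate proof, treating the outward two-sided walk as self-evident from the maximality of the monochromatic intervals $I_j$. Your query count $(b-i+1)+(i-a+1)+1 = b-a+3$ is right, the boundary-case remark is appropriate, and the closing paragraph correctly anticipates the capping modification that the paper introduces in the subsequent lemma.
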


If we straightforwardly use these approaches to emulate the required oracles to estimate the support size of $D_f$, the number of queries is potentially very large.  If we attempt to query $D_f(j)$ where $\abs{I_j} = \Omega(k)$, we will need $\Omega(k)$ queries to $f$.  It will be enough for us to ``cap'' the size of the interval.

\begin{lemma}[Evaluating $D_f(j)$ with a cap]
Given $i$ such that $i \in I_j$, we will query $f$ on every point in $[i - 20 C/\eps, i + 20 C/\eps]$.  If $\abs{I_j} \leq 20 C/\eps$, then $I_j$ will be determined by these queries.  If these queries do not determine $I_j$, we know $\abs{I_j} > 20 C/\eps$.  Beyond querying $i$, this requires $40 C/\eps$ (nonadaptive) queries.
\end{lemma}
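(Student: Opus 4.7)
The plan is to verify each of the three assertions directly from the structure of $f$ as piecewise constant with alternating values on the maximal monochromatic intervals $I_1,\ldots,I_{s+1}$, so I expect each step to be a short and mechanical check rather than something genuinely difficult.

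First I would write $I_j=\{a,a+1,\ldots,b\}$ and argue that $i\in I_j$ together with $\abs{I_j}\le 20C/\eps$ forces $\max(i-a,\,b-i)\le\abs{I_j}-1<20C/\eps$, so both endpoints $a$ and $b$ lie inside the query window $W\eqdef[i-20C/\eps,\,i+20C/\eps]$. Combined with the fact that $f$ is constant on $I_j$ but flips value at each inter-interval boundary (so $f(a-1)\ne f(a)$ when $a>1$ and $f(b)\ne f(b+1)$ when $b<Ck$), querying $f$ on all of $W$ then allows us to read off $I_j$ as the maximal run of the value $f(i)$ containing $i$, giving the first claim.

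The second claim is simply the contrapositive: if the window queries fail to determine $I_j$, then some endpoint of $I_j$ must lie outside $W$, which forces $\abs{I_j}>20C/\eps$. The third claim is a one-line count: $W$ contains $40C/\eps+1$ integer points, and subtracting $i$ (queried by the outer procedure) leaves $40C/\eps$ additional queries; nonadaptivity is automatic because $W$ depends only on $i$ and not on any observed value of $f$.

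The only mild obstacle I anticipate is the boundary case where $i$ is within $20C/\eps$ of either $1$ or $Ck$, so that $W$ spills outside the domain $[Ck]$. I plan to handle this by restricting $W$ to $[Ck]$ and noting that the missing positions cannot be change points of $f$ (they lie outside the domain), so the same argument still recovers $I_j$ whenever $\abs{I_j}\le 20C/\eps$ and the query count only goes down. I do not foresee any deeper difficulty.
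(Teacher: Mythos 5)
The paper states this lemma without an explicit proof, treating it as a direct consequence of the window construction; your verification is exactly the mechanical check that is being left implicit, and it is correct, including the careful handling of the boundary case where the window is truncated to $[Ck]$ (with the minor observation that to actually pin down $I_j$ you need $a-1$ and $b+1$ in the window, which your bound $\max(i-a, b-i) \le |I_j|-1$ guarantees when $|I_j| \le 20C/\eps$).
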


\begin{claim}\label{claim:supp:mincase}
If $f$ is $\eps$-far from $k$-monotone, then it is not $(1+\frac{\eps}{4})k$-monotone, and in particular $\abs{ \supp{D_f} } > (1+\frac{\eps}{4})k+1$.
\end{claim}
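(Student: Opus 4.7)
The proof strategy is by contrapositive: assume $f$ is $(1+\eps/4)k$-monotone, with exact monotonicity $s \leq (1+\eps/4)k$, and construct a $k$-monotone $g$ with $\dist(f,g) < \eps$. If $s \leq k$ then $f$ is already $k$-monotone and there is nothing to prove; otherwise $k < s \leq (1+\eps/4)k$. Write $f$ as its unique decomposition into maximal constant intervals $I_1,\ldots,I_r$ with alternating values $v_1,\ldots,v_r$, where $r \leq s+1 \leq (1+\eps/4)k+1$, so in particular $r - k \leq \eps k/4 + 1$. Note that any function with at most $k$ runs is automatically $k$-monotone, so it suffices to produce $g$ with $r - 2|A|\leq k$ runs for a suitable flipping set $A$.

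The construction of $g$ proceeds by picking a set $A$ of ``internal'' interval indices in $\{2,\ldots,r-1\}$ such that any two elements of $A$ differ by at least $2$, and flipping the value of $f$ on every $I_j$ with $j \in A$. A short case analysis shows that each flip $j \in A$ coalesces $I_{j-1}, I_j, I_{j+1}$ into a single run of $g$ -- the new value of $I_j$ is $v_{j-1} = v_{j+1}$, which agrees with the unflipped neighbors -- and since no two indices in $A$ are adjacent, these contributions to the run-reduction are additive, so $g$ has exactly $r - 2|A|$ runs. Taking $|A| = \lceil (r-k)/2 \rceil \leq \eps k/8 + 1$ therefore ensures $g$ has at most $k$ runs and is $k$-monotone.

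It remains to choose $A$ with small total weight $\sum_{j \in A}|I_j|$, which I would handle via an even/odd pigeonhole. Partition $\{2,\ldots,r-1\}$ into $B_{\text{even}} = \{2,4,6,\ldots\} \cap \{2,\ldots,r-1\}$ and $B_{\text{odd}} = \{3,5,7,\ldots\} \cap \{2,\ldots,r-1\}$: each of these sets is pairwise non-adjacent in our sense, and together they have interval weight at most $Ck$. Hence one of them, WLOG $B_{\text{even}}$, has weight at most $Ck/2$ while containing at least $(k-1)/2$ indices. Letting $A$ consist of the $|A|$ lightest elements of $B_{\text{even}}$, averaging yields $\sum_{j\in A}|I_j| \leq |A|\cdot (Ck/2)/|B_{\text{even}}|$, which upon substituting $|A| \leq \eps k/8 + 1$ and $|B_{\text{even}}| \geq (k-1)/2$, and using the assumed $k > 20/\eps$, is strictly less than $\eps Ck$. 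Hence $\dist(f,g) < \eps$, proving the main claim; the ``in particular'' conclusion is then immediate since $|\supp{D_f}|$ is by definition the number of constant intervals of $f$, so failing $(1+\eps/4)k$-monotonicity forces $|\supp{D_f}| > (1+\eps/4)k+1$. The main obstacle to be careful about is the non-adjacency bookkeeping -- in particular tracking the parity of $r-k$ and keeping $A$ strictly in $\{2,\ldots,r-1\}$ -- but beyond that everything reduces to routine arithmetic.
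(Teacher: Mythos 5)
Your proof is correct and reaches the same conclusion by contrapositive, but the mechanism for selecting which intervals to flip is genuinely different from the paper's. The paper runs a greedy loop: at each step it flips the currently shortest monochromatic interval (size at most $n/(\text{current number of intervals})$), re-counts intervals, and iterates $\frac{\eps k}{4}$ times, bounding the total cost by the sum $\sum_{j} n/(\ell^\ast+1-2j)$ and then estimating that quantity. Your version instead picks the flip set $A$ all at once: you partition the internal interval indices $\{2,\dots,r-1\}$ by parity, note that one parity class has total mass at most half the domain while still containing enough indices, and then take the $\lceil(r-k)/2\rceil$ lightest intervals from that class, bounding their total mass by a direct averaging argument. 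Both rely on the same two facts --- flipping a pairwise non-adjacent set $A$ of internal intervals reduces the run count by exactly $2\abs{A}$, and the excess run count is at most $\approx \eps k/4$ --- and both use the standing assumption $k > 20/\eps$ to absorb the additive $+1$ slack. Your batch/pigeonhole version avoids having to track how the shortest-interval bound evolves after each merge, which is a real simplification; the paper's iterative version avoids the pigeonhole bookkeeping about which parity class to use and whether $\abs{A}\leq\abs{B_{\mathrm{even}}}$. I checked the arithmetic you deferred: with $r-k\leq \eps k/4+1$ one has $\lceil(r-k)/2\rceil\leq \eps k/8+1$, the chosen parity class has at least $(k-1)/2$ members (so $\abs{A}\leq\abs{B_{\mathrm{even}}}$ holds for $k\geq 4$), and the averaging bound $\abs{A}\cdot\frac{Ck/2}{(k-1)/2}$ is indeed $<\eps Ck$ under $k>20/\eps$, so the argument closes.
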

\begin{proof}
The last part of the statement is immediate from the first, so it suffices to prove the first implication. We show the contrapositive: assuming $f$ is $(1+\frac{\eps}{4})k$-monotone, we will ``fix'' it into a $k$-monotone function by changing at most $\eps n$ points. In what follows, we assume $\frac{\eps k}{4} \geq 1$, as otherwise the statement is trivial (any function that is $\eps$-far from $k$-monotone is \textit{a fortiori} not $k$-monotone).

Let as before $\ell^\ast$ be the minimum integer $\ell$ for which $f$ is $\ell$-monotone: we can assume $k < \ell^\ast \leq (1+\frac{\eps}{4})k$ (as if $\ell^\ast \leq k$ we are done.) 
Consider as above the maximal consecutive monochromatic intervals $I_1,\dots, I_{\ell^\ast}$, and let $i$ be the index of the shortest one. In particular, it must be the case that $\abs{I_i} \leq \frac{n}{\ell^\ast+1}$. Flipping the value of $f$ on $I_i$ therefore has ``cost'' at most $\frac{n}{\ell^\ast+1}$, and the resulting function $f^\prime$ is now exactly $(\ell^\ast-2)$-monotone if $1<i<\ell^\ast$, and  $(\ell^\ast-1)$-monotone if $i\in\{1,\ell^\ast\}$. This means in particular that repeating the above $\frac{\eps}{4} k$ times is enough to obtain a $k$-monotone function, and the total cost is upperbounded by
\[
\sum_{j=0}^{\eps k/4} \frac{n}{\ell^\ast+1-2j} \leq \sum_{j=0}^{\eps k/4} \frac{n}{k+1-2j} 
=  \sum_{j=k(1-\frac{\eps}{2})+1}^{k+1} \frac{n}{j}
\leq n \frac{\frac{\eps}{2}k+1}{(1-\frac{\eps}{4})k+1}
\leq n \frac{\frac{3\eps}{4} k}{(1-\frac{\eps}{4})k}
\]
where for the last inequality (for the numerator) we used that $1\leq \frac{\eps k}{4}$. 
But this last RHS is upperbounded by $\eps n$ (as $\frac{3}{4}x \leq x(1-\frac{1}{4}x)$ for $x\in [0,1]$), showing that 
Therefore, $f$ was $\eps$-close to $k$-monotone to begin with, which is a contradiction.
\end{proof}

\begin{claim}\label{claim:test:d1:2s:na:ub:claim:1}
To $\eps$-test $k$-monotonicity of $f\colon [n] \to \{0,1\}$, it suffices to estimate $\abs{D_f}$ to within $\frac{\eps k}{10}$.
\end{claim}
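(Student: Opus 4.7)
My plan is to show that the two cases (``$f$ is $k$-monotone'' versus ``$f$ is $\eps$-far from $k$-monotone'') translate into two well-separated ranges for $\abs{\supp D_f}$, with a gap large enough that an additive $\eps k/10$ approximation suffices to tell them apart via a simple threshold.

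First I would observe the easy direction. If $f$ is $k$-monotone, then by definition its minimum monotonicity parameter $s$ satisfies $s \leq k$, and so the canonical partition into maximal monochromatic intervals has at most $k+1$ pieces. Thus $\abs{\supp D_f} \leq k+1$.

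Next I would invoke~\autoref{claim:supp:mincase} for the far case: if $f$ is $\eps$-far from $k$-monotone, then $f$ is not $(1+\eps/4)k$-monotone, so the canonical partition has more than $(1+\eps/4)k+1$ pieces, giving $\abs{\supp D_f} \geq (1+\eps/4)k + 2 > k + 1 + \eps k/4$.

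Finally, I would set up the distinguisher. Let $\hat{N}$ denote an estimator with $\lvert \hat{N} - \abs{\supp D_f}\rvert \leq \eps k/10$ (with probability $\geq 2/3$). In the ``yes'' case $\hat{N} \leq k + 1 + \eps k/10$, while in the ``far'' case $\hat{N} \geq k + 1 + \eps k/4 - \eps k/10 = k + 1 + 3\eps k/20$. Since $\eps k /10 < 3\eps k/20$, any threshold $\tau$ with $k + 1 + \eps k/10 < \tau < k + 1 + 3\eps k/20$ (e.g., $\tau \eqdef k + 1 + \eps k/8$) yields a valid test: accept if $\hat{N} \leq \tau$, reject otherwise. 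I do not expect any real obstacle here beyond keeping the constants consistent; the main content already lives in~\autoref{claim:supp:mincase}, and this claim is essentially a bookkeeping step that packages its conclusion as an ``additive gap on support size'' statement amenable to the support-size estimator of~\cite{CanonneR:14}.
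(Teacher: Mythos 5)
Your proof is correct and takes essentially the same approach as the paper's: bound $\abs{\supp D_f}$ by $k+1$ in the yes case and by more than $k+1+\eps k/4$ in the far case via~\autoref{claim:supp:mincase}, then observe the resulting gap exceeds twice the additive error $\eps k/10$, so a threshold test works. The only cosmetic difference is that you use the full strength of~\autoref{claim:supp:mincase} (the ``$+1$'' in $(1+\eps/4)k+1$), which makes the standing $k > 20/\eps$ assumption that the paper invokes here unnecessary.
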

\begin{proof}
If $f$ is $\eps$-far from $k$-monotone, then $\abs{D_f} > (1 + \frac{\eps}{4})k = k + \frac{\eps}{4}k$, and if $f$ is $k$-monotone, then $\abs{D_f} \leq k + 1$. The fact that $k > 20/\eps$ then allows us to conclude.
\end{proof}

\begin{claim}\label{claim:test:d1:2s:na:ub:claim:2}
There exists a two-sided non-adaptive tester for $k$-monotonicity of functions $f\colon [Ck]\to\{0,1\}$ with query complexity  $\bigO{\frac{C^{3}}{\eps^3}}$.
\end{claim}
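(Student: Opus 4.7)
By~\autoref{claim:test:d1:2s:na:ub:claim:1} it suffices to estimate $\abs{\supp D_f}$ to additive error $\eps k/10$, and the plan is to build a tailored, non-adaptive dual-access support-size estimator for $D_f$ rather than invoking the estimator of~\cite{CanonneR:14} as a black box. Sampling from $D_f$ is immediate: drawing $i\in[Ck]$ uniformly at random and returning the interval $I_j\ni i$ yields a sample exactly distributed as $D_f$. The difficulty is the evaluation oracle -- given such a sample, computing $D_f(j)=\abs{I_j}/(Ck)$ naively requires walking to both endpoints of $I_j$, which can cost as much as $\bigOmega{Ck}$ queries.

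To keep the per-sample cost at $\bigO{C/\eps}$ queries, I would \emph{cap} the evaluation oracle: after drawing $i$, I non-adaptively query $f$ at all points of $[i-20C/\eps,\,i+20C/\eps]\cap [Ck]$. These queries determine $\abs{I_j}$ exactly whenever $\abs{I_j}\leq 20C/\eps$; otherwise they certify $\abs{I_j}>20C/\eps$, in which case I flag $I_j$ as \emph{large}. Since large intervals cover at most $Ck$ domain points in total, there are at most $N_\ell \leq Ck/(20C/\eps)=\eps k/20$ of them. Writing $N_s$ for the number of small intervals, $\abs{\supp D_f}=N_s+N_\ell$, so it is enough to estimate $N_s$ to additive error $\eps k/20$.

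For this, I would use the unbiased estimator $X\eqdef (Ck/\abs{I_j})\cdot\indic{\abs{I_j}\leq 20C/\eps}$, where $i\sim [Ck]$ uniformly and $I_j\ni i$. A direct computation (grouping $i\in[Ck]$ by interval) gives $\shortexpect[X]=N_s$ and $\shortexpect[X^2]=\sum_{I_j\text{ small}}Ck/\abs{I_j}\leq Ck\cdot N_s\leq (Ck)^2$. Averaging $m=\bigO{C^2/\eps^2}$ independent copies of $X$ and applying Chebyshev's inequality yields the required $\eps k/20$ additive error with probability at least $2/3$; combined with the at most $\eps k/20$ slack from ignored large intervals, the total additive error in the estimate of $\abs{\supp D_f}$ is at most $\eps k/10$. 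The tester accepts iff the final estimate is at most $k+1+\eps k/8$, a threshold that lies inside the gap supplied by~\autoref{claim:supp:mincase}. Non-adaptivity is preserved because each sampled $i$ determines a fixed neighborhood to query, and the total query cost is $m\cdot \bigO{C/\eps}=\bigO{C^3/\eps^3}$, matching the claim.

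The main technical point, and the one requiring the most care, is the tuning of the cap: it must be coarse enough to be implementable with $\bigO{C/\eps}$ queries per sample, yet fine enough that the large intervals we ignore contribute only $\bigO{\eps k}$ to the support size, while simultaneously keeping the variance of the capped estimator at $\bigO{(Ck)^2}$ so that $\bigO{C^2/\eps^2}$ samples suffice via a second-moment argument. Choosing the cap at $20C/\eps$ balances all three constraints.
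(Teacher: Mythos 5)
Your proof is correct, and it achieves the stated bound by a genuinely different (and arguably cleaner) route. The paper invokes the support-size estimator of Canonne--Rubinfeld as a black box, observing that the cap of $20C/\eps$ on interval length corresponds to a cap of $20/(\eps k)$ on the reported probability masses, and then arguing by inspection of that algorithm that the cap can only shift its estimate downward by at most $\eps k/20$; combined with the $\eps k/20$ accuracy of the algorithm (run with parameter $\eps' = \eps/(20C)$), this gives total additive error $\eps k/10$ and query complexity $O(1/\eps'^2) \cdot O(C/\eps) = O(C^3/\eps^3)$. You instead replace the CR machinery entirely with a self-contained second-moment argument: the unbiased estimator $X = (Ck/\abs{I_j})\cdot\indic{\abs{I_j}\leq 20C/\eps}$ has $\shortexpect[X] = N_s$, and since each interval has length at least $1$, one gets $\shortexpect[X^2] = \sum_{I_j\text{ small}} Ck/\abs{I_j} \leq Ck\cdot N_s \leq (Ck)^2$, so Chebyshev with $m = O(C^2/\eps^2)$ samples gives additive error $\eps k/20$ in estimating $N_s$; adding back the $N_\ell \leq \eps k/20$ uncounted large intervals gives total error $\eps k/10$, with the same $m\cdot O(C/\eps) = O(C^3/\eps^3)$ query cost. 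Your route avoids peering inside the CR algorithm to reason about how the cap interacts with its internal thresholding, which is the most delicate (and least rigorously stated) part of the paper's argument; what you give up is the generic reuse of an off-the-shelf tool, but for this specific application the direct variance bound is tight and transparent. Both analyses exploit the same structural facts -- intervals have length $\geq 1$, large intervals number at most $\eps k/20$, and non-adaptivity follows because the queried neighborhood is determined by the sample $i$ alone.
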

\begin{proof}
We use the algorithm of~\cite{CanonneR:14} for estimating support size.  

Inspecting their algorithm, we see that our cap of $20C/\eps$ for interval length (and therefore $20/(\eps k)$ for maximum probability reported) might result in further error of the estimate.  The algorithm interacts with the unknown function by estimating the expected value of $1/D_f(j)$ over random choices of $j$ with respect to $D_f$.  Our cap can only decrease this expectation by at most $(\eps k)/20$.
Indeed, the algorithm works by estimating the quantity $\shortexpect_{x\sim D_f}[ \frac{1}{D_f(x)} \indic{D_f(x) > \tau} ]$, for some suitable parameter $\tau > 0$. By capping the value of  ${1}/{D_f(x)}$ to $20/(\eps k)$, we can therefore only decrease the estimate, and by at most $20/(\eps k)\cdot D_f( \setOfSuchThat{x}{D_f(x) > (\eps k)/20} )  \leq 20/(\eps k)$.

The condition for their algorithm to estimate support size to within $\pm \eps m$ is that all elements in the support have a probability mass of at least $1/m$.  Since each nonempty interval has length at least $1$, we have $\min_j D_f(j) \geq (1/Ck)$.  In order for their algorithm to report an estimate within $\pm \eps k/20$ of support size,
we set $\eps^\prime = (\eps/20C)$ in their algorithm. 

The total error in support size is at most $\eps k/20 + \eps k/20 = \eps k/10$. By~\autoref{claim:test:d1:2s:na:ub:claim:1}, this suffices to test $\eps$-test $k$-monotonicity of $f$.

Using the algorithm of~\cite{CanonneR:14}, we need $O(1/{\eps^\prime}^2) = \bigO{(C/\eps)^2}$ queries to $D_f$.  For every query to $D_f$, we need to make $\bigO{C/\eps}$ queries to $f$, so the overall query complexity is $\bigO{C^{3}/\eps^3}$.
\end{proof}
 
\subsubsection{Reducing $[n] \to \{0,1\}$ to $[Ck] \to \{0,1\}$.}

Now we show how to reduce $\eps$-testing $k$-monotonicity of $f\colon [n] \to \{0,1\}$ to $\eps^\prime$-testing $k$-monotonicity of a function $g\colon [Ck] \to \{0,1\}$ for $C = \poly(1/\eps)$ and $\eps^\prime = \poly(\eps)$, resulting in a $\poly(1/\eps)$-query algorithm for $\eps$-testing $k$-monotonicity.

The first step is (as before) to divide $[n]$ in blocks (disjoint intervals) of size $\frac{\eps n}{4k}$ if $\eps > \frac{8k}{n}$ (again assuming without loss of generality that $\frac{\eps n}{4k}$ is an integer), and blocks of size $1$ otherwise (in which case $n\leq \frac{8k}{\eps}$ and we can directly apply the result of~\autoref{claim:test:d1:2s:na:ub:claim:2}, with $C= n/k \leq 8/\eps$). Let $m=4k/\eps$ be the number of resulting blocks, and define $f_m\colon[n]\to\{0,1\}$ as the \emph{$m$-block-coarsening}\marginnote{Terminology: \emph{$m$-block-coarsening}}\ of $f$: namely, for any $j \in B_i$, we set

\[
f_m(j) = \operatorname{argmax}_{b \in \{0,1\}} \Pr_{k \in B_i}[f_m(k) = b] \tag{majority vote}
\]
Ordering the blocks $B_1,B_2,\ldots,B_m$, we also define $g\colon [m] \to \{0,1\}$ such that $g(i) = \min_{a \in B_i} f_m(a)$.

\begin{lemma}
Suppose $f$ is $k$-monotone.  Then $f$ has at most $k$ non-constant blocks, and $f_m$ is $k$-monotone.
\end{lemma}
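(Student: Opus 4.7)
My plan is to prove the two assertions in turn, both by reducing to the elementary fact that a $k$-monotone function $f\colon[n]\to\{0,1\}$ has at most $k$ \emph{transitions}, i.e.\ indices $j\in[n-1]$ with $f(j)\neq f(j+1)$. Indeed, any such transition yields a witness chain $x_1\prec x_2$ of alternation, and one can stitch together witnessing chains from distinct transitions into a single ascending chain along which $f$ alternates, so more than $k$ transitions would contradict $k$-monotonicity.

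For the first assertion, I would observe that every non-constant block $B_i$ must, by definition, contain two consecutive indices $j,j+1\in B_i$ with $f(j)\neq f(j+1)$, that is, a transition of $f$ that is strictly internal to $B_i$. Since distinct blocks are disjoint, distinct non-constant blocks give rise to distinct transitions; hence the number of non-constant blocks is at most the number of transitions of $f$, which is at most $k$.

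For the second assertion, let $v_i\in\{0,1\}$ denote the value of $f_m$ on $B_i$ (so $v_i$ is the majority value of $f$ on $B_i$), and note that $f_m$ is constant on each block, so its transitions are exactly the boundaries $i\mapsto i+1$ with $v_i\neq v_{i+1}$. The key step is to exhibit, for each such boundary, an actual realization of the value $v_i$ inside $B_i$: since $v_i$ is the majority value of $f$ on $B_i$, there exists some $x_i\in B_i$ with $f(x_i)=v_i$. Picking one such $x_i$ for every block index $i$, one obtains an ascending sequence $x_1<x_2<\cdots<x_m$ along which $f$ takes exactly the values $v_1,v_2,\dots,v_m$, so $f$ alternates at least as many times as the sequence $(v_i)_i$ does, which is the number of transitions of $f_m$. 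Since $f$ is $k$-monotone, this count is at most $k$, so $f_m$ has at most $k$ transitions and is therefore $k$-monotone.

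I do not anticipate any real obstacle here: the whole argument is a clean two-line observation once one notes that majority-taking cannot manufacture new transitions, only erase them, because every value appearing in $f_m$ is witnessed pointwise by $f$. The only thing to be slightly careful about is the reduction from ``alternation along some chain'' to ``transition between consecutive indices in $[n]$,'' but in dimension one these coincide (any chain of alternations in $\{0,1\}$ can be refined to one that alternates at consecutive integers), so $k$-monotonicity of $f\colon[n]\to\{0,1\}$ is literally equivalent to having at most $k$ transitions.
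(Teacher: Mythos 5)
Your approach matches the paper's, which dispatches the first assertion in a single line (``the function $f$ only changes values $k$ times; for a block to be non-constant, the block must contain a pair of points with a value change'') and leaves the second assertion implicit. Your proof therefore fills in more detail than the paper does, and the core idea --- pick a majority-witness $x_i \in B_i$ with $f(x_i) = v_i$ and observe that the chain $x_1 \prec \cdots \prec x_m$ realizes the sequence $(v_i)_i$ as actual values of $f$ --- is exactly the right one.

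There is, however, a small but real imprecision in your final step. You assert that ``$k$-monotonicity of $f\colon[n]\to\{0,1\}$ is literally equivalent to having at most $k$ transitions,'' and then deduce that $f_m$ having at most $k$ transitions makes it $k$-monotone. Under the paper's definition this equivalence is only one-directional: $f$ is $k$-monotone iff there is no chain $z_1 \prec \cdots \prec z_{k+1}$ with $f(z_1)=1$ and all consecutive values differing. A line function with $t$ transitions and $f(1)=1$ is $(t+1)$-monotone but \emph{not} $t$-monotone (the chain through the $t+1$ constant runs, starting at $1$, alternates over $t+1$ points), so ``at most $k$ transitions'' does not by itself certify $k$-monotonicity. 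Concretely, if $f(1)=0$ but the majority of $f$ on $B_1$ is $1$, then $f_m(1)=1$ while $f(1)=0$, and your transition-count bound ``$\leq k$'' is not tight enough on its face. The fix is easy and stays within your framework: rather than comparing transition \emph{counts}, lift any alternating chain $y_1 \prec \cdots \prec y_\ell$ of $f_m$ with $f_m(y_1)=1$ to an alternating chain $x_1 \prec \cdots \prec x_\ell$ of $f$ with $f(x_1)=1$ of equal length (using one majority-representative per block, as you do), so a violating chain of length $k+1$ for $f_m$ would yield one for $f$. This respects the ``starts with $1$'' constraint in the definition and closes the gap; the conclusion of your proof is correct, it just needs this rephrasing to be airtight.
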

\begin{proof}
The function $f$ only changes values $k$ times; for a block to be non-constant, the block must contain a pair of points with a value change.
\end{proof}

We call a block \emph{variable} if the minority points comprise at least an $\eps/100$-fraction of the block; formally, $B$ is variable if $\min_{b \in \{0,1\}} \Pr_{j \in B}[f(j) = b] \geq \eps/100$.

\begin{lemma}\label{lemma:test:d1:2s:na:ub:lemma:2}
Suppose $f$ has $s$ variable blocks.  Then $\dist{f}{f_m} \leq s/m + \eps/100$. 
\end{lemma}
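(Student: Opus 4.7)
The plan is to decompose the disagreement between $f$ and its block-coarsening $f_m$ as a sum of per-block contributions, then bound variable and non-variable blocks separately using the dichotomy that defines ``variable.''

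First, I would write
\[
  \dist{f}{f_m} \;=\; \frac{1}{n}\sum_{i=1}^{m}\, \bigl|\{j\in B_i : f(j)\neq f_m(j)\}\bigr|,
\]
and observe that since $f_m$ is constant on each block $B_i$ with the majority value of $f$ on $B_i$, the inner count equals $|B_i|\cdot \min_{b\in\{0,1\}} \Pr_{j\in B_i}[f(j)=b]$, i.e.\ $|B_i|$ times the within-block minority fraction $\mu_i$. So the distance decomposes as
\[
  \dist{f}{f_m} \;=\; \sum_{i=1}^{m} \frac{|B_i|}{n}\,\mu_i \;=\; \frac{1}{m}\sum_{i=1}^{m}\mu_i,
\]
using that $|B_i|/n = 1/m$.

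Next, I would split the sum according to whether a block is variable. By definition, every non-variable block $B_i$ satisfies $\mu_i < \eps/100$, so the contribution of the non-variable blocks is at most $\tfrac{1}{m}\cdot (m-s) \cdot \eps/100 \leq \eps/100$. For each of the $s$ variable blocks we apply the trivial bound $\mu_i \leq 1$, giving a total contribution of at most $s/m$. Summing the two bounds yields $\dist{f}{f_m} \leq s/m + \eps/100$, as desired.

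There is no real obstacle here — everything follows directly from the definitions of $f_m$ (majority vote on each block) and of ``variable'' (minority fraction at least $\eps/100$); the only thing to keep in mind is that since all blocks have the same size $n/m$, the per-block weight $|B_i|/n = 1/m$ is uniform, which is what makes the non-variable contribution collapse cleanly to $\eps/100$ regardless of how many non-variable blocks there are.
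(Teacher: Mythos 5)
Your proof is correct and takes essentially the same approach as the paper: decompose the disagreement into per-block contributions, bound the non-variable blocks by the minority-fraction threshold $\eps/100$ and the variable blocks by the trivial bound $1$, then sum. Your write-up is a bit more explicit (spelling out the per-block minority fraction $\mu_i$ and the uniform weight $1/m$), but the underlying argument is identical to the paper's.
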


\begin{proof}
We will estimate the error of $f_m$ in computing $f$ on variable blocks and non-variable blocks separately.  Each non-variable block $B$ can contribute error on at most $\eps\abs{B}/100$ points.  Each variable block $B$ can contribute error on at most $\abs{B}=n/m$ points. The total number of errors is at most $\eps n/100 + s(n/m) = n(\eps/100 + s/m)$, yielding the upper bound on $\dist{f}{f_m}$.  
\end{proof}

\begin{lemma}\label{lemma:test:d1:2s:na:ub:lemma:3}
Suppose $f$ is promised to be either (i) $k$-monotone or (ii) such that $f_m$ has more than $\frac{5}{4}k$ variable blocks.  Then we can determine which with $\bigO{\frac{1}{\eps^2}\log \frac{1}{\eps}}$ queries, and probability $9/10$.
\end{lemma}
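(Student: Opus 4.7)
My plan is to directly estimate the fraction of variable blocks in $f$ and threshold the estimate to distinguish the two cases. Since $m = 4k/\eps$, in case~(i) the fraction of variable blocks is at most $k/m = \eps/4$, whereas in case~(ii) it strictly exceeds $5k/(4m) = 5\eps/16$, leaving a fractional gap of $\eps/16$ between the two regimes. Hence an additive-$(\eps/32)$ estimator for the fraction of variable blocks suffices.

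The algorithm proceeds in two stages. First, sample a multiset $\mathcal{S}$ of $s = \Theta(1/\eps^2)$ blocks chosen uniformly at random (with replacement) from the $m$ blocks of the partition. Second, for each $B \in \mathcal{S}$, query $t = \Theta(\log(1/\eps))$ independent uniformly random points inside $B$, and classify $B$ as \emph{detected-variable} iff both values $0$ and $1$ appear among these queries. Let $\hat{V}$ denote the empirical fraction of blocks in $\mathcal{S}$ so classified, and return ``$k$-monotone'' iff $\hat{V} \leq 9\eps/32$, the midpoint of the two thresholds. The total query complexity is $s \cdot t = \bigO{\log(1/\eps)/\eps^2}$, matching the claim.

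Correctness then reduces to showing that the expectation of $\hat{V}$ lies strictly below (resp. above) $9\eps/32$ in case~(i) (resp. case~(ii)), after which a Chernoff bound over the $s$ independent block samples yields concentration to within $\eps/64$ with failure probability at most $1/10$. The upper bound in case~(i) is easy: the classifier accepts with probability zero on any constant block, and~\autoref{lemma:test:d1:2s:na:ub:lemma:2} (applied to $f$ itself) gives at most $k$ non-constant blocks, so the expected detection rate is bounded by $k/m = \eps/4 < 9\eps/32$. The lower bound in case~(ii) requires that the classifier accepts each variable block with probability $1 - (1-p_B)^t - p_B^t$, which must be made close to~$1$ on blocks with $p_B$ as small as~$\eps/100$ in order to yield a total expectation exceeding $9\eps/32$.

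The main technical obstacle is precisely this last step: with only $t = \bigO{\log(1/\eps)}$ samples, the acceptance probability on a block with $p_B$ just above $\eps/100$ is only $\Theta(\eps\log(1/\eps))$, which is too small to secure the required lower bound on $\shortexpect[\hat V]$. The cleanest resolution is to slightly weaken the ``variable'' threshold used by the classifier --- replacing $\eps/100$ by $\eps/(C\log(1/\eps))$ for a suitable constant $C$ --- which changes only constants in the downstream uses of this lemma. An alternative is to boost $t$ to $\Theta(1/\eps)$ queries per block, which guarantees constant-probability detection on every variable block but increases the total query count to $\bigO{1/\eps^3}$; this weaker bound still suffices to drive the overall $\poly(1/\eps)$ complexity of~\autoref{theo:test:d1:2s:na:ub}.
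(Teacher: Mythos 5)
Your proposal diverges from the paper's proof in the split of work between ``how many blocks to sample'' and ``how many queries per block'', and this split is precisely where your argument breaks. You take $s = \Theta(1/\eps^2)$ blocks to get additive concentration of the detected-variable fraction to within $\eps/32$, and then are forced to use only $t = \Theta(\log(1/\eps))$ queries per block to hit the target query budget. As you yourself observe, $t = \Theta(\log(1/\eps))$ queries cannot reliably flag a block with minority mass as small as $\eps/100$: the probability of seeing both values is only $\Theta(t\cdot p_B) = \Theta(\eps\log(1/\eps))$, nowhere near a constant. Your proposed fix (a) — relaxing the variability threshold to $\eps/(C\log(1/\eps))$ — does not repair this: on a block with $p_B$ at that new threshold, the expected number of minority samples among $t = \Theta(\log(1/\eps))$ draws is still only $\Theta(\eps)$, so the detection probability remains $\Theta(\eps)$ rather than $\Omega(1)$. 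Your fix (b) gives a correct but strictly weaker $O(1/\eps^3)$ bound.

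The missing observation is that you are estimating a \emph{rare} event: both thresholds $\eps/4$ and $5\eps/16$ are themselves $\Theta(\eps)$, and they differ by a constant \emph{multiplicative} factor $5/4$. By the multiplicative Chernoff bound, distinguishing a Bernoulli rate $\leq \eps/4$ from one $\geq 5\eps/16$ requires only $q = O(1/\eps)$ samples, not $O(1/\eps^2)$. This frees up a factor of $1/\eps$ to spend per block: using $O\bigl(\frac{1}{\eps}\log\frac{1}{\delta}\bigr)$ queries per block one reliably detects any block whose minority mass is $\geq \eps/100$, with failure probability $\delta$; while a constant block is never flagged (so the per-block detector is one-sided). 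Setting $\delta = 1/(20q) = \Theta(\eps)$ and union-bounding over the $q$ sampled blocks makes all per-block checks simultaneously correct with probability $\geq 19/20$, and the multiplicative Chernoff step finishes the job. The total is $q \cdot O\bigl(\frac{1}{\eps}\log\frac{1}{q}\bigr) = O\bigl(\frac{1}{\eps^2}\log\frac{1}{\eps}\bigr)$, matching the lemma. In short: keep your per-block classifier, but rebalance the budget — sample $O(1/\eps)$ blocks instead of $O(1/\eps^2)$, and spend $O\bigl(\frac{1}{\eps}\log\frac{1}{\eps}\bigr)$ queries on each.
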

\begin{proof}
We first note that given any fixed block $B$, it is easy to detect whether it is variable (with probability of failure at most $\delta$) by making $\bigO{\frac{1}{\eps}\log\frac{1}{\delta}}$ uniformly distributed queries in $B$. Doing so, a variable block will be labelled as such with probability at least $1-\delta$, while a constant block will never be marked as variable. (If a block is neither constant nor variable, then any answer will do.)

Letting $s$ denote the number of variable blocks, we then want to non-adaptively distinguish between $s\geq \frac{5}{4}k = \frac{5\eps}{16}m$ and $s\leq k = \frac{\eps}{4}m$ (since if $f$ were $k$-monotone, then $f_m$ had at most $k$ variable blocks). Doing so with probability at least $19/20$ can be done by checking only $q=\bigO{\frac{1}{\eps}}$ blocks chosen uniformly at random: by the above, setting $\delta=\frac{1}{20q}$ all of the $q$ checks will also yield the correct answer with probability no less than $9/10$, so by a union bound we will distinguish (i) and (ii) with probability at least $9/10$. We conclude by observing that all $\bigO{q\cdot \frac{1}{\eps} \log \frac{1}{q}} = \bigO{\frac{1}{\eps^2}\log \frac{1}{\eps}}$ queries are indeed non-adaptive.
\end{proof} 

\begin{claim}
There exists a two-sided non-adaptive tester for $k$-monotonicity of functions $f\colon [n]\to\{0,1\}$ with query complexity $\tildeO{\frac{1}{\eps^7}}$.
\end{claim}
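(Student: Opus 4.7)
My plan is to reduce $\eps$-testing $k$-monotonicity of $f\colon[n]\to\{0,1\}$ to $\Theta(\eps)$-testing $k$-monotonicity of the coarsened function $g\colon[m]\to\{0,1\}$ with $m = 4k/\eps$, and then invoke Claim~\ref{claim:test:d1:2s:na:ub:claim:2} with $C = 4/\eps$ (which gives $\bigO{1/\eps^6}$ non-adaptive queries to $g$). When $n \leq 8k/\eps$ I would instead apply Claim~\ref{claim:test:d1:2s:na:ub:claim:2} directly to $f$ with $C = n/k \leq 8/\eps$, and the reduction is unnecessary.

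I would first run the variable-block detector of Lemma~\ref{lemma:test:d1:2s:na:ub:lemma:3} using $\tildeO{1/\eps^2}$ non-adaptive queries and output \reject if it reports more than $5k/4$ variable blocks. If $f$ is $k$-monotone then it has at most $k$ variable blocks, so this step accepts with probability at least $9/10$, and $g$ inherits $k$-monotonicity from $f_m$ (by the fact preceding Lemma~\ref{lemma:test:d1:2s:na:ub:lemma:2}). If $f$ is $\eps$-far from $k$-monotone and this step accepts, I may assume $s \leq 5k/4$, so Lemma~\ref{lemma:test:d1:2s:na:ub:lemma:2} gives $\dist{f}{f_m} \leq 5k/(4m) + \eps/100 \leq \eps/3$, and hence $f_m$ is $\geq 2\eps/3$-far from $k$-monotone on $[n]$. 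A distance-preserving lifting---any $k$-monotone function on $[m]$ extends to a $k$-monotone, constant-per-block function on $[n]$ at the same relative distance---then transfers this to $\dist{g}{k\text{-mono on }[m]} \geq 2\eps/3$, so it suffices to $(2\eps/3)$-test $k$-monotonicity of $g$.

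The remaining step is to simulate the $\bigO{1/\eps^6}$ non-adaptive queries to $g$ required by the inner tester from queries to $f$, using for each $g(i)$ a majority vote over a non-adaptive batch of uniformly random samples from $B_i$. On non-variable blocks (minority fraction $< \eps/100$) a Chernoff bound shows that $\bigO{\log(1/\delta)}$ samples recover $g(i)$ with failure probability $\delta$; taking $\delta = \Theta(\eps^6)$ and union-bounding over all queries ensures the simulation is correct on every queried non-variable block with probability $\geq 9/10$. Since the sample batches are drawn from the internal randomness alone, independently of oracle answers, non-adaptivity of the overall algorithm is preserved.

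The main technical hurdle is the at most $5k/4$ variable blocks, on which the true majority may be near $1/2$ and cheap estimation may err, while non-adaptivity forbids adapting the batch size to the block. My plan is to allocate $\tildeO{1/\eps}$ samples to every query up-front (for a total of $\bigO{1/\eps^6}\cdot\tildeO{1/\eps} = \tildeO{1/\eps^7}$), which suffices to recover $g(i)$ on every block whose majority bias is at least $\Theta(\sqrt{\eps})$ and in particular on all but at most $5k/4$ block indices of $[m]$---an $\bigO{\eps}$-fraction. The resulting simulated $\hat g$ therefore differs from $g$ on at most an $\bigO{\eps}$-fraction of $[m]$, and this small discrepancy can be absorbed by invoking Claim~\ref{claim:test:d1:2s:na:ub:claim:2} with a slightly tighter internal accuracy parameter (support-size estimation to within $\Theta(\eps k)$ easily tolerates $\Theta(\eps k)$ additional or missing intervals induced by the variable blocks, at no asymptotic query cost). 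Summing the two stages yields the claimed $\tildeO{1/\eps^2} + \tildeO{1/\eps^7} = \tildeO{1/\eps^7}$ total non-adaptive query complexity.
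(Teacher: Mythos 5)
Your overall route is the same as the paper's: run the variable-block detector (Lemma~\ref{lemma:test:d1:2s:na:ub:lemma:3}), coarsen to $g\colon[m]\to\{0,1\}$ with $m=4k/\eps$, simulate queries to $g$ by majority voting inside blocks, and feed them to the inner support-size tester (Claim~\ref{claim:test:d1:2s:na:ub:claim:2}), for a total of $\tildeO{1/\eps^7}$ queries. The correct observation that non-adaptivity forbids adapting the batch size to the block, and that a flat budget of $\tildeO{1/\eps}$ samples per query recovers $g(i)$ only on blocks with bias $\Omega(\sqrt{\eps})$, is a real and subtle point that the paper glosses over.

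The gap is in the last step. You write that the simulated $\hat g$ differs from $g$ on an $O(\eps)$-fraction of $[m]$, and that ``support-size estimation to within $\Theta(\eps k)$ easily tolerates $\Theta(\eps k)$ additional or missing intervals induced by the variable blocks.'' That count is off by a factor of $1/\eps$. The variable blocks number up to $5k/4$ (and up to $k$ even when $f$ is $k$-monotone); each block on which the majority vote errs can create up to two new alternation points, so $\hat g$ can have up to $\Theta(k)$ extra intervals, not $\Theta(\eps k)$. An $O(\eps)$-fraction of $[m]=\Theta(k/\eps)$ indices is $\Theta(k)$ positions, so even your own first clause leads to $\Theta(k)$, not $\Theta(\eps k)$. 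A $\Theta(k)$ swing in the support size swamps the $\Theta(\eps k)$ accuracy window needed to separate $k+1$ from $(1+\eps/4)k+1$: in the completeness case (true $k$-monotone $f$) the tester could see $\hat g$ with $3k+1$ intervals and wrongly reject.

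What actually rescues the argument — and what neither you nor the paper states — is a structural fact about the completeness case. If $f\colon[n]\to\{0,1\}$ is $k$-monotone, every variable (hence non-constant) block sits astride one or more of the at most $k$ changepoints of $f$, and \emph{any} Boolean assignment to the non-constant blocks, combined with the correct (majority) values on the constant blocks, yields a function on $[m]$ that is still $k$-monotone. Concretely: between consecutive constant blocks $c_j<c_{j+1}$ there are $n_j$ non-constant blocks, $f$ makes $a_j\ge n_j$ changes there, $g'(c_j)=g'(c_{j+1})$ iff $a_j$ is even, and since the number of alternations in the segment is at most $n_j+1\le a_j+1$ and must have the parity of $a_j$, it is at most $a_j$; summing over segments gives at most $\sum_j a_j\le k$ alternations in total. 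Hence the errors on variable blocks do not add \emph{any} extra intervals when $f$ is $k$-monotone, and the completeness case goes through with zero extra intervals rather than $\Theta(\eps k)$. In the soundness case the $\Theta(k)$ swing is harmless because it can only increase the observed support size. Without this observation the $\tildeO{1/\eps^7}$ simulation does not give a sound completeness argument.
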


\begin{proof}
We use the estimation/test from the previous lemma as the first part of our tester.  Assuming $f$ passes, we can assume that $f_m$ has less than $\frac{5}{4}k$ variable blocks.  By~\autoref{lemma:test:d1:2s:na:ub:lemma:2}, $\dist{f}{f_m} \leq \frac{5k}{4}/m + \frac{\eps}{100} = \frac{5\eps}{32} + \frac{\eps}{100} \leq \frac{\eps}{3}$. This part takes  $\bigO{\frac{1}{\eps^2}\log\frac{1}{\eps}}$ queries.

Now, we apply the tester of~\autoref{claim:test:d1:2s:na:ub:claim:2} (with probability of success amplified to $9/10$ by standard arguments) to $(\eps/6)$-test $k$-monotonicity of $g\colon [m] \to \{0,1\}$, where $g(i)$ is the constant value of $f_m$ on $B_i$, and $m = {(4k)}/{\eps}$. Let $q$ be the query complexity of the tester, and set $\delta={1}/({10q})$; to query $g(i)$, we randomly query $f$ on $\bigO{\frac{1}{\eps}\log \frac{1}{\delta}}$ points in $B_i$ and take the majority vote.  With probability at least $1-\delta$, we get the correct value of $g(i)$, and by a union bound all $q$ simulated queries have the correct value with probability at least $9/10$.

Therefore, to get a single query to $g$, we use $\bigO{(\log q)/\eps}$ queries.  In the context of our previous section, we have $C = 4/\eps$, so $q =O(C^3/\eps^3) = \bigO{{1}/{\eps^{6}}}$ and the overall query complexity of this part is $\bigO{{(q\log q)}/{\eps}} = \bigO{\frac{1}{\eps^{7}}\log\frac{1}{\eps}}$. This dominates the query complexity of the other part of the tester, from~\autoref{lemma:test:d1:2s:na:ub:lemma:3}, which is $\bigO{\frac{1}{\eps^2}\log\frac{1}{\eps}}$. By a union bound over the part from~\autoref{lemma:test:d1:2s:na:ub:lemma:3}, the simulation of $g$, and the call to the tester of~\autoref{claim:test:d1:2s:na:ub:claim:2}, the algorithm is correct with probability at least $1-3/10 > 2/3$.
\end{proof}

\section{On the grid}\label{sec:grid}
\makeatletter{}We now turn to the grid, and consider $k$-monotonicity of functions defined on $[n]^2$. More specifically, in this section we prove~\autoref{theo:test:d2:2s:a:ub}, giving an adaptive tester for $2$-mononicity with optimal query complexity, before discussing in~\autoref{ssec:grid:kk} possible extensions of these ideas.

\subsection{The case $k=2$}\label{ssec:grid:k2}

\testgridtwosidedaub*
\begin{proof}
At a high-level, the algorithm relies on two key components: the first is the observation that testing $2$-monotonicity of $f\colon[n]^2\to\{0,1\}$ \emph{under some suitable additional assumption on $f$} reduces to (tolerant) testing monotonicity of two \emph{one-dimensional} functions (but with larger range), under the $\lp[1]$ norm. The second is that, given access to an arbitrary $f$, one can efficiently provide query access to some function $g$ which satisifies this additional assumption, and such that $g$ will also be close to $f$ whenever $f$ is truly $2$-monotone.

Combining the two then enables one to test this function $g$ for $2$-monotonicity, and then check whether it is also the case that $f$ and $g$ are sufficiently close. The first step, by the above, can be done efficiently by simulationg query access to $g$, which in turn allows to (with some additional tricks) simulate access to the corresponding one-dimensional functions: and invoke on these two functions the $\lp[1]$-tester of~\cite{BermanRY:14}. (The main challenges there lies in performing this two-level simulation while keeping the number of queries to $f$ low enough; which we achieve by carefully amortizing the queries made overall.)

\paragraph{Details.} We hereafter assume without loss of generality that $f$ is identically $0$ on the bottom and top rows, that is $f(1,j)=f(n,j) = 0$ for all $j\in[n]$. (Indeed, we can ensure this is the case by adding two extra columns, extending the domain of $f$ to $[n+2]\times[n]$: note that if $f$ remains $2$-monotone if it was already, and can only decrease its distance to $2$-monotonicity by $O(1/n)$).\footnote{This will be used in the proof of~\autoref{lemma:test:d2:2s:a:ub:indices:sorted:hamming} and~\autoref{lemma:test:d2:2s:a:ub:indices:sorted:l1}.} For the sake of the proof, we will require the notion of \emph{$2$-column-wise-monotonicity}, defined below:
\begin{definition}\label{def:d2:k:columnwise}
A function $f\colon[n]^2\to\{0,1\}$ is said to be \emph{$2$-column-wise-monotone} if, for every $j\in[n]$, its restriction $f_j\colon [n]\times\{j\}\to\{0,1\}$ is $2$-monotone. Given such a function $f$, we define 
the two sequences $(\hseq{f}_j)_{j\in[n]}$ and $(\lseq{f}_j)_{j\in[n]}$ as the sequence of ``changepoints'' in the columns. More formally, we define
\[
    \lseq{f}_j = \min\setOfSuchThat{ i\in[n] }{ f(i,j) \neq f(1,j) } - 1,
    \qquad
    \hseq{f}_j = \max\setOfSuchThat{ i\in[n] }{ f(i,j) \neq f(n,j) } + 1
\]
for every $j$ such that $f|_{[n]\times\{j\}}$ is not constant; and $\lseq{f}_j=\hseq{f}_j=1$ otherwise. Note that we have $\lseq{f}_j \leq \hseq{f}_j$ for every column $j\in[n]$.
\end{definition}

As it turns out, testing $2$-monotonicity of functions \emph{guaranteed to be $2$-column-wise-monotone} reduces to testing monotonicity of these two specific subsequences:
\begin{lemma}\label{lemma:test:d2:2s:a:ub:indices:sorted:hamming}
Let $f\colon[n]^2\to\{0,1\}$ be $2$-column-wise-monotone. If $f$ is $2$-monotone, then both sequences $(\lseq{f}_j)_{j\in[n]}$ and $(\hseq{f}_j)_{j\in[n]}$ are non-increasing. Moreover, if $f$ is $\eps$-far from $2$-monotone, then at least one of the two sequences is $\frac{\eps}{2}$-far from non-increasing (in Hamming distance).
\end{lemma}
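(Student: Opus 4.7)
The plan is to prove both implications via direct combinatorial arguments exploiting the bump/constant-zero structure that column-wise $2$-monotonicity imposes on each column, together with a reformulation of $2$-monotonicity in terms of interval-closure of the $1$-set.

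For the \emph{completeness} direction, suppose $f$ is $2$-monotone. The WLOG condition $f(1,j)=f(n,j)=0$ forces every column to be either constantly zero or a ``bump'' equal to $1$ precisely on $[\lseq{f}_j+1,\hseq{f}_j-1]$. I would fix two nonconstant columns $j_1<j_2$ and rule out $\lseq{f}_{j_1}<\lseq{f}_{j_2}$ by exhibiting the explicit violating chain
\[
  (\lseq{f}_{j_1}+1,\,j_1)\ \preceq\ (\lseq{f}_{j_1}+1,\,j_2)\ \preceq\ (\lseq{f}_{j_2}+1,\,j_2),
\]
whose three values are $1,0,1$: the middle entry is $0$ because $\lseq{f}_{j_1}+1\leq \lseq{f}_{j_2}$ places it in the lower zero-region of column $j_2$. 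A symmetric chain anchored at the tops of the two bumps, $(\hseq{f}_{j_1}-1,j_1)\preceq (\hseq{f}_{j_2}-1,j_1)\preceq (\hseq{f}_{j_2}-1,j_2)$, rules out $\hseq{f}_{j_1}<\hseq{f}_{j_2}$.

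For the \emph{soundness} direction, I would argue the contrapositive. Assume each of the two sequences is strictly less than $\eps/2$-far from non-increasing in Hamming distance, and pick non-increasing sequences $\ell'=(\ell'_j)_j$ and $h'=(h'_j)_j$ attaining these distances. Define $f^\prime$ column by column by
\[
  f^\prime(i,j)\ =\ \indic{\ell'_j+1 \leq i \leq h'_j - 1},
\]
interpreted as the constant-zero column when the interval is empty. The key step is that $f^\prime$ is $2$-monotone, which I would verify via the interval-closure property equivalent to $2$-monotonicity: for any two $1$-valued points $(i_1,j_1)\preceq (i_2,j_2)$ of $f^\prime$ and any $(i,j)$ in the poset-rectangle $[i_1,i_2]\times[j_1,j_2]$, the non-increasing property gives $\ell'_j+1\leq \ell'_{j_1}+1\leq i_1\leq i$ and $h'_j-1\geq h'_{j_2}-1\geq i_2\geq i$, placing $(i,j)$ inside column $j$'s bump.

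To close the contradiction, the columns on which $f$ and $f^\prime$ can disagree are contained in $\setOfSuchThat{j}{\lseq{f}_j\neq \ell'_j}\cup\setOfSuchThat{j}{\hseq{f}_j\neq h'_j}$, a set of fewer than $\eps n/2+\eps n/2 = \eps n$ columns; since each column contributes at most $n$ disagreeing points, $\dist{f}{f^\prime}<\eps$, contradicting the assumed $\eps$-farness of $f$ from $2$-monotone. The main technical care lies in the consistent bookkeeping around constant columns and degenerate intervals -- under the convention $\lseq{f}_j=\hseq{f}_j=1$ (and its analog for $f^\prime$), I would carefully verify that the bump formulation continues to match the function values faithfully, so that the column-by-column disagreement bound is not compromised when a bump collapses to a constant column or vice versa.
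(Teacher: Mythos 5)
Your argument matches the paper's proof in both structure and content: your interval-closure verification that $f'$ is $2$-monotone is exactly the paper's helper claim (proved there by contradiction with a violating $(1,0,1)$ triple rather than via the equivalent closed-rectangle formulation), and your column-by-column disagreement count is the same as the paper's $|L\cup H|\cdot n \le \eps n^2$ bound. The constant-column bookkeeping you flag is a genuine subtlety that the paper glosses over by calling the completeness direction ``straightforward'' — with the stated convention $\lseq{f}_j=\hseq{f}_j=1$ for constant columns, a $2$-monotone $f$ whose first column is identically zero but whose second column has a bump starting above row $2$ already makes $\lseq{f}$ fail to be non-increasing, so being careful there is well advised.
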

It is possible to refine the above statement to obtain, in the second case, a more precise characterization in terms of the \emph{$\lp[1]$ distance} of these two sequences to monotonicity. For conciseness, in the rest of this section we denote by $\kmon{d}{k}$ the class of $k$-monotone functions on $[n]^d$, and will omit the subscript when $k=1$ (i.e., for monotone functions: $\kmon{d}{}=\kmon{d}{1}$).\marginnote{Terminology: $\kmon{d}{k}$}
\begin{lemma}\label{lemma:test:d2:2s:a:ub:indices:sorted:l1}
Let $f\colon[n]^2\to\{0,1\}$ be $2$-column-wise-monotone. If $f$ is $\eps$-far from $2$-monotone then at least one of the two sequences is $\frac{\eps}{2}$-far from non-increasing (in $\lp[1]$ distance). More precisely:
  \begin{equation}\label{eq:test:d2:2s:a:ub:indices:sorted:l1:alt}
     \dist{f}{\kmon{2}{2}} \leq \lp[1](\hseq{f}, \kmon{1}{})+\lp[1](\lseq{f}, \kmon{1}{})
  \end{equation}
\end{lemma}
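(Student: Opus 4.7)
The plan is to prove the displayed inequality $\dist{f}{\kmon{2}{2}} \leq \lp[1](\hseq{f}, \kmon{1}{}) + \lp[1](\lseq{f}, \kmon{1}{})$ \emph{constructively}, by exhibiting an explicit $g \in \kmon{2}{2}$ whose distance to $f$ matches the right-hand side. The second assertion of the lemma -- that at least one of $\hseq{f},\lseq{f}$ is $\eps/2$-far from non-increasing whenever $f$ is $\eps$-far from $\kmon{2}{2}$ -- then follows immediately by pigeonhole.

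Let $\hat\alpha, \hat\beta$ be non-increasing sequences on $[n]$ realizing $\lp[1](\lseq{f},\kmon{1}{})$ and $\lp[1](\hseq{f},\kmon{1}{})$ (which exist as in~\autoref{lemma:test:d2:2s:a:ub:indices:sorted:hamming}). I would first \emph{symmetrize} the pair by setting $\alpha \eqdef \min(\hat\alpha, \hat\beta)$ and $\beta \eqdef \max(\hat\alpha, \hat\beta)$ coordinatewise. Both remain non-increasing (the coordinatewise min/max of two non-increasing sequences stays non-increasing), and now $\alpha_j \leq \beta_j$ for every $j$. A short four-case sorted-pairing argument, leveraging $\lseq{f}_j \leq \hseq{f}_j$, shows that $|\alpha_j - \lseq{f}_j| + |\beta_j - \hseq{f}_j| \leq |\hat\alpha_j - \lseq{f}_j| + |\hat\beta_j - \hseq{f}_j|$ column-by-column, so the total $\lp[1]$ cost does not increase under this swap.

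Next, define $g\colon[n]^2\to\{0,1\}$ to be the indicator of $\bigcup_{j} [\alpha_j+1, \beta_j - 1] \times \{j\}$. I would verify $g \in \kmon{2}{2}$ by contradiction: any alleged violating chain $x_1 = (i_1, j_1) \preceq x_2 = (i_2, j_2) \preceq x_3 = (i_3, j_3)$ with $g(x_1) = g(x_3) = 1$ and $g(x_2) = 0$ forces $i_2$ either to satisfy $i_2 \leq \alpha_{j_2}$ (contradicting $i_2 \geq i_1 \geq \alpha_{j_1}+1 > \alpha_{j_1} \geq \alpha_{j_2}$, using $\alpha$ non-increasing and $j_1 \leq j_2$) or $i_2 \geq \beta_{j_2}$ (contradicting $i_2 \leq i_3 \leq \beta_{j_3}-1 < \beta_{j_3} \leq \beta_{j_2}$, using $\beta$ non-increasing and $j_2 \leq j_3$). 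Finally, I would bound $\dist{f}{g}$ column-by-column: in each column $j$ both $f$ and $g$ are indicators of (possibly empty) intervals, and a case analysis (overlap, disjoint, nested, and empty-interval subcases, the last invoking the boundary convention $f(1,j)=f(n,j)=0$ to ensure the ``empty column'' case $\lseq{f}_j = \hseq{f}_j = 1$ is handled) shows that the symmetric difference of the intervals with endpoints $(\lseq{f}_j{+}1, \hseq{f}_j{-}1)$ and $(\alpha_j{+}1, \beta_j{-}1)$ has size at most $|\lseq{f}_j - \alpha_j| + |\hseq{f}_j - \beta_j|$. Summing over $j$ and normalizing yields the claimed inequality.

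The main obstacle is the pointwise constraint $\alpha \leq \beta$: optimizing each boundary \emph{separately} by $\lp[1]$-isotonic regression need not produce a pair satisfying this, so the $\min/\max$ symmetrization together with its cost-decrease lemma is the technical heart of the argument. Once that is in hand, the $2$-monotonicity of $g$ and the interval symmetric-difference bound are straightforward verifications.
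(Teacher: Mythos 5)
Your proof is correct and follows essentially the same route as the paper's: you take $L_1$-optimal non-increasing approximations to $\lseq{f}$ and $\hseq{f}$, symmetrize them via coordinatewise $\min/\max$ so they nest (with the same per-column cost-nonincrease argument the paper asserts ``by construction''), define $g$ as the union-of-intervals function with these new boundaries, verify that $g\in\kmon{2}{2}$, and bound $\dist{f}{g}$ column-by-column by $\lvert \alpha_j-\lseq{f}_j\rvert+\lvert\beta_j-\hseq{f}_j\rvert$. The only cosmetic difference is that you prove $g\in\kmon{2}{2}$ by a direct chain-violation argument, whereas the paper delegates that step to the already-proved Claim~\ref{claim:test:d2:2s:a:ub:indices:sorted:hamming:proof}.
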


We defer the proof of these two lemmata to~\autoref{app:omitted}, and turn to the proof of~\autoref{theo:test:d2:2s:a:ub}. We first describe a non-optimal tester making $O(1/\eps^2)$ queries; before explaining how to modify it in order to amortize the number of queries, to yield the desired $O(1/\eps)$ query complexity.

Suppose we are given query access to an arbitrary function $f\colon[n]^2\to\{0,1\}$. For simplicity, as before we assume without loss of generality that $\frac{\eps n}{16}$ is an integer, and partition each column into $K\eqdef\frac{16}{\eps}$ intervals of size $\frac{\eps n}{16}$, that is partition the domain $[n]\times[n]$ into a grid $[\frac{16}{\eps}]\times[n]$ (each column being divided in $K$ blocks $B_1,\dots,B_K$ of size $\frac{\eps n}{16}$).
 This uniquely defines a function $g\colon[n]^2\to \{0,1,\ast\}$: for any point $x=(i,j)\in[n]^2$, we let $\ell\in[K]$ be the index such that $i\in B_\ell= \{b_\ell,\dots, b_{\ell+1}-1\}$, and set:
\begin{itemize}
  \item $g(x)=f(i,b_\ell))$, if $f(i,b_\ell) = f(i,b_{\ell+1}-1)$;
  \item $g(x)=\unknown$, if $f(i,b_\ell) \neq f(i,b_{\ell+1}-1)$;
\end{itemize}
so that $g$ is constant on any ``block'' $B_{\ell}\times \{i\}$.
  Note that we can provide query access to $g$, at the price of an overhead of 2 queries (to $f$) per query (to $g$).

However, this $g$ may not itself be $2$-column-wise-monotone; for this reason, we will instead work with a ``fixed'' version of $g$ which will by construction be $2$-column-wise-monotone. In more detail, we define $\tilde{g}$ to be the $2$-column-wise-monotone function obtained by the following process.
\begin{itemize}
  \item First, we (arbitrarily) set the values $\ast$ to $0$,  so that $g$ becomes a function $g\colon[n]^2\to \{0,1\}$.
  \item Define $\tilde{g}$ by its restriction on each column: letting $(\hseq{g}_j)_{j\in[n]}$ and $(\lseq{g}_j)_{j\in[n]}$ be as defined in~\autoref{def:d2:k:columnwise} (observing that the quantities are well and uniquely defined even if $g$ is not $2$-column-wise-monotone), set 
  \[
    \tilde{g}(i,j) = \begin{cases}
          g(1,j) &\text{ if } i \leq \lseq{g}_j\\
          1-g(1,j) &\text{ if } \lseq{g}_j < i < \hseq{g}_j\\
          g(n,j) &\text{ if } i \geq \hseq{g}_j
    \end{cases}
  \]
\end{itemize}
From this construction, it is clear that $\tilde{g}$ is $2$-column-wise-monotone, and entirely and deterministically determined by $g$ (and therefore by $f$); moreover we have $\tilde{g}=g$ whenever $g$ is itself $2$-column-wise-monotone. Furthermore, any query to $\tilde{g}$ can straightforwardly be answered by making at most queries $\bigO{{1}/{\eps}}$ to $g$, and hence to $f$.

  \begin{itemize}
    \item If $f$ is $2$-monotone, then $g$ is $\frac{\eps}{8}$-close to $f$ and so is $\tilde{g}$; moreover, $\tilde{g}$ is $2$-monotone as well. Therefore $\dist{f}{\tilde{g}} \leq \frac{\eps}{8}$ and $\dist{\tilde{g}}{\kmon{2}{2}} = 0$.
    \item If $f$ is $\eps$-far from $2$-monotone, then $\tilde{g}$ is either (i) $\frac{\eps}{4}$-far from $f$ or (ii) $\frac{3\eps}{4}$-far from $2$-monotone, since if neither hold then by the triangle inequality $f$ is \eps-close to $2$-monotone. 
  \end{itemize}

\paragraph{The tester (first take).} The algorithm now proceeds as follows:
\begin{enumerate}
  \item simulate query access to $\tilde{g}$ (defined as above) to detect if $\dist{\tilde{g}}{\kmon{2}{2}} \geq \eps$ using~Eq.\eqref{eq:test:d2:2s:a:ub:indices:sorted:l1:alt}, with probability of failure $\frac{1}{6}$. More precisely, test monotonicity in $\lp[1]$ distance of both $\lseq{\tilde{g}}$ and $\hseq{\tilde{g}}$ with parameter $\frac{\eps}{64}$, using the (non-adaptive) algorithm of~\cite{BermanRY:14}; and reject if any of these two tests rejects.
  \begin{itemize}
    \item If $\dist{\tilde{g}}{\kmon{2}{2}} = 0$, then 
    \[
        \lp[1](\hseq{f}, \kmon{1}{}) = \lp[1](\lseq{f}, \kmon{1}{}) = 0
    \]
    by~\autoref{lemma:test:d2:2s:a:ub:indices:sorted:hamming}, and both tests will accept.
    \item If $\dist{\tilde{g}}{\kmon{2}{2}} > \frac{3\eps}{4}$, then
    \[
        \max( \lp[1](\hseq{f}, \kmon{1}{}), \lp[1](\lseq{f}, \kmon{1}{}) ) > \frac{3\eps}{8}
    \]
    and at least one of the two tests rejects.
  \end{itemize} 
  \item simulate query access to $\tilde{g}$ to test whether $\dist{f}{\tilde{g}}\leq \frac{\eps}{8}$ vs. $\dist{f}{\tilde{g}} > \frac{\eps}{4}$, with probability of failure $\frac{1}{6}$;
  \item return \accept if both of the two tests above passed, \reject otherwise.
\end{enumerate}
By a union bound, the tester is then correct with probability at least $\frac{2}{3}$; its query complexity is
\[
    2t\cdot \bigO{\frac{1}{\eps}}+\left( t^\prime\cdot\bigO{\frac{1}{\eps}} + \bigO{\frac{1}{\eps}} \right) 
\]
where $t$ and $t^\prime$ are respectively the cost of simulating query access to $\hseq{g},\lseq{g}$, and to $\tilde{g}$; as the first step, testing in $\lp[1]$ for for functions defined on the line $[n]$, has query complexity $\bigO{1/\eps}$ from~\cite{BermanRY:14}. Taking $t=t^\prime = \bigO{{1}/{\eps}}$ as discussed above then results in a query complexity of $\bigO{{1}/{\eps^2}}$.\medskip

However, as mentioned previously this is not optimal: as we shall see, we can modify this to obtain instead an $\bigO{{1}/{\eps}}$ query complexity. In order to ``amortize'' the overall query complexity, we define the following process that specifies a $2$-column-wise-monotone function $\ring{g}$:
\begin{description}
  \item[Initialization.] Let $j_1,\dots,j_{1/\eps+1}\in[n]$ be the indices defined by $j_\ell = (\ell-1)\cdot \eps n+1$ for $\ell\in[1/\eps]$, and $j_{1/\eps+1}=n$. 
  \begin{itemize}
      \item Obtain all the values of $g$ on the $j_1$-st column $[n]\times\{j_1\}$, at a cost of $O(1/\eps)$ queries to $f$, to find $\hseq{\ring{g}}_{j_1},\lseq{\ring{g}}_{j_1}$. Define $\ring{g}$ on this column accordingly.
      \item Assuming $\ring{g}$ has been defined on the $j_\ell$-th column, define it on the $j_{\ell+1}$-th column: starting at the ``vertical'' positions of the two changepoints $\hseq{\ring{g}}_{j_\ell}, \lseq{\ring{g}}_{j_\ell}$ of the previous column, start querying ``downwards'' the values of $g$ on the $j_{\ell+1}$-th column until candidates values for $\hseq{\ring{g}}_{j_{\ell+1}}, \hseq{\ring{g}}_{j_{\ell+1}}$ consistent with $g$ are found or the bottom of the column is reached (in which case the corresponding changepoint $\hseq{\ring{g}}_{j_{\ell+1}}$ or $\lseq{\ring{g}}_{\ell+1}$ is set to $1$). After this, define $\ring{g}$ on the $j_{\ell+1}$-th column to be consistent with these (at most) two changepoints. 
  \end{itemize}
  Note that the queries made in this step are adaptive, and the (partial) function $\ring{g}$ obtained at the end coincides with $\tilde{g}$ if $f$ (and therefore $\tilde{g}$) is indeed $2$-monotone. This is because in this case, by~\autoref{lemma:test:d2:2s:a:ub:indices:sorted:hamming} the sequences $(\hseq{\tilde{g}}_j)_j,(\lseq{\tilde{g}}_j)_j$ will be non-decreasing, and therefore the process outlined above will result in $\hseq{\tilde{g}}_{j_\ell} = \hseq{\ring{g}}_{j_\ell}$ and $\lseq{\tilde{g}}_{j_\ell} = \lseq{\ring{g}}_{j_\ell}$ for all $1\leq \ell \leq 1/\eps+1$.
  Moreover, it is not difficult to see that the total number of queries made to $f$ in this initialization step will be $\bigO{1/\eps}$: this is because of the fact that we only search for the current changepoint $\hseq{\ring{g}}_{j_\ell}$ (resp. $\lseq{\ring{g}}_{j_\ell}$) starting at the position of the previous one $\hseq{\ring{g}}_{j_{\ell-1}}$ (resp. $\lseq{\ring{g}}_{j_{\ell-1}}$), going downwards only. Since to obtain a changepoint we only query the positions of $g$ (and therefore $f$) at block endpoints, there are in total at most $1/\eps$ positions to query where starting at one and then only going ``down.'' Thus, the number of queries made for each column $j_\ell$ can be written as $\bigO{1}+m_\ell$, where $\sum_{\ell=1}^{1/\eps+1}m_\ell \leq K = \bigO{1/\eps}$. 
  \item[Query time.] When querying the value of $\ring{g}$ on a point $(i,j)$, first let $\ell$ be the index such that $j_\ell \leq j < j_{\ell+1}$. Then define   $\hseq{\ring{g}}_{j}$ (resp. $\lseq{\ring{g}}_{j}$) by querying the value of $g$ on the $j$-th column for all at most $1/\eps$ (block) indices between $\hseq{\ring{g}}_{j_\ell}$ and $\hseq{\ring{g}}_{j_{\ell+1}}$ (resp. $\lseq{\ring{g}}_{j_\ell}$ and $\lseq{\ring{g}}_{j_{\ell+1}}$) to find the corresponding candidate changepoints:
      \begin{align*}
      \lseq{\ring{g}}_j &= \min\setOfSuchThat{ \lseq{\ring{g}}_{j_{\ell+1}} \leq i \leq \lseq{\ring{g}}_{j_{\ell}} }{ g(i,j) \neq g(1,j) } - 1,
    \\
      \hseq{\ring{g}}_j &= \max\setOfSuchThat{ \hseq{\ring{g}}_{j_{\ell+1}} \leq i \leq \hseq{\ring{g}}_{j_{\ell}} }{ g(i,j) \neq g(n,j) } + 1
      \end{align*}
\end{description}
Again, note that after each query the (partial) function $\ring{g}$ obtained so far will coincide with $\tilde{g}$ if $f$ (and therefore $\tilde{g}$) is indeed $2$-monotone; and $\ring{g}$ is uniquely determined by $f$ (and in particular does not depend on the actual queries made nor on their order). Finally, the function $\ring{g}$ thus defined will always by construction be $2$-column-wise-monotone.

The tester previously described can then be slightly modified to simulate access to $\ring{g}$ instead of $\tilde{g}$: by the above discussion, for the completeness case we will have the same guarantees as then $\ring{g}=\tilde{g}$, while the soundness case stays unchanged:
  \begin{itemize}
    \item If $f$ is $2$-monotone, then $\ring{g}=\tilde{g}$ is $\frac{\eps}{8}$-close to $f$; so $\dist{f}{\ring{g}} \leq \frac{\eps}{8}$ 
          and $\dist{\ring{g}}{\kmon{2}{2}} \leq \frac{\eps}{8}$.
    \item If $f$ is $\eps$-far from $2$-monotone, then $\ring{g}$ is either (i) $\frac{\eps}{4}$-far from $f$ or (ii) $\frac{3\eps}{4}$-far from $2$-monotone.
  \end{itemize}
Thus, the analysis of correctness of the tester carries through with this modification; it only remains to bound the query complexity. We will show that the \emph{expected} number of queries made is $\bigO{1/\eps}$; a bound on the worst-case query complexity will then follow from standard arguments,\footnote{Namely, stopping the algorithm and outputting \reject if the number of queries made exceeds $\frac{C}{\eps}$ for some absolute constant $C>0$, found by applying Markov's inequality.} at the price of a constant factor in the~$\bigO{\cdot}$.

To give this bound on the expected number of queries, we first observe that the algorithm from~\cite{BermanRY:14} we rely on in the first stage of the tester is non-adaptive, and moreover all the queries it makes are uniformly distributed (as it works by a reduction, invoking the non-adaptive, one-sided, sample-based monotonicity tester for functions $[n]\to\{0,1\}$). Similarly, all the queries made in the second stage are uniformly distributed as well.

Therefore, the expected number of queries $a_\ell$ made to columns with indices $j_\ell \leq j < j_{\ell+1}$ is the same for each $\ell \in [1/\eps+1]$, namely 
\[
    a_\ell = \frac{q}{1/\eps} = \bigO{1}.
\]
where $q=\bigO{\frac{1}{\eps}}$ is the total number of queries made to $\ring{g}$ (and/or to $\hseq{\ring{g}},\lseq{\ring{g}}$) during the second phase of ``Query time.'' Now, letting $m_\ell = \lseq{\ring{g}}_{j_{\ell}} - \lseq{\ring{g}}_{j_{\ell+1}}$ and $m^\prime_\ell = \hseq{\ring{g}}_{j_{\ell}} - \hseq{\ring{g}}_{j_{\ell+1}}$, we have that the total expected cost of simulating these queries (in terms of queries to $f$) is upperbounded by
\[
    \sum_{\ell=1}^{\frac{1}{\eps}} a_\ell \cdot \left( \bigO{1} + m_\ell + m^\prime_\ell \right) 
    = \bigO{\frac{1}{\eps}} + \bigO{1}\cdot \sum_{\ell=1}^{\frac{1}{\eps}} m_\ell + \bigO{1}\cdot \sum_{\ell=1}^{\frac{1}{\eps}} m^\prime_\ell
    = \bigO{\frac{1}{\eps}}.
\]
Since the total number of queries made to $f$ is the sum of the number of queries made to partly build $\ring{g}$ during the ``Initialization phase'' (which is $\bigO{1/\eps}$ by the foregoing discussion), the number of queries made to simulate access to $\ring{g}$ or $\hseq{\ring{g}},\lseq{\ring{g}}$ during the ``Query time'' (which was just shown to be $\bigO{1/\eps}$ in expectation), and the number of queries directly made to $f$ when testing the distance of $f$ to $\ring{g}$ (which is also $\bigO{1/\eps}$), the expected total number of queries is indeed $\bigO{1/\eps}$, as claimed.
\end{proof}

\subsection{Possible extensions}\label{ssec:grid:kk}
We now discuss two possible extensions of the techniques underlying~\autoref{theo:test:d2:2s:a:ub}, namely to (i) $k$-monotonicity testing of functions over $[n]^2$, for general $k$; and (ii) $2$-monotonicity testing of functions over $[n]^d$, for general $d$. (Note that we do provide a different tester for general $k$ and $d$ in the next section,~\autoref{sec:high:dim}).

\paragraph{Extending to general $k$ (for $d=2$).}
\newest{
A natural direction would be to first to generalize~\autoref{def:d2:k:columnwise} to \emph{$k$-column-wise monotone functions} $f$, defining the $k$ sequences $(\lseq{f}^{(1)}_j)_{j\in[n]},\dots,(\lseq{f}^{(k)}_j)_{j\in[n]}$ of column changepoints with $\lseq{f}^{(1)}_j\leq \dots\leq \lseq{f}^{(k)}_j$ for all $j\in[n]$.
The next step would then be to obtain an analogue of the key lemma of the previous section, \autoref{lemma:test:d2:2s:a:ub:indices:sorted:l1} to this setting. An issue is that it appears necessary to consider now the $\lp[1]$ distance to \emph{$(k-1)$-monotonicity} of these $k$ sequences, instead of monotonicity as before. Thus, taking this route requires to generalize the definition of $k$-monotonicity to real-valued functions, but also to develop $\lp[1]$-testers \emph{for $k$-monotonicity} over the line.

The testing algorithm now follows the same outline as in the previous section, with the same ``amortizing'' idea when invoking this newly obtained $\lp[1]$-tester for $k$-monotonicity in parallel on the $k$ subsequences, each with probability of failure $\delta={1}/{(10k)}$ (for a union bound) and approximation parameter $\eps^\prime = \eps/(10k)$. (Note that some more optimizations may then help further reduce the query complexity, by ``sharing'' the same set of queries between the $k$ instances of the $\lp[1]$-testing algorithm.)
}

\paragraph{Extending to general $d$ (for $k=2$).}
\newest{At a very high-level, the tester of~\autoref{ssec:grid:k2} works by reducing $2$-monotonicity testing of $f\colon[n]^2\to\{0,1\}$ to monotonicity $\lp[1]$-testing of $\lseq{f},\hseq{f}\colon[n]\to[0,1]$. More generally, one can hope to extend this approach to higher dimensions, reducing $2$-monotonicity testing of $f\colon[n]^d\to\{0,1\}$ to monotonicity $\lp[1]$-testing of $\lseq{f},\hseq{f}\colon[n]^{d-1}\to[0,1]$: that is, testing monotonicity (in $\lp[1]$) of the two $(d-1)$-dimensional ``surfaces'' of changepoints. This in turn could be done invoking the $\lp[1]$ non-adaptive tester of~\cite{BermanRY:14} for monotonicity over $[n]^d$, which has query complexity $\tildeO{d/\eps}$: which may lead to a total query complexity of $\poly(d,1/\eps)$, that is \emph{polynomial in the dimension}. We leave this possible extension as an interesting direction for future work.}

\section{On the high-dimensional grid}\label{sec:high:dim}
\makeatletter{}\newcommand{\f}[1]{[#1]}
\newcommand{\fn}{\f{n}}
\newcommand{\fm}{\f{m}}
\newcommand{\calP}{\mathcal{P}}
\newcommand{\calB}{\mathcal{B}}
\newcommand{\calS}{\mathcal{S}}

\algnewcommand{\LineComment}[1]{\State \(\triangleright\) {\sf #1}}

In this section, we give two algorithms for tolerant testing, that is testing whether a function $f\colon \fn^d \to \{0,1\}$ is $\eps_1$-close to $k$-monotone vs. $\eps_2$-far from $k$-monotone, establishing~\autoref{theo:tol:test:d:na:full}. The first has query complexity exponential in the dimension $d$ and is \emph{fully tolerant}, that is works for any setting of $0\leq \eps_1<\eps_2 \leq 1$. The second applies whenever $\eps_2 > 3\eps_1$, and has (incomparable) query complexity exponential in $\tilde{O}(k\sqrt{d}/(\eps_2-3\eps_1)^2)$.  Both of these algorithms can be used for non-tolerant (``regular'') testing by setting $\eps_1 = 0$ and $\eps_2 = \eps$, which implies~\autoref{coro:tol:test:d:na}.

\toltesthighdim*

\noindent As a corollary, this implies~\autoref{coro:tol:test:d:na}, restated below:
\testhighdim*

For convenience, we will view in this part of the paper the set $\fn$ as $\fn = \{0,1,\ldots,n-1\}$.  Assuming that $m$ divides $n$, we let $\calB_{m,n}\colon [n]^d \to [m]^d$ be the mapping such that $\calB_{m,n}(y)_i = \left\lfloor {y_i}/{m} \right\rfloor$ for $1 \leq i \leq m$.  For $x \in \fm^d$, we define the set $\calB^{-1}_{m,n}(x)$ to be the inverse image of $x$.  Specifically, $\calB^{-1}_{m,n}(x)$ is the set of points of the form $m \cdot x + \f{n/m}^d$, with standard definitions for scalar multiplication and coordinate-wise addition.  That is, $\calB^{-1}_{m,n}(x)$ is a ``coset'' of $\f{n/m}^d$ points in $\fn^d$. To keep on with the notations of the other sections, we will call these cosets \emph{blocks}, and will say a function $h\colon[n]^d\to\{0,1\}$ is an \emph{$m$-block function} if it is constant on each block.  Moreover, for clarity of presentation, we will omit the subscripts on $\calB$ and $\calB^{-1}$ whenever they are not necessary.\marginnote{Terminology: \emph{blocks}, \emph{$m$-block function}}

We first establish a lemma that will be useful for the proofs of correctness of both algorithms.
\begin{lemma}
\label{lem:closekmono}
Suppose $f\colon [n]^d \to \{0,1\}$ is $k$-monotone.  Then there is an $m$-block function $h\colon [n]^d \to \{0,1\}$ such that $\dist{f}{h} < kd/m$.
\end{lemma}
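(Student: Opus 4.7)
The plan is to construct $h$ by ``rounding $f$ down to the block corner''. For each $y \in [n]^d$, let $p_{B(y)}$ denote the minimum (``bottom'') corner of the unique block $B(y)$ containing $y$, and define
\[
  h(y) \eqdef f(p_{B(y)}).
\]
By construction, $h$ is constant on each block, hence an $m$-block function, so it remains to bound $\dist{f}{h}$.

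For each $y$, I will associate an axis-parallel ``monotone staircase'' chain $C(y) \subseteq B(y)$ from $p_{B(y)}$ to $y$: increment coordinate~$1$ from $p_{B(y),1}$ up to $y_1$, then coordinate~$2$ up to $y_2$, and so on. The chain lies entirely inside $B(y)$ and has length $\|y - p_{B(y)}\|_1$. If $f(y) \neq h(y)$, then $f$ must change value on at least one edge of $C(y)$, hence
\[
  \indic{f(y)\neq h(y)} \leq \sum_{i=1}^d N_i(y),
\]
where $N_i(y)$ counts the direction-$i$ edges of $C(y)$ on which $f$ changes value. The key observation is that each direction-$i$ edge of $C(y)$ lies on a single axis-parallel line $L_i(y)$ in direction $i$; since any chain in $L_i(y)$ is a chain in $[n]^d$, the restriction $f|_{L_i(y)}$ is $k$-monotone as a Boolean function of one variable, and hence has at most $k$ changepoints.

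The main step is to show $\shortexpect_y[N_i(y)] < k/m$ for each direction $i$, where $y$ is drawn uniformly from $[n]^d$. To do so, I decompose $y = p_B + r$ with $r \in [0, n/m)^d$ and observe that $N_i(y)$ equals the number of changepoints of $f|_{L_i(y)}$ lying in the sub-interval $[p_{B,i},\, p_{B,i}+r_i)$, and that $L_i(y)$ depends only on $y_1, \ldots, y_{i-1}$ and on the block coordinates $b_{i+1}, \ldots, b_d$ (not on $b_i$, $r_i$, or on $r_{i+1}, \ldots, r_d$). Fixing these ``outer'' indices and summing over $b_i \in [m]$ and $r_i \in [0, n/m)$ amounts to summing the weight $(n/m - 1 - (a - p_{B,i})) \leq n/m - 1$ over each of the at most $k$ changepoints $a$ of $f|_{L_i(y)}$ along the full line, yielding at most $k(n/m - 1) < k(n/m)$ per outer configuration. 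Accounting for the $n^{i-1} m^{d-i}$ outer indices, the free residues $r_{i+1}, \ldots, r_d$ contributing a factor $(n/m)^{d-i}$, and normalizing by $n^d$, gives $\shortexpect_y[N_i(y)] < k/m$. Summing over $i$ then yields $\dist{f}{h} \leq \shortexpect_y\bigl[\sum_i N_i(y)\bigr] < kd/m$.

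The main subtlety lies in setting up this amortization correctly: the crucial point is that each full axis-parallel line in direction $i$ is partitioned by exactly $m$ block-restricted sub-lines, so the $k$-monotonicity bound on changepoints applies once per \emph{full} line rather than once per block. This is what converts the trivial per-direction bound of $k$ into the sought-after $k/m$, and plays the same role as the elementary observation in the one-dimensional case that at most $k$ out of $m$ blocks of $[n]$ can contain a changepoint of a $k$-monotone function on the line.
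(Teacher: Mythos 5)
Your proof is correct, but it takes a genuinely different route from the paper's. The paper's proof partitions $[m]^d$ into \emph{diagonal} chains of the form $C_x = \{x + \ell\cdot 1^d : \ell\in\N\}$ (there are $m^d - (m-1)^d \le dm^{d-1}$ of them), shows that each such chain can pass through at most $k$ nonconstant blocks (because two comparable nonconstant blocks would supply two comparable ``flips'' of $f$), and then counts: at most $kdm^{d-1}$ nonconstant blocks, each of measure $m^{-d}$, giving distance at most $kd/m$. Your proof instead defines $h$ by snapping each point to the value of $f$ at the bottom corner of its block, introduces axis-parallel staircase chains, and amortizes changepoints per direction over full axis-parallel lines: each line restricted to coordinate $i$ is $k$-monotone, has at most $k$ changepoints, and each changepoint contributes fewer than $n/m$ terms to the sum over the $n/m$ residues in its block, yielding $\shortexpect_y[N_i(y)] < k/m$ and hence $\dist{f}{h} < kd/m$ after summing over directions. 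Both approaches are sound and achieve the stated bound; the paper's ``nonconstant block'' counting is slightly more compact and produces an explicit $m$-block function close to any given $k$-monotone $f$ without committing to a particular rounding rule, while your axis-parallel amortization is more elementary, more directly exhibits the connection with the one-dimensional case (at most $k$ of the $m$ blocks on any line contain a changepoint), and actually gives a strict inequality uniformly in $d$ whereas the paper's chain count $m^d-(m-1)^d$ is tight for $d=1$.
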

\begin{proof}
Fix any $k$-monotone function $f\colon [n]^d \to \{0,1\}$. We partition $[m]^d$ into chains of the form 
\[
C_x = \setOfSuchThat{x + \ell \cdot 1^d }{ \ell \in \N, x \in [m]^d \mathrm{\; and \;} x_i = 0 \mathrm{\; for \; some \; } i }.
\]  There are $m^d - (m-1)^d \leq dm^{d-1}$ of these chains: we will show that $f$ can only be nonconstant on at most $k$ blocks of each chain.

By contradiction, suppose there exists $x\in[m]^d$ such that $f$ is nonconstant on $k+1$ different blocks $\calB^{-1}(z^{(i)})$, where $z^{(1)} \prec z^{(2)} \prec \ldots \prec z^{(k)} \prec z^{(k+1)}$, and each $z^{(i)} \in C_x$.  
By construction, we have $\calB^{-1}(z^{(i)}) \prec \calB^{-1}(z^{(j)})$ for $i < j$.  For each $1\leq i \leq k+1$, there are two points $v^{(i)}_*,v^{(i)}_{**} \in \calB^{-1}(z_i)$ such that $v^{(i)}_* \prec v^{(i)}_{**}$ and $f(v^{(i)}_*) \neq f(v^{(i)}_{**})$. By construction $v^{(1)}_* \prec v^{(1)}_{**} \prec v^{(2)}_* \prec v^{(2)}_{**} \prec v^{(3)}_* \prec v^{(3)}_{**} \prec \ldots \prec v^{(k+1)}_* \prec v^{(k+1)}_{**}$, and there must be at least $k+1$ pairs of consecutive points with differing function values.  Out of these $2k+2$ many points, there is a chain of points $\bar{v}^{(1)} \prec \bar{v}^{(2)} \prec \ldots \prec \bar{v}^{(k+1)}$ where $f(\bar{v}^{(i)}) \neq f(\bar{v}^{(i+1)})$ for $1\leq i \leq k$, which is a violation of the $k$-monotonicity of $f$.

Thus, in each of the $dm^{d-1}$ many chains of blocks, there can only be $k$ nonconstant blocks.  It follows that there are at most $kdm^{d-1}$ nonconstant blocks in total. We now define $h(y)$ to be equal to $f(y)$ if $f$ is constant on $\calB(y)$, and arbitrarily set $h(y) = 0$ otherwise.  Each set $\calB^{-1}(y)$ contains $(n/m)^d = n^d \cdot m^{-d}$ many points, and $f$ is not constant on at most $kdm^{d-1}$ of these.  It follows that $\dist{f}{h} \leq kdm^{d-1} \cdot m^{-d} = kd/m$.
\end{proof}
\subsection{Fully tolerant testing with $O(kd/(\eps_2-\eps_1)))^d$ queries}\label{ssec:high:dim:full:tolerant}

Our first algorithm (\autoref{algo:tester:exp:in:d}) then proceeds by essentially brute-force learning an $m$-block function close to the unknown function, and establishes the first item of~\autoref{theo:tol:test:d:na:full}.

\begin{algorithm}
  \caption{Fully tolerant testing with $O(kd/(\eps_2-\eps_1)))^d$ queries.}\label{algo:tester:exp:in:d}
  \begin{algorithmic}[1]
    \Require Query access to $f\colon \fn^d \to \{0,1\}$, $\eps_2 > \eps_1 \geq 0$, a positive integer $k$
    \State $\alpha \gets (\eps_2 - \eps_1), m \gets \lceil 5kd / \alpha \rceil, t \gets \lceil 25 \ln (6m^d) /(2\alpha^2) \rceil$
    \LineComment{ Define a distribution $D$ over $[m]^d \times \{0,1\}$.}
    \For{$x \in \fm^d$}
       \State Query $f$ on $t$ random points $T_x \subseteq \calB^{-1}(x)$.
       \State $D(x,0) \gets \probaDistrOf{y \in T_x}{ f(y) = 0 }/m^d$
       \State $D(x,1) \gets \probaDistrOf{y \in T_x}{ f(y) = 1 }/m^d$
    \EndFor
    \LineComment{ Define a distribution $D'$ over $[n]^d \times \{0,1\}$ such that $D'(y,b) = D(\calB(y),b) \cdot m^d/n^d$.}
    \If{there exists a $k$-monotone $m$-block function $h$ such that $\probaDistrOf{(y,b) \sim D'}{ h(y) \neq b } \leq \eps_1 + \frac{\alpha}{2}$}
        \Return \accept
    \EndIf \\
    \Return \reject
  \end{algorithmic}
\end{algorithm}

\begin{proposition}\label{prop:tester:exp:in:d}
\autoref{algo:tester:exp:in:d} accepts all functions $\eps_1$-close to $k$-monotone functions, and rejects all functions $\eps_2$-far from $k$-monotone (with probability at least $2/3$).  Its query complexity is $\bigO{ \frac{d}{(\eps_2-\eps_1)^2} \left(\frac{5kd}{\eps_2-\eps_1}+1\right)^d \log \frac{kd}{\eps_2-\eps_1} }$.
\end{proposition}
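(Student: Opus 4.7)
My plan is to view the algorithm as approximate agnostic learning of an $m$-block $k$-monotone surrogate for $f$, with estimation error controlled by Hoeffding plus a union bound. The key structural ingredient I will need beyond \autoref{lem:closekmono} is the observation that the $m$-block surrogate can itself be chosen $k$-monotone: if $g\colon\fn^d\to\{0,1\}$ is $k$-monotone, then the $m$-block function $h$ defined by $h(y)=g(m\cdot\calB(y))$ (i.e.\ ``sample $g$ at the corner of each block'') is $k$-monotone, because the map $x\mapsto g(mx)$ on $\fm^d$ is the restriction of $g$ to an axis-aligned subgrid, so any alternating chain in $\fm^d$ would lift to one in $\fn^d$. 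The chain-counting argument from \autoref{lem:closekmono} still gives that $g$ is constant on all but at most $kdm^{d-1}$ blocks, and $h$ agrees with $g$ on every constant block, so $\dist{g}{h}\leq kd/m\leq \alpha/5$ under the choice $m=\lceil 5kd/\alpha\rceil$.

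Next, I will handle the concentration step. For each block $x\in\fm^d$, the quantity $m^d D(x,1)$ is the empirical average of $t$ i.i.d.\ $\{0,1\}$-samples of $f$ on $\calB^{-1}(x)$, and the event for $b=0$ is the complement of the event for $b=1$, so it suffices to union bound over $m^d$ events. Hoeffding gives
\[
   \bigl|m^d D(x,b)-\probaDistrOf{y\in\calB^{-1}(x)}{f(y)=b}\bigr|\leq \alpha/5
\]
simultaneously for all $x\in\fm^d$ and $b\in\{0,1\}$ with probability at least $2/3$, provided $t\geq \frac{25}{2\alpha^2}\ln(6m^d)$, which is exactly the algorithm's setting. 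Conditioning on this good event, for any $m$-block function $h$ (identified with its restriction to $\fm^d$) the identities
\[
  \probaDistrOf{(y,b)\sim D'}{h(y)\neq b}=\sum_{x\in\fm^d}D(x,1-h(x)),\qquad \dist{f}{h}=\frac{1}{m^d}\sum_{x\in\fm^d}\probaDistrOf{y\in\calB^{-1}(x)}{f(y)\neq h(x)}
\]
together with the per-block estimate yield $\bigl|\probaDistrOf{(y,b)\sim D'}{h(y)\neq b}-\dist{f}{h}\bigr|\leq \alpha/5$ \emph{uniformly} over all $m$-block functions $h$.

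With these two ingredients, completeness and soundness both fall out of two triangle inequalities. For completeness, if $\dist{f}{g}\leq \eps_1$ with $g$ $k$-monotone, the surrogate $h$ from the first paragraph is $m$-block and $k$-monotone with $\dist{f}{h}\leq \eps_1+\alpha/5$, so the test statistic is at most $\eps_1+2\alpha/5<\eps_1+\alpha/2$ and the algorithm accepts. For soundness, if the algorithm accepts via some $m$-block $k$-monotone $h$, then $\dist{f}{h}\leq \eps_1+\alpha/2+\alpha/5<\eps_1+\alpha=\eps_2$, so $f$ is $\eps_2$-close to $k$-monotone. The query complexity is $t\cdot m^d=\bigO{\frac{d}{\alpha^2}\bigl(\frac{5kd}{\alpha}+1\bigr)^d\log\frac{kd}{\alpha}}$, matching the claim. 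I expect the only subtle step to be the strengthening of \autoref{lem:closekmono} that lets the block-quantization preserve $k$-monotonicity; everything else is Hoeffding plus triangle inequality. The exhaustive search over $m$-block $k$-monotone $h$'s (at most $2^{m^d}$ candidates) is part of the running time but not query cost, so it does not enter the bound.
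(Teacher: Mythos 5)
Your proof is correct and follows the same high-level route as the paper's (estimate per-block probabilities, then brute-force over $m$-block $k$-monotone candidates), but it tightens two steps. First, you rightly note that \autoref{lem:closekmono} as stated and proved only produces an $m$-block surrogate close to a $k$-monotone $g$: its construction sets $h=0$ on non-constant blocks, which need not preserve $k$-monotonicity (a constant $1$-block, followed by a non-constant block, followed by a constant $1$-block yields $h=1,0,1$, breaking $2$-monotonicity even though $g$ was $2$-monotone). The paper's completeness argument silently relies on the existence of a $k$-monotone $m$-block surrogate, so your corner-sampling $h(y)=g(m\cdot\calB(y))$ (a restriction of $g$ to a subgrid, hence $k$-monotone, and still $kd/m$-close since it agrees with $g$ on every constant block) genuinely fills a gap. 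Second, the paper's concentration step asserts $\probaDistrOf{(y,b)\sim D'}{f(y)\neq b}\leq\alpha/5$; what the per-block Chernoff bound actually gives is only that this quantity is within $\alpha/5$ of its exact-estimate value, which is essentially twice the average within-block variance of $f$ and is not small in general. Your bound $\abs{\probaDistrOf{(y,b)\sim D'}{h(y)\neq b}-\dist{f}{h}}\leq\alpha/5$, uniform over all $m$-block $h$, is the right statement and feeds cleanly into the two triangle inequalities for completeness and soundness. The Hoeffding-plus-union-bound derivation of $t$ and the query count $m^d t$ both check out.
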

\begin{proof}
The algorithm first estimates $\probaDistrOf{y \in \calB^{-1}(x)}{f(y) = b}$ for every $x \in \fm^d$ and $b \in \{0,1\}$ to within $\pm \frac{\alpha}{5}$.  We use $t = 25\ln (6 m^d)/2\alpha^2$ points in each block to ensure (by an additive Chernoff bound) that each estimate is correct except with probability at most $m^{-d}/3$.  By a union bound, the probability that all estimates are correct is at least $2/3$, and we hereafter condition on this.  By construction, $\shortexpect_{(x,b) \sim D}[\probaDistrOf{y \in \calB^{-1}(x)}{f(y) \neq b}] = \probaDistrOf{(y,b) \sim D'}{f(y) \neq b} \leq \frac{\alpha}{5}$.  In this probability experiment, the marginal distribution of $D'$ on $y$ is uniform over $\fn^d$.

Let $f^\ast\colon \fn^d \to \{0,1\}$ be a $k$-monotone function minimizing $\probaOf{ f(y) \neq f^*(y) }$.  \autoref{lem:closekmono} ensures that there is a $k$-monotone $m$-block function $h\colon [n]^d \to \{0,1\}$ such that $\dist{ f^* }{ h } < kd/m \leq \alpha/5$.  Let $h^* \colon \fn^d \to \{0,1\}$ be a $k$-monotone $m$-block function minimizing $\dist{ f^* }{ h^\ast }$.

\paragraph*{Completeness.} Suppose $\dist{f}{f^\ast} \leq \eps_1$.  Then by the triangle inequality,
\[
\probaDistrOf{(y,b) \sim D'}{h^*(y) \neq b} 
\leq \probaDistrOf{(y,b) \sim D'}{h^*(y) \neq f^*(y)} + \probaDistrOf{(y,b) \sim D'}{f^*(y) \neq f(y)} + \probaDistrOf{(y,b) \sim D'}{f(y) \neq b} 
\leq \eps_1 + \frac{2\alpha}{5}.
\]
where to bound the first term $\probaDistrOf{(y,b) \sim D'}{h^*(y) \neq f^*(y)}$ by $\dist{ f^* }{ h^\ast } \leq \alpha/5$ we used the fact that the marginal distribution of $y$ is uniform when $(y,b)\sim D^\prime$.
Thus, the algorithm will find a $k$-monotone $m$-block function close to $D$ (without using any queries to $f$) and accept.

\paragraph*{Soundness.} Suppose $\dist{f}{f^\ast}  \geq \eps_2$.  Then by the triangle inequality
\begin{align*}
\Pr_{(y,b) \sim D'}[h(y) \neq b] & \geq \Pr_{(y,b) \sim D'}[h(y) \neq f(y)] - \Pr_{(y,b) \sim D'}[f(y) \neq b] \\ & \geq \Pr_{(y,b) \sim D'}[f^*(y) \neq f(y)] - \Pr_{(y,b) \sim D'}[f(y) \neq b] \\ & \geq \eps_2 - \frac{\alpha}{5}
\end{align*}
for every $k$-monotone $m$-block function $h$.  Since $\eps_2 - 2\alpha/5 \geq \eps_1 + 3\alpha/5$, the algorithm never find a $k$-monotone $m$-block function $h$ with low error with respect to $D$, and the algorithm will reject.

\paragraph*{Query complexity.} The algorithm only makes queries in constructing $D$; the number of queries required is $m^d \cdot t = \bigO{ \frac{d}{\alpha^2} \left(\frac{5kd}{\alpha}+1\right)^d \log \frac{kd}{\alpha} }$.
\end{proof}

\subsection{Tolerant testing \textit{via} agnostic learning}\label{ssec:high:dim:agnostic}
We now present our second algorithm, \autoref{algo:tester:exp:in:rootd}, proving the second item of~\autoref{theo:tol:test:d:na:full}. At its core is the use of an \emph{agnostic learning algorithm} for $k$-monotone functions, which we first describe.\footnote{Recall that an \emph{agnostic learner with excess error $\tau$} for some class of functions \class is an algorithm that, given an unknown distribution $D$, an unknown arbitrary function $f$, and access to random labelled samples $\langle x, f(x)\rangle$ where $x\sim D$, satisfies the following. It outputs a hypothesis function $\hat{h}$ such that $\probaDistrOf{x\sim D}{f(x) \neq \hat{h}(x)} \leq \opt_D + \tau$ with probability at least $2/3$, where $\opt_D = \min_{h\in\class} \probaDistrOf{x\sim D}{f(x) \neq h(x)}$ (i.e., it performs ``almost as well as the best function in $\class$''). }

\begin{proposition}\label{prop:high:dim:agnostic}
There exists an agnostic learning algorithm for $k$-monotone functions over $[r]^d \to \{0,1\}$ with excess error $\tau$ with sample complexity $\exp(\widetilde{O}(k\sqrt{d}/\tau^2)$.
\end{proposition}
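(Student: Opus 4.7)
The plan is to invoke the polynomial regression framework of Kalai, Klivans, Mansour, and Servedio~\cite{KalaiKMS:08}: if every function in a concept class $\class$ can be $\tau^2/4$-approximated in $L_2$ by a polynomial of degree at most $d'$ (with respect to some orthonormal basis), then $\class$ is agnostically learnable with excess error $\tau$ from $\poly(N)$ samples, where $N$ is the number of basis functions of degree at most $d'$. So the task reduces to proving a Fourier-concentration statement for $k$-monotone functions on $[r]^d$.

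First, I would set up Fourier analysis on $[r]^d$ by choosing an orthonormal basis $\{\psi_0 = 1, \psi_1, \ldots, \psi_{r-1}\}$ for $L_2([r])$ (under the uniform measure) and taking tensor products $\chi_\alpha = \bigotimes_{i=1}^d \psi_{\alpha_i}$, indexed by $\alpha \in [r]^d$. Define $\deg(\chi_\alpha) = |\{i : \alpha_i \neq 0\}|$, so that the number of basis functions of degree at most $d'$ is $\sum_{j \leq d'} \binom{d}{j}(r-1)^j \leq (erd/d')^{d'}$. For a function $f\colon [r]^d \to \{0,1\}$, define directional influences $\infl{i}[f] = \probaOf{f(x) \neq f(x')}$ with $x,x'$ uniform and agreeing off coordinate $i$, and total influence $\totinf[f] = \sum_i \infl{i}[f]$. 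A standard Poincaré-type inequality yields the tail bound $\sum_{\alpha\colon \deg(\alpha) > d'} \hat{f}(\alpha)^2 \leq \totinf[f]/d'$, so once the total influence of a $k$-monotone function on $[r]^d$ is controlled, the degree-$d'$ truncation gives the required $L_2$ approximation.

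The main technical step---and I expect it to be the main obstacle---is to extend the total-influence bound $\totinf[f] = O(k\sqrt{d})$ of Blais, Canonne, Oliveira, Servedio, and Tan~\cite{BlaisCOST:15} from $\{0,1\}^d$ to $k$-monotone $f\colon[r]^d\to\{0,1\}$, with at most a polylogarithmic loss. Their hypercube argument counts ``alternations'' along uniformly random maximal chains from $0^d$ to $1^d$: $k$-monotonicity caps the number of alternations per chain at $k$, while a Cauchy--Schwarz / layer-width argument converts each unit of alternation into at most $O(\sqrt{d})$ contribution to $\totinf$. This template transfers to the grid provided one works with random monotone chains in $[r]^d$ and accounts for the fact that the ``middle layer'' of $[r]^d$ has width $\Theta(r^d/\sqrt{d\log r})$; doing the bookkeeping carefully should yield $\totinf[f] = \widetilde{O}(k\sqrt{d})$, where the $\widetilde{O}$ hides $\poly\log(r,d)$ factors. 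An alternative worth attempting is a direct reduction: encode $[r]^d$ into $\{0,1\}^{d\lceil \log r \rceil}$ via a monotone unary encoding and extend the resulting partial function to all of the hypercube while preserving $k$-monotonicity; then apply the hypercube bound of~\cite{BlaisCOST:15} as a black box.

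With $\totinf[f] = \widetilde{O}(k\sqrt{d})$ in hand, I set $d' = \widetilde{O}(k\sqrt{d}/\tau^2)$ so that the squared $L_2$ truncation error is at most $\tau^2/4$, and feed the basis of degree-at-most-$d'$ monomials into the KKMS $L_1$-polynomial regression algorithm, rounding the resulting real-valued polynomial at $1/2$. The sample complexity is polynomial in the number of basis functions, giving
\[
  \poly\!\bigl((erd/d')^{d'}\bigr) = \exp\!\bigl(d' \cdot \bigO{\log(rd)}\bigr) = \exp\!\bigl(\widetilde{O}(k\sqrt{d}/\tau^2)\bigr),
\]
as claimed, the $\log(rd)$ factor being absorbed into the $\widetilde{O}$ under the standing assumption that $r$ and $d$ are polynomially bounded.
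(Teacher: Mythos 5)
Your high-level plan is exactly the paper's: invoke the polynomial-regression theorem of \cite{KalaiKMS:08}, reduce to an $L_2$ low-degree concentration statement for $k$-monotone functions, and obtain that concentration from a bound of the form $\totinf[f] \lesssim k\sqrt{d}$ on $[r]^d$ together with the Markov-type inequality $\sum_{\abs{\alpha}>t}\hat f(\alpha)^2 \leq \totinf[f]/t$. The framing with an orthonormal product basis and counting of low-degree monomials is also the same, modulo cosmetics. So the only real question is the influence bound, which you yourself flag as ``the main obstacle'' --- and that is where the gap is.

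Both of your proposed routes for the influence bound have problems. First, note carefully which influence is needed: the Parseval identity $\totinf[f]=\sum_\alpha \abs{\alpha}\hat f(\alpha)^2$ over $[r]^d$ uses the \emph{resample} (Efron--Stein) influence, $\infl{i}[f]=\Pr[f(x)\neq f(x^{(i)})]$ where $x^{(i)}$ replaces $x_i$ by a \emph{fresh uniform} value. Your chain-counting sketch over $[r]^d$ naturally controls the \emph{nearest-neighbor} influence $\Pr[f(x)\neq f(x+e_i)]$, and relating the two by the triangle inequality loses a factor $\Theta(r)$ per coordinate, which destroys the $r$-independence you need. Your second route is internally inconsistent: a \emph{unary} (order-preserving) encoding of $[r]$ takes $r-1$ bits, not $\lceil\log r\rceil$, so a monotone embedding of $[r]^d$ lands in $\{0,1\}^{d(r-1)}$ and the hypercube bound of \cite{BlaisCOST:15} yields $k\sqrt{d(r-1)}$, again with an unwanted $\sqrt r$; while a $\lceil\log r\rceil$-bit \emph{binary} encoding does not preserve the coordinatewise partial order (e.g.\ $10$ and $01$ are incomparable), so $k$-monotonicity is not preserved. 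Neither route delivers the advertised $\widetilde O(k\sqrt d)$.

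The paper's Lemma~\ref{lemma:general:influence:kmon} sidesteps both issues with a clean averaging/coupling argument and, notably, gets $\totinf[f]\leq k\sqrt d$ with \emph{no} polylogarithmic loss. For any $y^0,y^1\in[r]^d$, define $f_{y^0,y^1}\colon\{0,1\}^d\to\{-1,1\}$ by $f_{y^0,y^1}(x)=f(y^x)$, where $y^x_i$ is $\min\{y^0_i,y^1_i\}$ or $\max\{y^0_i,y^1_i\}$ according to $x_i$. This restriction is $k$-monotone whenever $f$ is, so $\totinf[f_{y^0,y^1}]\leq k\sqrt d$ by \cite{BlaisCOST:15}. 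The crux is the coupling: drawing $z\in[r]^d$ uniformly and resampling coordinate $i$ gives exactly the same joint distribution as drawing $y^0,y^1\in[r]^d$ and $x\in\{0,1\}^d$ uniformly and outputting $(y^x,y^{x^{(i)}})$. Hence $\infl{i}[f]=\shortexpect_{y^0,y^1}[\infl{i}[f_{y^0,y^1}]]$, and summing over $i$ gives $\totinf[f]=\shortexpect_{y^0,y^1}[\totinf[f_{y^0,y^1}]]\leq k\sqrt d$. This is the missing piece in your proposal; once you have it, the rest of your argument goes through exactly as you describe.
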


\begin{algorithm}
\caption{Multiplicative approximation with $\exp(\widetilde{O}(k\sqrt{d}/((\eps_2-3\eps_1)^2))$ queries}\label{algo:tester:exp:in:rootd}
  \begin{algorithmic}[1]
    \Require Query access to $f\colon \fn^d \to \{0,1\}$, $\eps_2 > 3\eps_1 \geq 0$, a positive integer $k$
    \State $\alpha \gets (\eps_2 - 3\eps_1)$, $m \gets \clg{ 6kd / \eps }, t \gets \clg{ 3d(k+1)/\eps \ln m + \ln 100 }$
    \State Define $D$ to be the distribution over $\fm^d \times \{0,1\}$ such that $D(x,b) = \probaDistrOf{y \in \calB^{-1}(x)}{f(y) = b}$.
    \LineComment{ $\Algo_D(\tau, f)$ denotes the output of an agnostic learner of $k$-monotone functions with respect to $D$, with excess error $\tau$ and probability of failure $1/10$}
    \State\label{algo:tester:exp:in:rootd:learn} $h\colon \fm^d \to \R \gets \Algo_D(\alpha/12, f)$.
    \State\label{algo:tester:exp:in:rootd:estim} Estimate $\probaDistrOf{(x,b) \sim D}{h(x) \neq b}$ to within $\pm {\alpha}/7$ with probability of failure $1/10$, using $O(1/\alpha^2)$ queries.  
    \If{the estimate is more than $\eps_1 + \frac{5\alpha}{12}$} 
        \Return \reject
    \EndIf
    \If{$\dist{h}{\ell} = \probaDistrOf{x \in \fm^d}{h(x) \neq \ell(x)} \leq 2\eps_1 + \frac{5\alpha}{12}$ for some $k$-monotone $m$-block function $\ell$} 
        \Return \accept 
    \Else \; 
        \Return \reject
    \EndIf
  \end{algorithmic}
\end{algorithm}

We will rely on tools from Fourier analysis to prove~\autoref{prop:high:dim:agnostic}. For this reason, it will be convenient in this section to view the range as $\{-1,1\}$ instead of $\{0,1\}$.

\begin{definition}
For a Boolean function $f\colon [r]^d \to \{-1,1\}$, we define 
\[
\infl{i} = 2\probaOf{ [f(x) \neq f(x^{(i)}) }
\]
where $x = (x_1,x_2,\ldots,x_d)$ is a uniformly random string over $[r]^d$, and $x^{(i)} =  (x_1,x_2,\ldots,x_{i-1},x'_i,x_{i+1},\ldots,x_d)$ for $x^\prime$ drawn independently and uniformly from $[r]$.  We also define $\totinf[f] = \sum_{i=1}^d \infl{i}$.
\end{definition}
We first generalize the following result, due to Blais\ \etal., for more general domains:
\begin{proposition}[\cite{BlaisCOST:15}]
Let $f\colon \{0,1\}^d \to \{-1,1\}$ be a $k$-monotone function.  Then $\totinf[f] \leq k\sqrt{d}$.
\end{proposition}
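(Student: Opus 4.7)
The plan is to follow the random-chain averaging argument, the standard route for such influence bounds on the Boolean cube. First, I would translate total influence into an edge count: since for each coordinate $i$, $\infl{i}$ equals the fraction of direction-$i$ edges of the cube whose endpoints disagree under $f$ (i.e., are \emph{bichromatic}), we obtain
\[
\totinf[f] \;=\; \frac{B}{2^{d-1}},
\]
where $B = \sum_{w=0}^{d-1} E_w$ is the total number of bichromatic edges and $E_w$ is the number of those whose lower endpoint has Hamming weight $w$.

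Second, I would exploit $k$-monotonicity by averaging over a uniformly random maximal chain $C$ from $0^d$ to $1^d$ (equivalently, a uniformly random order of coordinate-flips). By definition of $k$-monotonicity, $f$ alternates at most $k$ times along $C$, so $C$ contains at most $k$ bichromatic edges. Coupled with the elementary counting fact that any fixed edge $e$ with lower endpoint of weight $w$ lies in $C$ with probability $\frac{1}{d\binom{d-1}{w}}$ (there are $w!(d{-}1{-}w)!$ maximal chains through $e$ among the $d!$ in total), this yields
\[
\sum_{w=0}^{d-1}\frac{E_w}{d\binom{d-1}{w}} \;=\; \shortexpect_C\!\left[\#\{\text{bichromatic edges of } C\}\right] \;\leq\; k.
\]

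Third, I would combine the two displays via a single weighted-averaging step. Writing $\totinf[f] = \sum_w \frac{E_w}{d\binom{d-1}{w}}\cdot\frac{d\binom{d-1}{w}}{2^{d-1}}$ and pulling out the maximum of the second factor,
\[
\totinf[f] \;\leq\; k \cdot \max_{0\leq w\leq d-1}\frac{d\binom{d-1}{w}}{2^{d-1}}.
\]
The maximum is attained at $w^\star=\lfloor(d-1)/2\rfloor$, and the standard central binomial estimate $\binom{d-1}{w^\star} \leq 2^{d-1}\sqrt{2/(\pi d)}$ gives $\totinf[f] \leq k\sqrt{2d/\pi} \leq k\sqrt{d}$, as desired.

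The only step that requires genuine input from $k$-monotonicity is the chain bound in the second paragraph; the remaining pieces are elementary counting and a routine inequality. I therefore expect the most delicate part to be the clean identification of alternations along a chain with bichromatic edges (essentially a definitional unpacking), together with verifying that the middle-binomial estimate is tight enough. In the regime $k = \Omega(\sqrt{d})$ the conclusion is in any case vacuous, since $\totinf[f] \leq d$ always. The Fourier-analytic extension to $[r]^d$ pursued in the sequel will replace the edge counting by an analogous coordinate-wise ``slice'' argument, but retains this skeleton.
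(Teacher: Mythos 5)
The paper does not actually supply a proof of this proposition: it is imported from~\cite{BlaisCOST:15} and used as a black box (the paper's own contribution is the subsequent lemma generalizing the bound from $\{0,1\}^d$ to $[r]^d$ by a coupling argument, which is a different statement). So there is no in-paper proof to compare against; I evaluate your argument on its own merits.

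Your random-chain argument is correct, and it is in fact the standard route to this bound, essentially the one in~\cite{BlaisCOST:15}. The accounting is right: $\totinf[f]=B/2^{d-1}$, the probability that a fixed direction-$i$ edge whose lower endpoint has weight $w$ lies on a uniformly random maximal chain is $w!(d-1-w)!/d!=1/(d\binom{d-1}{w})$, and along any maximal chain a $k$-monotone function (in either $\{0,1\}$ or $\{-1,1\}$ convention) changes value at most $k$ times, since $k+1$ sign changes would yield an alternating subsequence of length $k+2$ from which the forbidden length-$(k+1)$ pattern beginning with $1$ can be extracted regardless of the starting value. The weighted-averaging step $\sum_w a_w b_w \le (\sum_w a_w)\max_w b_w$ is valid since all terms are nonnegative. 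The one place worth tightening is the central binomial estimate: the clean bound $\binom{d-1}{\lfloor(d-1)/2\rfloor}\le 2^{d-1}\sqrt{2/(\pi d)}$ is an identity-level inequality for even $d$ (via $\binom{2m+1}{m}=\tfrac12\binom{2m+2}{m+1}\le\tfrac12\cdot4^{m+1}/\sqrt{\pi(m+1)}$), but for odd $d$ the direct application of $\binom{2m}{m}\le 4^m/\sqrt{\pi m}$ gives $2^{d-1}\sqrt{2/(\pi(d-1))}$, which still yields $k\sqrt{d}$ only for $d\ge 3$. The simplest patch is to state and verify the weaker but uniform bound $\binom{d-1}{\lfloor(d-1)/2\rfloor}\le 2^{d-1}/\sqrt{d}$ for all $d\ge1$ (with equality at $d=1$), which is exactly what is needed and is easily checked by induction or by the even/odd case split above. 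With that adjustment the proof is complete and correct.
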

\begin{lemma}[Generalization]\label{lemma:general:influence:kmon}
Let $f\colon [r]^d \to \{-1,1\}$ be a $k$-monotone function.  Then $\totinf[f] \leq k\sqrt{d}$.
\end{lemma}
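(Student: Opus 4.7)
The plan is to generalize the Blais \etal.\ hypercube argument by combining two ingredients: a chain-based XOR decomposition of $k$-monotone functions into monotone ones, and a Parseval-type bound on the total influence of monotone functions on the hypergrid.

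First, I would establish that any $k$-monotone $f\colon[r]^d\to\{-1,1\}$ can be written as the XOR of $k$ nested monotone Boolean functions. Writing $F(x):=\mathbb{1}[f(x)=1]$, and assuming w.l.o.g.\ that $F(\vec{0})=0$ (otherwise replace $f$ by $-f$, which preserves $k$-monotonicity and all influences), for each $x\in[r]^d$ let $s(x)$ denote the maximum number of alternations of $F$ along any chain from $\vec{0}$ to $x$. Since every such chain starts at $F=0$ and ends at $F(x)$, all of them share the same parity of alternations, namely $F(x)$; hence $F(x)\equiv s(x)\pmod 2$. Defining $M_j(x):=\mathbb{1}[s(x)\geq j]$ yields monotone Boolean functions $M_1\supseteq\cdots\supseteq M_k$ with $F=M_1\oplus\cdots\oplus M_k$ (the number of $1$'s among the $M_j(x)$ is exactly $s(x)$, whose parity is $F(x)$). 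A union bound on the XOR -- it changes between $x$ and $x^{(i)}$ only if at least one $M_j$ does -- gives $\infl{i}[f]\leq \sum_{j=1}^k \infl{i}[M_j]$, and summing over $i$, $\totinf[f]\leq \sum_{j=1}^k\totinf[M_j]$.

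Second, I would prove that any monotone $M\colon[r]^d\to\{-1,1\}$ satisfies $\totinf[M]=O(\sqrt{d})$, generalizing the classical Parseval + Cauchy--Schwarz argument. The key is to construct a ``magic'' function $\phi\colon[r]\to\R$ with $\shortexpect\phi=0$ such that, for every monotone $M$ and every direction $i$, the identity $\shortexpect_x[M(x)\phi(x_i)]=\infl{i}[M]$ holds \emph{exactly}. Since on any line in direction $i$ a monotone $M$ is determined by its threshold $t$ and the line-influence equals $4t(r-t)/r^2$, this reduces to requiring $-\sum_{a<t}\phi(a)=2t(r-t)/r$ for every $t\in\{1,\dots,r-1\}$; taking differences uniquely forces $\phi(j):=2(2j+1-r)/r$, and a direct computation gives $c^2:=\shortexpect[\phi^2]\leq 4/3$ for all $r\geq 2$ (with $c=1$ when $r=2$, recovering the Boolean case). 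Then the family $\chi_S(x):=\prod_{i\in S}\phi(x_i)/c$ is orthonormal in $L^2([r]^d)$, so Bessel's inequality yields $\sum_i(\infl{i}[M]/c)^2\leq\|M\|_2^2=1$, and Cauchy--Schwarz gives $\totinf[M]\leq c\sqrt{d}$. Combining with the first step produces $\totinf[f]\leq c\cdot k\sqrt{d}=O(k\sqrt{d})$, recovering the claimed inequality.

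The delicate point is the second step: engineering $\phi$ so that its first-level ``Fourier'' coefficient equals the influence \emph{exactly} for \emph{every} monotone $M$ -- not just up to a factor depending on the line threshold -- is what makes Bessel's inequality apply sharply, and the uniform bound $c=O(1)$ (as opposed to, say, a naive $\phi(j)=j-(r-1)/2$, which yields a slightly worse constant) is what prevents the bound from degrading with the alphabet size $r$. The first step, by contrast, is a purely combinatorial ``Markov-style'' decomposition that works identically on any finite poset.
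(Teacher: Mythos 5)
Your argument is genuinely different from the paper's, and essentially correct, but it falls short of the stated constant.

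The paper proves the lemma by a coupling/averaging argument: for random $y^0,y^1 \in [r]^d$ it forms the Boolean restriction $f_{y^0,y^1}(x)=f(y^x)$, observes that $k$-monotonicity is inherited, and then checks that the pair $(y^x,y^{x^{(i)}})$ for uniform $x$ has exactly the distribution of $(z,z^{(i)})$ for uniform $z$, so that $\totinf[f]=\shortexpect_{y^0,y^1}\totinf[f_{y^0,y^1}]\le k\sqrt d$ with \emph{no loss in the constant}. You instead generalize the two steps of the hypercube argument \emph{on the grid itself}: a Markov-type chain decomposition $F=M_1\oplus\cdots\oplus M_k$ into nested monotone pieces, plus a hand-built level-one orthonormal family $\phi$ that represents per-line influence exactly, feeding into Bessel and Cauchy--Schwarz. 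Both steps are sound. The trade-off is clear: your route gives a decomposition result of independent interest and shows total influence of a \emph{monotone} $[r]^d$ function is $O(\sqrt d)$ directly, while the paper's reduction is shorter and sharp.

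Two issues, one cosmetic and one substantive. First, the WLOG step ``replace $f$ by $-f$, which preserves $k$-monotonicity'' is false: by the definition used here ($k$-monotone means no alternating chain of length $k+1$ \emph{starting at value $1$}), complementation is not a symmetry (e.g.\ $1$-monotone is ``monotone increasing,'' and its complement is ``monotone decreasing''). This does not break your proof, because if $F(\vec 0)=1$ then $s(x)\le k-1$ along every chain and you get $F=1\oplus M_1\oplus\cdots\oplus M_{k-1}$, which is at least as good -- but the stated justification needs to be replaced. Second, and more importantly, your constant is not $1$. Your calculation gives $c^2=\shortexpect[\phi^2]=\tfrac{4(r^2-1)}{3r^2}$, which equals $1$ only when $r=2$ and tends to $4/3$ as $r\to\infty$, so your method yields $\totinf[f]\le \tfrac{2}{\sqrt 3}\,k\sqrt d$ rather than $k\sqrt d$. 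The lemma as stated (with constant exactly $1$) is therefore not recovered; the paper's averaging argument, which reduces pointwise to the Boolean case, is needed to get the tight bound. (For the downstream use of this lemma in the agnostic-learning tester, the constant is absorbed into the $\tildeO{\cdot}$, so your weaker version would suffice there, but it is worth flagging that the claim on the page is sharper.)
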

\begin{proof}
For any two strings $y^0,y^1\in [r]^d$, let $f_{y^0,y^1}\colon \{0,1\}^d \to \{-1,1\}$ be the function obtained by setting $f_{y^0,y^1}(x) = f(y^x)$, where $y^x\in [r]^d$ is defined as
\[
y^x_i = \begin{cases}
      \min\{y^0_i,y^1_i\} & \text{ if }x_i = 0\\
      \max\{y^0_i,y^1_i\} & \text{ if }x_i = 1
      \end{cases}
\]

Since $f$ was a $k$-monotone function, so is $f_{y^0,y^1}$.  Thus $\totinf[f_{y^0,y^1}] \leq k\sqrt{d}$ for every choice of $y^0$ and $y^1$. It is not hard to see that for any fixed $i\in[d]$ the following two processes yield the same distribution over $[r]^d\times[r]^d$:
\begin{itemize}
  \item Draw $z\in[r]^d$, $z^\prime_i\in [r]$ independently and uniformly at random, set $z^\prime \eqdef (z_1,\dots,z_{i-1},z^\prime_i,z_{i+1},\dots,z_d)$, and output $(z,z^\prime)$;
  \item Draw $y^0,y^1\in [r]^d, x\in\{0,1\}^d$ independently and uniformly at random, and output $(y^x, y^{x^{(i)}})$.
\end{itemize}
This implies that 
  \begin{align*}
  \totinf[f] &= \sum_{i=1}^d \infl{i} = \sum_{i=1}^d 2\Pr_{z \in [r]^d}[f(z) \neq f(z^{(i)})] 
  = \sum_{i=1}^d 2\shortexpect_{y^0,y^1 \in [r]^d} \left[ \probaDistrOf{x \in \{0,1\}^d}{ f(y^x) \neq f(y^{x^{(i)}}) } \right]  \\
  &= \shortexpect_{y^0,y^1 \in [r]^d} \left[ \sum_{i=1}^d 2\probaDistrOf{x \in \{0,1\}^d}{ f(y^x) \neq f(y^{x^{(i)}}) } \right] 
  = \shortexpect_{y^0,y^1 \in [r]^d} \left[ \sum_{i=1}^d 2\probaDistrOf{x \in \{0,1\}^d}{ f_{y_0,y_1}(x) \neq f_{y_0,y_1}(x^{(i)}) } \right] \\
  &= \shortexpect_{y^0,y^1}[\totinf[f_{y^0,y^1}]]
  \leq \shortexpect_{y^0,y^1}[ k\sqrt{d} ] = k\sqrt{d}.
  \end{align*}
\end{proof}

For two functions $f,g\colon [r]^d \to \R$, we define the inner product $\dotprod{f}{g} = \shortexpect_x[f(x)g(x)]$, where the expectation is taken with respect to the uniform distribution.  It is known that for functions $f \colon [r]^d \to \R$, there is a ``Fourier basis'' of orthonormal functions $f$.  To construct such a basis, we can take any orthonormal basis $\{\phi_0 \equiv 1,\phi_1,\dots,\phi_{\abs{r}-1}\}$ for functions $f \colon [r] \to \R$.  Given such a basis, a Fourier basis is the collection of functions $\phi_{\alpha}$, where $\alpha \in [r]^d$, and $\phi_{\alpha}(x) = \prod_{i=1}^d \phi_{\alpha_i}(x_i)$.  Then every $f \colon [r]^d \to \R$ has a unique representation $f = \sum_{\alpha \in [r]^d} \widehat{f}(\alpha) \phi_\alpha$, where $\widehat{f}(\alpha) = \dotprod{f}{\phi_\alpha}\in\R$.\medskip

Many Fourier formul\ae{} hold in arbitrary Fourier bases, an important example being Parseval's Identity: $\sum_{\alpha \in [r]^d} \widehat{f}(\alpha)^2 = 1$.  We will use the following property:
\begin{lemma}[{\cite[Proposition 8.23]{AoBF:2014}}]
For $\alpha\in[r]^d$, let $\abs{\alpha}$ denote the number of nonzero coordinates in $\alpha$. Then we have
\[
  \totinf[f] = \sum_{\alpha\in[r]^d} \abs{\alpha} \hat{f}(\alpha)^2.
\]
\end{lemma}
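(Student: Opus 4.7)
The plan is to carry out the standard Fourier-analytic computation of influence via the averaging operator, now in the generalized setting of functions on $[r]^d$. First I would rewrite each one-coordinate influence in correlation form. Since $f$ takes values in $\{-1,1\}$, the identity $(f(x)-f(x^{(i)}))^2 = 2(1-f(x)f(x^{(i)}))$ when $f(x)\neq f(x^{(i)})$ (and $0$ otherwise) gives
\[
  \infl{i} = 2\probaOf{f(x)\neq f(x^{(i)})} = 1 - \shortexpect[f(x)f(x^{(i)})],
\]
where $x$ and $x^{(i)}$ share all coordinates except the $i$-th, which are independent uniform draws from $[r]$.

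Next I would introduce the coordinate-averaging operator $L_i$ defined by $(L_i f)(x) = \shortexpect_{x'_i\sim[r]}[f(x_1,\dots,x'_i,\dots,x_d)]$. Conditioning on the other coordinates, $\shortexpect[f(x)f(x^{(i)})] = \shortexpect_{x_{-i}}[(L_i f)(x_{-i})^2] = \shortexpect[(L_i f)^2]$ since $L_i f$ does not depend on $x_i$. Expanding $f$ in the given product Fourier basis $f = \sum_{\alpha\in[r]^d}\hat f(\alpha)\phi_\alpha$ and using linearity, one obtains $\shortexpect_{x'_i}[\phi_\alpha(x_1,\dots,x'_i,\dots,x_d)] = \dotprod{\phi_{\alpha_i}}{1}\prod_{j\neq i}\phi_{\alpha_j}(x_j)$. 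Because $\phi_0\equiv 1$ and $\{\phi_0,\dots,\phi_{r-1}\}$ is orthonormal, $\dotprod{\phi_{\alpha_i}}{1} = \indic{\alpha_i=0}$, so
\[
  L_i f = \sum_{\alpha\,:\,\alpha_i=0}\hat f(\alpha)\,\phi_\alpha.
\]

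Applying Parseval's identity to $L_i f$ then yields $\shortexpect[(L_i f)^2] = \sum_{\alpha:\alpha_i=0}\hat f(\alpha)^2$, and Parseval applied to $f$ itself gives $1 = \sum_\alpha \hat f(\alpha)^2$. Combining these,
\[
  \infl{i} = \sum_{\alpha\in[r]^d}\hat f(\alpha)^2 - \sum_{\alpha\,:\,\alpha_i=0}\hat f(\alpha)^2 = \sum_{\alpha\,:\,\alpha_i\neq 0}\hat f(\alpha)^2.
\]
Summing over $i\in[d]$ and swapping the order of summation, each $\hat f(\alpha)^2$ is counted once for each coordinate where $\alpha_i\neq 0$, giving the claimed identity $\totinf[f] = \sum_\alpha |\alpha|\,\hat f(\alpha)^2$.

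The only genuine obstacle is ensuring that the Fourier calculation goes through for an arbitrary orthonormal basis on $[r]$ (rather than the specific $\pm 1$ characters used for the hypercube); this is handled by the hypothesis $\phi_0\equiv 1$, which makes $\shortexpect[\phi_j] = \dotprod{\phi_j}{\phi_0} = \indic{j=0}$ and lets the coordinate-averaging step collapse cleanly. No assumption on $f$ being Boolean is needed for the last two paragraphs — only the initial rewriting of $\infl{i}$ uses $f\in\{-1,1\}$, which matches the hypothesis of the lemma.
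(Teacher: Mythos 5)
The paper does not supply its own proof of this lemma --- it is cited directly from O'Donnell's book (Proposition 8.23 of [AoBF:2014]) --- so there is no in-paper argument to compare against. Your derivation is the standard one and is correct: you rewrite $\infl{i}$ as $1 - \shortexpect[f(x)f(x^{(i)})]$ using $f\in\{-1,1\}$, identify $\shortexpect[f(x)f(x^{(i)})]$ with $\shortexpect[(L_if)^2]$ by conditioning on the coordinates other than $i$, compute $L_i\phi_\alpha = \indic{\alpha_i=0}\,\phi_\alpha$ from the product structure together with $\phi_0\equiv 1$ and orthonormality, and then finish with Parseval and a swap of summation order. One small inaccuracy in your closing remark: the step $1 = \sum_\alpha \hat f(\alpha)^2$ is Parseval applied to a function of unit $L^2$ norm, so it too depends on $f$ being $\{-1,1\}$-valued, not merely the first rewriting of $\infl{i}$; but since that hypothesis is part of the lemma this does not affect correctness.
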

\begin{lemma}\label{lemma:concentration:influence:fourier}
If $\totinf[f] \leq k$, then $\displaystyle\sum_{\alpha : \abs{\alpha} > k/\eps} \widehat{f}(\alpha)^2 \leq \eps$.
\end{lemma}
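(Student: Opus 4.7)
The plan is to derive this lemma in essentially one line as a direct application of Markov's inequality to the identity $\totinf[f] = \sum_{\alpha \in [r]^d} \abs{\alpha}\widehat{f}(\alpha)^2$ stated in the preceding proposition (Proposition 8.23 of \cite{AoBF:2014}).

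More concretely, I would first truncate the sum to the ``heavy'' frequencies, writing
\[
\totinf[f] = \sum_{\alpha \in [r]^d} \abs{\alpha}\,\widehat{f}(\alpha)^2 \;\geq\; \sum_{\alpha\colon \abs{\alpha} > k/\eps} \abs{\alpha}\,\widehat{f}(\alpha)^2 \;\geq\; \frac{k}{\eps}\sum_{\alpha\colon \abs{\alpha} > k/\eps} \widehat{f}(\alpha)^2,
\]
where the first inequality drops the nonnegative contribution from frequencies $\alpha$ with $\abs{\alpha} \leq k/\eps$, and the second replaces $\abs{\alpha}$ by the strict lower bound $k/\eps$ on the range of summation. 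Combining with the hypothesis $\totinf[f] \leq k$ and dividing through by $k/\eps$ yields $\sum_{\alpha\colon \abs{\alpha} > k/\eps} \widehat{f}(\alpha)^2 \leq \eps$, as required.

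There is no real obstacle here: the argument is a textbook Markov-type tail bound on the Fourier weight, weighted by the ``level'' $\abs{\alpha}$. The only reason to include it explicitly is that the subsequent agnostic learning algorithm (used in Step~\ref{algo:tester:exp:in:rootd:learn} of \autoref{algo:tester:exp:in:rootd}, and invoked via \autoref{prop:high:dim:agnostic}) relies on concentration of the Fourier spectrum of $k$-monotone functions on ``simple'' (low-level) characters, which in turn follows by chaining this lemma with the bound $\totinf[f] \leq k\sqrt{d}$ established for $k$-monotone Boolean functions on $[r]^d$ in \autoref{lemma:general:influence:kmon}. In particular, setting $k \gets k\sqrt{d}$ in the statement gives Fourier concentration on characters of level at most $O(k\sqrt{d}/\eps)$, which is what drives the $\exp(\widetilde{O}(k\sqrt{d}/\tau^2))$ sample complexity in \autoref{prop:high:dim:agnostic}.
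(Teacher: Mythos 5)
Your argument is correct and is essentially the same as the paper's: both drop the light-frequency terms, lower-bound $\abs{\alpha}$ by $k/\eps$ on the tail, and compare against the total influence bound (the paper merely phrases it as a proof by contradiction). Nothing to add.
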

\begin{proof}
If not, then $\totinf[f] = \sum_{\alpha}\abs{\alpha} \widehat{f}(\alpha)^2 \geq \sum_{\alpha:\abs{\alpha} > k/\eps} \abs{\alpha} \widehat{f}(\alpha)^2 \geq \frac{k}{\eps} \sum_{\alpha : \abs{\alpha} > k/\eps} \widehat{f}(\alpha)^2 > \frac{k}{\eps}\cdot\eps = k$, a contradiction.
\end{proof}

\begin{lemma}
\label{lemma:fourier:three:parts}
Let $p$ be the function $\sum_{\alpha : \abs{\alpha} \leq t} \widehat{f}(\alpha) \phi_\alpha$.  Then
  \begin{enumerate}[(i)]
  \item $\normtwo{p-f}^2=\shortexpect_{x\in[r]^d}[(p(x) - f(x))^2] = \sum_{\alpha : \abs{\alpha} > t} \widehat{f}(\alpha)^2$;
  \item $p$ is expressible as a linear combination of real-valued functions over $[r]^d$, each of which only depends on at most $t$ coordinates;
  \item $p$ is expressible as a degree-$t$ polynomial over the $rd$ indicator functions $\indic{x_i = j}$ for $1 \leq i \leq d$ and $j \in [r]$.
  \end{enumerate}
\end{lemma}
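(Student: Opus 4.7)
The plan is to handle the three items essentially independently; all three follow from straightforward Fourier manipulations once the basis structure $\phi_\alpha(x) = \prod_{i=1}^d \phi_{\alpha_i}(x_i)$ (with $\phi_0 \equiv 1$) is exploited.

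For (i), I would observe that by uniqueness of the Fourier expansion of $f$, we have $f - p = \sum_{\alpha : \abs{\alpha} > t} \widehat{f}(\alpha) \phi_\alpha$. Then applying orthonormality of the $\phi_\alpha$ under the inner product $\dotprod{\cdot}{\cdot}$ (i.e., Parseval's identity, which holds in any orthonormal Fourier basis and has already been invoked in the excerpt) gives $\normtwo{p-f}^2 = \shortexpect[(p-f)^2] = \sum_{\alpha : \abs{\alpha} > t} \widehat{f}(\alpha)^2$, as desired.

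For (ii), the key point is that $\phi_0 \equiv 1$, so for any $\alpha \in [r]^d$ the product $\phi_\alpha(x) = \prod_{i : \alpha_i \neq 0} \phi_{\alpha_i}(x_i)$ depends only on the coordinates in $\operatorname{supp}(\alpha) = \setOfSuchThat{i}{\alpha_i \neq 0}$. In particular, whenever $\abs{\alpha} \leq t$, the function $\phi_\alpha$ depends on at most $t$ coordinates of $x$. Since $p = \sum_{\alpha : \abs{\alpha} \leq t} \widehat{f}(\alpha) \phi_\alpha$ is by definition a linear combination of such $\phi_\alpha$'s, this establishes the claim.

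For (iii), I would express each univariate basis function as a degree-$1$ polynomial in the $r$ indicators $\indic{x_i = j}$ for $j \in [r]$: namely, for any $a \in [r]$ and $i \in [d]$,
\[
\phi_a(x_i) = \sum_{j \in [r]} \phi_a(j)\, \indic{x_i = j},
\]
so $\phi_a(x_i)$ is linear in the indicators on coordinate $i$ (and $\phi_0 \equiv 1$ contributes a constant factor). Therefore each $\phi_\alpha$ with $\abs{\alpha} \leq t$ is a product of at most $t$ such linear forms, one per coordinate in $\operatorname{supp}(\alpha)$, and hence a polynomial of degree at most $t$ in the $rd$ indicators $\indic{x_i = j}$. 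Taking the corresponding linear combination with coefficients $\widehat{f}(\alpha)$ preserves the degree bound, so $p$ itself is a degree-$t$ polynomial in the indicators.

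None of the three steps looks technically difficult; the only mild subtlety is being careful that the basis is assumed to contain $\phi_0 \equiv 1$ (as stated at the start of the Fourier discussion), which is what makes the ``depends on $\abs{\alpha}$ coordinates'' interpretation of the Fourier level work. The whole proof should fit in a short paragraph.
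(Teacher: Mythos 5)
Your proof is correct, and all three parts use the standard, essentially unique arguments: (i) is Parseval applied to $f-p=\sum_{\abs{\alpha}>t}\widehat{f}(\alpha)\phi_\alpha$, (ii) follows from $\phi_0\equiv 1$ so that $\phi_\alpha$ depends only on $\operatorname{supp}(\alpha)$, and (iii) writes each univariate $\phi_a$ as a linear form in the indicators and multiplies at most $t$ of them. The paper omits a proof of this lemma (treating it as routine Fourier-basis facts), and your write-up supplies exactly the argument one would expect.
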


\begin{theorem}[{\cite[Theorem 5]{KalaiKMS:08}}]\label{theo:low:degree:agnostic:learning}
Let $\class$ be a class of Boolean functions over $\domain$ and $\calS$ a collection of real-valued functions over $\domain$ such that for every $f\colon \domain \to \{-1,1\}$ in $\class$, there exists a function $p\colon \domain \to \R$ such that $p$ is expressible as a linear combination of functions from $\calS$ and
$\normtwo{p-f}^2 \leq \tau^2$.  Then there is an agnostic learning algorithm for $\class$ achieving excess error $\tau$ which has sample complexity $\poly(\abs{\calS},1/\tau)$.  
\end{theorem}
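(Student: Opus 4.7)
The plan is to prove this theorem \emph{via} $L_1$ polynomial regression, a standard technique (this is essentially the contribution of~\cite{KalaiKMS:08}). Given $m = \poly(\abs{\calS}, 1/\tau)$ labelled samples $(x_1,y_1),\dots,(x_m,y_m)$ drawn from the target distribution, where $y_i = f(x_i) \in \{-1,1\}$, we solve the convex (in fact linear) program minimizing the empirical $L_1$ loss
\[
  \hat L(c) \eqdef \frac{1}{m}\sum_{i=1}^m \left\lvert y_i - \sum_{s\in\calS} c_s\, s(x_i) \right\rvert
\]
over coefficient vectors $c = (c_s)_{s\in\calS}$ in a suitably bounded $\ell_1$-ball. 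Let $\hat p = \sum_s \hat c_s\, s$ denote the empirical minimizer, clipped pointwise to $[-1,1]$, and output the hypothesis $\hat h(x) \eqdef \sign(\hat p(x))$.

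The correctness analysis rests on two simple observations. First, for any real function $p$ and any Boolean $g\colon\domain\to\{-1,1\}$, classification error is controlled by $L_1$ error:
\[
  \probaDistrOf{x}{\sign(p(x)) \neq g(x)} \leq \shortexpect_x[\lvert p(x) - g(x)\rvert],
\]
because whenever the signs disagree we have $\lvert p(x)-g(x)\rvert \geq 1$. Second, let $f^\ast \in \class$ achieve $\probaDistrOf{x}{f^\ast(x)\neq f(x)} = \opt$; by hypothesis there exists $p^\ast \in \operatorname{span}(\calS)$ with $\normtwo{p^\ast - f^\ast}^2 \leq \tau^2$, hence by Cauchy--Schwarz $\shortexpect_x[\lvert p^\ast - f^\ast\rvert] \leq \normtwo{p^\ast - f^\ast} \leq \tau$, and the triangle inequality yields $\shortexpect_x[\lvert p^\ast - f\rvert] \leq \opt + \tau$. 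Thus the population $L_1$ loss of the best linear combination of $\calS$ is at most $\opt + \tau$, and if $\hat p$ approximately matches this minimum then $\hat h$ achieves classification error at most $\opt + O(\tau)$.

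The main technical obstacle, and the place where sample complexity enters, is establishing that the empirical $L_1$ minimizer $\hat p$ achieves a population $L_1$ loss close to the optimum. This is a standard uniform convergence argument: since the predictions $\sum_s c_s s(x)$ live in a vector space of dimension $\abs{\calS}$ and are clipped to $[-1,1]$, the associated loss class $\{x \mapsto \lvert y - \sum_s c_s s(x)\rvert \}$ has pseudo-dimension $\bigO{\abs{\calS}}$, so $\poly(\abs{\calS}, 1/\tau)$ samples suffice to estimate the loss of \emph{every} admissible $p$ to within $\tau$ uniformly (by Rademacher/covering-number bounds for bounded linear functionals). Combined with empirical optimality, namely $\hat L(\hat p) \leq \hat L(p^\ast)$, this gives
\[
  L(\hat p) \leq \hat L(\hat p) + \tau \leq \hat L(p^\ast) + \tau \leq L(p^\ast) + 2\tau \leq \opt + 3\tau,
\]
after which rescaling $\tau$ by a constant yields the stated excess error. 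The optimization step itself is a linear program and is solvable in time $\poly(m, \abs{\calS})$, so the algorithm is efficient in the stated parameters.
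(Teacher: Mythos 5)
The paper cites this as a black-box result (Theorem~5 of Kalai--Klivans--Mansour--Servedio), so there is no in-paper proof to compare against; the question is whether your sketch actually delivers the stated guarantee, and it does not, because of a missing idea that is precisely the crux of the KKMS argument. The flaw is in the step asserting $\shortexpect_x[|p^* - f|] \leq \opt + \tau$, and consequently in the deterministic rounding $\hat h = \sign(\hat p)$. Since $f$ and $f^*$ both take values in $\{-1,1\}$, one has $|f^*(x) - f(x)| = 2\cdot\indic{f^*(x)\neq f(x)}$, so $\shortexpect[|f^*-f|] = 2\opt$, not $\opt$. Carrying this correctly through your chain gives $L(p^*) \leq 2\opt + \tau$ and $L(\hat p) \leq 2\opt + 3\tau$, so the inequality $\probaOf{\sign(\hat p(x)) \neq f(x)} \leq L(\hat p)$ only yields a multiplicative bound $2\opt + O(\tau)$, not the additive excess error $\opt + \tau$ the theorem requires. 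Deterministic sign rounding cannot close this gap.

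The missing ingredient is \emph{randomized rounding}: after clipping $\hat p$ to $[-1,1]$, output $\hat h(x) = 1$ with probability $(1 + \hat p(x))/2$ and $-1$ otherwise (equivalently, threshold $\hat p$ at a uniformly random $\theta\in[-1,1]$). Then $\probaOf{\hat h(x) \neq f(x)} = \frac{1}{2}\,|\hat p(x) - f(x)|$ pointwise, so the expected classification error equals $\frac{1}{2}L(\hat p) \leq \opt + O(\tau)$, exactly absorbing the factor of two that appears in $\shortexpect[|f^*-f|] = 2\opt$. The rest of your outline --- $L_1$ linear regression over $\operatorname{span}(\calS)$, Cauchy--Schwarz to pass from an $L_2$ bound to an $L_1$ bound, and a uniform-convergence argument giving $\poly(\abs{\calS},1/\tau)$ sample complexity --- is the correct skeleton and matches KKMS's proof; once the rounding rule is corrected, the argument goes through.
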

Importantly, this algorithm is still successful with inconsistent labelled samples (examples), as long as they come from a distribution on $\domain \times \{-1,1\}$, where the marginal distribution on $\domain$ is uniform.\medskip

Now we put all the pieces together.  To agnostically learn a $k$-monotone function, we simply perform the agnostic learning algorithm of~\cite{KalaiKMS:08} on the distribution $D$ over $[m]^d \times \{-1,1\}$ defined by
\[
    D(x,b) = \probaDistrOf{y\in\calB^{-1}(x)}{f(y)=b}.
\]
To generate a sample $(x,b)$ from $D$, we draw a uniformly random string in $x \in [m]^d$, and $b$ is the result of a query for the value of $f(y)$ for a uniformly random $y \in \calB^{-1}(x)$. From~\autoref{lemma:fourier:three:parts}, we can take $\calS$ to be the set of $(k\sqrt{d}/\tau^2)$-way products of $rd$ indicator functions.  It follows that $\abs{\calS} = \binom{rd}{k\sqrt{d}/\tau^2} = \exp(\widetilde{O}(k\sqrt{d}/\tau^2))$.

\begin{proposition}\label{prop:tester:exp:in:rootd} 
\autoref{algo:tester:exp:in:rootd} accepts all functions $\eps_1$-close to $k$-monotone functions, and rejects all functions $\eps_2$-far from $k$-monotone, when $\eps_2 > 3\eps_1$ (with probability at least $2/3$).  Its query complexity is $\exp(\widetilde{O}(k\sqrt{d}/(\eps_2 - 3\eps_1)^2))$.
\end{proposition}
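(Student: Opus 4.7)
My plan is to exploit the key correspondence underlying the construction of $D$. The distribution $D$ over $\fm^d\times\{0,1\}$ is set up so that for any Boolean $h\colon\fm^d\to\{0,1\}$, one has $\probaDistrOf{(x,b)\sim D}{h(x)\neq b} = \dist{f}{\tilde h}$, where $\tilde h\colon\fn^d\to\{0,1\}$ denotes the $m$-block extension $\tilde h(y) = h(\calB(y))$. Under this identification, the optimum value $\opt_D$ minimized by the agnostic learner equals the distance from $f$ to the set of $k$-monotone $m$-block functions on $\fn^d$; and \autoref{lem:closekmono} together with $m \geq 6kd/\alpha$ (writing $\alpha=\eps_2-3\eps_1$) yields $\opt\leq\opt_D\leq \opt+\alpha/6$, where $\opt$ denotes the true distance from $f$ to the class of $k$-monotone functions on $\fn^d$.

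Conditioning on the two random events succeeding (a union bound over failure probabilities at most $1/10$ each), the learner from \autoref{prop:high:dim:agnostic} returns $h$ with $\probaDistrOf{(x,b)\sim D}{h(x)\neq b}\leq \opt_D+\alpha/12$, and the empirical estimate is within $\alpha/7$ of this value. For completeness, if $\opt\leq\eps_1$ then $\probaDistrOf{(x,b)\sim D}{h\neq b}\leq\eps_1+\alpha/4$, so the estimate is at most $\eps_1+11\alpha/28<\eps_1+5\alpha/12$ and the first check passes. Letting $g^\ast$ be the $k$-monotone $m$-block function guaranteed by \autoref{lem:closekmono} and $\ell$ its restriction to $\fm^d$, the triangle inequality yields $\dist{h}{\ell}=\dist{\tilde h}{g^\ast}\leq \dist{\tilde h}{f}+\dist{f}{g^\ast}\leq (\eps_1+\alpha/4)+(\eps_1+\alpha/6)=2\eps_1+5\alpha/12$, so the second check also passes and the algorithm accepts.

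For soundness I would argue by contradiction: suppose both checks pass when $\opt\geq\eps_2$. Then $\dist{f}{\tilde h}\leq\eps_1+5\alpha/12+\alpha/7$ and there is a $k$-monotone $m$-block $\tilde \ell$ with $\dist{\tilde h}{\tilde \ell}\leq 2\eps_1+5\alpha/12$. The triangle inequality would force $\eps_2\leq\dist{f}{\tilde\ell}\leq 3\eps_1+5\alpha/6+\alpha/7=3\eps_1+(41/42)\alpha<\eps_2$, a contradiction. Hence at least one check fails and the algorithm rejects. The arithmetic is tight: the constants $\alpha/12$, $\alpha/7$, $\alpha/6$, and $5\alpha/12$ are precisely balanced so that the completeness slack fits under the rejection threshold while $5/6+1/7<1$ is what closes the soundness argument.

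Finally, the query complexity is dominated by the agnostic learner: with alphabet size $r=m$ and excess error $\tau=\alpha/12$, \autoref{prop:high:dim:agnostic} requires $\exp(\widetilde{O}(k\sqrt{d}/\alpha^2))$ samples, each of which is drawn using a single query to $f$ (pick $y\in\fn^d$ uniformly and output $(\calB(y), f(y))$). Adding the $\bigO{1/\alpha^2}$ queries used for the estimation step yields the stated bound. The substantive technical work here is not the reduction but rather \autoref{prop:high:dim:agnostic} itself (already carried out earlier in this section), whose proof required extending the $k\sqrt{d}$ bound on total influence of $k$-monotone functions from $\{0,1\}^d$ to $[r]^d$ and invoking the $L_2$-approximation-based agnostic learner of~\cite{KalaiKMS:08}.
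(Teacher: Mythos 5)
Your proof is correct and follows essentially the same approach as the paper's: condition on the two failure events, use~\autoref{lem:closekmono} with the choice of $m$ to relate $\opt_D$ to $\opt$, invoke the learner's excess error and the estimation error for completeness, and apply the triangle inequality (by contradiction rather than directly, but these are equivalent) for soundness. The arithmetic and constants match, and the query-complexity accounting (dominated by the agnostic learner of~\autoref{prop:high:dim:agnostic}) is the same.
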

\begin{proof}
By a union bound, we have that with probability at least $8/10$ both Step~\ref{algo:tester:exp:in:rootd:estim} and Step~\ref{algo:tester:exp:in:rootd:learn} succeed. We hereafter condition on this.

\paragraph*{Completeness.} Suppose $f$ is $\eps_1$-close to $k$-monotone. \autoref{lem:closekmono} and the triangle inequality imply that there is a $k$-monotone $m$-block function $g^\ast$ such that $\dist{f}{g^\ast} \leq \eps_1 + \alpha/6$.  The agnostic learning algorithm thus returns a hypothesis $h$ such that $\dist{f}{h} \leq \eps_1 + \alpha/4$.  The algorithm estimates this closeness to within $\alpha/7$, so the estimate obtained in Step~\ref{algo:tester:exp:in:rootd:estim} is at most $\eps_1 + \eps/4 + \eps/7 < \eps_1 + 5\alpha/12$ and the algorithm does not reject in this step. By the triangle inequality, $h$ is $(2\eps_1 + 5\alpha/12)$-close to $k$-monotone, and the algorithm will accept.  There is no estimation error here, since no queries to $f$ are required.

\paragraph*{Soundness.} Now suppose $f$ is $\eps_2$-far from $k$-monotone, where $\eps_2 = 3\eps_1 + \alpha$ for some $\alpha > 0$.  Suppose the algorithm does not reject when estimating $\dist{f}{h}$, where $h$ is the hypothesis returned by the agnostic learning algorithm.  Then $\dist{f}{h} \leq \eps_1 + {5\alpha}/{12} + \alpha/7 < \eps_1 + 7\alpha/12$.  By the triangle inequality, if $t$ is a $k$-monotone function, $\dist{h}{t} \geq \dist{f}{t} - \dist{f}{h} > \eps_2 - (\eps_1 + 7\alpha/12) = 2\eps_1 + 5\alpha/12$.  The algorithm will thus reject in the final step.

\paragraph*{Query complexity.} The query complexity of the algorithm is dominated by the query complexity of the agnostic learning algorithm, which is $\exp(\widetilde{O}(k\sqrt{d}/\alpha^2)) = \exp(\tildeO{k\sqrt{d}/(\eps_2 - 3\eps_1)^2})$.
\end{proof}

\section{Tolerant testing and applications to $\lp[1]$-testing}\label{sec:application:l1}
\makeatletter{}We now show how our techniques can be applied to solve an open problem on $\lp[1]$ tolerant testing of monotonicity, asked at the Sublinear Algorithms Workshop 2016~\cite{Sublinear:open:70}.

We start by describing a reduction lemma from $\lp[1]$ distance to monotonicity of functions in $[0,1]^{\domain}$ to Hamming distance to monotonicity of functions in $\{0,1\}^{\domain\times[0,1]}$ (that is, ``trading the range for a dimension''). We note that this idea appears in Berman~\etal.\cite[Lemmata 2.1 and 2.3]{BermanRY:14}, although formulated in a slightly different way. For convenience and completeness, we state and prove here the version we shall use.

In what follows, we let $\domain$ be a discrete partially ordered domain equipped with a measure $\mu$,\footnote{We will only require that $(\domain, \mu)$ be a measurable space with finite measure, that is $\mu(\domain)< \infty$, and shall only hereafter concern ourselves with \emph{measurable} functions.} that is a tuple $(\domain, \preceq, \mu)$; and for a set $\range\subseteq\R$ we denote by $\kmon{\domain\to\range}{}\subseteq \range^\domain$ the set of monotone functions from $\domain$ to $\range$.

\begin{definition}[{Analogue of~\cite[Definition 2.1]{BermanRY:14}}]
For a function $f\colon \domain\to [0,1]$, the \emph{threshold function} $T\circ f\colon \domain\times[0,1]\to\{0,1\}$ is defined by
\[
    T\circ f(x,t) = \indic{f(x) \geq 1-t} = 
    \begin{cases}
      1 & \text{ if } f(x) \geq 1-t\\
      0 & \text{ otherwise.}
    \end{cases}
\]
\end{definition}
\noindent The next fact is immediate from this definition:
\begin{fact}\label{fact:monotone:l1:dimension:range}
    For any $f\colon \domain\to [0,1]$, it is the case that for every $x\in\domain$
       \[
        f(x) = \int_{0}^1 T\circ f(x,t) dt.
       \]
   Moreover, $f\in \kmon{\domain\to[0,1]}{}$ if, and only if, $T\circ f\in \kmon{\domain\times[0,1]\to\{0,1\}}{}$.
\end{fact}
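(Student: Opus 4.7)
The plan is to verify the two claims separately, both essentially by unpacking the definition of $T\circ f$. For the integral identity, I would observe that for any fixed $x\in\domain$, the map $t\mapsto T\circ f(x,t) = \indic{f(x)\geq 1-t}$ is just the indicator of the interval $[1-f(x),1]\subseteq [0,1]$, since $f(x)\geq 1-t$ is equivalent to $t\geq 1-f(x)$. Integrating this indicator over $[0,1]$ with respect to Lebesgue measure yields $1-(1-f(x))=f(x)$, which is the desired identity.

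For the equivalence of monotonicity, I would handle the two implications in turn. For the forward direction, assume $f\in\kmon{\domain\to[0,1]}{}$ and pick $(x_1,t_1)\preceq (x_2,t_2)$ in $\domain\times[0,1]$, meaning $x_1\preceq x_2$ and $t_1\leq t_2$. If $T\circ f(x_1,t_1)=1$ then $f(x_1)\geq 1-t_1$; monotonicity of $f$ gives $f(x_2)\geq f(x_1)\geq 1-t_1\geq 1-t_2$, so $T\circ f(x_2,t_2)=1$. For the converse, assume $T\circ f$ is monotone and pick $x_1\preceq x_2$ in $\domain$. For every $t\in[0,1]$, the pair $(x_1,t)\preceq (x_2,t)$, so $T\circ f(x_1,t)\leq T\circ f(x_2,t)$ pointwise in $t$; integrating over $t\in[0,1]$ and invoking the first part of the fact yields $f(x_1)\leq f(x_2)$. (Alternatively, specializing to $t=1-f(x_1)$ directly gives $f(x_2)\geq 1-t=f(x_1)$.)

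There is no real obstacle here: the content of the fact is entirely bookkeeping on the defining equivalence ``$T\circ f(x,t)=1\iff f(x)\geq 1-t$,'' and each piece reduces to a one-line manipulation. The only mild care needed is to make sure that in the reverse implication of the monotonicity equivalence we use comparability at the \emph{same} value of $t$ (rather than at incomparable pairs), which the product order on $\domain\times[0,1]$ provides for free.
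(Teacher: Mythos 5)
Your proof is correct and fills in exactly the elementary bookkeeping that the paper deems ``immediate from this definition'' (the paper gives no proof of this Fact). Both the integral identity via the interval indicator and the two-sided monotonicity equivalence are the natural arguments, and your note about comparing at the same value of $t$ in the reverse direction is the right care to take.
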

We begin by the following characterization, which is immediately obtained from a corresponding theorem of Berman\ \etal.; before stating a slightly modified version that we shall rely upon. For completeness, the proof of the former can be found in~\autoref{app:omitted}.
\begin{proposition}[{Analogue of~\cite[Lemma 2.1]{BermanRY:14}}]\label{prop:dist:monotone:l1:dimension:range}
For any $f\colon \domain\to [0,1]$,
       \[
        \lpdist[1]{ f }{ \kmon{\domain\to[0,1]}{} } = \lpdist[1]{ T\circ f }{ \kmon{\domain\times[0,1]\to\{0,1\}}{} } = \dist{ T\circ f }{ \kmon{\domain\times[0,1]\to\{0,1\}}{} } 
       \]
\end{proposition}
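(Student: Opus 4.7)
The plan is to prove the two equalities in turn. The second equality is immediate: since $T\circ f$ takes values in $\{0,1\}$ and every $h \in \kmon{\domain\times[0,1]\to\{0,1\}}{}$ is also $\{0,1\}$-valued, we have $\abs{T\circ f(x,t)-h(x,t)} = \indic{T\circ f(x,t) \neq h(x,t)}$ pointwise, so $L_1$ distance and Hamming distance coincide on this class. The whole substance of the proposition is therefore in the first equality.

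The core computation I would isolate at the outset is the following consequence of~\autoref{fact:monotone:l1:dimension:range}: for any $u, v \in [0,1]$,
\[
\int_0^1 \abs{\indic{u \geq 1-t} - \indic{v \geq 1-t}}\, dt = \abs{u - v},
\]
since $\setOfSuchThat{t}{u \geq 1-t}$ and $\setOfSuchThat{t}{v \geq 1-t}$ are intervals of measures $u$ and $v$ sharing the endpoint $1$, so their symmetric difference has measure $\abs{u-v}$. From this I derive that for any $f,g\colon\domain \to [0,1]$,
\[
\lpdist[1]{T\circ f}{T\circ g} \;=\; \shortexpect_{x \sim \domain}\!\left[\int_0^1 \abs{T\circ f(x,t) - T\circ g(x,t)}\,dt\right] \;=\; \shortexpect_{x \sim \domain} \abs{f(x) - g(x)} \;=\; \lpdist[1]{f}{g},
\]
after noting that the product measure on $\domain \times [0,1]$ has the same normalization as the measure on $\domain$ (the extra factor $\int_0^1 dt = 1$).

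For the ``$\leq$'' direction of the first equality, I would fix any monotone $g \in \kmon{\domain\to[0,1]}{}$ and use~\autoref{fact:monotone:l1:dimension:range} to get that $T\circ g \in \kmon{\domain\times[0,1]\to\{0,1\}}{}$; the displayed computation above then yields $\lpdist[1]{T\circ f}{T\circ g} = \lpdist[1]{f}{g}$, and taking the infimum over $g$ gives the bound. For the ``$\geq$'' direction I would go backwards: given any monotone $h \in \kmon{\domain\times[0,1]\to\{0,1\}}{}$, define
\[
g(x) \;\eqdef\; \int_0^1 h(x,t)\,dt \;\in\; [0,1].
\]
Then $g$ is monotone on $\domain$, since for $x \preceq x'$, monotonicity of $h$ in its first coordinate gives $h(x,t) \leq h(x',t)$ for every $t$, hence $g(x) \leq g(x')$ after integrating. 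Using Fact~\ref{fact:monotone:l1:dimension:range} to write $f(x) = \int_0^1 T\circ f(x,t)\,dt$, the triangle inequality for integrals yields
\[
\abs{f(x) - g(x)} \;=\; \left\lvert \int_0^1 \bigl(T\circ f(x,t) - h(x,t)\bigr)\,dt \right\rvert \;\leq\; \int_0^1 \abs{T\circ f(x,t) - h(x,t)}\,dt.
\]
Taking the expectation over $x \sim \domain$ and using $g \in \kmon{\domain\to[0,1]}{}$ then gives $\lpdist[1]{f}{\kmon{\domain\to[0,1]}{}} \leq \lpdist[1]{f}{g} \leq \lpdist[1]{T\circ f}{h}$; taking the infimum over $h$ closes the argument.

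There is no real obstacle here; the only point to be careful about is keeping track of measures and normalizations in the expectations over $\domain$ versus $\domain \times [0,1]$, and verifying that $g$ constructed via averaging $h$ over the threshold variable inherits monotonicity from $h$. Both points follow immediately from the setup and Fact~\ref{fact:monotone:l1:dimension:range}.
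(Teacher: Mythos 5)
Your proof is correct and follows essentially the same route as the paper: one direction compares $f$ and $g$ to their thresholdings $T\circ f$, $T\circ g$ using the pointwise identity $\int_0^1 \abs{\indic{u\ge 1-t}-\indic{v\ge 1-t}}\,dt = \abs{u-v}$, and the other direction averages a monotone $h$ on $\domain\times[0,1]$ over $t$ to produce a monotone function on $\domain$ and applies the integral triangle inequality. The one stylistic difference is that where you integrate the optimal $h$ directly, the paper first picks slicewise optima $g_t$, integrates those to form $g'$, and then compares each $g_t$ to $h(\cdot,t)$ — your shortcut removes that intermediate step and is a bit cleaner, but the argument is the same in substance.
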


\begin{proposition}[Rounding and Range-Dimension Tradeoff]\label{prop:dist:monotone:l1:dimension:range:coarse}
For any $f\colon \domain\to [0,1]$ and parameter $m \geq 1$, let $R_m\eqdef \{\frac{1}{m}, \frac{2}{m}, \dots, 1\}$.  We define the \emph{$m$-rounding of $f$} as $\Phi_m\circ f\colon\domain\to R_m$ by
\[
    \Phi_m\circ f(x) = \frac{\clg{ m f(x) }}{m},\quad x\in \domain.
\]
Then we have 
\begin{enumerate}[(i)]
  \item\label{prop:dist:monotone:l1:dimension:range:coarse:i} $\abs{ \lpdist[1]{ f }{ \kmon{\domain\to[0,1]}{} } - \lpdist[1]{ \Phi_m\circ f }{ \kmon{\domain\to R_m}{} } } \leq \frac{1}{m}$;
  \item\label{prop:dist:monotone:l1:dimension:range:coarse:ii}  $\lpdist[1]{ \Phi_m\circ f }{ \kmon{\domain\to R_m}{} } = \dist{ T\circ {\Phi_m\circ f} }{ \kmon{\domain\times R_m\to\{0,1\}}{} }$.
\end{enumerate}
\end{proposition}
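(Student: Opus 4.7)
For part~\eqref{prop:dist:monotone:l1:dimension:range:coarse:i}, the plan is to prove the two inequalities separately, both relying on the elementary pointwise bound
\[
    0\ \leq\ \Phi_m\!\circ\! f(x) - f(x)\ =\ \frac{\clg{mf(x)}}{m} - f(x)\ <\ \frac{1}{m},\qquad x\in\domain,
\]
which yields $\lpdist[1]{f}{\Phi_m\circ f}\leq 1/m$, and the closely related pointwise estimate $|\Phi_m a - \Phi_m b|\leq |a-b| + 1/m$ for $a,b\in[0,1]$ (which follows from $|\clg{x}-\clg{y}|\leq |x-y|+1$ for reals). For the upper bound, I would take an optimal monotone $g^{\ast}\colon\domain\to[0,1]$ realizing $\lpdist[1]{f}{\kmon{\domain\to[0,1]}{}}$; then $\Phi_m\circ g^{\ast}$ is monotone with range in $R_m$ (rounding preserves monotonicity since $\clg{\cdot}$ is monotone), and the pointwise bound above, integrated over $\domain$, gives $\lpdist[1]{\Phi_m\circ f}{\Phi_m\circ g^{\ast}}\leq \lpdist[1]{f}{g^{\ast}}+1/m$. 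For the reverse direction, I would take an optimal monotone $h^{\ast}\colon\domain\to R_m$; since $h^{\ast}$ is in particular a monotone function into $[0,1]$, the triangle inequality $\lpdist[1]{f}{h^{\ast}}\leq \lpdist[1]{f}{\Phi_m\circ f}+\lpdist[1]{\Phi_m\circ f}{h^{\ast}}\leq 1/m + \lpdist[1]{\Phi_m\circ f}{\kmon{\domain\to R_m}{}}$ closes the argument.

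For part~\eqref{prop:dist:monotone:l1:dimension:range:coarse:ii}, the plan is to mimic the proof of~\autoref{prop:dist:monotone:l1:dimension:range} but in the discrete range $R_m$. First I would observe the bijection between monotone functions $g\colon\domain\to R_m$ and monotone functions $h\colon\domain\times R_m\to\{0,1\}$: one direction is $g\mapsto T\circ g$ (which is monotone in both arguments, by~\autoref{fact:monotone:l1:dimension:range} restricted to $R_m$), while the inverse associates to $h$ the function $g_h(x)\eqdef 1-\min\{t\in R_m : h(x,t)=1\}+1/m$ (with the appropriate convention when the set is empty). Monotonicity of $h$ in the product order translates exactly into monotonicity of $g_h$ on $\domain$, and one checks $T\circ g_h = h$, so minimizing over monotone Boolean $h$ and over monotone $R_m$-valued $g$ are the same problem.

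The key computation is then that, for \emph{any} monotone $g\colon\domain\to R_m$ and each fixed $x\in\domain$,
\[
    \sum_{t\in R_m}\indic{T\circ\Phi_m\!\circ\!f(x,t)\neq T\circ g(x,t)} \;=\; \bigl|\{t\in R_m : \min(\Phi_m f(x),g(x))<1-t\leq \max(\Phi_m f(x),g(x))\}\bigr| \;=\; m\cdot\abs{\Phi_m\!\circ\!f(x)-g(x)},
\]
where the last equality uses that both $\Phi_m\circ f(x)$ and $g(x)$ lie in $R_m$, so the number of integer thresholds between them (scaled by $1/m$) is exactly $m$ times their difference. Dividing by $\abs{\domain}\cdot m$ yields $\dist{T\circ\Phi_m\circ f}{T\circ g}=\lpdist[1]{\Phi_m\circ f}{g}$, and taking the minimum over monotone $g$ (which by the bijection above ranges over all monotone Boolean $h$) gives the equality in~\eqref{prop:dist:monotone:l1:dimension:range:coarse:ii}.

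The main obstacle is really just in part~\eqref{prop:dist:monotone:l1:dimension:range:coarse:ii}: ensuring that the optimal hypothesis $h$ for $T\circ\Phi_m\circ f$ among all monotone Boolean functions on $\domain\times R_m$ is actually of the form $T\circ g$ for some monotone $g$, i.e., verifying the bijection carefully and handling the boundary conventions (the edge case where $\Phi_m\circ f$ or $g$ takes the value $0$, which may or may not lie in $R_m$ depending on interpretation). This is essentially the same structural fact used in~\autoref{prop:dist:monotone:l1:dimension:range}, so the argument can be transplanted with minor adjustments for the finite range.
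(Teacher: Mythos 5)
Your part (i) is correct and follows essentially the same route as the paper: round the optimal monotone function and apply the triangle inequality through $\Phi_m\circ f$, using the pointwise bound $\lpdist[1]{f}{\Phi_m\circ f}\leq 1/m$.

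For part (ii), your approach genuinely departs from the paper's, and there is a concrete gap. You set up a bijection between monotone $g\colon\domain\to R_m$ and monotone $h\colon\domain\times R_m\to\{0,1\}$ via $T$, but the proposed inverse is off by one: with $t_h(x)\eqdef\min\{t\in R_m: h(x,t)=1\}$, your $g_h(x)=1-t_h(x)+1/m$ gives $T\circ g_h(x,t)=\indic{t\geq t_h(x)-1/m}\neq h(x,t)$, whereas the map satisfying $T\circ g_h=h$ is $g_h(x)=1-t_h(x)$, which falls outside $R_m$ (equals $0$) whenever $h(x,\cdot)$ first turns $1$ at the top level. This is not a removable corner case: $h\equiv 0$ is monotone on $\domain\times R_m$ yet corresponds to no monotone $g\colon\domain\to R_m$, so ``minimizing over monotone Boolean $h$ and over monotone $R_m$-valued $g$'' are \emph{not} a priori the same problem. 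Your counting identity also fails at this same boundary: for $\Phi_m\circ f(x)=i/m<1$ and $g(x)=1$, the number of $t\in R_m$ with $i/m<1-t\leq 1$ is $m-1-i$, not $m-i$. The paper's route avoids all of this: its pointer to~\autoref{prop:dist:monotone:l1:dimension:range} means the \emph{averaging} argument --- build the monotone $g'(x)=\frac{1}{m}\sum_t h(x,t)$ from the optimal $h$ and chain $\lpdist[1]{\Phi_m\circ f}{g'}\leq\dist{T\circ\Phi_m\circ f}{h}$ via the threshold-decomposition identity and the triangle inequality --- which never asserts a bijection. Your plan can be salvaged if you first show (e.g., by a flipping argument) that the optimal $h$ may be taken identically $1$ on the top threshold level, which forces $g_h(x)=1-t_h(x)\in R_m$ and lets the counting go through; but that step is missing, and the underlying mechanism is genuinely different from, not ``essentially the same as,'' the one in~\autoref{prop:dist:monotone:l1:dimension:range}.
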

\begin{proofof}{\autoref{prop:dist:monotone:l1:dimension:range:coarse}}
  Fix any $m\geq 1$. We start the proof of~\autoref{prop:dist:monotone:l1:dimension:range:coarse:i} by the simple observation that if $f\in \kmon{\domain\to[0,1]}{}$, then $\Phi_m\circ f\in \kmon{\domain\to R_m}{}\subseteq \kmon{\domain\to[0,1]}{}$, that is rounding preserves monotonicity; and that $\Phi_m\circ g= g$ for all $g\colon \domain\to R_m$. This, along with the fact that for all $f\colon\domain\to [0,1]$
  \[
      \lpdist[1]{ f }{ \Phi_m\circ f } = \frac{1}{\mu(\domain)}\int_{\domain} \mu(dx) \underbrace{\abs{ \Phi_m\circ f(x)- f(x)}}_{\leq 1/m} \leq \frac{1}{m}
  \]
  implies by the triangle inequality, for any $g\in \kmon{\domain\to R_m}{}\subseteq \kmon{\domain\to[0,1]}{}$, that 
  \[
      \lpdist[1]{ f }{ \kmon{\domain\to[0,1]}{} } \leq \lpdist[1]{ f }{ g } \leq \frac{1}{m} + \lpdist[1]{ g }{ \Phi_m\circ g } + \lpdist[1]{ \Phi_m\circ f }{ \Phi_m\circ g } = \frac{1}{m} + 0 +  \lpdist[1]{ \Phi_m\circ f }{ g }. 
  \]
  Taking $g\in \kmon{\domain\to R_m}{}$ that achieves  $\lpdist[1]{ \Phi_m\circ f }{ g } = \lpdist[1]{ \Phi_m\circ f }{ \kmon{\domain\to R_m}{} }$, we get
  \[
      \lpdist[1]{ f }{ \kmon{\domain\to[0,1]}{} } \leq \frac{1}{m} + \lpdist[1]{ \Phi_m\circ f }{ \kmon{\domain\to R_m}{} }
  .\] For the other direction, we first note that for any two functions $f,g\colon \domain\to[0,1]$, it is the case that $\lpdist[1]{ f }{ g } \geq \lpdist[1]{ \Phi_m\circ f }{ \Phi_m\circ g } - 1/m$ (which is immediate from the definition of the rounding operator), and taking $g$ to be the closest monotone function to $f$ this readily yields 
  \[
  \lpdist[1]{ f }{ \kmon{\domain\to[0,1]}{} } \geq \lpdist[1]{ \Phi_m\circ f }{ \Phi_m\circ g } - \frac{1}{m} \geq \lpdist[1]{ \Phi_m\circ f }{ \kmon{\domain\to R_m }{} } - \frac{1}{m}.
  \]
  
  Finally, the proof of the second part, \autoref{prop:dist:monotone:l1:dimension:range:coarse:i}, is identical to that of~\autoref{prop:dist:monotone:l1:dimension:range}, replacing the Lebesgue measure on $[0,1]$ by the counting measure on $R_m$. (So that integrals over $[0,1]$ become sums over $R_m$, normalized by $\abs{R_m} = m$.)
\end{proofof}

Given~\autoref{prop:dist:monotone:l1:dimension:range:coarse}, it is now easy to apply the results of~\autoref{sec:high:dim} to obtain a tolerant $\lp[1]$ tester for monotonicity of functions $f\colon[n]^d \to [0,1]$. Indeed, given parameters $0<\eps_1<\eps_2$, one can set the rounding parameter $m$ to $\clg{4/(\eps_2-\eps_1)}$; and from query access to $f\colon[n]^d \to [0,1]$, simulate query access to $\Phi_m\circ f$ and therefore to $g\eqdef T\circ \Phi_m\circ f\colon [n]^d\times R_m\to \{0,1\}$. By~\autoref{prop:dist:monotone:l1:dimension:range:coarse} and our choice of $m$, in order to distinguish
\[
\lpdist[1]{ f }{ \kmon{\domain\to[0,1]}{} } \leq \eps_1 \quad\text{ vs. }\quad \lpdist[1]{ f }{ \kmon{\domain\to[0,1]}{} } \geq \eps_2
\]
it is enough to distinguish
\[
\dist{ g }{ \kmon{\domain\times R_m\to\{0,1\}}{} } \leq \eps_1 + \frac{1}{m} \quad\text{ vs. }\quad \dist{ g }{ \kmon{\domain\times R_m\to\{0,1\}}{} } \geq \eps_2 - \frac{1}{m}.
\]
By our choice of $m$, we also have $\left(\eps_2 - \frac{1}{m}\right)-\left(\eps_1 + \frac{1}{m}\right) \geq \frac{\eps_2-\eps_1}{2}$.

The last step is to observe that one can view equivalently $g$ as a function $g\colon [n]^d\times [m]\to \{0,1\}$; by~\autoref{prop:dist:monotone:l1:dimension:range:coarse} and our choice of $m$, so that the algorithms of~\autoref{coro:tol:test:d:na} apply.

\toltestlone*

\paragraph{Acknowledgments.} We would like to thank Eric Blais for helpful remarks on an earlier version of this paper, and an anonymous reviewer for very detailed and insightful comments.

\bibliographystyle{alpha}
\bibliography{references} 

\clearpage
\appendix
\section{Previous work on monotonicity testing}\label{app:overview}
\makeatletter{}In this appendix, we summarize the state-of-the-art on monotonicity testing. We observe that this question has been considered for functions over various domains (e.g. hypergrids, hypercubes and general posets) and ranges (notably Boolean range $\B$ and unbounded range $\N$); as hypergrids and hypercubes are arguably the domains that have received the most attention in the literature, we will in this overview restrict ourselves on work on these, and refer 
readers to \cite{FLNRRS:02,BhattacharyyaGJRW:09} for other various posets.  We will also focus on the Boolean range $\B$, which is most relevant to our work, and briefly mention the best known results (which are also tight) for unbounded range $\N$. In the end of this section, we include known results for \emph{tolerant} monotonicity testing.  

Before we go over those results, we recall some notation: namely, testers can make adaptive (a.) or non-adaptive (n.a.) queries and have $1$-sided (1.s.) or $2$-sided (2.s.) error. The best one could hope for would then be to obtain $1$-sided non-adaptive upper bound, complemented with $2$-sided adaptive lower bounds. We note all testers included in below except tolerant testers are $1$-sided (almost all of them are non-adaptive) algorithms.

 \paragraph{Hypercubes with Boolean Range.} The problem of monotonicity testing is introduced by Goldreich~\etal.~\cite{GGLRS00} for functions $f\colon\B^d\to\B$. \cite{GGLRS00} present
 a simple ``edge tester'' with query complexity $O(d/\eps)$.  A tester with $O(d^{7/8}/\eps^{3/2})$ queries,  the first improvement in terms of the dependence on $d$ and the first to ``break'' the linear barrier,
was presented by Chakrabarty and Seshadhri~\cite{CS:13b}, further improved to $\tilde{O}(d^{5/6}/\eps^4)$ by Chen~\etal.~\cite{ChenST:14}. 
Recently, a $\tilde{O}(\sqrt{d}/\eps^2)$ upper bound was established by Khot~\etal.~\cite{KhotMS:15}.  All these upper bounds are obtained for $1$-sided, non-adaptive testers.

For $1$-sided non-adaptive testers, Fischer~\etal.~\cite{FLNRRS:02} showed an $\Omega(\sqrt{d})$ lower bound.  For $2$-sided non-adaptive testers,
Chen~\etal.~\cite{ChenST:14} obtained  an $\tilde{\Omega}(d^{1/5})$ lower bound, further improved by Chen~\etal.~\cite{ChenDST:15}) 
to $\Omega(d^{1/2-c})$ (for any constant $c>0$).   All these lower bounds applying to non-adaptive testers, they only imply an $\Omega(\log d)$ lower bound 
for adaptive ones.  Recently, Belovs and Blais~\cite{BelovsB:15} showed an
$\Omega(d^{1/4})$ lower bound for $2$-sided adaptive testers, i.e. an exponential improvement over the previous bounds. All mentioned lower bounds hold 
for constant $\eps>0$, and are summarized in~\autoref{table:app:monotonicity:cube-boolean}.

\begin{table}[H]\centering
  \begin{tabular}{|l|c|c|}
  \hline
      Domain & Upper bound &  Lower bound \\
         \hline
   $\B^d$ & $\bigO{\sqrt{d}}$ 1.s.-n.a.~\cite{KhotMS:15} &  $\bigOmega{d^{1/2}}$ 1.s.-n.a.~\cite{FLNRRS:02}\\
   &    &  $\bigOmega{d^{1/2-o(1)}}$ 2.s.-n.a.~\cite{ChenDST:15}\\
                                                        &    &  $\tildeOmega{d^{1/4}}$ 2.s.-a.~\cite{BelovsB:15}\\
  \hline
  \end{tabular}
  \caption{\label{table:app:monotonicity:cube-boolean}Testing monotonicity of a function $f\colon\B^d\to\B$}
\end{table}

\paragraph{Hypergrids with Boolean Range.} 
  We remark that most known previous upper bounds for testing monotonicity over hypergrids are for unbounded range, which we will be the focus of the next section. Instead, we only mention here the case of Boolean range, giving in each setting the current best known results. For testing monotonicity over the line with Boolean range (i.e. $d=1$ case),  both a 1-sided non-adaptive $O(1/\eps)$ upper bound and a 2-sided adaptive $\Omega(1/\eps)$ lower bound are known (both of them being folklore). 
   For $d=2$, Berman \etal.~\cite{BermanRY:14} showed a tight bound of $\Theta((\log{1/\eps})/\eps)$ for 1-sided non-adaptive testers.  
   Interestingly,  they also prove that ``adaptivity'' helps in the $d=2$ case: that is, they establish a 1-sided tight adaptive $O(1/\eps)$ upper bound
  which beats $\Omega(\log{1/\eps})/\eps)$ lower bound for $1$-sided non-adaptive testers.   
    For general $d$, Berman \etal.~\cite{BermanRY:14} give both a 1-sided non-adaptive tester with query complexity $O(\frac{d}{\eps}{\log{\frac{d}{\eps}}})$, and a 1-sided adaptive tester with query complexity  $\bigO{d2^d\log^{d-1}{\frac{1}{\eps}}+\frac{d^2\log{d}}{\eps}}$.  
The best known results can be found in~\autoref{table:grids-boolean}.

\begin{table}[H]\centering
  \begin{tabular}{|l|c|c|}
  \hline
      Domain & Upper bound &  Lower bound \\
         \hline
   $[n]$ & $\bigO{\frac{1}{\eps}}$ 1.s.-n.a.  & $\bigOmega{\frac{1}{\eps}}$ 2.s.-a.\\
   $[n]^2$ & $\bigO{\frac{1}{\eps}\log{\frac{1}{\eps}}}$ 1.s.-n.a.~\cite{BermanRY:14} & $\bigOmega{\frac{1}{\eps}\log{\frac{1}{\eps}}}$ 1.s.-n.a.~\cite{BermanRY:14}\\
                &  $\bigO{\frac{1}{\eps}}$  1.s.-a.~\cite{BermanRY:14} & $\bigOmega{\frac{1}{\eps}}$ 2.s.-a.\\
   $[n]^d$ & $\bigO{\frac{d}{\eps}\log{\frac{d}{\eps}}}$ 1.s.-n.a.~\cite{BermanRY:14} & $\bigOmega{\frac{1}{\eps}\log{\frac{1}{\eps}}}$ 1.s.-n.a.~\cite{BermanRY:14}\\
                &  $\bigO{d2^d\log^{d-1}{\frac{1}{\eps}}+\frac{d^2\log{d}}{\eps}}$  1.s.-a.~\cite{BermanRY:14}& $\bigOmega{\frac{1}{\eps}}$ 2.s.-a.\\
  \hline
  \end{tabular}
  \caption{\label{table:grids-boolean} Testing monotonicity of a function $f\colon[n]^d \to\B$}
\end{table}

 \paragraph{Unbounded Range.} For unbounded range, tight upper and lower bounds are known for both hypergrid and hypercube domains. 
Chakrabarty and Seshadhri~\cite{CS:13} describe a $1$-sided  non-adaptive tester with $O(d\log{n}/\eps)$ queries for the hypergrid $[n]^d$. 
Later, they show that $O(d\log{n}/\eps)$ is essentially optimal even for $2$-sided adaptive tester~\cite{CS:13a}. For the hypercube,
\cite{CS:13} give a $1$-sided  non-adaptive tester making $O(n/\eps)$ queries, and a matching $2$-sided adaptive lower bound  
is proved by Joshua Brody (mentioned as private communication in~\cite{CS:13a}). We refer readers to~\cite{CS:13,CS:13a} for overviews on previous results for testing monotonicity over the hypercube and hypergrid with unbounded range.  
The best known results are summarized in~\autoref{table:unbounded}. 
\begin{table}[H]\centering
  \begin{tabular}{|l|c|c|}
  \hline
      Domain & Upper bound &  Lower bound \\
         \hline
   $\B^d$ & $\bigO{\frac{d}{\eps}}$ 1.s.-n.a.~\cite{CS:13} &  $\bigOmega{\frac{d}{\eps}}$ 2.s.-a.~\cite{CS:13a}\\
    $[n]^d$ & $\bigO{\frac{d\log{n}}{\eps}}$ 1.s.-n.a.~\cite{CS:13}&  $\bigOmega{\frac{d\log{n}}{\eps} - \frac{1}{\eps}\log{\frac{1}{\eps}}}$ 2.s.-a.~\cite{CS:13a}\\
  \hline
  \end{tabular}
  \caption{\label{table:unbounded} Testing monotonicity of a function $f\colon D\to\N$}
\end{table}

\paragraph{Tolerant Testing.} To the best of our knowledge, prior to our work tolerant testers for monotonicity for Boolean functions over the hypergrid were only known for dimension $d\in\{1,2\}$.  Specifically, an $O(\frac{\eps_2}{(\eps_2-\eps_1)^2})$-query upper bound is known for $d=1$, while an $\tilde{O}(\frac{1}{(\eps_2-\eps_1)^4})$-query one is known for $d=2$~\cite{BermanRY:14,FattalR:10}.
 
\begin{table}[H]\centering
  \begin{tabular}{|l|c|c|}
  \hline
      Domain & Upper bound &  Lower bound \\
         \hline
   $[n]$ & $O(\frac{\eps_2}{(\eps_2-\eps_1)^2})$ \cite{BermanRY:14}  2.s.-a.  & ? \\
   $[n]^2$ & $\tilde{O}(\frac{1}{(\eps_2-\eps_1)^4})$ \cite{FattalR:10} 2.s.-n.a. &  ?\\
  \hline
  \end{tabular}
  \caption{\label{table:grids-tolerant} Tolerant testing monotonicity of a function $f\colon[n]^d\to\B$}
\end{table}

\section{Structural results}\label{app:poset:structural}
\makeatletter{}
In this section, we will prove that the distance to $k$-monotonicity of a Boolean function $f$ can be expressed in a combinatorial way -- which does not require measuring the distance between $f$ and the closest $k$-monotone function to $f$. We will prove this for a general finite poset domain buiding up on the ideas of~\cite{FLNRRS:02}. In the rest of this section, we denote by $\poset = (V, \preceq)$ an arbitrary poset, the underlying domain of the function.

\begin{definition} \label{def:forbidden-bit-pattern}
We define the \emph{forbidden pattern} $K_{10}$ as the sequence of alternating bits $K_{10} = (b_1, b_2, \cdots, b_k, b_{k+1})$ of length $(k+1)$ , where $b_1 = 1$ and all the bits in the sequence alternate, i.e., $b_i \neq b_{i+1}\ \forall i \in [k]$.
\end{definition}

\noindent A function $f\colon \poset \to \{0,1\}$ is $k$-monotone only if it avoids $K_{10}$. That is, for any $x_1 \prec x_2 \prec \ldots \prec x_{k+1} \in \poset$ we have $f(x_i) \neq K_{10}(i)$ for some $i \in [k+1]$. 

Using insights from the literature on monotonicity testing, we show that functions far from $k$-monotonicity have a large matching of ``violated hyperedges'' in the ``violation hypergraph'' which we define shortly. Let us recall the definition of ``violation graph''which has been extremely useful with monotonicity testing as seen in~\cite{ErgunKKRV00, PRR:06, HalevyK08, AilonCCL04, FLNRRS:02}.

\begin{definition}[Violation graph]
Given a function $f\colon \poset \to \{0,1\}$, the \emph{violation graph of $f$} is defined as $G_{\rm viol}(f) = ( \poset, E(G_{\rm viol}) )$ where $(x,y) \in E(G_{\rm viol})$ if $x,y \in \poset$ form a monotonicity violating pair in $f$ -- that is $x \preceq y$ but $f(x) > f(y)$. 
\end{definition}
The following theorem about violation graphs has been extremely useful in monotonicity testing literature.
\begin{theorem} \label{theo:mono:big-violated-matching}
Let $f\colon \poset \to \{0,1\}$ be a function that is $\eps$-far from monotone. Then, there exists a matching of edges in the violation graph for $f$ of size at least $\eps \abs{\poset}/2$.
\end{theorem}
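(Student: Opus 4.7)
The plan is to mirror the classical argument of Fischer et al.\ for monotonicity testing, based on maximum matchings in the violation graph combined with an extendability statement. I would proceed as follows.

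First, let $M$ be a \emph{maximum} matching in the violation graph $G_{\rm viol}(f)$, and let $V(M)\subseteq\poset$ denote the set of endpoints covered by $M$, so that $\abs{V(M)}=2\abs{M}$. I would first show that $V\setminus V(M)$ contains no violating pair: indeed, if $u,v\in V\setminus V(M)$ with $u\preceq v$ but $f(u)>f(v)$, then $(u,v)$ would be a violating edge disjoint from every edge in $M$, contradicting the maximality of $M$. In particular, the restriction $f|_{V\setminus V(M)}$ is a monotone partial function on the sub-poset induced by $V\setminus V(M)$.

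Second, I would invoke an extendability lemma for monotone functions on finite posets (such a lemma is straightforward to prove by processing the poset in topological order and assigning each new element the maximum of the values of its predecessors): any monotone Boolean function defined on a sub-poset of $\poset$ can be extended to a monotone function on all of $\poset$. Applying this to $f|_{V\setminus V(M)}$ yields a monotone function $g\colon\poset\to\{0,1\}$ which agrees with $f$ on $V\setminus V(M)$.

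Third, since $g$ and $f$ can only differ on $V(M)$, we obtain
\[
  \dist{f}{g} \;\leq\; \frac{\abs{V(M)}}{\abs{\poset}} \;=\; \frac{2\abs{M}}{\abs{\poset}}.
\]
Because $g$ is monotone and $f$ is $\eps$-far from monotone, the left-hand side is at least $\eps$, so $\abs{M}\geq \eps\abs{\poset}/2$, as desired.

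The only step that is not immediate from the definitions is the extendability lemma used in the second step; the paper alludes to such structural results in~\autoref{app:poset:structural} (and the analogous statement for $k$-monotonicity, \autoref{lemma:k-mono-extendable}, is used elsewhere in the paper), so I would just cite it. The rest is a standard maximum-matching/vertex-cover argument, with no real obstacle.
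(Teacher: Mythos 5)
Your proof is correct. The paper does not actually give a proof of this $k=1$ statement (it is cited as a known classical result), but your argument — maximum matching in the violation graph, observation that its endpoint set is a vertex cover, extendability of monotone partial functions, and the resulting distance bound — is exactly the specialization to $k=1$ of the paper's own proof of the $k$-monotone generalization (\autoref{theo:k-mono:big-violated-matching}), which is built from \autoref{lemma:k-mono-extendable}, \autoref{lemma:vtx-cover-dist-to-prop}, and the matching/vertex-cover bound.
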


\noindent Now let us define a generalization of this concept, the violation hypergraph.

\begin{definition}[Violation hypergraph]
 Given a function $f\colon \poset \to \{0,1\}$, the \emph{violation hypergraph of $f$} is $H_{\rm viol}(f) = ( \poset, E(H_{\rm viol}) )$ where $(x_1, x_2, \cdots, x_k) \in E(H_{\rm viol})$ if the ordered $(k+1)$-tuple $x_1 < x_2 < \ldots < x_{k+1}$ (which is a $(k+1)$-uniform hyperedge) forms a violation to $k$-monotonicity in $f$.
 \end{definition}

\noindent Now, we state the main theorem that we intend to prove in this section. This theorem offers an alternate characterization of distance to $k$-monotonicity that we seek. We recall that a set of edges forms a matching in a hypergraph if any pair of hyperedges is pairwise disjoint.

\begin{theorem} \label{theo:k-mono:big-violated-matching}
Let $f\colon \poset \to \{0,1\}$ be function that is $\eps$-far from $k$-monotone. Then, there exists a matching of $(k+1)$-uniform hyperedges of size at least $\frac{\eps \abs{\poset}}{k+1}$ in the violation hypergraph.
\end{theorem}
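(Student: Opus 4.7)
My plan is to mimic the classical proof of the matching-based characterization for $1$-monotonicity (\autoref{theo:mono:big-violated-matching}), generalizing each step to the $(k+1)$-uniform setting. The strategy is to extract a large matching of violating hyperedges from any function $f$ that is $\eps$-far from $k$-monotone, by combining a maximality argument with an extendability lemma for $k$-monotone functions (namely, the lemma referenced as \autoref{lemma:k-mono-extendable}).

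Specifically, let $M$ be a \emph{maximum} matching of violating $(k+1)$-uniform hyperedges in $H_{\rm viol}(f)$, and let $V(M) \subseteq \poset$ be the set of vertices covered by $M$, so that $\abs{V(M)} = (k+1)\abs{M}$ because the hyperedges in $M$ are pairwise vertex-disjoint. The first key step is to observe that the restriction $f|_{\poset \setminus V(M)}$ is itself $k$-monotone: otherwise there would exist a chain $x_1 \prec x_2 \prec \cdots \prec x_{k+1}$ inside $\poset \setminus V(M)$ on which $f$ exhibits the forbidden pattern $K_{10}$, yielding a $(k+1)$-hyperedge of $H_{\rm viol}(f)$ disjoint from every hyperedge of $M$, which would contradict the maximality of $M$.

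The second key step is to invoke the extendability lemma for $k$-monotone functions to extend $f|_{\poset \setminus V(M)}$ to a $k$-monotone function $g\colon \poset \to \{0,1\}$ defined on the whole poset, with $g$ agreeing with $f$ outside $V(M)$. Since $g$ is $k$-monotone, the assumption that $f$ is $\eps$-far from $k$-monotone forces
\[
\eps \;\leq\; \dist{f}{g} \;\leq\; \frac{\abs{V(M)}}{\abs{\poset}} \;=\; \frac{(k+1)\abs{M}}{\abs{\poset}},
\]
which immediately yields $\abs{M} \geq \eps\abs{\poset}/(k+1)$, as claimed.

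The main obstacle is ensuring that the extendability lemma truly applies to the partial function $f|_{\poset\setminus V(M)}$: one needs to check that $\poset \setminus V(M)$ is a ``nice enough'' subposet on which a $k$-monotone partial function always extends to a $k$-monotone function on $\poset$. For the monotone ($k=1$) case, this is a standard fact (one can extend by taking an appropriate max/min over down-sets), and \autoref{lemma:k-mono-extendable} is the $k$-monotone analogue supplied by the paper. Granted this lemma, the proof above goes through cleanly; the only arithmetic is the disjointness bound $\abs{V(M)} = (k+1)\abs{M}$ and the conversion of the distance inequality into the desired lower bound on $\abs{M}$.
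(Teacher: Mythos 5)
Your proof is correct and rests on the same two ingredients as the paper's: the extendability lemma (\autoref{lemma:k-mono-extendable}) together with the maximality argument showing that the complement of the matching supports a $k$-monotone restriction of $f$. The paper packages this slightly differently — it first establishes the vertex-cover characterization of distance to $k$-monotonicity (\autoref{lemma:vtx-cover-dist-to-prop}, whose backward direction is exactly your "remove, observe $k$-monotone, extend" step applied to a minimum vertex cover) and then invokes the generic $(k+1)$-uniform matching/vertex-cover inequality (\autoref{claim:max-matching-vtx-cover}), whereas you bypass the intermediate lemma by working directly with the vertex set of a maximum matching; the underlying content is the same.
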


To prove this theorem we first exploit the key notion of \emph{extendability} which we define below. Later we will show that $k$-monotone functions are extendable.

\begin{definition}[Extendability]\label{extendable}
 A property of Boolean functions is said to be \emph{extendable} over a poset domain if the following holds for any set $X \subseteq \poset$: given a function $f \colon X \to \{0,1\}$ which has the property (on $X$), it is possible to define a function $g \colon \poset \to \{0,1\}$ such that $g(x) = f(x), \forall x \in X$ and $g$ has the property.
\end{definition}

\noindent In other words, a property is extendable if for any subset $X \subseteq \poset$, given a function defined over the set $X$ which respects the property, it is possible to fill in values outside $X$ such that the new function obtained continues to respect the property. Next, we show that $k$-monotonicity is an extendable property:
\begin{lemma}\label{lemma:k-mono-extendable}
$k$-monotonicity is an extendable property.
\end{lemma}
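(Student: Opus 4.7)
The plan is to construct $g$ explicitly by parity-coding how $f$ is ``forced'' at each point. Given $f\colon X \to \{0,1\}$ that is $k$-monotone on $X$, define for every $y \in \poset$ the quantity $t(y)$ to be the maximum length $t$ of a chain $x_1 \preceq x_2 \preceq \ldots \preceq x_t$ lying entirely in $X$ with $x_t \preceq y$, $f(x_1) = 1$, and $f(x_i) \neq f(x_{i+1})$ for $1 \leq i < t$; set $t(y) = 0$ if no such chain exists. Then define $g(y) \eqdef t(y) \bmod 2$. The intuition is that once $f$ is fixed on $X$, the parity of $t(y)$ records the value toward which any $k$-monotone extension is naturally pushed at $y$.

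The first task is to verify consistency on $X$, i.e., that $g(y) = f(y)$ for every $y \in X$. If $f(y) = 1$, the singleton chain $(y)$ already shows $t(y) \geq 1$; if moreover $t(y)$ were even, realized by some $x_1 \preceq \ldots \preceq x_t \preceq y$ with $f(x_t) = 0$, then appending $y$ yields a strictly longer alternating chain inside $X$, contradicting maximality (the edge case $x_t = y$ would force $f(y) = 0$). A symmetric argument handles $f(y) = 0$: either no $x \in X$ with $x \preceq y$ satisfies $f(x) = 1$ (so $t(y) = 0$ and $g(y) = 0$), or the same ``append $y$'' trick forces $t(y)$ to be even.

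The second task is to verify that $g$ is $k$-monotone on $\poset$. The key observation is that $t$ is non-decreasing on $\poset$: any chain witnessing $t(y)$ also witnesses $t(z)$ whenever $y \preceq z$, since $x_t \preceq y \preceq z$. Moreover, $t(y) \leq k$ for every $y$, because any witness chain lies in $X$ and a chain of length $\geq k+1$ would directly violate the $k$-monotonicity of $f$ on $X$. Now suppose for contradiction that $y_1 \preceq y_2 \preceq \ldots \preceq y_{k+1}$ is a forbidden alternating pattern for $g$ with $g(y_1) = 1$. Then $t(y_1), \ldots, t(y_{k+1})$ is non-decreasing yet alternates in parity at each step, so consecutive values differ by at least $1$, giving $t(y_{k+1}) \geq t(y_1) + k \geq k+1$ and contradicting the cap $t \leq k$.

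The main obstacle is really only the consistency step on $X$: the parity of $t$ must coincide with $f$, and this is precisely where the ``append $y$ to the maximal chain'' argument is essential (together with handling the boundary case $t(y) = 0$). Once this is in place, the $k$-monotonicity of the extension follows mechanically from the monotonicity of $t$ on $\poset$ combined with the parity-forced strict increases along any alternating chain.
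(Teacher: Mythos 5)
Your proof is correct, and it takes a genuinely different route from the paper's. The paper proves extendability by induction: pick a minimal point $v\in\poset\setminus X$, argue by contradiction (merging the two hypothetical violating paths at $v$) that at least one of $g(v)=0$ or $g(v)=1$ creates no new violation, and repeat. You instead write down a closed-form extension at once: $g(y) = t(y)\bmod 2$, where $t(y)$ is the ``alternation rank'' of $y$ relative to $f|_X$. The consistency check on $X$ (via the append-$y$ maximality argument, with the $x_t=y$ edge case noted) and the global $k$-monotonicity check (monotonicity of $t$ plus the cap $t\le k$ forcing $t(y_{k+1})\ge t(y_1)+k\ge k+1$ on any alternating chain of length $k+1$) are both sound. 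Your construction avoids the induction entirely and produces a \emph{canonical} extension — essentially the pointwise-minimal one that respects the rank forced by $f$ on $X$ — which is arguably cleaner and exposes a bit more structure; the paper's merged-paths argument, by contrast, is more in the spirit of the classical Fischer et al.\ extendability proof for $k=1$ and does not single out any particular extension.
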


\begin{proofof} {\autoref{lemma:k-mono-extendable}}
Consider $X \subseteq \poset$ and some function $f\colon X \to \{0,1\}$ which is $k$-monotone over $X$. That is, for any $x_1 < x_2 < \ldots < x_{k+1} \in X$ there exists $i \in [k+1]$ such that $f(x_i) \neq K_{10}[i]$. Take a minimal point $v \in \poset \setminus X$. That is, for any other point $v' \in \poset \setminus X$ either $v \leq v'$ or $v$ and $v'$ are not comparable. We will use the following result:
\begin{claim}\label{claim:extend-by-one}
 There exists a function 
$
    g\colon X \cup \{v\} \to \{0,1\}
$
 such that $g(x) = f(x)$ for all $x \in X$, and $g$ respects $k$-monotonicity over its domain. 
 \end{claim}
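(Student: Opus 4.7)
My plan is to argue by contradiction: suppose that neither $g(v)=0$ nor $g(v)=1$ yields a $k$-monotone function on $X\cup\{v\}$. I would then extract from each failed assignment a violating chain through $v$, splice these two chains together at $v$, and produce a chain entirely inside $X$ that still exhibits the forbidden pattern $K_{10}$ — contradicting the $k$-monotonicity of $f$ on $X$.

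In more detail: if setting $g(v)=0$ introduces a violation, then there is a chain $x_1\prec x_2\prec\cdots\prec x_{k+1}$ with $v=x_i$ for some position $i$ such that $K_{10}[i]=0$ (so $i$ is even), and the remaining $x_j\in X$ satisfy $f(x_j)=K_{10}[j]$. Symmetrically, the failure of $g(v)=1$ yields a chain $y_1\prec\cdots\prec y_{k+1}$ with $v=y_{i'}$ for some $i'$ with $K_{10}[i']=1$ (so $i'$ is odd); in particular $i$ and $i'$ have opposite parity, hence $i\neq i'$.

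The heart of the proof is the splicing step. If $i'>i$, I would form the chain $y_1\prec\cdots\prec y_{i'-1}\prec x_{i+1}\prec\cdots\prec x_{k+1}$, using $y_{i'-1}\prec v\prec x_{i+1}$ for comparability across the join; if instead $i'<i$, I would use $x_1\prec\cdots\prec x_{i-1}\prec y_{i'+1}\prec\cdots\prec y_{k+1}$. Either way the spliced chain avoids $v$ entirely, lies in $X$, and has length $k+\lvert i-i'\rvert\geq k+1$. The key technical check is that the $K_{10}$ pattern is preserved across the splice: because $i$ is even and $i'$ is odd, the bit immediately before the join and the bit immediately after the join still alternate (for instance, in the first case $K_{10}[i'-1]=0$ and $K_{10}[i+1]=1$). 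Consequently the first $k+1$ elements of the spliced chain form a $k$-monotonicity violation inside $X$, contradicting the assumption on $f$ and forcing at least one of the two assignments to $g(v)$ to succeed.

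The main obstacle I anticipate is the bookkeeping at the boundary positions (e.g.\ $i'=1$ or $i=k+1$), where one of the two halves of a splice may be empty, together with the verification that the alternation really is preserved on both sides of the join. Once this is organized as a case analysis driven purely by the parities of $i$ and $i'$, the argument becomes routine. I note that the minimality of $v$ in $\poset\setminus X$ is not actually invoked in this claim itself; it is used only in the surrounding proof of~\autoref{lemma:k-mono-extendable}, which iterates the claim to extend $f$ one point at a time to all of $\poset$.
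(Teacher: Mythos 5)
Your proof is correct and follows essentially the same strategy as the paper's: assume both assignments to $g(v)$ fail, extract two $K_{10}$-violating chains through $v$, splice them to produce a violating chain that lies entirely in $X$, and contradict the $k$-monotonicity of $f$. The one small difference is the choice of splicing index: the paper joins $(y_1,\dots,y_{j-1})$ to $(x_j,\dots,x_{k+1})$ (using the position $j=\max(i,i')$ for \emph{both} halves), which yields a chain of exactly $k+1$ points whose values are trivially $K_{10}[1],\dots,K_{10}[k+1]$ because both original chains carry the $K_{10}$ labels position-by-position; you instead join $(y_1,\dots,y_{i'-1})$ to $(x_{i+1},\dots,x_{k+1})$, each cut at its own occurrence of $v$, which produces a (possibly longer) chain of $k+|i-i'|$ points and requires the explicit parity check at the join that you supply. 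Both splices are valid and both rely on the same comparability $y_{i'-1}\prec v\prec x_{i+1}$ (resp.\ $x_{i-1}\prec v\prec y_{i'+1}$); the paper's choice simply avoids the parity bookkeeping by aligning the indices. Your remarks about boundary cases and about minimality of $v$ being used only in the outer induction, not in the claim itself, are also accurate.
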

Before proving this claim, we show how it implies~\autoref{lemma:k-mono-extendable}. Namely, starting with any function $f\colon X\to\{0,1\}$ which is $k$-monotone on its domain $X$, we just keep applying the~\autoref{claim:extend-by-one} inductively until we get a function defined over the entire poset which respects $k$-monotonicity.
\begin{proofof}{\autoref{claim:extend-by-one}}

We will show this by contradiction. Suppose there is no good assignment available for $g(v)$, that is that both the choices $g(v) = 0$ and $g(v) = 1$ lead to a violation to $k$-monotonicity in $g$. Consider the choice $g(v) = 0$. Since this results in a violation to $k$-monotonicity, we know that there is a path $P_0 = (x_1 \prec x_2 \prec \ldots \prec x_{k+1})$ which is a violation to $k$-monotonicity. It is clear that $v \in P_0$; let $i$ be such that $x_i=v$. Similarly, there is path $P_1 = (y_1 \prec y_2 \prec \ldots \prec y_{k+1})$ corresponding to $g(v) = 1$ which also contains the forbidden pattern, and some $j$ such that $y_j = v$. And thus, $g(x_t) = g(y_t),$ for all $t \in [k+1]$ (as both of the paths indexed by $x$ and $y$ form a violation to $k$-monotonicity).
\noindent By the above discussion $2$ paths, $P_0$ and $P_1$, meet at $v$. We will see that one of the two paths 
\[
    P'_0 = (x_1 \prec x_2 \prec \ldots < x_{i-1} \prec y_i \prec y_{i+1} \prec \ldots \prec y_{k+1})
\]
or 
\[
    P'_1 = (y_1 \prec y_2 \prec \ldots < y_{j-1} \prec x_j \prec x_{j+1} \prec \ldots \prec x_{k+1})
\] is already a violation to $k$-monotonicity in $f$. 
To see this, let us begin by recalling that we let $v$ be the $i^{th}$ vertex on $P_0$ and the $j^{th}$ vertex on $P_1$. Now it is clear that $i \neq j$. Without loss of generality, suppose $i < j$. In this case, the evaluations of $f$ along path $P'_1$ form the forbidden pattern. This is because the function values alternate along the segment $(y_1 \prec y_2 \prec \ldots \prec y_{j-1})$. Also, the function values alternate along the segment $(x_i \prec x_{i+1} \prec x_{i+2} \prec \ldots \prec x_{k+1})$. And finally note that since $f(y_{j-1}) \neq f(y_j)$ and $f(y_j) = f(x_j)$ we get that $f(y_{j-1}) \neq f(x_j)$ as well. So, the path $P'_1$ indeed contains a violation to $k$-monotonicity as claimed. The other case, $i > j$, is analogous. Hence the claim follows.
\end{proofof}
\end{proofof}

In the next lemma, we show that there is a nice characterization of distance to $k$-monotonicity in terms of the size of the \emph{minimum vertex cover} of the violation hypergraph.\footnote{Recall that a vertex cover in a hypergraph is just a set of vertices such that every hyperedge contains at least one of the vertices from this set.}

\begin{lemma} \label{lemma:vtx-cover-dist-to-prop}
Let $\mathcal{M}_k$ denote the set of $k$-monotone functions over the poset $\poset$, and $f\colon \poset \to \{0,1\}$. Then $\dist{f}{\mathcal{M}_k} = \eps_f$ if, and only if, the size of the minimum vertex cover in $H_{\rm viol}(f)$ is $\eps_f \abs{\poset}$.
\end{lemma}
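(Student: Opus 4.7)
The plan is to prove both inequalities separately, leveraging the extendability of $k$-monotonicity (\autoref{lemma:k-mono-extendable}) for one direction and a direct combinatorial argument for the other.

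For the direction $\dist{f}{\mathcal{M}_k} \le |C^\ast|/|\poset|$, where $C^\ast$ is a minimum vertex cover of $H_{\rm viol}(f)$, I will restrict $f$ to $X \eqdef \poset \setminus C^\ast$. By definition of a vertex cover, every $(k+1)$-uniform hyperedge of $H_{\rm viol}(f)$ meets $C^\ast$, so the restriction $f|_X$ contains no violating $(k{+}1)$-chain, i.e.\ $f|_X$ is $k$-monotone on $X$. Applying the extendability lemma~\autoref{lemma:k-mono-extendable}, there is a $k$-monotone function $g\colon \poset \to \{0,1\}$ that agrees with $f$ on $X$. Since $f$ and $g$ can only differ on $C^\ast$, we get $\dist{f}{\mathcal M_k}\le \dist{f}{g}\le |C^\ast|/|\poset|$.

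For the reverse direction $\dist{f}{\mathcal{M}_k} \ge |C^\ast|/|\poset|$, I will take a $k$-monotone function $g^\ast$ achieving the minimum distance, and consider the disagreement set $D \eqdef \setOfSuchThat{x\in\poset}{f(x)\neq g^\ast(x)}$, so that $|D| = \eps_f|\poset|$. The key claim is that $D$ is a vertex cover of $H_{\rm viol}(f)$: indeed, any hyperedge $(x_1\prec\dots\prec x_{k+1})$ of $H_{\rm viol}(f)$ is by definition a violation to $k$-monotonicity of $f$, i.e.\ $f(x_1),\dots,f(x_{k+1})$ realizes the forbidden pattern $K_{10}$. If none of the $x_i$ lay in $D$, then $g^\ast$ would coincide with $f$ on the whole chain and would itself contain the forbidden pattern, contradicting the $k$-monotonicity of $g^\ast$. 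Hence $D$ is a vertex cover, so $|C^\ast|\le |D| = \eps_f|\poset|$, which gives the desired inequality.

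Combining the two inequalities yields the equality in the statement, so both directions of the ``if and only if'' follow at once. I do not anticipate any serious obstacle: the only non-trivial ingredient is the extendability of $k$-monotonicity, already established in~\autoref{lemma:k-mono-extendable}, and the vertex-cover argument is a direct translation of the standard monotonicity-testing argument (cf.\ the analogous result for the violation graph of monotone functions, \autoref{theo:mono:big-violated-matching}) to the hypergraph setting.
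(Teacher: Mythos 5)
Your proof is correct and mirrors the paper's own argument: one inequality via the extendability of $k$-monotonicity (\autoref{lemma:k-mono-extendable}) applied to the complement of a minimum vertex cover, and the other by showing the disagreement set with the closest $k$-monotone function is a vertex cover of $H_{\rm viol}(f)$. If anything, your justification for the vertex-cover direction (if a hyperedge avoided the disagreement set, $g^\ast$ would itself exhibit the forbidden pattern) is spelled out a bit more explicitly than in the paper's proof of \autoref{claim:forward-dir}.
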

\begin{proofof} {\autoref{lemma:vtx-cover-dist-to-prop}}
We establish separately the two inequalities.
\begin{claim} \label{claim:forward-dir}
$\dist{f}{\mathcal{M}_k} \geq \abs{VC_{min}(H_{\rm viol})}$
\end{claim}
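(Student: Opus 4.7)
The plan is to prove this direction by producing, from the closest $k$-monotone function to $f$, a vertex cover of $H_{\rm viol}(f)$ whose size equals $\eps_f\abs{\poset}$ (here I am reading the inequality in its natural unnormalized form, i.e.\ $\dist{f}{\mathcal{M}_k}\cdot \abs{\poset} \geq \abs{VC_{\min}(H_{\rm viol})}$, consistently with the statement of~\autoref{lemma:vtx-cover-dist-to-prop}). The strategy is the standard ``disagreement set is a cover'' argument, adapted from the analogous fact for the violation graph in ordinary monotonicity testing.

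First I would let $g\in \mathcal{M}_k$ achieve $\dist{f}{g} = \dist{f}{\mathcal{M}_k} = \eps_f$, and define the disagreement set
\[
  S \eqdef \setOfSuchThat{x\in\poset}{f(x)\neq g(x)},
\]
so that $\abs{S} = \eps_f\abs{\poset}$ by definition of the normalized Hamming distance. The key step is to verify that $S$ is a vertex cover of $H_{\rm viol}(f)$: take any hyperedge $(x_1\prec x_2\prec \dots \prec x_{k+1})\in E(H_{\rm viol}(f))$, which by \autoref{def:forbidden-bit-pattern} means that $\bigl(f(x_1),\dots,f(x_{k+1})\bigr)=K_{10}$. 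Suppose for contradiction that $\{x_1,\dots,x_{k+1}\}\cap S = \emptyset$; then $g(x_i)=f(x_i)$ for every $i\in[k+1]$, so $\bigl(g(x_1),\dots,g(x_{k+1})\bigr)=K_{10}$ as well. But this exhibits a $k$-monotonicity violation in $g$, contradicting $g\in \mathcal{M}_k$. Hence the chain must contain at least one point of $S$, and $S$ is indeed a vertex cover.

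Since $S$ is a vertex cover of $H_{\rm viol}(f)$, it has size at least that of the minimum vertex cover:
\[
  \eps_f\abs{\poset} \;=\; \abs{S} \;\geq\; \abs{VC_{\min}(H_{\rm viol}(f))},
\]
which is exactly the claimed inequality. There is really no obstacle here: the argument only uses the definitions of the violation hypergraph, of $k$-monotonicity \emph{via} the forbidden pattern $K_{10}$, and the choice of $g$ as a closest $k$-monotone function. All the actual difficulty of~\autoref{lemma:vtx-cover-dist-to-prop} is concentrated in the reverse inequality, which is where the extendability property (\autoref{lemma:k-mono-extendable}) will be invoked to build a $k$-monotone function agreeing with $f$ outside any vertex cover; that converse is not needed for the present claim.
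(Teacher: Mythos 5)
Your proof is correct and follows essentially the same approach as the paper: choose a closest $k$-monotone function $g$, take the disagreement set, and verify it covers every violating hyperedge. If anything, your justification is slightly cleaner than the paper's (which vaguely appeals to $X$ being ``the smallest set''), since you correctly observe that the argument only needs $g\in\mathcal{M}_k$, not minimality of the disagreement set.
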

\begin{proof}
\noindent Suppose the distance to $k$-monotonicity is $\eps_f$, and let $g$ be a $k$-monotone achieving it, so that $\dist{f}{g}=\eps_f$. Define $X = \setOfSuchThat{x \in \poset }{ f(x) \neq g(x) }$ (thus, $\abs{X} = \eps_f \abs{\poset}$). Let us consider the violation hypergraph for $f$ given as $H_{\rm viol}(f) = (\poset, E(H_{\rm viol}))$. Now, delete vertices in $X$ and the hyperedges containing any vertex $v \in X$ from this hypergraph. Because $X$ is the smallest set of vertices changing values at which gives a $k$-monotone function, it follows that every hyperedge in $E(H_{\rm viol})$ must contain a vertex in $X$. Thus, $X$ indeed forms a vertex cover in $H_{\rm viol}(f)$.
\end{proof}

\begin{claim} \label{claim:backward-dir}
$\dist{f}{\mathcal{M}_k} \leq \abs{ VC_{min}(H_{\rm viol}) }$
\end{claim}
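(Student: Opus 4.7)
The plan is to exploit the extendability of $k$-monotonicity (\autoref{lemma:k-mono-extendable}) together with the observation that removing a vertex cover of $H_{\rm viol}(f)$ leaves behind a set on which $f$ is already $k$-monotone. This gives a candidate $k$-monotone function to compare $f$ with, and the comparison directly yields the upper bound on the distance.

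More precisely, I would proceed as follows. Let $C \subseteq \poset$ be a minimum vertex cover of $H_{\rm viol}(f)$, so $\abs{C} = \abs{VC_{min}(H_{\rm viol})}$. Set $X \eqdef \poset \setminus C$, and consider the restriction $f|_X \colon X \to \{0,1\}$. I claim $f|_X$ is $k$-monotone on the sub-poset $X$: indeed, if there were a chain $x_1 \prec x_2 \prec \ldots \prec x_{k+1}$ in $X$ on which $f$ realizes the forbidden pattern $K_{10}$, this chain would be a hyperedge of $H_{\rm viol}(f)$ entirely contained in $X$, contradicting the fact that $C$ is a vertex cover.

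Now I would invoke \autoref{lemma:k-mono-extendable} to extend $f|_X$ into a function $g \colon \poset \to \{0,1\}$ that is $k$-monotone on the entire poset and agrees with $f$ on every point of $X$. By construction, $f$ and $g$ can disagree only on points of $C$, so
\[
    \dist{f}{g} \;\leq\; \frac{\abs{C}}{\abs{\poset}} \;=\; \frac{\abs{VC_{min}(H_{\rm viol})}}{\abs{\poset}}.
\]
Since $g \in \mathcal{M}_k$, this establishes $\dist{f}{\mathcal{M}_k} \leq \abs{VC_{min}(H_{\rm viol})}/\abs{\poset}$, matching the claim (modulo the normalization by $\abs{\poset}$ implicit in the lemma's statement). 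Combining with \autoref{claim:forward-dir} yields the equality in~\autoref{lemma:vtx-cover-dist-to-prop}.

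There is no real obstacle here: the only nontrivial ingredient is the extendability lemma, which has already been proved. The argument is essentially the standard ``vertex-cover $\Leftrightarrow$ correction set'' equivalence used for monotonicity testing, lifted to the hypergraph setting via extendability of $k$-monotonicity.
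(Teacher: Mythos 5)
Your proposal is correct and follows essentially the same route as the paper's own proof: restrict $f$ to the complement of a minimum vertex cover, observe the restriction is $k$-monotone since all violating chains are hit, apply the extendability lemma to get a $k$-monotone $g$ agreeing with $f$ off the cover, and bound the distance by $\abs{C}/\abs{\poset}$. You are also right that the claim's displayed inequality is missing the normalization by $\abs{\poset}$, consistent with how \autoref{lemma:vtx-cover-dist-to-prop} is phrased.
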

\begin{proof}
\noindent Suppose the minimum vertex cover in the violation hypergraph has size $\eps_f \abs{\poset}$. We will show that the distance of the function to $k$-monotonicity is $\eps_f$. To see this, let $C \subseteq \poset$ be the smallest vertex cover of the violation hypergraph of the said size. Observe that deleting $C$ from $H_{\rm viol}(f)$ removes all the hyperedges, and therefore that the function $f$ restricted to $X = \poset \setminus C$ is $k$-monotone. And by the extendability of $k$-monotone functions established in~\autoref{lemma:k-mono-extendable}, it follows that the function can be extended to the rest of the domain (by providing values in the cover $C$) such that it keeps respecting $k$-monotonicity. Thus, the distance to $k$-monotonicity is at most $\abs{C}/\abs{\poset}$.
\end{proof}

\end{proofof}

Having characterized distance to $k$-monotonicity as the size of the smallest vertex cover in the violation graph, we are ready to establish~\autoref{theo:k-mono:big-violated-matching}. To do so, we will require the following standard fact:
\begin{fact} \label{claim:max-matching-vtx-cover}
Let $G = (V, E)$ be a $t$-uniform hypergraph. Let $M$ be the maximum matching in $G$. Then, $\abs{M} \leq VC_{min}(G) \leq t\abs{M}$
\end{fact}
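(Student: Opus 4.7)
The plan is to establish the two inequalities of Fact~\ref{claim:max-matching-vtx-cover} separately, since each follows from a short self-contained argument that is a standard generalization of the graph-theoretic identity (König-type) to $t$-uniform hypergraphs.

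For the lower bound $|M| \leq VC_{\min}(G)$, I would argue as follows. Fix any vertex cover $C$ of $G$ and any maximum matching $M = \{e_1, \ldots, e_{|M|}\}$. By definition of a cover, each hyperedge $e_i \in M$ must contain at least one vertex from $C$; assign to $e_i$ such a vertex $v_i \in e_i \cap C$. Because the hyperedges in a matching are pairwise disjoint, the vertices $v_1, \ldots, v_{|M|}$ are pairwise distinct, which gives $|C| \geq |M|$. Taking $C$ to be a minimum vertex cover yields the desired inequality.

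For the upper bound $VC_{\min}(G) \leq t|M|$, I would exhibit an explicit vertex cover of size $t|M|$. Let $M$ be a maximum matching, and let $C \eqdef \bigcup_{e \in M} e$ be the set of all vertices participating in some hyperedge of $M$. Since $G$ is $t$-uniform and the hyperedges of $M$ are pairwise disjoint, $|C| = t|M|$. It remains to verify that $C$ is a vertex cover. Suppose for contradiction there exists $e \in E$ with $e \cap C = \emptyset$. Then $e$ is disjoint from every hyperedge of $M$, so $M \cup \{e\}$ would be a matching strictly larger than $M$, contradicting the maximality of $M$. Hence every hyperedge of $G$ intersects $C$, so $VC_{\min}(G) \leq |C| = t|M|$.

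There is no real obstacle here: both directions follow immediately from the defining properties of matchings and covers, with the pairwise-disjointness of matched hyperedges supplying the $|M|$ lower bound and the maximality of $M$ supplying the $t|M|$ upper bound. The only mild subtlety worth flagging in the write-up is that the upper bound relies crucially on $t$-uniformity (non-uniform hypergraphs would instead give the bound $VC_{\min}(G) \leq \sum_{e \in M}|e|$); but since the fact as stated assumes a $t$-uniform $G$, this causes no issue. Once Fact~\ref{claim:max-matching-vtx-cover} is in hand, combining it with Lemma~\ref{lemma:vtx-cover-dist-to-prop} (applied with $t = k+1$) immediately yields Theorem~\ref{theo:k-mono:big-violated-matching}, since the size of a maximum matching in the violation hypergraph is at least $VC_{\min}(H_{\rm viol})/(k+1) \geq \eps|\poset|/(k+1)$.
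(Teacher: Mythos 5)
Your proof is correct and is the standard argument for the matching/vertex-cover inequalities in $t$-uniform hypergraphs; the paper itself states this as a "standard fact" without proof, so you have supplied exactly the argument the authors take for granted. Both directions are handled cleanly: pairwise disjointness of matched hyperedges forces distinct cover vertices for the lower bound, and maximality of $M$ makes $\bigcup_{e\in M} e$ a cover of size $t|M|$ for the upper bound.
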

\begin{proofof}{\autoref{theo:k-mono:big-violated-matching}}
By~\autoref{lemma:vtx-cover-dist-to-prop}, we know that the violation hypergraph, $H_{\rm viol}(f)$ has a minimum vertex cover of size at least $\eps_f \abs{\poset}$. And by~\autoref{claim:max-matching-vtx-cover}, it is seen that it contains a matching of $t$-uniform hyperedges of size at least $\frac{\eps_f}{t}\abs{\poset}$.
\end{proofof}

\section{Omitted proofs}\label{app:omitted}
\makeatletter{}  \begin{proofof}{\autoref{lemma:test:d2:2s:a:ub:indices:sorted:hamming}}
  The first part of the theorem is straightforward (by contrapositive, if at least one of the two sequences is not non-increasing then we can find a violation of $2$-monotonicity). We thus turn to the second part, and show the contrapositive; for this purpose, we require the following result:  
  \begin{claim}\label{claim:test:d2:2s:a:ub:indices:sorted:hamming:proof}
    If $f\colon[n]^2\to\{0,1\}$ is a $2$-column-wise monotone function such that (i) $f(1,j)=f(n,j)=0$ for all $j\in[n]$ and (ii) both $(\lseq{f}_j)_{j\in[n]}, (\lseq{h}_j)_{j\in[n]}\subseteq[n]$ are non-increasing, then $f$ is $2$-monotone.
  \end{claim}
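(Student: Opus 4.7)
The plan is to argue by contradiction, exploiting the fact that under the boundary conditions (i) and 2-column-wise monotonicity, each column has a very rigid structure: it consists of a (possibly empty) ``block of $1$s'' in the middle, flanked by $0$s above and below. Concretely, on column $j$, the values are $0$ for rows $i \leq \lseq{f}_j$, then $1$ for rows $\lseq{f}_j < i < \hseq{f}_j$ (this range being empty iff the column is identically $0$), and then $0$ again for rows $i \geq \hseq{f}_j$. This characterization follows immediately from~\autoref{def:d2:k:columnwise} together with $f(1,j)=f(n,j)=0$.

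Now suppose, toward a contradiction, that $f$ is not $2$-monotone. Since $f$ is Boolean, the forbidden pattern of a violation is a chain of three points $(i_1,j_1) \preceq (i_2,j_2) \preceq (i_3,j_3)$ with $f(i_1,j_1)=1$, $f(i_2,j_2)=0$, $f(i_3,j_3)=1$. By the column structure, $f(i_1,j_1)=1$ forces $\lseq{f}_{j_1} < i_1 < \hseq{f}_{j_1}$ and $f(i_3,j_3)=1$ forces $\lseq{f}_{j_3} < i_3 < \hseq{f}_{j_3}$, while $f(i_2,j_2)=0$ forces either $i_2 \leq \lseq{f}_{j_2}$ or $i_2 \geq \hseq{f}_{j_2}$.

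The key step is to use the hypothesis (ii) that $(\lseq{f}_j)_j$ and $(\hseq{f}_j)_j$ are non-increasing in $j$: since $j_1\leq j_2 \leq j_3$, we have $\lseq{f}_{j_1}\geq \lseq{f}_{j_2}\geq \lseq{f}_{j_3}$ and $\hseq{f}_{j_1}\geq \hseq{f}_{j_2}\geq \hseq{f}_{j_3}$. In the first subcase ($i_2\leq \lseq{f}_{j_2}$), combining with $i_1\leq i_2$ yields $i_1\leq \lseq{f}_{j_2}\leq \lseq{f}_{j_1}$, contradicting $i_1 > \lseq{f}_{j_1}$. In the second subcase ($i_2\geq \hseq{f}_{j_2}$), combining with $i_2\leq i_3$ yields $i_3\geq \hseq{f}_{j_2}\geq \hseq{f}_{j_3}$, contradicting $i_3 < \hseq{f}_{j_3}$. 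Both subcases being impossible, no such violation exists, so $f$ is $2$-monotone.

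I do not anticipate a serious obstacle here: the argument is essentially a short case analysis once the column-by-column normal form is in place. The only subtlety is bookkeeping around the degenerate case where a column is identically $0$ (so $\lseq{f}_j=\hseq{f}_j=1$ by convention), but in that case the ``middle $1$-block'' is empty and no point of the column can play the role of $(i_1,j_1)$ or $(i_3,j_3)$, so the argument above goes through unchanged.
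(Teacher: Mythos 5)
Your proof is correct and follows essentially the same strategy as the paper's: argue by contradiction from a chain $(i_1,j_1)\preceq(i_2,j_2)\preceq(i_3,j_3)$ with values $(1,0,1)$, use the boundary condition and $2$-column-wise monotonicity to express columns as $0$-$1$-$0$ blocks via $\lseq{f}_j,\hseq{f}_j$, and then invoke the non-increasing hypothesis. The only cosmetic difference is that the paper derives the contradiction at the middle point (concluding $f(i_2,j_2)=1$), whereas you split on the two ways $f(i_2,j_2)=0$ can hold and derive a contradiction at the outer points — logically the same argument.
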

  \begin{proof}
      By contradiction, suppose there exists a $2$-column-wise monotone function $f$ satisfying (i) and (ii), which is not $2$-monotone. This last point implies there exists a triple of comparable elements $x=(i_x,j_x) \prec y=(i_y,j_y) \prec z=(i_z,j_z)$ constituting a violation, i.e. such that $(f(x),f(y),f(z))=(1,0,1)$. Moreover, since (i) holds we must have $1 < i_x \leq i_y \leq i_z < n$; more precisely, 
       $1\leq \lseq{f}_{j_x} < i_x \leq i_y \leq i_z < \hseq{f}_{j_z} \leq n$. As $x\prec y\prec z$, we have $j_x\leq j_y\leq j_z$, which by the non-increasing assumption (ii) implies that
       $\lseq{f}_{j_x} \geq \lseq{f}_{j_y}$ and $\hseq{f}_{j_y} \geq \hseq{f}_{j_z}$. But this is not possible, as altogether this leads to $\lseq{f}_{j_y} < i_y < \hseq{f}_{j_y}$, i.e. $f(y) = 1$.
  \end{proof}
  
  Assume both sequences $(\lseq{f}_j)_{j\in[n]}, (\lseq{h}_j)_{j\in[n]}\subseteq[n]$ are $\frac{\eps}{2}$-close to non-increasing, and let $L,H\subset[n]$ (respectively) be the set of indices where the two sequences need to be changed in order to become non-increasing. By assumption, $\abs{L},\abs{H}\leq \frac{\eps n}{2}$, so $\abs{L\cup H} \leq \eps n$. But to ``fix'' a value of $(\lseq{f}_j)_{j\in[n]}$ or $(\hseq{f}_j)_{j\in[n]}$ requires to change the values of the function $f$ inside a single column -- and this can be done preserving its $2$-column-wise-monotonicity, so that changing the value of $f$ on at most $n$ points is enough. It follows that making both $(\lseq{f}_j)_{j\in[n]}$ and $(\hseq{f}_j)_{j\in[n]}$ non-increasing requires to change $f$ on at most $\eps n^2$ points, and with~\autoref{claim:test:d2:2s:a:ub:indices:sorted:hamming:proof} this results in a function which is $2$-monotone. Thus, $f$ is $\eps$-close to $2$-monotone.
  \end{proofof}

\begin{proofof}{\autoref{lemma:test:d2:2s:a:ub:indices:sorted:l1}}
Recall that we aim at establishing the following:
  \begin{equation}\label{eq:test:d2:2s:a:ub:indices:sorted:l1:alt:restated}
     \dist{f}{\kmon{2}{2}} \leq \lp[1](\hseq{f}, \kmon{1}{})+\lp[1](\lseq{f}, \kmon{1}{})
  \end{equation}
  For notational convenience, we will view in this proof the sequences $(\lseq{f})_j,(\hseq{f})_j)$ as functions $\lseq{f},\hseq{f}\colon[n]\to[n]$. Let $\ell,h\colon[n]\to[n]$ (for ``low'' and ''high,'' respectively) be monotone functions achieving $\lp[1](\lseq{f}, \kmon{1}{})$ and $\lp[1](\hseq{f}, \kmon{1}{})$, respectively.
  \begin{itemize}
    \item As $\lseq{f}(j) \leq \hseq{f}(j)$ for all $j\in[n]$, we will assume $\ell(j)\leq h(j)$ for all $j$. Otherwise, one can consider instead the functions $\ell^\prime = \min(\ell,h)$ and $h^\prime = \max(\ell,h)$: both will still be monotone (non-increasing), and by construction 
      \[
      \abs{\ell^\prime(j)-\lseq{f}(j)}+\abs{h^\prime(j)-\hseq{f}(j)} \leq \abs{\ell(j)-\lseq{f}(j)}+\abs{h(j)-\hseq{f}(j)}
      \] for all $j\in [n]$, so that
      $\lp[1](\hseq{f}, \ell^\prime)+\lp[1](\lseq{f}, h^\prime) \leq \lp[1](\hseq{f}, \ell)+\lp[1](\lseq{f}, h)$.
    \item From $\ell$ and $h$, we can define a $2$-column-wise monotone function $g\colon[n]^2\to[n]$ such that $\lseq{g}=\ell$ and $\hseq{g}=h$: that is,
      \[
            g(i,j) = \begin{cases}
                0 & \text{ if } i \geq h(j) \\
                1 & \text{ if } \ell(j) < i < h(j) \\
                0 & \text{ if } i \leq \ell(j)
            \end{cases}
      \]
      for $(i,j)\in[n]^2$.
  \end{itemize}
  It is clear that $g$ is $2$-column-wise monotone with $g(1,j)=g(n,j)=0$ for all $j\in[n]$; since by construction $\lseq{g},\hseq{g}$ are non-decreasing, we can invoke~\autoref{claim:test:d2:2s:a:ub:indices:sorted:hamming:proof} to conclude $g$ is $2$-monotone. It remains to bound the distance between $f$ and $g$: writing $\Delta_j\in\{0,\dots,n\}$ for the number of points on which $f$ and $g$ differ in the $j$-th column, we have
  \begin{align*}
    \dist{f}{ \kmon{2}{2} } \leq \dist{f}{g} &= \frac{1}{n^2} \sum_{j=1}^n \Delta_j \leq \frac{1}{n^2} \sum_{j=1}^n \left( \abs{\ell(j)-\lseq{f}(j)}+\abs{h(j)-\hseq{f}(j)} \right) \\
    &= \frac{1}{n^2} \sum_{j=1}^n \abs{\ell(j)-\lseq{f}(j)}+ \frac{1}{n^2} \sum_{j=1}^n \abs{h(j)-\hseq{f}(j)}
    = \lpdist[1]{\lseq{f}}{\ell} + \lpdist[1]{\hseq{f}}{h} \\
    &\leq  \lp[1](\lseq{f}, \kmon{1}{}) + \lp[1](\hseq{f}, \kmon{1}{})
  \end{align*}
  which concludes the proof.
\end{proofof}

\begin{proofof}{\autoref{prop:dist:monotone:l1:dimension:range}}
We write $\nu\eqdef \mu\times\operatorname{Leb}_{[0,1]}$ for the product measure on $\domain\times[0,1]$ induced by $\mu$ and the Lebesgue measure on $[0,1]$; so that $\nu(\domain\times[0,1]) = \mu(\domain)\cdot 1 = \mu(\domain)$.

For any fixed $t\in[0,1]$, let $g_t\in \kmon{\domain\to\{0,1\}}{}$ be any function achieving $\lpdist[1]{T\circ f(\cdot,t)}{g_t}=\lpdist[1]{ T\circ f(\cdot,t) }{  \kmon{\domain\to\{0,1\}}{} }$, and define $g\in[0,1]^\domain$ by $g^\prime(x) = \int_0^1 dt g_t(x)$ for all $x\in\domain$: note that $g$ is then  monotone by construction.\footnote{Additionally, since we restrict ourselves to finite $\domain$, there are only finitely many distinct functions $T\circ f(\cdot,t)$ (for $t\in[0,1]$, and therefore only finitely many distinct functions $g_t$.} Moreover, choose $h\in\kmon{\domain\times[0,1]\to\{0,1\}}{}$ as a function achieving $\lpdist[1]{T\circ f}{h}=\lpdist[1]{ T\circ f }{  \kmon{\domain\times[0,1]\to\{0,1\}}{} }$.
Then we have 
\begin{align*}
    \lpdist[1]{f}{ \kmon{\domain\to[0,1]}{} }  &\leq \lpdist[1]{f}{g^\prime} 
    = \frac{1}{\mu(\domain)}\int_{\domain} \mu(dx) \left\lvert \int_{0}^1 dt ( T\circ f(x,t) - g_t(x) ) \right\rvert \\
    &\leq \frac{1}{\mu(\domain)}\int_{\domain} \mu(dx) \int_{0}^1 dt \left\lvert  T\circ f(x,t) - g_t(x) \right\rvert \\
    &= \int_{0}^1 dt \left( \frac{1}{\mu(\domain)}\int_{\domain} \mu(dx)\left\lvert  T\circ f(x,t) - g_t(x) \right\rvert\right) 
    = \int_{0}^1 dt \lpdist[1]{T\circ f(\cdot,t)}{g_t} \\
    &\leq \int_{0}^1 dt \lpdist[1]{T\circ f(\cdot,t)}{h(\cdot,t)} 
    = \int_{0}^1 dt \left( \frac{1}{\mu(\domain)}\int_{\domain} \mu(dx)\left\lvert  T\circ f(x,t) - h(x,t) \right\rvert\right) \\
    &= \frac{1}{\nu(\domain\times [0,1])} \int_{\domain\times[0,1]} \nu(dx,dt) \left\lvert  T\circ f(x,t) - h(x,t) \right\rvert \\
    &= \lpdist[1]{T\circ f}{h} = \lpdist[1]{ T\circ f }{  \kmon{\domain\times[0,1]\to\{0,1\}}{} }
\end{align*}
where we applied~\autoref{fact:monotone:l1:dimension:range} (and the definition of $g^\prime = \int_0^1 g_t$) for the first equality, and for the third inequality the fact that $h$ induces (for every fixed $t\in[0,1]$) a monotone function $h(\cdot,t)\in \kmon{\domain\to\{0,1\}}{}$: so that $\lpdist[1]{T\circ f(\cdot,t)}{g_t} \leq \lpdist[1]{T\circ f(\cdot,t)}{h(\cdot,t)}$ for all $t$.\medskip

For the other direction of the inequality, fix any $f\colon \domain\to [0,1]$, and let $g\in\kmon{\domain\to[0,1]}{}$ be (any) function achieving $\lpdist[1]{f}{g}=\lpdist[1]{f}{ \kmon{\domain\to[0,1]}{} }$.
We can write, unrolling the definitions,
\begin{align*}
    \lpdist[1]{f}{ \kmon{\domain\to[0,1]}{} } &= \frac{1}{\mu(\domain)}\int_{\domain}  \mu(dx) \lvert f(x) - g(x) \rvert \\
    &= \frac{1}{\mu(\domain)}\int_{\domain} \mu(dx) \left\lvert \int_{0}^1 dt ( T\circ f(x,t) - T\circ g(x,t) ) \right\rvert \\
    &= \frac{1}{\mu(\domain)}\int_{\domain} \mu(dx) \left\lvert \int_{0}^1 dt ( T\circ f(x,t) - T\circ g(x,t) ) \right\rvert \\
    &= \frac{1}{\mu(\domain)}\int_{\domain} \mu(dx) \Big( \int_{0}^1 dt ( T\circ f(x,t) - T\circ g(x,t) )\indic{f(x) > g(x)} \\
    &\qquad+ ( T\circ g(x,t) - T\circ f(x,t) )\indic{g(x) > f(x)} \Big) \\
    &= \frac{1}{\mu(\domain)}\int_{\domain} \int_{0}^1 dt \mu(dx) \Big(( T\circ f(x,t) - T\circ g(x,t) )\indic{f(x) > g(x)} \\
    &\qquad+ ( T\circ g(x,t) - T\circ f(x,t) )\indic{g(x) > f(x)} \Big) \\
    &= \frac{1}{\nu(\domain\times[0,1])}\int_{\domain\times[0,1]} \nu(dx,dt) \left\lvert T\circ f(x,t) - T\circ g(x,t) \right\rvert 
    = \lpdist[1]{ T\circ f }{  T\circ g  } \\
    &\geq \lpdist[1]{ T\circ f }{  \kmon{\domain\times[0,1]\to\{0,1\}}{} }
\end{align*}
where we applied~\autoref{fact:monotone:l1:dimension:range} for the second equality, the definition of $\lp[1]$ distance for the second-to-last; and to handle the absolute values we used the fact that $\lvert a - b\rvert = (a-b)\indic{a > b}+ (b-a)\indic{a > b}$, along with the observation that $T\circ f(x,t) > T\circ g(x,t)$ can only hold if $f(x)> g(x)$. Finally, we have $\lpdist[1]{ T\circ f }{  T\circ g } \geq \lpdist[1]{ T\circ f }{  \kmon{\domain\times[0,1]\to\{0,1\}}{} }$ since $T\circ g\in \kmon{\domain\times[0,1]\to\{0,1\}}{}$, yielding the desired claim.

Finally, the fact that $\lpdist[1]{ T\circ f }{ \kmon{\domain\times[0,1]\to\{0,1\}}{} } = \dist{ T\circ f }{ \kmon{\domain\times[0,1]\to\{0,1\}}{} }$ is immediate from the Boolean range, as $\lvert a - b\rvert = \indic{a\neq b}$ for any $a.b\in\{0,1\}$.
\end{proofof}

\end{document}